\providecommand{\tabularnewline}{\\}
\providecommand{\axiomname}{Axiom}
\providecommand{\corollaryname}{Corollary}
\providecommand{\theoremname}{Theorem}
\providecommand{\definitionname}{Definition}
\providecommand{\examplename}{Example}
\theoremstyle{plain}
\newtheorem{thm}{\protect\theoremname}
\newtheorem{prop}[thm]{Proposition}
\theoremstyle{definition}
\newtheorem{example}[thm]{\protect\examplename}
\theoremstyle{definition}
\newtheorem{defn}[thm]{\protect\definitionname}
\theoremstyle{definition}
\newtheorem{remark}[thm]{Remark}
\theoremstyle{plain}
\newtheorem{ax}[thm]{\protect\axiomname}
\theoremstyle{plain}
\newtheorem{cor}[thm]{\protect\corollaryname}
\newtheorem{lem}[thm]{Lemma}
\DeclareMathOperator*{\argmax}{arg\,max}
\newcommand{\jt}[1]{\ifthenelse{\boolean{commentsactivated}}{{\color{blue} {\em JT: #1 }}}{}}
\begin{document}

\title{Modeling evidential cooperation in large worlds}

\author{Johannes Treutlein}

\date{First written in 2018; major edits in 2023}

\maketitle

\begin{abstract}
\emph{Evidential cooperation in large worlds} (ECL) refers to the idea that humans and other agents can benefit by cooperating with similar agents with differing values in causally disconnected parts of a large universe. Cooperating provides agents with evidence that other similar agents are likely to cooperate too, resulting in gains from trade for all. This could be a crucial consideration for altruists.

I develop a game-theoretic model of ECL as an incomplete information bargaining problem. The model incorporates uncertainty about others' value systems and empirical situations, and addresses the problem of selecting a compromise outcome. Using the model, I investigate issues with ECL and outline open technical and philosophical questions.

I show that all cooperators must maximize the same weighted sum of utility functions to reach a Pareto optimal outcome. However, I argue against selecting a compromise outcome implicitly by normalizing utility functions. I review bargaining theory and argue that the Nash bargaining solution could be a relevant Schelling point. I introduce \emph{dependency equilibria} \parencite{Spohn2007-fp}, an equilibrium concept suitable for ECL, and generalize a folk theorem showing that the Nash bargaining solution is a dependency equilibrium. I discuss gains from trade given uncertain beliefs about other agents and analyze how these gains decrease in several toy examples as the belief in another agent decreases.

Finally, I discuss open issues in my model. First, the Nash bargaining solution is sometimes not coalitionally stable, meaning that a subset of cooperators can unilaterally improve payoffs by deviating from the compromise. I investigate conditions under which stable payoff vectors exist. Second, I discuss how to model agents' default actions without ECL.

\end{abstract}

\tableofcontents{}

\section{Introduction}

\emph{Evidential cooperation in large worlds} (ECL)\footnote{In previous work, this concept has been referred to as ``multiverse-wide
cooperation via superrationality'' (MSR).} \parencites{Oesterheld2017-qg}{Gloor2017}{Oesterheld2018} is a crucial consideration that could have important implications for the prioritization of altruists.
To illustrate the idea, consider a prisoner's dilemma between two artificial
agents with identical source code. Even if both agents cannot causally
interact, one agent's action provides them with strong
evidence about the other agent's action. Evidential
decision theory (EDT), as well as functional decision theory \parencite{Yudkowsky2017-vb} and some variants of causal decision theory (CDT) \parencite{Spohn2012-fo,Poellinger2013-we},
say that agents should take such evidence into account when making
decisions. In situations like the prisoner's dilemma with two identical
agents, they prescribe cooperation for this reason, an idea that is also called \emph{superrationality} \parencite{hofstadter1983dilemmas}. ECL is based on
the idea that humans on Earth are in a similar situation as such agents.

First, there probably is a large or infinite universe,
containing vast numbers of civilizations, inhabiting different, causally disconnected parts of the universe \parencite{tegmark2003parallel,tegmark2015our}. I refer to such a large universe as a \emph{multiverse}, and to causally disconnected parts of it as \emph{universes}, regardless of the specific structure of the universe (e.g., these parts could just be far-apart regions of space). Given their vast number, there are likely universes containing agents that are very similar to humans, such that humans'
actions are evidence about these agents' actions \parencite{macaskill2021evidentialist}.

Second, these
agents may pursue different goals, leading to possible gains from trade. For instance, pursuing a given goal in one universe may have diminishing returns, and agents may care about other universes as well. In that case, it may be beneficial for agents to trade by pursuing a mixture of everyone's goals in all universes. Since agents in different universes cannot
communicate and there is no way to enforce an agreement, this puts them in a collective prisoner's dilemma. Under the right conditions,
the abovementioned decision theories recommend that humans take the
preferences of other, similar agents in the multiverse into account,
in order to produce the evidence that these agents do in turn take humans'
preferences into account, leaving everyone better off.

According to \textcite[sec.~4]{Oesterheld2017-qg}, this idea could
have far-reaching implications for the prioritization of altruists. For instance, given 
ECL, some forms of moral advocacy could become ineffective: agents
advocating their particular values provides them with evidence that
others will do the same, potentially neutralizing each other's
efforts \parencite[sec.~4.2]{Oesterheld2017-qg}. Moreover, ECL could play a role in deciding which strategies
to pursue in AI alignment. If potential gains from cooperation
are vast, then it becomes more important to ensure that AI systems are aligned with humans' idealized philosophical views on decision theory and ECL.\footnote{Note that interventions to promote ECL could also backfire by exacerbating other risks from advanced AI \parencite[see][]{xu2021open}.}

In this report, I develop a game-theoretic model of ECL as an incomplete information bargaining problem, incorporating uncertainty about the values and empirical situations of potential superrational cooperators, and addressing the problem of selecting a compromise outcome. I clarify the conditions that make ECL feasible and analyze gains from trade given empirical uncertainty. Moreover, I discuss several technical and philosophical problems that arise.

Basic knowledge of game theory, such as normal form games, Nash equilibria, and the prisoner's dilemma \parencite[see][]{osborne1994course}, as well as decision theory and ECL (see \textcite{Gloor2017} for an introduction), will be helpful for understanding this report.

\subsection{Summary}
Here, I provide a short summary of the report, highlighting key contributions. Afterwards, I outline the organization of the remaining report, and briefly discuss related work.

\subsubsection{Game-theoretic models} 
I introduce three models: a bargaining model, a Bayesian game model, and a Bayesian bargaining model, combining the two previous models. The first two models are useful since many issues can more easily be discussed in the less general setting, and this structure may make the report easier to follow. However, it is possible to skip directly to the final model in \Cref{sec:Bargaining-with-incomplete}.

In a bargaining game, players have to agree on some compromise outcome, from a \emph{feasible set} of achievable payoff vectors. A \emph{disagreement point} specifies the outcome that is realized if no compromise is reached. I argue for modeling ECL as a bargaining problem, since (i) there is an inherent bargaining problem in determining a compromise between superrational cooperators that needs to be addressed, (ii) bargaining solutions that are supported by plausible axioms serve as Schelling points, and (iii) there are important parallels between \emph{acausal trade}\footnote{\url{https://www.lesswrong.com/tag/acausal-trade}} (where agents use mutual simulations to reach an agreement) and ECL, meaning that bargaining could be a relevant model for agents forming conditional beliefs over other agents' actions. I also address \textcite[Sec. 2.8]{Oesterheld2017-qg}'s suggested approach of pursuing a sum of normalized utility functions as a compromise utility function. I show that to achieve a Pareto optimal outcome, i.e., an outcome that cannot be improved upon without making any player worse off, everyone has to maximize the same compromise utility function. However, I argue for choosing a compromise based on a bargaining solution rather than a normalization method such as variance normalization, on the grounds that the latter can leave agents worse off than without the compromise. I review two popular bargaining solutions, the Nash bargaining solution (NBS) and the Kalai-Smorodinsky bargaining solution (KSBS), and conclude that the NBS could be a relevant Schelling point for ECL.

The Bayesian game formalism serves to incorporate incomplete information---that is, information about the values and available options of other players. Specifically, I use a modified version of \textcite{Harsanyi1967}'s type space formalism. In my model, there is a large number of players, living in different universes. Each player is assigned a \emph{type}, representing their values and empirical situation, according to some prior distribution \(p\). Players' posterior beliefs over types, after updating on their own type, represent their beliefs over other universes. Players' utility functions depend on the actions and types of all players. Relaxing the assumption of a common prior \(p\) is an important area for future work.

Finally, the Bayesian bargaining model implements a bargaining game on top of a Bayesian game, incorporating bargaining with incomplete information. The feasible set here is the set of expected utilities that can be produced by players for all the types, given the types' beliefs about other players.

\subsubsection{Gains from trade under uncertainty}

Two important assumptions in this report are \emph{additive separability} and \emph{anonymity}. Additive separability means each player's utility functions can be expressed as a sum of contributions from other players. This would be true for total utilitarians but false for average utilitarians valuing average well-being across the multiverse, for instance.
Anonymity means beliefs, utilities and strategies depend only on types, not on specific players. This means we do not distinguish between different universes.

Given these two assumptions, we can regard strategies as vectors \(\alpha\in A^T\), where \(T\) is the set of types and \(A\) the set of strategies for any type. The expected utility of a strategy for a type \(t\in T\) can be simplified to the expression
\begin{equation}\label{eu-intro}EU_t(\alpha)=u_{t,t}(\alpha) + (n-1)\sum_{t'\in T}p(t'\mid t)u_{t',t}(\alpha_{t'})\end{equation}
where \(n\) is the number of players, \(p(t'\mid t)\) is the belief of any player of type \(t\) that any other player has type \(t'\), and \(u_{t',t}(\alpha_{t'})\) is the utility provided by a player of type \(t'\) to a player of type \(t\). 
The first term is the utility produced by a player for themself, and the second term stands for the expected utility produced by all the other players.

Note that if \(n\) is large, the expected utility is dominated by the second term, meaning that the utilities produced by a player for themself in their own universe can be ignored. It follows that a potential compromise option \(\beta\) produces gains from trade for a type \(t\) if
\[\sum_{t'\in T\setminus \{t\}}p(t'\mid t)(u_{t',t}(\alpha_{t'})-u_{t',t}(\beta_t))\geq p(t\mid t)(u_{t,t}(\alpha)-u_{t,t}(\beta)).\]
That is, both potential gains from other types' cooperation as well as potential losses due to players of type \(t\) compromising are weighted by type \(t\)'s posterior beliefs over the types of other players. If the former outweigh the latter, then \(\beta\) leads to gains for players of type \(t\).

This shows that when it comes to gains from trade, what matters are players' posterior beliefs over other players' types. For instance, if types are certain that all players have the same type, i.e., \(p(t\mid t)=1\), then no trade is possible. If \(p(t'\mid t)=0\) for a specific type \(t'\), then that type cannot benefit players of type \(t\). If all types have the same posterior beliefs, then trade may in principle be possible, depending on the different types' options. In general, different beliefs can put a tax on trade.

I also consider a model that includes uncertainty over whether other players are superrationalists or sufficiently similar to enable ECL. However, I argue that such considerations can also be incorporated into the posterior beliefs \(p(t'\mid t)\), so this extension does not increase the generality of the model.

\subsubsection{Double decrease and Paretotopia}
Using \Cref{eu-intro}, we can analyze gains from trade in different toy models. I consider an example with a trade between two types, \(T=\{1,2\}\). Both types start out with an equal number of resources and can invest resources into either type's utility function. Resource investments have diminishing returns, leading to potential gains from trade. As a compromise outcome, I use the NBS. I consider square root as well as logarithmic returns to resources. \Cref{fig:intro1} shows \emph{individual feasible sets} in each case, which are the sets of expected utilities a player of each type can produce for both types. Gains from trade are larger given logarithmic utilities, since utilities diminish faster in this case.

Using this model, I analyze resource investments in the respective other type and gains from trade under the NBS, for different posterior beliefs \(p(t'\mid t)\) in the other type (assuming both types have the same prior weight \(p(1)=p(2)=\frac{1}{2}\)) (\Cref{fig:intro2}). As this belief goes down, gains from trade go down approximately quadratically in the square root utility case, leading to a ``double decrease'' as observed by \textcite{armstrong2017double}. However, in \textcite{drexler2019pareto}'s ``Paretotopia'' model with logarithmic returns to resource investments, gains from trade diminish more slowly with the belief in the other player.

\begin{figure}
\centering{}
\subfloat[Sqiare root utilities.]{\includegraphics[width=0.48\textwidth]{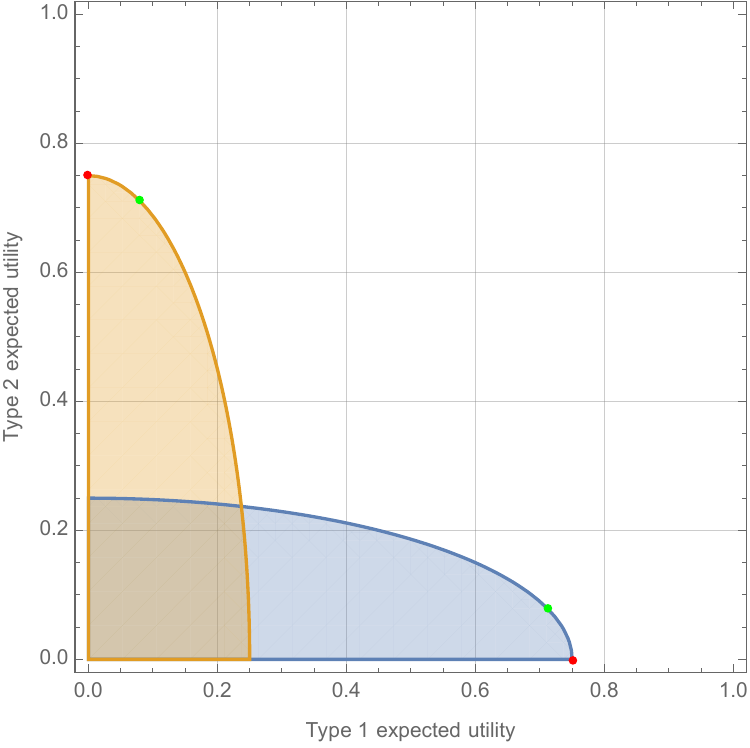}}
\hfill\subfloat[Logarithmic utilities]{\includegraphics[width=0.48\textwidth]{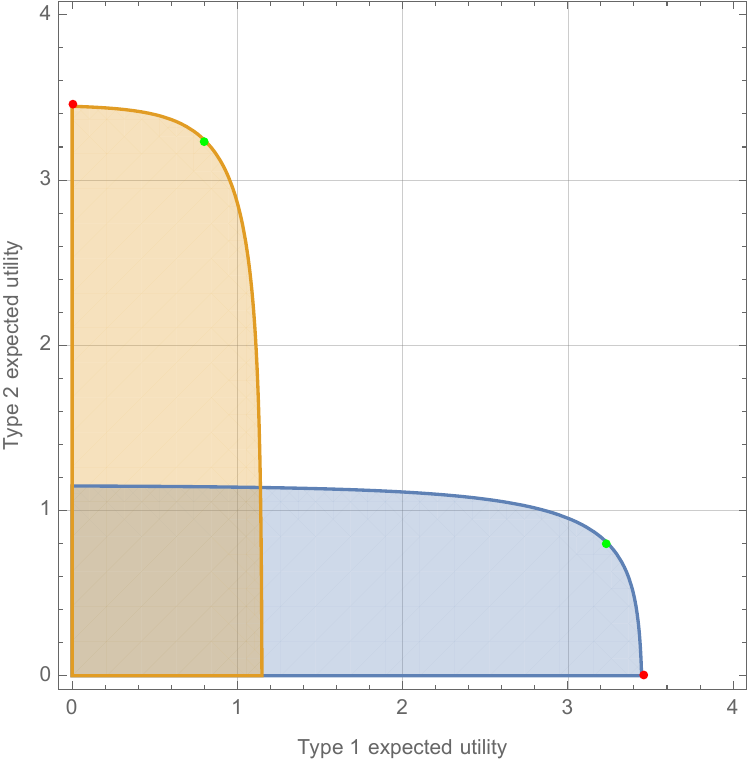}}
\caption{Sets of expected utility vectors that can be produced by either of two types investing resources in each other's value system. The disagreement point, at which each type only optimizes for their own values, is displayed in red, and the action corresponding to the NBS in green. Here, the prior of each type is \(p(t)=\frac{1}{2}\) and the posterior belief that any other player has the same type is \(p(t'\mid t)=\frac{1}{4}\). This means that gains from trade are smaller than with equal beliefs.}
\label{fig:intro1}
\end{figure}

\begin{figure}
\centering{}
\subfloat[Square root utilities.]{\includegraphics[width=0.48\textwidth]{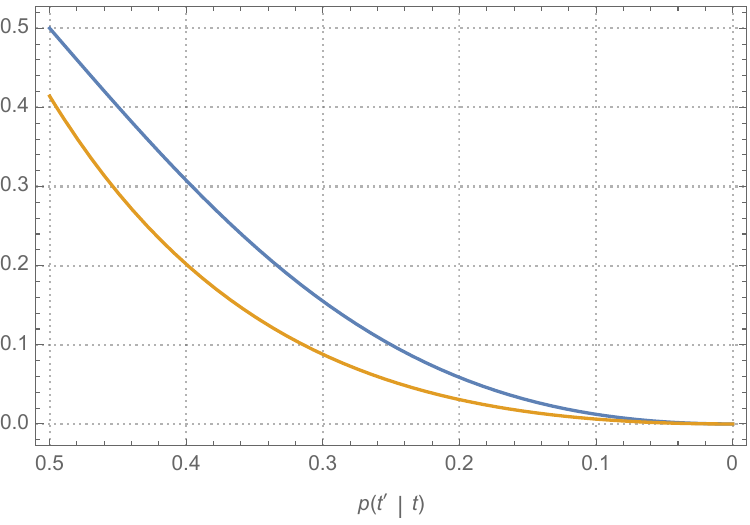}}
\hfill\subfloat[Logarithmic utilities]{\includegraphics[width=0.48\textwidth]{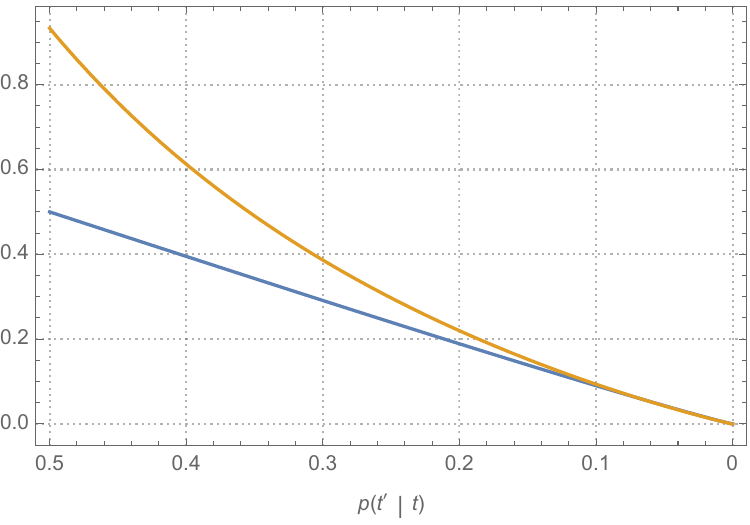}}
\caption{Share of expected utility received by the other type (blue) and percent gains from trade (orange). The belief \(p(t'\mid t)\) of any player of type \(t\) that other players have type \(t'\), for \(t'\neq t\), ranges between \(\frac{1}{2}\) and \(0\). When \(p(t'\mid t)=\frac{1}{2}\), all players have the same uniform posterior distribution over types, leading to maximal gains from trade. When \(p=0\), players think that all other players are of the same type, so no trade is possible. As \(p\) goes to zero, gains from trade decrease. In the square root utility case, they go down roughly quadratically, exhibiting a ``double decrease'' \parencite{armstrong2017double}. In the logarithmic utility case, they diminish more slowly.}
\label{fig:intro2}
\end{figure}

\subsubsection{Equilibrium concepts}
I introduce two equilibrium concepts for the Bayesian game and Bayesian bargaining game models. First, I introduce Bayesian Nash equilibria. In the additively separable case, these equilibria are trivial as each player is simply optimizing for their own values in their own universe, ignoring other players. Second, I introduce a generalization of \textcite{Spohn2007-fp}'s \emph{dependency equilibria} for Bayesian games and for continuous action spaces. A dependency equilibrium is a joint belief over the actions of all players, where every player's actions have optimal \emph{conditional} expected utility. Since it evaluates conditional probabilities and allows for dependencies between players' actions, dependency equilibria are suitable to model the superrational reasoning required for ECL. For instance, in a prisoner's dilemma, there is a dependency equilibrium in which both players cooperate.\footnote{There are several other equilibrium concepts in the literature with similar properties \parencite{al2015evidential,daley2017magical,halpern2018game}, which I have not looked at in this report.} My technical contributions are generalizing dependency equilibria to Bayesian games and to continuous action spaces. The latter is necessary for my bargaining model since players bargain over a continuous space of, e.g., \emph{independent} randomizations over actions, or continuous resource investments.

I prove several results about dependency equilibria in my model, including a generalization of \textcite{Spohn2007-fp}'s folk theorem for dependency equilibria, showing that any Pareto improvement over a Bayesian Nash equilibrium is a dependency equilibrium. As a corollary, it follows that the NBS with the Nash equilibrium disagreement point is a dependency equilibrium. I also show that a dependency equilibrium with independent action distributions is a Bayesian Nash equilibrium.

\subsubsection{Disagreement points}
I discuss the problem of choosing a disagreement point in ECL. Since ECL only involves choosing some compromise action based on some joint belief, without any actual bargaining, it is unclear what the relevant notion of non-compromise outcome should be. However, how to model agents' default options without ECL is an important question in general, not only in my bargaining model.

A natural option is the Bayesian Nash equilibrium, but there is also the \emph{threat point}, which is the equilibrium of a game in which players choose disagreement actions to improve their bargaining position. I review a plausible axiomatization of the threat point by \textcite{Nash1953} and show that the NBS with the threat disagreement point can sometimes lead to bargaining outcomes that are worse than a Nash equilibrium. I also show that the NBS with the threat disagreement point is still a dependency equilibrium. Coercion to join a compromise should not be relevant to ECL, since there are no explicit threats. However, threats might be relevant for the same reason bargaining in general is relevant to ECL. The question of disagreement points is an important area for future work.\footnote{It may be valuable to review recent work by \textcite{diffractor2022rose} on threat-resistant bargaining in this context.}

\subsubsection{Coalitional stability}
Finally, I discuss the issue of coalitional stability. A bargaining solution is coalitionally stable if it is in the \emph{core}, which is the set of payoff vectors in the feasible set such that no subgroup of players (coalition) can strictly increase payoffs for all of its members by unilaterally deviating from the compromise. Coalitional stability is an important criterion for a compromise solution for ECL since it seems plausible that players would choose to pursue a compromise with a subgroup of players if this leads to higher payoffs. Hence, if the \emph{grand coalition} of all players is not stable, this would lead to a difficult coalition finding problem, making ECL even more complicated to implement.

I show that the NBS with either the Nash or the threat disagreement point can sometimes be unstable. I then analyze the existence of core allocations. The core is known to be empty in general games \parencite[][ch.~13.2]{osborne1994course}. However, using a result by \textcite{scarf1967core}, I show that assuming additively separable utilities, the core is always nonempty. In this analysis, I assume worst-case responses by players outside the coalition, from among the possible Pareto optimal strategies they could pursue for themselves. (Specifically, I do not assume other players respond with threats against coalitions.)

Additionally, I show that if players outside the coalition respond with a Nash equilibrium, the core can be empty even given additive separability. This demonstrates that sometimes no stable bargaining solution exists that improves upon the Nash equilibrium disagreement point, a strong argument against this disagreement point. The intuition is that sometimes two players cooperating leads to negative externalities for a third player, leaving the third player worse off than with no cooperation. Motivated by my result on the existence of core allocations, I suggest an alternative disagreement point that guarantees stability.

\subsection{Outline}\begin{itemize}
\item In \Cref{sec:Preliminaries}, I discuss several assumptions and simplifications I make in the report.

\item In \Cref{sec:ECL-as-a-bargaining-problem}, I introduce a standard bargaining formalism. I argue that a bargaining problem is an appropriate model for ECL. After providing an example bargaining problem (\Cref{subsec:example-alice-bob}), I introduce the formal bargaining model and relevant notation (\Cref{ecl-bargaining-problem-setup}). I then discuss maximizing a sum of normalized utility functions as a compromise utility function (\Cref{subsec:Normalizing-utility-functions}). Next, I briefly review bargaining theory, introducing the Nash and Kalai-Smorodinsky bargaining solutions (\Cref{bargaining-theory}). Lastly, in \Cref{observations-bargaining}, I make some initial observations about the model.

\item In \Cref{sec:ECL-as-a-Bayesian-Game}, I introduce a Bayesian game model. In Sections~\ref{bayesian-game-formalism}--\ref{joint-strategy-distributions}, I introduce the formalism and notation. I then introduce Bayesian Nash equilibria and dependency equilibria (\Cref{joint-strategy-distributions}) and discuss extending the model to include uncertainty about decision procedures and similarity to other agents (\Cref{sec:Uncertainty-about-similarity}). Finally, I prove several equilibrium results (\Cref{subsec:ObservationsEquil}).

\item In \Cref{sec:Bargaining-with-incomplete}, I introduce Bayesian bargaining game, combining the previous models. I introduce the formal setup and notation in Sections~\ref{formal-setup-bayesian-bargaining}--\ref{strategies-bayesian-bargaining}. In \Cref{bargaining-theory-bayesian}, I introduce a version of the Nash bargaining solution adapted to my model. I then define equilibria in the model (\Cref{joint-distributions-equilibria-bayesian-bargaining}). Lastly, in \Cref{observations-final}, I discuss several takeaways: I provide equilibrium results, discuss how to think about gains from trade given uncertainty, and work through several toy examples, including \textcite{armstrong2017double}'s ``double decrease'' and \textcite{drexler2019pareto}'s ``Paretotopia'' model.

\item In \Cref{sec:Discussion-of-this}, I discuss two important issues: disagreement points (\Cref{subsec:The-question-of}) and coalitional stability (\Cref{fairness-and-coalitional}).

\item Finally, in \Cref{conclusion}, I conclude and outline possible future work.
\end{itemize}

\subsection{Related work}
A list of prior work on ECL can be found at \url{https://longtermrisk.org/msr}. No prior work introduces a formal game-theoretic model and discusses equilibria or bargaining theory. \textcite[Sec.~2.7]{Oesterheld2017-qg} includes a simple calculation establishing the plausibility of ECL but without modeling different players, beliefs, or utilities. \textcite[][Sec.~2.9.4]{Oesterheld2017-qg} introduces a variable for other players' decision theories, an idea I discuss in \Cref{sec:Uncertainty-about-similarity}. \textcite{treutlein2018three} introduces a simple model with variables for correlations, gains from trade, and number of cooperators, to establish a wager for ECL.

The most important related work is \textcite{armstrong2017acausal}'s sequence on acausal trade. He introduces a toy model where players have different utility functions and uncertainty about the existence of other players. Among other issues, he discusses how gains from trade change under different beliefs. I reproduce some of Armstrong's findings in \Cref{double-decrease}. Armstrong focuses on acausal trade and does not discuss relevance to ECL.
\section{Preliminaries\label{sec:Preliminaries}}

I make several assumptions and simplifications in this report:
\begin{enumerate}
\item I focus on EDT as a decision theory. In particular, I introduce a game-theoretic solution concept based on conditional expected utilities (see \Cref{subsec:Equilibrium-concepts}). While I believe my analysis also applies to other decision theories that take dependencies between similar
agents into account, I will not discuss this. My
analysis may also be relevant to readers with decision-theoretic
uncertainty, since there may be a wager to take ECL into account given any nonzero credence in EDT \parencite{macaskill2021evidentialist,treutlein2018three}. 
I do not model decision-theoretic uncertainty, but it could be added similarly to uncertainty about decision-theoretic similarity (see \Cref{sec:Uncertainty-about-similarity}).

\item I do  not address questions about the nature of dependencies between the decisions of different agents, how one could evaluate whether different agents' decisions are dependent, etc. However, I discuss modeling partial correlations or uncertain beliefs about dependencies in \Cref{sec:Uncertainty-about-similarity}.

\item I assume that there is only a finite set of
agents and the utilities of the options involved are all finite. This is a problem, since the most likely case in which the universe
is large enough to give rise to ECL is an infinite universe. It seems plausible that solutions to infinite
ethics will not change conclusions from my model \parencites[cf.][]{macaskill2021evidentialist}[][sec.\ 6.10]{Oesterheld2017-qg}. The assumption
of a finite set of agents is more problematic, since there likely exists a continuum of agents with a continuum of value systems. One may be able to discretize such a set and approximately recover the model discussed here, but it also seems possible that the general case leads to qualitatively new problems.

\item I assume that agents are Bayesian (conditional)
expected utility maximizers.

\item For simplicity, I model ECL as a one-off decision. For instance, this
could be a commitment to a policy or a decision to maximize some compromise
utility function in the future. I assume that it is possible to
commit oneself to this compromise, and that there won't be changes to the compromise based on new information about one's empirical situation in the universe. This is plausible if either the agents
\emph{can} actually commit themselves in this way, or if they just
never learn enough such that their assessment of the situation would
relevantly change. Note that this does not affect how agents \emph{arrive} at this compromise (whether by first-principles reasoning, by simulating agents in the multiverse, etc.; see the discussion in the next section).
\end{enumerate}

\section{Complete information bargaining model}
\label{sec:ECL-as-a-bargaining-problem}

In this section, I develop a model of ECL as a complete information
bargaining problem. A bargaining problem is a game between players
in which the players have some method of negotiating a binding agreement.
If everyone accepts the agreement, the actions specified by the agreement
are carried out. Otherwise, players carry out some disagreement action.
Complete information, as opposed to incomplete information, means
that everyone knows who the other players are, as well as their options and utility functions.

In ECL, players are uncertain about their superrational cooperators, so an incomplete information model would be more appropriate. Nevertheless, it is useful to start with complete information for simplicity, since many ideas from the complete information setup will transfer. I will relax the complete information assumption in the following sections.

A more critical assumption is that of using a bargaining model for ECL. ECL is based on the idea that an agent has some belief about other agents' actions,
conditional on their own action. The agent takes some
cooperative action, to produce the evidence that other agents also take more cooperative actions. This does not involve any explicit bargaining between the agents. Nevertheless, I believe using a bargaining model is useful for thinking about ECL.

First, the problem of choosing a compromise outcome in ECL has to be addressed in some way. In \Cref{subsec:Normalizing-utility-functions}, I discuss \parencite[Sec.\ 2.8]{Oesterheld2017-qg}'s suggested approach of maximizing a compromise utility function, consisting of a sum of normalized utility functions of all superrational cooperators. This is a valid approach, since every compromise that is Pareto optimal, i.e., that cannot be improved upon without making anyone worse off, is the result of maximizing \emph{some} common weighted sum of utility functions (see \Cref{thm-equal-weights}). However, I argue against choosing a compromise outcome implicitly by normalizing all agents' utility functions, e.g., according to variance, since that approach might leave some cooperators worse off than without a compromise. Formulating ECL as a bargaining problem and reviewing the relevant literature is a natural starting point for addressing the problem explicitly.

Second, a solution to a bargaining problem may serve as a Schelling point\footnote{\url{https://en.wikipedia.org/wiki/Focal_point_(game_theory)}} for superrational cooperators. Solutions can be supported by plausible axioms that could be universally agreed upon. Hence, bargaining theory can be one relevant reference point for determining which evidence one's actions provide about other agents' actions. It seems plausible that, if humans adopt some parsimonious solution, then other, similar agents will do the same.

Third, bargaining may be important because of a parallel between ECL and \emph{acausal trade}\footnote{\url{https://www.lesswrong.com/tag/acausal-trade}}. Acausal trade refers to the more general
idea that agents could be able to negotiate and enforce a cooperative outcome via mutual simulations, in the absence of any causal interaction. ECL is the special case in which, instead of mutual
simulation, similarity in decision algorithms or psychological processes
ensures a joint cooperative action. While humans might be able to engage in ECL, acausal trade is likely only feasible for superhuman AI systems.

I think there is no principled
distinction between acausal trade and ECL. Determining the conditional
beliefs about the actions of other agents involves, at least in principle,
similar questions as those concerning acausal trade. Conditioning on one's own decision process having some output, one needs to determine which
actions a similar but non-identical decision process in a similar
or symmetrical, but non-identical decision situation would output. At the same time, the other decision process is trying to make the same determination.
Due to such mutual dependencies between the actions of agents, one cannot divide the decision process clearly
into given conditional beliefs that specify which inferences to make
based on different actions, and the subsequent choice of the action
with the highest expected utility. Instead, one has to already make choices while
inferring the (logical) conditional credences. For instance, the inferred conditional distribution over opponent actions may be influenced by one's own commitment
to respond to opponent actions in a certain way \parencite[see][]{kokotajlo2019commitment,mennen2018wishful}.\footnote{{In a comment on an earlier draft, Max Daniel writes: ``If I understand this correctly, this seems important to me, and quite connected to some of the reasons why I feel skeptical about ECL having practical implications for humans. I also feel like it has been underemphasized in texts on ECL so far.''}}

It is prudent for humans to have conditional beliefs about the world,
including other agents, even without being able to entirely solve
this issue (which involves various open problems, for instance,
in logical uncertainty\footnote{\url{https://www.alignmentforum.org/tag/logical-uncertainty}}). In this situation, it makes sense to try
to improve one's guesses about ECL both based on reasoning
that is purely based on agents having beliefs about other agents,
and reasoning that involves hypothetical (acausal) bargaining.

Lastly, another way in which a bargaining problem is an inadequate model is that ECL is really a coalitional game. In a bargaining game,
there are only two possibilities: either all players agree to a proposed compromise action, or bargaining completely fails. If one player
disagrees, everyone plays their disagreement action. In a coalitional
game, any group of players can split off and negotiate an agreement, if this is beneficial to that group. I discuss this issue in \Cref{fairness-and-coalitional}.

Next, I give an example bargaining problem (\Cref{subsec:example-alice-bob}) and introduce the formal bargaining framework (\Cref{ecl-bargaining-problem-setup}). Afterwards, I discuss approaches to compromise that work via maximizing a sum of normalized utility functions (\Cref{subsec:Normalizing-utility-functions}). I argue against normalization and for explicitly picking out compromise outcomes. I then review some bargaining theory and discuss the Nash bargaining and the Kalai-Smorodinsky bargaining solutions, concluding that the former may serve as a good Schelling point (I review another solution by \textcite{Armstrong2013} in \Cref{appendix-armstrong-solution}). Finally, in \Cref{observations-bargaining}, I make some initial observations and discuss issues that arise in the bargaining model. I address the uniqueness of the actions corresponding to bargaining solutions  (\Cref{uniqueness}) and discuss how to think about gains from trade in the bargaining framework (\Cref{possible-gains-from-trade-bargaining}).

\subsection{Example}
\label{subsec:example-alice-bob}
Here, I give an example bargaining problem to motivate the following formal definitions, derived from a case by \textcite{armstrong2017double}.
\begin{example}
\label{exa:bargaining-case-1}
There are two players, Alice and Bob. Alice and
Bob care in an additive way about the things that both do. Say Alice
has 10 and Bob has 5 units of some resource, and \(A\) and \(B\) is the amount spent \emph{on Alice's utility function}, by Alice and Bob respectively. \(10-A\) and \(5-B\) are the respective amounts spent on Bob. Resources invested in
Alice's utility function produce linear utility for her, so her utility
is $A+B$. Bob's utility function,
on the other hand, has diminishing returns; the marginal cost of one
additional utilon equals the utilons that have already been produced. So to create
$x$ utilons for Bob, both Alice and Bob need to invest $\int_{0}^{x}y\mathrm{d}y=\frac{1}{2}x^{2}$
resources. Hence, Bob's utility function is $\sqrt{2(10-A)}+\sqrt{2(5-B)}$.
It is possible to plot both Alice's and Bob's actions (i.e., ways
to split up their resources between Alice and Bob) in a two-dimensional
plane in which the axes are the utility functions of Alice (\(x\)-Axis)
and Bob (\(y\)-Axis). Additionally, one can plot all combinations of actions
of Alice and Bob in one joint graph for both utilities (\Cref{fig:5}). The upper right boundary of the set of feasible utilities is the Pareto frontier---the set of utility vectors such that
no one can improve on their utility without making someone else worse
off.
\end{example}

\begin{figure}
\begin{centering}
\subfloat[The utilities produced by Alice's (blue) and Bob's (orange) actions
individually.]{\includegraphics[width=0.48\textwidth]{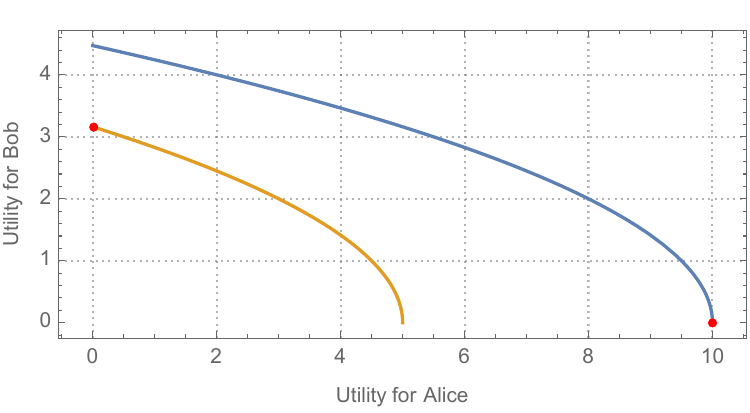}
}\hfill
\subfloat[The utilities produced by all possible combinations of Alice's and
Bob's actions.]{\includegraphics[width=0.48\textwidth]{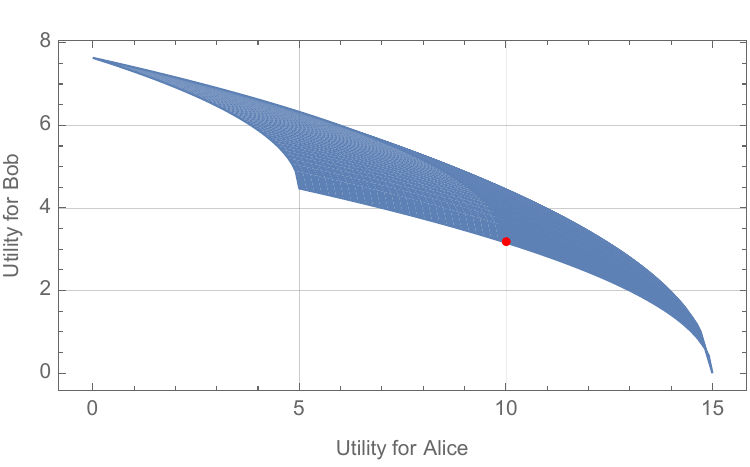}}
\end{centering}
\caption{Utilities in the bargaining problem from \Cref{exa:bargaining-case-1}. The point at which both Alice and Bob maximize their own utilities is indicated by a red dot.}
\label{fig:5}
\end{figure}

In this example, both agents maximizing their own utility function leads to a Pareto inferior outcome: the
point $(10,\sqrt{10})$, which does not lie on the Pareto frontier.
If, on the other hand, Bob and Alice are able to coordinate on a cooperative
combination of actions, this leaves both better off. There is the
question, though, \emph{which} point on the Pareto frontier they should
choose. In this section, I am considering this question in the case of ECL.

An interesting property of the Pareto frontier is that if Alice and Bob choose actions such that the slopes
of their individual Pareto frontiers---in this example, the slope
of the lines in \Cref{fig:5} (a)---at the point of their actions
are not the same, then the actions are not Pareto optimal. Regarding
the slope of the Pareto frontiers as marginal rates of substitution,
this is a well-known concept in economics. If the marginal rates of substitution for Alice and Bob are not the same, then both players can move in opposite directions on
their Pareto frontier to become jointly better off. One person can give
up some amount $x$ of utility for Alice to gain some amount $y$
of utility for Bob, and at the same time, the other person can give
up less than $y$ utility for Bob and gain more than $x$ for Alice,
such that jointly, the effect on both of their utilities is positive.

\subsection{Formal setup}
\label{ecl-bargaining-problem-setup}

\begin{defn}
A \emph{(complete information) bargaining game} is a 4-tuple \[B=(N,(A_{i})_{i\in N},(u_{i})_{i\in N},d),\]
where
\begin{itemize}
\item $N=\{1,\dots,n\}$ is the set of players;
\item $(A_{i})_{i\in N}$ is the tuple of finite sets of actions for
each player;
\item $(u_{i})_{i\in N},u_{i}\colon A\rightarrow\mathbb{R}$ is the tuple
of utility functions for all players, where $A=\prod_{i\in N}A_{i}$
is the set of outcomes, or the set of pure strategy profiles;
\item $d\in\mathbb{R}^{n}$ is the disagreement point.\label{def:ECL-bargaining-game}
\end{itemize}
\end{defn}

A bargaining game as defined above is a standard normal form game, with the addition of a disagreement point, which is needed to specify a default outcome that is realized when bargaining fails. 

In the following, I introduce some initial notation and definitions. As is standard, I write \(a_{-i}\in A_{-i}:=\prod_{j\in N, j\neq i}A_j\) and \((a_{-i},a_i)\) for the vector in which the \(i\)-th entry is \(a_i\in A_i\) and the remaining entries are given by \(a_{-i}\in A_{-i}\).

Given a bargaining game $B$, players
are able to randomize between actions. Let $\Sigma_i:=\Delta(A_{i})$ be the set of
probability distributions (identified with probability mass functions) over the actions in $A_{i}$. Then \(\sigma_i\in \Sigma_i\) is called a mixed strategy. Moreover, $\sigma \in\Sigma:=\prod_{i\in N}\Sigma_i$ is called a mixed strategy profile, and I write
\[
u_{i}(\sigma):=\sum_{a\in A}\left(\prod_{j\in N}\sigma_{j}(a_{j})\right)u_{i}(a)
\]
 for player $i$'s expected utility given mixed strategy profile
$\sigma\in\Sigma$.
I regard the mixed strategies as the options of the players. Note
that at this stage, strategies are always independent. Later I introduce
a different concept which involves possibly dependent joint distributions
over players' actions, which I call ``joint strategy distributions''.

Given a bargaining game \(B\), one can define
\[
F(B)=\{x\in\mathbb{R}^{n}\mid\exists\sigma\in\Sigma\colon\forall i\in N\colon x_{i}=u_{i}(\sigma)\}
\]
 as the \emph{feasible set}. This set is by construction a simplex, and hence
convex and compact. The feasible set contains, for all the possible
mixed strategy profiles that the players can choose, vectors that
specify the expected utilities for each player given that profile---i.e.,
the utility vectors that are feasible given the bargaining game $B$. I also
define
\[
H(B)=\{x\in F(B)\mid\forall y\in F(B):(\forall i\in N:y_{i}\geq x_{i})\Rightarrow y=x\}
\]
as the (strict) Pareto frontier of $F(B)$.

As an addendum to \Cref{def:ECL-bargaining-game}, I will assume in the following
that the disagreement point simply corresponds to one of the possible mixed strategies, and hence lies in the feasible set, i.e.,
$d\in F(B)$. Moreover, I assume that it is possible to achieve gains from trade for all players. I.e., there exists \(x\in F(B)\) such that \(x_i>d_i\) for all \(i\in N\). This simplifies matters and is not really a substantial restriction. To see this, note that if players cannot receive gains from trade, then it does not make sense for them to participate in ECL. Moreover, consider the set of players that can receive gains from trade without making other players worse off than their disagreement point. Then by convexity of \(F(B)\), there also exist outcomes that make all of these players better off simultaneously.

Utility functions are only specified up to positive affine transformation,
i.e., if there are utility functions $u,u'$ and there is $a\in\mathbb{R}_{>0}$,
$b\in\mathbb{R}$ such that $u=au'+b$, then these utility functions
imply exactly the same preference relation over $\Sigma$. I write
$u\sim u'$ to denote that two utility functions are equivalent in
this sense.

If one subtracts the disagreement point from a utility function,
then the resulting function is at least unique with respect
to the addition of a constant $b\in\mathbb{R}$. One can of course
still multiply the function by an arbitrary positive number.
The disagreement point may be just some outcome that would obtain
if everyone were to maximize their own utility function, but it could
also be a ``threat point'' where everyone takes the action which
produces the best subsequent bargaining solution. More on this in \Cref{subsec:The-question-of}, but for now I assume such a point is given.

In the ECL context,
we can assume that players live in completely causally disconnected universes. Hence, if all the players have
value systems that are additive in the universes, it does not make a difference to the
utility one player gets from another player what all the other players
do (this is false in general, e.g., if players were able to causally interact).
So it makes sense to give the following definition:
\begin{defn}[Additive separability] Utility functions are called \emph{additively
separable} if there are functions $u_{i,j}$ for $i,j\in N$ such that for all
\(i\in N\) and $a\in A$, we have
$u_{i}(a)=\sum_{j\in N}u_{i,j}(a_{j})$. A bargaining game \(B\) is called additively separable if the corresponding utility functions are additively separable.
\end{defn}

Unfortunately, this excludes some notable value systems. For instance, value systems which have diminishing
returns for some good across the multiverse, or value systems that care about the average happiness of all beings in the multiverse. However, it makes things
much easier. I will assume additive separability in many results.

In case utility functions are additively separable, it is possible
to write \[F(B)=\sum_{i\in N}F_{i}(B):=\left\{\sum_{i\in N}x_i\mid \forall j\in N\colon x_j\in F_j(B)\right\},\] where 
\[
F_{i}(B):=\{x\in\mathbb{R}^{n}\mid\exists\sigma_{i}\in\Sigma_{i}\colon\forall j\in N\colon x_{j}=u_{j,i}(\sigma_{i})\}
\]
 is the feasible set for player $i\in N$ and $u_{j,i}(\sigma_{i})=\sum_{a_{k}\in A_{i}}\sigma_{i}(a_{k})u_{j,i}(a_{k})$.
That is, for each player, there is an individual feasible set of utility vectors that this player can generate for all players, and the joint feasible
set consists of all the points $x$ that are sums of points in the
individual feasible sets. I define
\[
H_{i}(B)=\{x\in F_{i}(B)\mid\forall y\in F_{i}:(\forall j\in N:y_{j}\geq x_{j})\Rightarrow y=x\}
\]
as the strict Pareto frontier of $F_{i}(B)$.
This is the upper right boundary of
the utilities that the individual player $i\in N$ can contribute in
their part of the universe. Note that not every sum \(\sum_i x_i\) of points \(x_i\in H_i(B)\) on the individual Pareto frontiers will be Pareto optimal.

In \Cref{exa:bargaining-case-1}, utilities are additively
separable, and \Cref{fig:1} (a) depicts the two individual feasible sets, while (b) depicts the one combined feasible set. 
These feasible sets are not valid in the sense of the above definition,
since they are not convex and they are not simplices. However, we can relax the assumption that \(F(B)\) is a simplex by allowing for a bargaining problem to be directly defined as a tuple \(B=(N,F,d)\), where \(N\) is the set of players, \(F\) the feasible set, and \(d\in F\) the disagreement point. In this case, \(F\) still has to be compact and convex, but it need not be a simplex (e.g., if the underlying set of actions is continuous, as in \Cref{exa:bargaining-case-1}). Convexity and compactness are required so that we can apply bargaining theory (e.g., to ensure that strictly convex functions have unique minima).

Assuming additive separability, it is practical to just identify
the space of actions of player $i\in N$ with their feasible set
$F_{i}(B)$. In that case, we can define a bargaining problem as a tuple $B=(N,(F_{i})_{i\in N},d)$ of a set of players \(N\), individual feasible sets \(F_i\) for each player, and a disagreement point \(d\). Then we have $F(B)=\sum_{i\in N}F_{i}$. Here, the $F_{i}$ have to be compact, convex sets, but need not be simplices. Particularly, there are feasible sets \(F_i\) such that the
$H_i$ are smooth, \(n-1\)-dimensional manifolds, which we will assume in some results below.

Lastly, it is useful to define \[\mathcal{F}^{N}=\{F(B)\mid B\text{ is a bargaining game with set of players N\}}\]
as the set of all possible sets of feasible utilities for the set
of players $N$ and $\mathcal{F}=\bigcup_{N\in\mathcal{P}}\mathcal{F}^{N}$
as the set of all possible feasible sets, where $\mathcal{P}=\{\{1,\dotsc,n\}\mid n\in\mathbb{N}\}$
is the set of all finite sets of agents. Moreover,
\[\Upsilon=\{(F,d)\mid N\in\mathcal{P},F\in\mathcal{F}^{N},d\in F\}.\]
With these definitions, a bargaining solution is a function $\mu$
on $\Upsilon$ such that $\mu(F,d)\in F$. That is, it takes a feasible
set and a disagreement point, and outputs a unique point in the feasible
set as solution. 

\subsection{Normalizing utility functions\label{subsec:Normalizing-utility-functions}}

One possible approach to determining the actions of individual players
in the bargaining problem posed by ECL is maximizing some compromise utility function \parencite[Sec. 2.8]{Oesterheld2017-qg}. In particular, one may start by normalizing individual utility functions via shifting and scaling, and then maximize a weighted sum of them. Maximizing a sum picks out a specific point or affine subset of the Pareto frontier. Note that this correspondence also works the other way around---for every point on the Pareto frontier, we can derive weights such that the point maximizes the corresponding weighted sum. In this section, we will first argue why all players have to maximize the same sum to reach a Pareto optimal agreement. Second, we motivate the use of bargaining solutions that directly pick out points on the Pareto frontier, by arguing against an approach that starts by normalizing utility functions.

One motivation behind the idea of maximizing a weighted sum of utility functions is Harsanyi's utilitarian theorem \parencite{hammond1992harsanyi}. Assume that a player wants to maximize a compromise
utility function $u^{*}$ that also incorporates other players' preferences.
A very plausible axiom in this case is the following:
\begin{ax}
Let $\alpha,\beta\in\Sigma$ and $u_{i}(\alpha)\geq u_{i}(\beta)$
for all $i\in N$. Then $u^{*}(\alpha)\geq u^{*}(\beta).$\label{axm:Pareto-condition}
\end{ax}

This is a kind of Pareto optimality condition. If one mixed strategy
profile is at least as good for everyone as another mixed strategy
profile, then it should also be at least as good for the new utility
function $u^{*}$. According to a version of Harsanyi's utilitarian
theorem, it follows from this axiom that $u^{*}$ is just a weighted
sum of the utility functions of individual players:
\begin{thm}
\parencites{Resnik1983}{Fishburn1984-FISOHU} Let $u^{*}$ satisfy Axiom \ref{axm:Pareto-condition}.
Then there are weights $\lambda_{1},\dots,\lambda_{n}\in\mathbb{R}_{\geq0}$
such that
\begin{equation}
u^{*}\sim\sum\lambda_{i}u_{i}.
\end{equation}
\label{thm:aggregation-theorem}
\end{thm}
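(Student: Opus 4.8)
The plan is to prove the statement directly by a single separating‑hyperplane argument, after reducing it to a convex‑cone membership question. First I would fix notation. Since the $u_i$ and $u^*$ are von Neumann--Morgenstern utilities, each is determined by its values on the finite outcome set $A$ and is affine in the underlying lottery; so I identify $u_1,\dots,u_n,u^*$ with vectors in $\mathbb{R}^{A}$, write $\mathbf{1}\in\mathbb{R}^{A}$ for the constant function $1$, and let $\langle\cdot,\cdot\rangle$ be the standard pairing on $\mathbb{R}^{A}$, so that $u_i(p)=\langle p,u_i\rangle$ for a lottery $p$. The conclusion $u^{*}\sim\sum_i\lambda_i u_i$ with $\lambda_i\ge 0$ is then exactly the assertion that $u^{*}$ lies in
\[
C \;=\; \Bigl\{\, \textstyle\sum_{i\in N}\lambda_i u_i + b\,\mathbf{1} \;:\; \lambda_i\ge 0,\ b\in\mathbb{R} \,\Bigr\}\subseteq\mathbb{R}^{A},
\]
since the additive constant $b$ and positive rescalings are absorbed by $\sim$. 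Note that $C$ is a finitely generated (polyhedral) convex cone, hence closed — a routine point I would record because the separation step needs it.

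Next, suppose for contradiction that $u^{*}\notin C$. Separating a point from a closed convex cone yields $\nu\in\mathbb{R}^{A}$ with $\langle\nu,u^{*}\rangle>0$ and $\langle\nu,c\rangle\le 0$ for every $c\in C$. Taking $c=u_i$ gives $\langle\nu,u_i\rangle\le 0$ for each $i\in N$, and taking $c=\pm\mathbf{1}$ forces $\langle\nu,\mathbf{1}\rangle=0$, i.e.\ $\sum_{a\in A}\nu(a)=0$.

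Then I would turn $\nu$ into a Pareto comparison to which Axiom~\ref{axm:Pareto-condition} applies. Write $\nu=\nu_{+}-\nu_{-}$ for the positive and negative parts; since $\langle\nu,\mathbf{1}\rangle=0$, the two have a common total mass $m:=\sum_a\nu_{+}(a)=\sum_a\nu_{-}(a)$, and $m>0$ because $\nu\ne 0$. Hence $p:=\nu_{+}/m$ and $q:=\nu_{-}/m$ are lotteries over $A$, and for every $i\in N$,
\[
u_i(p)-u_i(q)=\tfrac{1}{m}\langle\nu,u_i\rangle\le 0,
\]
so $u_i(q)\ge u_i(p)$ for all $i$. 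Axiom~\ref{axm:Pareto-condition} then gives $u^{*}(q)\ge u^{*}(p)$, i.e.\ $\langle\nu,u^{*}\rangle\le 0$, contradicting $\langle\nu,u^{*}\rangle>0$. Therefore $u^{*}\in C$, which is precisely the claimed $u^{*}\sim\sum_i\lambda_i u_i$ with $\lambda_i\ge 0$.

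The step I expect to be the real obstacle — or rather the point demanding care — is the last one: the lotteries $p,q$ produced by separation are a priori \emph{correlated} distributions over $A$, whereas Axiom~\ref{axm:Pareto-condition} is phrased for independent profiles $\sigma\in\Sigma$. The clean fix, and the setting of the cited references, is to read the axiom (and the whole theorem) over a mixture space closed under arbitrary convex combinations, i.e.\ effectively over $\Delta(A)$ rather than only over product measures; I would adopt that reading, after which the argument above is complete. (An alternative, more elementary route shows that $u^{*}$ is constant on the fibres of the affine map $p\mapsto(u_1(p),\dots,u_n(p))$, hence descends to a monotone affine function of those coordinates, which one extends to a nonnegative linear functional by Farkas' lemma; this runs into the same independent‑versus‑correlated subtlety, so I prefer the separation proof.)
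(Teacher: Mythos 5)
The paper does not prove this theorem; it is stated as an import from \textcite{Resnik1983} and \textcite{Fishburn1984-FISOHU}, so there is no in-paper argument to compare against. Your cone-separation proof is essentially the standard proof of Harsanyi's aggregation theorem from that literature, and it is correct as written \emph{once one adopts the reading you propose at the end}, namely that the Pareto axiom quantifies over all of \(\Delta(A)\) rather than only over the product distributions \(\Sigma=\prod_{i}\Delta(A_i)\). The reduction to membership in the closed polyhedral cone \(C\), the separation step, the normalization \(\langle\nu,\mathbf{1}\rangle=0\), and the construction of the two lotteries \(p,q\) from the positive and negative parts of \(\nu\) are all sound.

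The caveat you flag is not cosmetic, though: with the axiom read literally over \(\Sigma\) as the paper defines it, the theorem is false, so your ``fix'' is necessary rather than merely convenient. A concrete counterexample: take \(n=2\), \(A_1=A_2=\{0,1\}\), \(u_1(a)=a_1\), \(u_2(a)=a_2\), and \(u^*(a)=a_1a_2\). On a product profile with marginals \(p=\Pr(a_1=1)\) and \(q=\Pr(a_2=1)\) one has \(u_1=p\), \(u_2=q\), \(u^*=pq\), and \((p,q)\mapsto pq\) is monotone in each coordinate on \([0,1]^2\), so Axiom~\ref{axm:Pareto-condition} holds; yet \(a_1a_2\) is not a positive affine transformation of any \(\lambda_1a_1+\lambda_2a_2\). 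The separation argument breaks exactly where you say it does---the separating functional produces correlated lotteries to which the product-only axiom does not apply---and no repair within \(\Sigma\) is possible. So your proof is correct for the theorem the cited references actually prove, and your closing remark is best read as identifying an imprecision in the paper's statement of the axiom rather than a gap in your own argument.
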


This result says that a player that wants to pursue a compromise and respect the Pareto axiom has to maximize some sum of utility functions. But it leaves open how
to choose the weights in this sum of utility functions.

Assuming additive
separability, we can also show that, to get a Pareto optimal outcome, different players have to maximize the same weighted sum of utility functions. This
follows from the fact that maximizing a weighted sum picks out the
point on the Pareto frontier where the slope of the frontier corresponds
exactly to the weights in the sum. But if two players choose points
on their frontiers with different slopes, there are gains from trade
left on the table. As mentioned in \Cref{subsec:example-alice-bob}, in a Pareto-optimal outcome,
the slopes of the frontiers, i.e., marginal rates of substitution,
have to be identical. Otherwise, both players could jointly move in
opposite directions on the frontier such that both gain more than
they lose.
\begin{thm}\label{thm-equal-weights}
Let $B=(N,(F_{i})_{i\in N},d)$ be an additively separable bargaining game. Assume that there are weight
vectors $\mu_{i}\in\mathbb{R}_{\geq0}^{n}$ for
$i\in N$ such that player $i\in N$ takes an action $x_{i}\in F_{i}$
that maximizes $\sum_{j\in N}\mu_{i,j}x_{i,j}$. Then \begin{enumerate}
\item[(i)] If $\mu_{1,i}=\dots=\mu_{n,i}>0$ 
for all $i\in N$, then $\sum_{i\in N}x_{i}$
is Pareto optimal.
\item[(ii)] If the boundaries $\partial F_{i}$ are smooth \(n-1\)-dimensional manifolds and there exist \(i,j\) such that \(\mu_i\neq \mu_j\), then \(\sum_ix_i\) is not Pareto optimal.
\end{enumerate}
\end{thm}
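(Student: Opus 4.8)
The plan is to derive both parts from the duality between Pareto optimality of a point of the convex set $F(B)=\sum_{i\in N}F_i$ and the existence of a supporting hyperplane with nonnegative normal, together with the elementary observation that a linear functional which is maximized over a Minkowski sum $\sum_i F_i$ at a point $x=\sum_i x_i$ is also maximized over each summand $F_i$ at $x_i$ (otherwise one could perturb a single summand to do better). For part (i), write $w=(w_1,\dots,w_n)$ for the common weight vector, so that $w_j=\mu_{1,j}=\dots=\mu_{n,j}>0$, and set $x=\sum_{i\in N}x_i$. If $x$ were not Pareto optimal, there would be $y\in F(B)$ with $y\geq x$ coordinatewise and $y\neq x$, say $y_\ell>x_\ell$ for some $\ell$; decomposing $y=\sum_i y_i$ with $y_i\in F_i$ and using that each $x_i$ maximizes $w\cdot(\cdot)$ over $F_i$, one gets $w\cdot y\leq w\cdot x$, while at the same time $w\cdot y-w\cdot x=\sum_k w_k(y_k-x_k)\geq w_\ell(y_\ell-x_\ell)>0$ since $w>0$ and the remaining terms are nonnegative --- a contradiction. (Strict positivity of $w$ is exactly what is needed here; with a merely nonnegative common weight one would only recover weak Pareto optimality.)

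For part (ii), first normalize: since rescaling a weight vector by a positive constant does not change its maximizer, we may assume $\sum_j\mu_{i,j}=1$ for every $i$ (the degenerate case $\mu_i=0$ being excluded), so that the hypothesis ``$\mu_i\neq\mu_j$ for some $i,j$'' genuinely means that $\mu_i$ and $\mu_j$ point in different directions. Suppose for contradiction that $x=\sum_i x_i$ is (strictly) Pareto optimal. Then no $y\in F(B)$ satisfies $y>x$ in every coordinate, so $F(B)$ and the open set $x+\mathrm{int}(\mathbb{R}^n_{\geq 0})$ are disjoint convex sets; separating them and letting the separated point escape to infinity along the coordinate axes yields a nonzero $\nu\geq 0$ with $\nu\cdot c\leq\nu\cdot x$ for all $c\in F(B)$. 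By the Minkowski-sum observation, each $x_i$ maximizes $\nu\cdot(\cdot)$ over $F_i$. Now fix $i_0,j_0$ with $\mu_{i_0}\neq\mu_{j_0}$. Since $F_{i_0}$ is full-dimensional (its boundary is an $(n-1)$-manifold) and compact, the nonzero functionals $\mu_{i_0}$ and $\nu$ are both maximized over $F_{i_0}$ at $x_{i_0}$, which therefore lies on $\partial F_{i_0}$; smoothness of $\partial F_{i_0}$ forces the outward normal cone there to be a single ray, so both $\mu_{i_0}$ and $\nu$ are positive multiples of its generator, whence $\nu$ is a positive multiple of $\mu_{i_0}$. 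The same argument at $x_{j_0}$ shows $\nu$ is a positive multiple of $\mu_{j_0}$, so $\mu_{i_0}$ and $\mu_{j_0}$ are positive multiples of one another, contradicting $\mu_{i_0}\neq\mu_{j_0}$ after normalization.

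I expect the main obstacle to be part (ii), for two reasons. First, the passage from ``(strictly) Pareto optimal'' to ``supported by a nonzero, nonnegative functional $\nu$'' requires the separation argument sketched above rather than merely invoking smoothness; one must check both that the separating normal can be taken with all coordinates nonnegative and that it is nonzero. Second, the step from smoothness of $\partial F_i$ to uniqueness of the outward normal direction at the maximizer is precisely what rules out two genuinely different weight vectors being maximized at the same boundary point, and it is the only place smoothness enters. I would also flag the implicit normalization of the weight vectors: without it part (ii) is literally false, since two unequal but parallel weight vectors induce the same outcome, which may well be Pareto optimal --- so ``$\mu_i\neq\mu_j$'' has to be read as ``distinct directions''. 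The remaining ingredients (compactness and convexity of the $F_i$, the identity $F(B)=\sum_i F_i$, and the Minkowski-sum restriction) are routine given the setup of \Cref{ecl-bargaining-problem-setup}.
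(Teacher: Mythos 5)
Your proposal is correct and follows essentially the same route as the paper: part (i) via the observation that a common strictly positive weight vector maximized summand-by-summand is maximized over the Minkowski sum $F(B)=\sum_i F_i$, and part (ii) via a supporting functional $\nu\geq 0$ at the Pareto optimal point which each $x_i$ must also maximize over $F_i$, with smoothness forcing $\mu_i$ and $\nu$ to be parallel. The only cosmetic difference is that the paper phrases the last step through Lagrange multipliers with level-set functions $h_i$ while you invoke uniqueness of the outward normal direction at a smooth boundary point; your explicit separation argument for the existence of $\nu$ and your remark that $\mu_i\neq\mu_j$ must be read as ``distinct directions'' after normalization are both points the paper treats more tersely.
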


\begin{proof}
To begin, note that w.l.o.g. we can assume that for all \(i\in N\), we have \(\Vert\mu_i\Vert_2=1\). This is because we assume \(\mu_i\neq 0\) for both (i) and (ii), and we can rescale \(\mu_i\) to have norm \(1\) without changing the optimum \(x_i\).

Now, to prove (i), assume that $\mu_{1}=\dotsb=\mu_{n}$. We have
\[
\sum_{i\in N}\sum_{j\in N}\mu_{i,j}x_{i,j}=\sum_{i\in N}\max_{y_{i}\in F_{i}(B)}\sum_{j\in N}\mu_{i,j}y_{i,j}=\max_{y\in\prod_{i\in N}F_{i}(B)}\sum_{j\in N}\mu_{1,j}\sum_{i\in N}y_{i,j}=\max_{y\in F(B)}\sum_{j\in N}\mu_{1,j}y_{j}.
\]
If a point is a solution to a maximization problem $\max_{y\in F(B)}\sum_{j\in N}\mu_{1,j}y_{j}$
such that $\mu_{1,i}>0$ for all $i$, then we cannot improve the utilities for one of the players without making anyone else worse off. Hence, the point is Pareto optimal.

Next, we show (ii) via contraposition. Assume that $\sum_{i\in N}x_{i}$
is Pareto optimal. 
For any Pareto
optimal point, there is a weight vector $\nu\in\mathbb{R}_{\geq0}^{n}$,
$\Vert\nu\Vert=1$ such that $\sum_{i\in N}x_{i}\in\argmax_{y\in F}\nu^{\top}y$.
Moreover, since the boundaries \(\partial F_i\) are smooth, we can define smooth functions \(h_i\colon \mathbb{R}^n\rightarrow\mathbb{R}\) such that \(\partial F_i=\{x\mid h_i(x)=0\}\), i.e., the boundaries \(\partial F_i\) are the level sets \(h_i=0\), and such that for any \(x\in \partial F_i\), \(\nabla h_i(x)\) with \(\Vert \nabla h_i(x)\Vert =1\) is a normal vector to the boundary \(\partial F_i\) at \(x\).

Then we have $h_{i}(x_{i})=0$ for $i\in N$. Hence,
$x:=(x_{1},\dots,x_{n})$ is a solution to the problem of maximizing
$f\colon\prod_{i\in N}F_{i}\rightarrow\mathbb{R},y\mapsto\nu^{\top}\sum_{i\in N}y_{i}$
under the side-constraint that $\mathcal{H}(y)=0$ where $\mathcal{H}\colon\prod_{i\in N}F_{i}\rightarrow\mathbb{R}^{n}$
such that $\mathcal{H}_{i}\colon\prod_{j\in N}F_{j}\rightarrow\mathbb{R},y\mapsto h_{i}(y_{i})$.
According to the method of Lagrange multipliers, there hence are $\lambda_{j}\in\mathbb{R}$
for $j\in N$ such that
\begin{equation}
\partial_{i}f(x)=\sum_{j\in N}\lambda_{j}\partial_{i}\mathcal{H}_{j}(x),
\end{equation}
for all $i\in N$. Since $\partial_{i}\mathcal{H}_{j}(y)=\delta_{i,j}\nabla h_{i}(y_{i})$ (where \(\delta_{i,j}\) is the Kronecker delta),
it follows that 
\begin{equation}
\nu=\partial_{i}f(x)=\sum_{j\in N}\lambda_{j}\partial_{i}\mathcal{H}_{j}(x)=\lambda_{i}\nabla h_{i}(x_{i}).
\end{equation}
In particular, \(\lambda_i\neq 0\).

Moreover, by assumption, for all $i\in N$, $x_{i}$ maximizes $g_{i}\colon F_{i}\rightarrow\mathbb{R},y_{i}\mapsto\sum_{j\in N}\mu_{i,j}y_{i,j}$
under the side-constraint that $h_{i}(y_{i})=0$. Hence, it follows that
there is $\lambda'_{i}\in\mathbb{R}$ such that
\begin{equation}
\mu_{i}=\nabla g_{i}(x_{i})=\lambda'_{i}\nabla h_{i}(x_{i}).
\end{equation}

Putting everything together, it follows that
\[\mu_{i}=\lambda'_i\nabla_i(x_i)
=\frac{\lambda'_{i}}{\lambda_{i}}\nu\]
for all $i\in N$. Since $\Vert\mu_{i}\Vert=1=\Vert\mu_{j}\Vert$,
it is $\mu_{i}=\frac{\nu}{\Vert\nu\Vert}=\mu_{j}$. This shows the contrapositive.
\end{proof}
I believe the result carries over to some degree to a game with non-smooth feasible sets. If there are kinks
in the Pareto frontiers, then at these points, it will be possible to maximize
slightly different weights and still achieve a Pareto optimal outcome,
since several different maximized weighted sums or normal vectors of the
frontier will correspond to the same point.

Since there exist bargaining problems for which the boundaries \(\partial F_i\) are smooth \(n-1\)-dimensional manifolds (e.g., in the trivial case in which the \(F_i\) are \(n\)-dimensional balls), this result shows that there exist problems for which maximizing different weighted sums would result in Pareto suboptimal outcomes.
\begin{comment}
\begin{cor}
Assume that there are weight vectors $\lambda_{i}\in\mathbb{R}_{\geq0}^{n}$,
$\Vert\lambda_{i}\Vert_{1}=1$ for all $i\in N$, such that $\lambda_{i,k}\neq\lambda_{j,k}$
for some $i,j,k\in N$. Then there is an ECL bargaining game $B$
such that if all players $i\in N$ choose to play a mixed strategy
that corresponds to a point $x_{i}\in F_{i}(B)$ such that
\[
x_{i}\in\argmax_{y_{i}\in F_{i}(B)}\lambda_{i}^{\top}y_{j},
\]
it follows that $x=\sum_{i\in N}x_{i}$ is not Pareto optimal.
\end{cor}

\begin{proof}
Follows directly from Theorem 6.
\end{proof}
\end{comment}
Together with the utilitarian theorem, we can conclude that all superrationalists should maximize some common sum of utility
functions. This leaves open the question of \emph{which} weighted
sum to maximize.

One suggestion by \textcite[][sec.\ 2.8.5]{Oesterheld2017-qg} is to choose weights
that normalize utility functions according to their variance. Variance
normalization is also supported by \textcite{MacAskill2020-MACSNM-2}, who
set up a scenario in which players submit utility functions to cast
their vote on a social utility function. Using
relatively strong ignorance assumptions, they show that
normalizing the variance of utility functions leads all players to
have equal voting power; that is, they are all equally likely to change
the option that is best under the social utility function.

For my setting, I think this approach does not work well. This is because
under some circumstances, variance normalization can lead one player
to expect negative gains from trade, and I think that one important
requirement for a compromise is that everyone gets positive gains
from trade. This is true even if players that implement an updateless decision
theory \parencite{dai2009updateless} or have only very little prior knowledge about ECL. Players will have \emph{some}
(prior) beliefs to determine whether a trade will be positive. Given
these beliefs, the trade has to be positive. Otherwise, rational players
will decide not to engage in the compromise.

\begin{example}\label{example-rationality}
As an example where variance normalization does not work, take a game
with players $1,2$ and action sets $A_{1}=\{a_{1},b_{1}\}$ and $A_{2}=\{a_{2},b_{2}\}$, with utilities 
as depicted in Tables~\ref{tbl:2}--\ref{tbl:5-1}. Note that utility functions are additively separable.
\begin{table}
\begin{centering}
\subfloat[]{%
\begin{tabular}{|c|c|c|}
\hline 
 & $a_{1}$  & $b_{1}$\tabularnewline
\hline 
\hline 
$u_{1}$  & $0$  & $-1$\tabularnewline
\hline 
$u_{2}$  & $-3$  & $0$\tabularnewline
\hline 
\end{tabular}

}\subfloat[]{%
\begin{tabular}{|c|c|c|}
\hline 
 & $a_{2}$  & $b_{2}$\tabularnewline
\hline 
\hline 
$u_{1}$  & $0$  & $-3$\tabularnewline
\hline 
$u_{2}$  & $3$  & $2$\tabularnewline
\hline 
\end{tabular}

}
\par\end{centering}
\caption{Actions of player $1,2$ and additively separable utilities provided to either player in \Cref{example-rationality}.}
\label{tbl:2} 
\end{table}

\begin{table}
\centering{}%
\begin{tabular}{|c|c|c|}
\hline 
 & $u_{1}$  & $u_{2}$\tabularnewline
\hline 
\hline 
$a_{1},a_{2}$  & $0$  & $0$\tabularnewline
\hline 
$a_{1},b_{2}$  & $-3$  & $-1$\tabularnewline
\hline 
$b_{1},a_{2}$  & $-1$  & $3$\tabularnewline
\hline 
$b_{1,}b_{2}$  & $-4$  & $2$\tabularnewline
\hline 
\end{tabular}\caption{The utilities of all possible outcomes for both players in \Cref{example-rationality}.}
\label{tbl:5-1} 
\end{table}

\begin{table}
\centering{}%
\begin{tabular}{|c|c|c|}
\hline 
 & $u_{1}$  & $u_{2}$\tabularnewline
\hline 
\hline 
$a_{1},a_{2}$  & $0.2$  & $-0.1$\tabularnewline
\hline 
$a_{1},b_{2}$  & $-0.1$  & $-0.2$\tabularnewline
\hline 
$b_{1},a_{2}$  & $0.1$  & $0.2$\tabularnewline
\hline 
$b_{1,}b_{2}$  & $-0.2$  & $0.1$\tabularnewline
\hline 
\end{tabular}\caption{Normalized utilities}
\label{tbl:4} 
\end{table}

Here, the
dominant option for both players is $(a_{1},a_{2})$. To normalize according to variance, we have to determine a distribution over actions. Here, I assume  a uniform distribution. Then the mean
for player $1$ is $\mu_{1}=-2$, and for player $2$ it is $\mu_{2}=1$.
We subtract this mean from the utilities
of all the players, then divide the utilities by their variance.
The variance is $\sigma_{i}^{2}=\sum_{x\in A_{1}\times A_{2}}(u_{i}(x)-\mu_{i})^{2}$
for player $i$, which is $10$ for both players. The normalized
utilities are as depicted in \Cref{tbl:4}. 

Here, $b_{1},a_{2}$ maximizes the sum of normalized utility
functions. But this leaves player $1$ worse off than without a compromise.
\end{example}
Though I do not investigate this in more detail, I believe problems may arise with all methods that do not directly pick a point on the Pareto frontier as a compromise solution. It would still be interesting to investigate under which conditions variance normalization or other normalization methods give all agents positive gains from trade, but I will not pursue this approach here further.

In the following, I will consider solutions that directly pick out
a point in the feasible set. Once such a point is given, it is possible to derive weights for utility functions such that the point maximizes the corresponding
weighted sum. If the Pareto frontier is differentiable at this point,
it follows from the proof of \Cref{thm-equal-weights} that these weights are unique. I will not delve into the issue of translations between maximizing
weighted sums and points on the Pareto frontier further in this report. (Though I will address  the related issue of uniqueness of the individual mixed strategies maximizing a particular weighted sum in \Cref{uniqueness}.)

\subsection{Bargaining theory}
\label{bargaining-theory}
Here, I briefly review the existing literature on bargaining theory. Since there exists a large literature on bargaining, it seems likely to me that the most plausible and easy to find solutions to
bargaining problems have already been discovered. There are
two main approaches to bargaining problems:
\begin{enumerate}
\item The axiomatic or normative approach, which involves specifying plausible axioms for bargaining  solutions and proving that these
axioms are equivalent to some choice of a bargaining solution. 
\item The noncooperative or positive approach, which involves specifying a bargaining
game and analyzing the equilibria of the game.
\end{enumerate}
Both approaches are interesting from an ECL perspective. First, the axiomatic
approach is interesting because a solution that has any chance of giving an agent evidence
that others are pursuing the same solution must be parsimonious. This seems more likely if the bargaining solution depends on plausible axioms. Moreover, it is an argument for relying on the existing
literature, because solutions that have already been found by economists are ceteris
paribus also solutions that are more likely to be found by other superrationalists.\footnote{Note that this argument is informal, assuming dependencies of the sort ``if I look for plausible axioms
and find them, the other agent will do the same and find the same
axioms''. It is not backed up by some equilibrium or game-theoretic
analysis but a judgement of psychological plausibility.}

Second, modeling the situation using noncooperative game theory can provide
one with evidence in favor of a particular solution being more likely
to result from real-world bargaining situations. This has only been
done for causal bargaining, but hopefully acausal
bargaining theory would give similar results to the causal setting.
Some work points in the direction that such transfer may be possible \parencite{oesterheld2019robust}.

In the following, I will turn to the axiomatic approach and review some of the desiderata from the literature. There are several
plausible axioms for a bargaining solution:
\begin{ax}
\label{axm:1}Let $\mu$ be some bargaining solution, $B$ a 
bargaining game, $F(B)$ its feasible set with Pareto frontier $H$,
and $d\in F(G)$ its disagreement point.
\begin{enumerate}
\item[(1)] (Weak) Individual rationality. The solution should give everyone non-negative
gains from trade. So $\mu_{i}(F(B),d)\geq d_{i}$ for all $i\in N$.
\item[(2)] (Strong) Pareto optimality: $\mu(F(B),d)\in H(B)$.
\item[(3)] Invariance to affine transformations of utility functions. Let $\phi\colon\mathbb{R}^{n}\rightarrow\mathbb{R}^{n}$ such that $\phi(x)=[\lambda_{1}x_{1},\dots,\lambda_{n}x_{n}]^{\top}+y$
for some $\lambda_{i}\in\mathbb{R}_{>0},y\in\mathbb{R}^{n}$. Then
$\mu(\phi(F(B)),\phi(d))=\phi(\mu(F(B),d))$.
\item[(4)] Anonymity. For any permutation $\pi$ on $N$, define $\pi(x)=(x_{\pi(1)},\dots,x_{\pi(n)})$
for $x\in\mathbb{R}^{n}$. Then $\pi(\mu(F(B),d))=\mu(\pi(F(B)),\pi(d))$. 
\end{enumerate}
\end{ax}

I argued for (1) in the preceding section. (2) seems fairly plausible
on the grounds that ECL should not leave any possible gains from trade on the
table. (3) is plausible since the solution should not depend on which
representative we pick out of the equivalence class of utility functions
which give rise to the same cardinal ranking over mixed strategy profiles.

(4) tells us that the bargaining solution should be equivariant: the payoffs assigned to players should stay the same, even if we change their indices. While anonymity is plausible, this definition unfortunately ignores the individual feasible
sets $F_{i}(B)$ for each player $i$ that exist in the additively separable case. This means that players may have to be treated equally, even if their contributions \(F_i(B)\) to the overall payoffs differ. However, it seems that the relative size of the contributions
should make a difference for fairness. We will turn to this fairness point again 
in \Cref{fairness-and-coalitional}.

The axioms outlined above do not yet uniquely specify a bargaining solution. However, they do so after adding a fifth axiom. In the following sections, I will turn to two popular suggestions for fifth axioms, which correspond to two different bargaining solutions, the Nash bargaining solution (NBS) and the Kalai Smorodinsky bargaining solution (KSBS).  In \Cref{appendix-armstrong-solution}, I discuss an additional solution proposed by \textcite{Armstrong2013}. I do not focus on it here since it violates individual rationality. Since the main parts of this report were written in 2018, I do not consider more recent work such as \textcite{diffractor2022rose}.

\subsubsection{The Nash bargaining solution}
\label{nash-bargaining-solution}

\begin{defn}The Nash bargaining solution (NBS) \parencite{Nash1950-vg,Harsanyi1972,Lensberg1988-lf,Okada2010-ql,Anbarci2013-yd,Roth1979a}
is the point in $F(B)$ which maximizes the product of the players'
gains from trade, also called the Nash welfare.
\begin{equation}
\mu(F(B),d):=\argmax_{x\in F(B)^{\geq d}}\prod_{i\in N}(x_{i}-d_{i}),
\end{equation}
where $F(G)^{\geq d}:=\{x\in F(G)\mid\forall i\in N\colon x_{i}\geq d_{i}\}$.
\end{defn}
Since $F(B)^{\geq d}$ is compact and convex and there exists $x\in F(B)$
such that $x_i>d_i$ for all \(i\in N\) by assumption, this point exists and is unique. It is also called the symmetric NBS.

\begin{example}
Applying the NBS to \Cref{exa:bargaining-case-1}, using the point \((10,\sqrt{10})\) as a disagreement point,
we get the optimization problem
\[
\max_{A\in[0,10],B\in[0,5]}(A+B-10)(\sqrt{2(10-A)}+\sqrt{2(5-B)}-\sqrt{10}).
\]
This has a maximum at
$A\approx 8.15$, \(B\approx 3.15\), which I have plotted as a green dot in \Cref{fig:6}.

\begin{figure}
\begin{centering}
\includegraphics[width=0.5\textwidth]{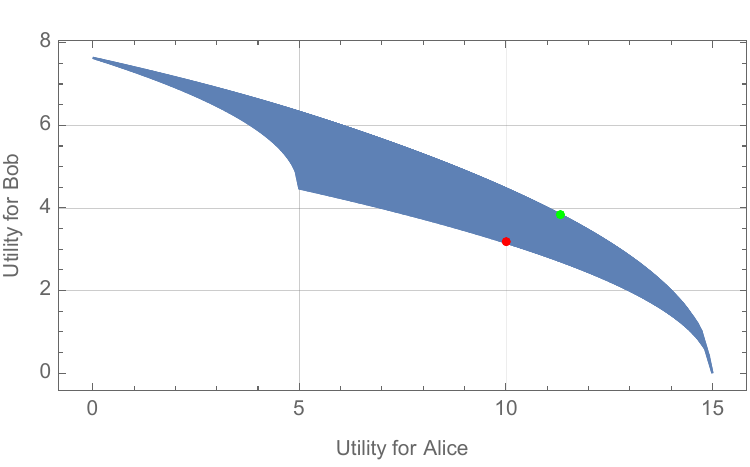}\caption{Disagreement point as a red dot and the NBS as a green dot, for the bargaining problem from \Cref{exa:bargaining-case-1}.}
\label{fig:6}
\par\end{centering}
\end{figure}
\end{example}

There are different axiomatizations of the NBS (i.e., choices for
the fifth axiom) which are equivalent. In the many-player case, an
axiom which I find plausible is due to \textcite{Lensberg1988-lf}.
It has an intuitive geometric interpretation but its mathematical formulation is quite technical. 

Let $P,Q\subseteq N,P\subseteq Q$. Let $x_{p}$ denote the projection
of $x\in\mathbb{R}^{Q}$ onto $\mathbb{R}^{P}$. Let $H_{P}^{x}=\{y\in\mathbb{R}^{Q}\mid y_{Q\setminus P}=x_{Q\setminus P}\}$.
Given $C\subseteq\mathbb{R}^{Q}$ and $x\in C$, denote $t_{P}^{x}(C)$ for
the projection of $H_{P}^{x}\cap C$ onto $\mathbb{R}^{P}$.
\begin{ax}\label{ax:stability}
Multilateral stability. If $P\subseteq N$, $\mu(F(G),d)=x$,
and $D=t_{P}^{x}(F(G))$, then \[{\mu(D,d_{P})=x_{P}}.\]
\end{ax}

\Cref{fig:1} shows an illustration from \textcite{Lensberg1988-lf}. Basically, the idea is that if one fixes the payoffs for a subset
$N\setminus P$ of the players and lets the remaining players $P$
renegotiate their solution on the new feasible set $D=t_{P}^{x}(F(G))$
that results if you fix the payoffs for all the players in $N\setminus P$
and project this feasible set onto $\mathbb{R}^{P}$, the result should be the same as the solution of the
entire problem, projected onto $\mathbb{R}^{P}$. I find this
axiom very appealing.
\begin{figure}
\includegraphics[width=1\textwidth]{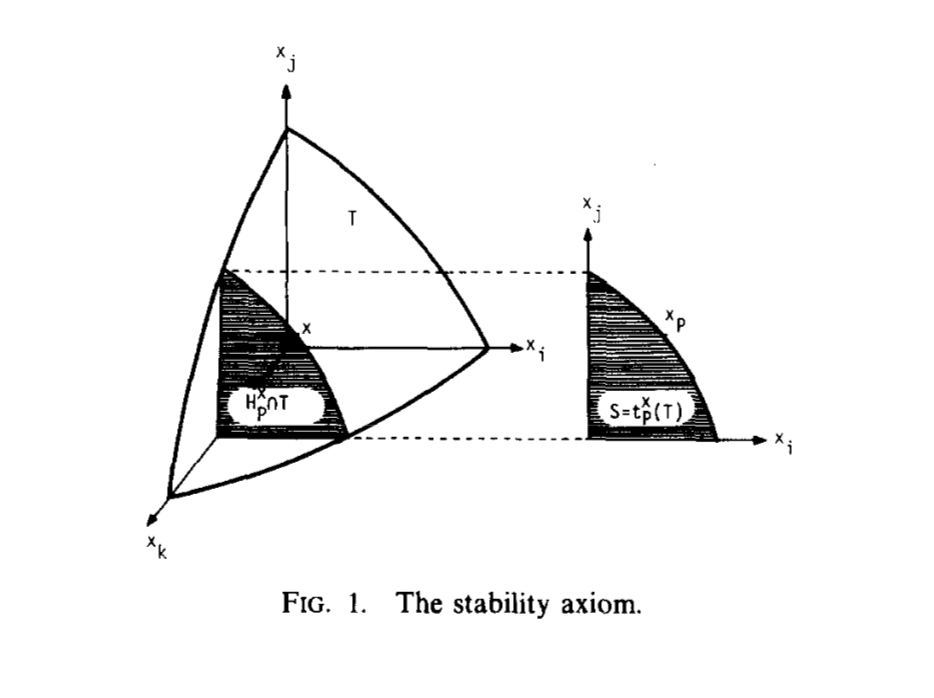}

\caption{Illustration of the stability axiom by \textcite{Lensberg1988-lf}.}

\label{fig:1} 
\end{figure}

\begin{thm}[\cite{Lensberg1988-lf}]
    \Cref{axm:1} (Pareto optimality, Invariance to affine transformations, Anonymity) and \Cref{ax:stability} (Multilateral stability) together are necessary and sufficient to specify the Nash bargaining solution.
\end{thm}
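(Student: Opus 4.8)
The plan is to prove the characterization in the two usual directions: \emph{necessity}, that the NBS satisfies strong Pareto optimality, invariance to affine transformations (IAT), anonymity and multilateral stability; and \emph{sufficiency}, that any bargaining solution $\mu$ on $\Upsilon$ satisfying these four properties must coincide with the NBS. Necessity is routine. For strong Pareto optimality, on a problem in $\Upsilon$ there is a feasible point strictly dominating $d$, so the maximal Nash welfare $\prod_i(x_i-d_i)$ is positive and hence the maximizer $x^\ast$ has $x^\ast_i>d_i$ for all $i$; any feasible $y\geq x^\ast$ with $y\neq x^\ast$ would then give a strictly larger product, so $x^\ast\in H(B)$. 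For IAT, if $\phi(x)=[\lambda_1 x_1,\dots,\lambda_n x_n]^\top+c$ with $\lambda_i>0$, then $\phi(x)_i-\phi(d)_i=\lambda_i(x_i-d_i)$, so $\prod_i(\phi(x)_i-\phi(d)_i)=\big(\prod_i\lambda_i\big)\prod_i(x_i-d_i)$ with $\prod_i\lambda_i>0$, and the maximizer transforms by $\phi$. Anonymity is immediate because the product is symmetric. For multilateral stability, if $x=\mu(F(B),d)$ maximizes $\prod_{i\in N}(y_i-d_i)$ and we fix the coordinates in $N\setminus P$ at $x_{N\setminus P}$, then on the slice $D=t^x_P(F(B))$ the factor $\prod_{i\in N\setminus P}(x_i-d_i)$ is a positive constant, so maximizing $\prod_{i\in N}(\cdot)$ over $D$ is the same as maximizing $\prod_{i\in P}(y_i-d_i)$ over $D$; hence $x_P=\mu(D,d_P)$. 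I will reuse this last fact below.

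For sufficiency, fix $(F,d)\in\Upsilon$ and let $x^\ast=\mathrm{NBS}(F,d)$. By IAT, apply a coordinatewise positive-affine map sending $d$ to $0$ and $x^\ast$ to $e=(1,\dots,1)$; it then suffices to show $\mu(F,0)=e$ for the normalized problem. The first-order condition for maximizing the concave function $\sum_i\log y_i$ over the convex set $F$ at the point $e$, which lies in the interior of the nonnegative orthant, shows that $e$ lies in the normal cone of $F$ at $e$, i.e. $\sum_i y_i\leq n$ for every $y\in F$; so the hyperplane $\{\,\sum_i y_i=n\,\}$ supports $F$ at $e$. Now dispose first of the case in which $F$ is invariant under \emph{every} coordinate permutation: since $d=0$ is too, anonymity forces $\mu(F,0)=ce$ for some scalar $c$, and strong Pareto optimality together with $e\in F$ and $F\subseteq\{\,\sum_i y_i\leq n\,\}$ forces $c=1$. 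Hence on symmetric normalized problems $\mu$ agrees with the NBS.

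The main step is to remove the symmetry hypothesis, and this is where I expect the real difficulty. The idea is to realize an arbitrary normalized problem $(F,0)$ on player set $N$ as a coordinate slice of a large, fully symmetric problem: find a finite $N'\supseteq N$ and a convex compact $\widehat F\subseteq\mathbb{R}^{N'}$ invariant under every permutation of $N'$, with symmetric efficient point $\widehat x$, such that $t^{\widehat x}_N(\widehat F)=F$. Then the symmetric case gives $\mu(\widehat F,0)=\widehat x$, and multilateral stability gives $\mu(F,0)=\widehat x_N$; since the NBS also satisfies multilateral stability and, by symmetry, selects $\widehat x$ on $\widehat F$, the slice $\widehat x_N$ must equal $\mathrm{NBS}(F,0)=e$, so $\mu(F,0)=e$ as desired. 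Constructing $\widehat F$ is the crux: a natural attempt takes a product of several relabeled copies of a suitable lift of $F$ and then the convex hull of all its permuted images, tuning the number of copies and the lift so that full symmetry holds, the symmetric point is $\widehat x$, and slicing at $\widehat x$ in the $N'\setminus N$ coordinates returns exactly $F$ with no spurious points. I expect this last requirement to be the delicate one, and I would first carry it out for polytopal $F$ (where $\widehat F$ can be handled through finitely many extreme points and the slice computed explicitly) and then pass to general convex $F$ by approximation, which in addition requires extracting a continuity property of $\mu$ from the axioms (or arguing directly with inner and outer polytopal approximations of $F$). One minor point: the argument uses only that $\{\,\sum_i y_i=n\,\}$ is \emph{one} supporting hyperplane of $F$ at $e$, which holds for any convex $F$, so non-smoothness or kinks in the Pareto frontier cause no extra trouble here.
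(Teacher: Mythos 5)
The paper does not prove this theorem---it is stated as a citation of \textcite{Lensberg1988-lf}---so there is no in-paper argument to compare against; I evaluate your proposal on its own terms. Your necessity direction is fine and standard: positivity of the maximal Nash product gives strong Pareto optimality, the product is equivariant under coordinatewise positive affine maps and under permutations, and on the slice $D=t^x_P(F)$ the factor $\prod_{j\in N\setminus P}(x_j-d_j)$ is a positive constant, so the restricted Nash product is maximized at $x_P$. The symmetric normalized case is also handled correctly.

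The sufficiency plan, however, contains a step that provably cannot work. You propose to realize an arbitrary normalized problem $(F,0)$ on $N$ as $F=t^{\widehat x}_N(\widehat F)$ for some $\widehat F\subseteq\mathbb{R}^{N'}$ that is invariant under \emph{every} permutation of $N'$, with $\widehat x$ a constant vector. But any permutation $\pi$ of $N'$ that permutes $N$ and fixes $N'\setminus N$ pointwise maps $\widehat F$ to itself and preserves the slicing condition $y_{N'\setminus N}=\widehat x_{N'\setminus N}$ (since $\widehat x_{N'\setminus N}$ is constant); hence $t^{\widehat x}_N(\widehat F)$ is itself invariant under all permutations of $N$. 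So only \emph{symmetric} problems $F$ arise as such slices, and your reduction fails for every non-symmetric $F$---which is exactly the case the argument needs to cover. This is in addition to the fact that you leave the construction of $\widehat F$ unproven and flag it as the delicate point. A repair would have to give up full symmetry of $\widehat F$ (e.g., impose only a transitive symmetry group containing no non-identity element fixing $N'\setminus N$ pointwise, so that anonymity still pins down a constant solution vector without forcing the slice to be symmetric), or follow Lensberg's actual route, which does not embed $F$ into a single symmetric superproblem but instead uses stability and Pareto optimality on carefully constructed auxiliary extensions to derive the first-order (supporting-hyperplane) condition characterizing the Nash point. As it stands, the central step of your sufficiency argument is not merely incomplete but blocked.
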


Assuming \Cref{axm:1}, multilateral stability is interchangeable with the Independence of irrelevant alternatives axiom:
\begin{ax}
[Independence of irrelevant
alternatives]\label{axm:Independence-of-irrelevant} Let $B,B'$ be two bargaining games such that $F(B)\subseteq F(B')$,
$\mu(F(B'),d)\in F(B)$. Then $\mu(F(B'),d)=\mu(F(B),d)$. 
\end{ax}
This also seems like an appealing desideratum. There are several further axiomatizations
of the NBS (in the 2 or \(n\)-player case).

There also exists an asymmetric version of the NBS \parencite{Kalai1977,Roth1979a}. Either Pareto optimality or strong Individual rationality
(i.e., $\mu_{i}(F(B),d)>d_{i}$ for all $i\in N$) in combination
with Invariance to affine transformations and Independence of irrelevant alternatives
are necessary and sufficient to characterize all functions 
\begin{equation}
\argmax_{x\in F(G)^{\geq d}}\prod_{i\in N}(x_{i}-d_{i})^{\alpha_{i}},
\end{equation}
where $\alpha_{i}>0$ and $\sum_{i\in N}\alpha_{i}=1$. (I am not sure
whether this would also work with Multilateral stability instead of
Independence of irrelevant alternatives.)

The NBS is also supported by several noncooperative bargaining models \parencites{Nash1953}{Binmore1986}{Anbarci2013-yd}{BRANGEWITZ2013224}{Okada2010-ql}.
The most common one is a version of
the alternating offers model by \textcite{Rubinstein1982-vw}. Here, players take turns in making offers, and the other
party (or, in the multilateral case, all other players) can reject
or accept the offer. Players are impatient: either there is a chance at each step that bargaining
breaks down, or the players discount their utilities over time. In this game, there is a unique subgame perfect equilibrium. The limit of this equilibrium as the probability of breakdown approaches zero is the NBS \parencite{Binmore1986}.

\subsubsection{The Kalai-Smorodinsky bargaining solution}
\label{subsec:The-Kalai-Smorodinsky}

The KSBS \parencite{Kalai1975-yv} is a more recent alternative to
the NBS. I think it is less suitable for ECL. In the KSBS, the utility
functions are normalized such that 0 is the disagreement point and
1 is the ideal point---the best possible payoff for the agent (among
all payoffs in $F(B)$). Then the bargaining solution is the point
where the line from the zero point to the point where everyone has
1 utility intersects with the Pareto frontier. This means that the
solution is the point on the Pareto frontier at which ratios between players' utilities are equal to the ratios between their ideal utilities. Formally, if $U_{i}(B)$ is the best
possible attainable point for $i$, then the solution is the point
$x\in H(B)$ such that 
\begin{equation}
\frac{U_{i}(B)-d_{i}}{U_{j}(B)-d_{j}}=\frac{x_{i}-d_{i}}{x_{j}-d_{j}}
\end{equation}
for all $i,j\in N$.

Apparently, there are some problems with generalizing the KSBS to
n-player games \parencite{Roth1979b}. One needs several axioms. One
possible way to axiomatize it in this case is via the following axioms
(in addition to \Cref{axm:1}) \parencite{Karos2018generalization}.
To make the definitions easier, assume for now that $d=0$
(if this is not the case, just subtract $d$ from all points in the
feasible set). Then a bargaining solution is just a function of the
feasible set, assuming $d=0$.
\begin{ax}[Individual monotonicity] $\mu_{i}(F(B))\le\mu_{i}(F(B'))$ for all
$i\in N$ and all problems $B$, $B'$ with $F(B)\subseteq F(B')$,
$U_{i}(B)\leq U_{i}(B')$ and $U_{j}(B)=U_{j}(B')$ for all $j\neq i$. 
\end{ax}

That is, if someone's ideal point is greater in $B$ than in $B'$, then,
all else equal, their bargaining solution should also be greater in
$B$ than in $B'$.
\begin{ax}[Homogeneous ideal independence of irrelevant alternatives] $\mu(F(B))=\mu(F(B'))$
for all bargaining problems $B,B'$ with $F(B)\subseteq F(B')$, $\mu(F(B))\in F(B')$,
and $U(B)=rU(B')$ for some $r\le1$. 
\end{ax}

This is a weakened version of independence of irrelevant alternatives
which requires the ratios of the ideal points to be equal for the
axiom to apply. 
\begin{ax}[Midpoint domination] For all bargaining problems \(B\) and any player \(i\), we have $\mu_i(F(B))\geq\frac{1}{n}U_i(B)$.
\end{ax}
This axiom is also known as Proportional fairness.\footnote{\url{https://en.wikipedia.org/wiki/Proportional_division}}

The KSBS is supposed to be fairer than the NBS, in the sense that if someone has better options (their ideal point
is better), then they should be left better off in bargaining. This
is not the case in the NBS but is the case for the KSBS. However, I disagree with this notion of fairness. To me, fairness in the two-player case is concerned with splitting
the gains from trade equally or according to differences in power.\footnote{In the Rubinstein bargaining model, one can derive bargaining power
from players' discount rates. If a player 
has a higher time discount, they have a weaker bargaining position.}
But splitting gains from trade equally only makes sense in a transferable utility game, i.e., a game in which there is a common currency of money or resources which has equal utility for both players. Since ECL deals with arbitrary utility functions, we cannot in general assess fairness in the same way here.

An important aspect of fairness is the idea
that there should not be one player or a group of players that only contribute very little to the ECL-compromise, while gaining a lot from it. This type of fairness can be ensured in a coalitional game by requiring that the solution is coalitionally stable. If a player contributes little, then a coalition
of players can split off such that all players in the coalition are
better off, making the solution unstable. I will discuss this in \Cref{fairness-and-coalitional} and conclude that the KSBS does not fare better than the NBS in this respect.

I think there is a problem with the KSBS that arises
if several agents have the same utility function. Consider a case
with players $1,2,3$ and utility functions $u_{1},u_{2},u_{3}$,
where players $1,2$ have the same utility function. Intially, players $1,2,3$
all have $1$ utility, so $d=(2,2,1)$ (since $1$ and $2$
benefit each other). Now they are trying to decide how to split a
surplus of $1$ utility that arises from cooperating. The best achievable utilities are $b_{1,2}=3$
and $b_{3}=2$. Hence, 
\[
\frac{b_{1}-d_{1}}{b_{2}-d_{2}}=\frac{b_{2}-d_{1}}{b_{3}-d_{2}}=\frac{b_{1}-d_{1}}{b_{2}-d_{2}}=1,
\]
so the ratios of utilities minus the default points in the chosen
outcome have to be equal. Hence, the KSBS chooses $(2.5,2.5,1.5)$.
But this seems wrong: Players \(1\) and \(2\) have only received half of the utility, even though there are two of them. If they had been two players with distinct goals, then they would have each gotten one third of the utility, giving \(1\) and \(2\) a total of \(2/3\).
The NBS, since it is maximizing a product, is instead skewed towards
$1$ and $2$ and chooses the point $\approx(2.7,2.7,2.3)$, effectively giving each player one third of the utility surplus.

Lastly, another reason to prefer the NBS over the KSBS is that it seems
to lack the widespread support via noncooperative models and plausible
axiomatizations that the NBS has. The axioms for the KSBS in the multilateral
case seem much more contrived than those for the NBS, which makes
it less plausible as a multiverse-wide Schelling point. 
Although there apparently are some noncooperative bargaining models
supporting the KSBS \parencite{Anbarci1997}, the support for the
NBS seems greater to me. Relatedly, Google scholar searches for
word combinations such as ``Nash bargaining noncooperative'' and
``Kalai smorodinsky bargaining noncooperative'', or for the names of the solutions, consistently turn up more than
ten times as many papers for the NBS. Admittedly, there may be
some path dependency or founder's effect here. Nash is a more prominent
name and the NBS was the first published solution. Still,
it seems reasonable that a solution like the NBS---simply maximizing
the product of gains from trade---will be discovered first and thus considered a Schelling point in many parts of the multiverse.

\subsection{Observations}\label{observations-bargaining}
Here, I make some initial observations about the bargaining model and discuss potential issues. First, I discuss the question of whether the actions corresponding to a point on the Pareto frontier are unique (\Cref{uniqueness}). If not, this could lead to a coordination problem. I give an example where the decomposition is not unique, show that it is unique if utilities are additively separable and the feasible set is strictly convex, and argue that this should not be an issue in practice. Second, I make some basic observations about gains from trade given additively separable utilities, based on marginal rates of substitutions between different utility functions on individual Pareto frontiers (\Cref{possible-gains-from-trade-bargaining}). I conclude with remarks on trade between more than two players and continuity of the NBS (\Cref{further-observations}).

\subsubsection{Uniqueness of the actions corresponding to bargaining solutions}
\label{uniqueness}

The NBS provides players with a unique compromise point $x\in H(B)$.
The question arises whether this leaves all players with
a clear instruction on which action to take. There may be
several mixed strategy profiles in $\Sigma$ which correspond to $x$.
Then, if players cannot coordinate, the actually chosen outcome may
differ from the NBS. In principle, this outcome could even be worse than the disagreement point
for some.
\begin{example}\label{example-non-uniqueness}
As an example, take the game with individual Pareto frontiers $H_{1}=H_{2}=\{x\in\mathbb{R}^{2}\mid x_{1}+x_{2}=1\}$
and $d=0$ (Figure \ref{fig:13}). Apparently the overall Pareto frontier is $H=\{x\in\mathbb{R}^{2}\mid x_{1}+x_{2}=2\}$
and the point $(1,1)\in H$ is the NBS. Then
given any action combination $a,b\in H_{1},H_{2}$ such that $a_{1}=b_{2}$
and $a_{2}=b_{1}$, it is $a_{1}+b_{1}=a_{2}+b_{2}=1$. Hence, any
such combination corresponds to $(1,1)$ and maximizes the product
of gains.

\begin{figure}
\begin{centering}
\includegraphics[width=0.5\textwidth]{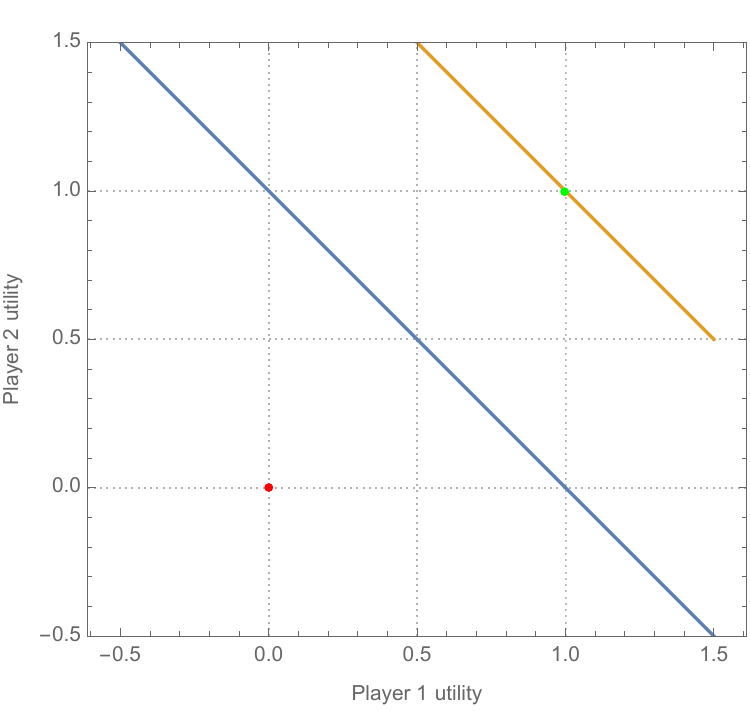}\caption{Pareto frontier \(H\) (in orange), individual Pareto frontiers $H_{1}=H_{2}$ (blue), the disagreement point (red dot), and the NBS (green dot), for \Cref{example-non-uniqueness}.}\label{fig:13}
\par\end{centering}

\end{figure}

Now, if player $1$ chooses $(2,-1)$, and the other player chooses
$(0.5,0.5)$, which are both individually possible choices if they
were combined with a suitable choice by the respective other player,
then one of the players is worse off than their disagreement point.
Hence, choosing a compromise outcome leads to a coordination problem.
\end{example}

The problem would be even worse without separability of utility
functions. In this case, the coordination problem may be
severe and wrong combinations of action may even be Pareto suboptimal.
Given additive separability, the problem is not as severe. It is not a problem
if, for a player $i\in N$, there are several $\sigma_{i}\in\Sigma_{i}$
with the same utilities for all players. Hence, it suffices to analyze
a game directly on the basis of the individual feasible sets $(F_{i})_{i\in N}$.
Here, as the next result shows, the outcomes will at least always be Pareto optimal.
\begin{prop}
\label{thm:2}Let $B=(N,(F_{i})_{i\in N},d)$ be an additively separable bargaining
game. Let $x\in H$. Let \[X_{i}=\{y_{i}\in H_{i}\mid\exists y_{-i}\in H_{-i}:=\sum_{j\in N\setminus\{i\}}H_{j}:y_{i}+y_{-i}=x\}\]
be the set of points $y_{i}$ that player $i$ can choose from to realize $x$. Then $\sum_{i\in N}X_{i}\subseteq H$. That is, any combination of such points chosen independently by different players is Pareto optimal.
\end{prop}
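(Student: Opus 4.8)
The plan is to show that any point $y_i \in X_i$ can be combined with any $y_{-i}' \in \sum_{j \neq i} H_j$ arising from other players' choices $X_j$, and that the resulting sum is Pareto optimal. The key observation is that Pareto optimality of $x = \sum_{i} x_i$ forces each individual summand $x_i$ to be ``supported'' by a common normal vector: since $x \in H$, there is a weight vector $\lambda \in \mathbb{R}^n_{\geq 0}$ (in fact strictly positive on the coordinates that matter, given the standing assumption that gains from trade are possible for everyone) such that $x \in \argmax_{z \in F(B)} \lambda^\top z$. Because $F(B) = \sum_i F_i(B)$ and the maximization of a linear functional over a Minkowski sum decomposes, each $x_i$ must lie in $\argmax_{z_i \in F_i(B)} \lambda^\top z_i$. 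This is essentially the same decomposition identity already used in the proof of \Cref{thm-equal-weights}.

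First I would make that decomposition precise: if $\sum_i x_i \in \argmax_{z\in F(B)}\lambda^\top z$ and each $x_i \in F_i(B)$, then each $x_i \in \argmax_{z_i \in F_i(B)}\lambda^\top z_i$, since otherwise we could replace $x_i$ by a better $x_i'$ and strictly increase $\lambda^\top \sum_j x_j$, contradicting optimality of the sum. Second, I would argue that every $y_i \in X_i$ inherits this property: by definition of $X_i$ there is $y_{-i} \in \sum_{j\neq i} H_j$ with $y_i + y_{-i} = x$, so $\lambda^\top y_i + \lambda^\top y_{-i} = \lambda^\top x$; since $y_{-i}$ is itself a sum of points in the various $F_j(B)$, we have $\lambda^\top y_{-i} \leq \sum_{j\neq i}\max_{z_j\in F_j}\lambda^\top z_j = \lambda^\top(x - x_i)$, hence $\lambda^\top y_i \geq \lambda^\top x_i = \max_{z_i\in F_i}\lambda^\top z_i$, so in fact $y_i \in \argmax_{z_i\in F_i(B)}\lambda^\top z_i$ too (and the inequality for $y_{-i}$ is tight). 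Third, given any selection $y_i \in X_i$ for each $i$, additivity of the linear functional gives $\lambda^\top \sum_i y_i = \sum_i \max_{z_i\in F_i}\lambda^\top z_i = \max_{z\in F(B)}\lambda^\top z$, so $\sum_i y_i$ also maximizes $\lambda^\top z$ over $F(B)$. Finally, since $\lambda$ is strictly positive (on all coordinates, using the gains-from-trade assumption to rule out a zero weight, or else arguing coordinatewise), a maximizer of $\lambda^\top z$ over $F(B)$ is Pareto optimal, i.e. lies in $H$; hence $\sum_i y_i \in H$, and since the $y_i$ were arbitrary elements of $X_i$, $\sum_{i\in N} X_i \subseteq H$.

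The main obstacle I anticipate is the handling of the weight vector $\lambda$: a general Pareto optimal point of a convex set is supported by a nonnegative but possibly not strictly positive $\lambda$, and a maximizer of $\lambda^\top z$ with some zero weights need not be Pareto optimal. I would address this either by invoking the standing assumption (stated just after \Cref{def:ECL-bargaining-game}) that strict gains from trade are available for all players, which via convexity lets one perturb $\lambda$ to be strictly positive while still supporting $x$, or by working with the \emph{strict} Pareto frontier directly and noting that $x \in H$ means no $z \in F(B)$ weakly dominates $x$ — then if $\sum_i y_i \notin H$ there is a weak domination, which I can pull back through the same linear-functional bookkeeping to contradict $x \in H$. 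A secondary, more minor point to check is that $X_i$ is nonempty and that the $y_{-i}$ witnessing membership genuinely decomposes as a sum over $j \neq i$ of points in $F_j(B)$ lying on the respective $H_j$ — but this is immediate from the definition of $H_{-i} = \sum_{j\neq i} H_j$ in the statement.
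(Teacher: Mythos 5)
Your proposal is correct and follows essentially the same route as the paper's proof: pick a supporting weight vector $\mu$ at $x$, use the fact that linear maximization over the Minkowski sum $F=\sum_i F_i$ decomposes so that every $y_i\in X_i$ must itself maximize $\mu^\top\cdot$ over $F_i$, and conclude that any sum of such points again maximizes $\mu^\top\cdot$ over $F$ and hence lies in $H$. The one place you are more careful than the paper is the final step --- the paper silently treats any maximizer of $\mu^\top z$ as strictly Pareto optimal even though the supporting vector at a strongly Pareto optimal point of a convex set can have zero components --- and your instinct to handle that case separately (e.g.\ by pulling a weak domination of $\sum_i y_i$ back to a contradiction with $x\in H$) is a genuine refinement, though note that the gains-from-trade assumption concerns $d$ and does not by itself yield a strictly positive supporting vector at $x$.
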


\begin{proof}
Let $\mu\in\mathbb{R}^{n}$ such that $\mu^{\top}x=\max_{y\in F}\mu^{\top}y$
(since $x$ is Pareto optimal, such a weight vector exists). Apparently,
for any $x_{i}\in X_{i}$, it is $\mu^{\top}x_{i}=\max_{y\in F_{i}}\mu^{\top}y$.
Hence, $\mu^{\top}y=\max_{z\in F}\mu^{\top}z$ for any $y\in\sum_{i\in N}X_{i}$.
But this means that $y\in H$.
\end{proof}
Under which conditions could $\sum_{i\in N}X_{i}$ contain more
than one vector? At least if $F$ is strictly convex, this cannot
happen.
\begin{prop}
Same assumptions as \Cref{thm:2}. Moreover, assume that $F(B)$
is strictly convex. 
Then $\sum_{i\in N}X_{i}=\{x\}$.
\end{prop}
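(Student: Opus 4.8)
The plan is to reduce the claim to the elementary fact that a nonzero linear functional attains its maximum over a strictly convex body at a single point, using \Cref{thm:2} to pin down which functional and which face are relevant.

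First I would check that $x$ itself lies in $\sum_{i\in N}X_i$, so that the asserted equality is not vacuous on one side. Since $x\in H\subseteq F(B)=\sum_{i\in N}F_i(B)$, write $x=\sum_{i\in N}x_i$ with $x_i\in F_i(B)$. Pareto optimality of $x$ forces each $x_i\in H_i$: if some coordinate of some $x_i$ could be increased within $F_i(B)$ without decreasing the others, the corresponding coordinate of $x=\sum_j x_j$ would increase too, contradicting $x\in H$. Taking $x_{-i}=\sum_{j\neq i}x_j\in H_{-i}$ as a witness then shows $x_i\in X_i$, hence $x=\sum_{i\in N}x_i\in\sum_{i\in N}X_i$. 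It remains to prove that every element of $\sum_{i\in N}X_i$ equals $x$.

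Next I would fix a supporting weight vector for $x$. Because $x$ is Pareto optimal, there is $\mu\in\mathbb{R}^n\setminus\{0\}$ with $\mu^\top x=\max_{z\in F(B)}\mu^\top z$ (one may in fact take $\mu\in\mathbb{R}^n_{\geq 0}$; only $\mu\neq 0$ is needed). Exactly as in the proof of \Cref{thm:2}, for any $y_i\in X_i$ one has $\mu^\top y_i=\max_{z\in F_i(B)}\mu^\top z$, so for any $y=\sum_{i\in N}y_i\in\sum_{i\in N}X_i$,
\[
\mu^\top y=\sum_{i\in N}\max_{z\in F_i(B)}\mu^\top z=\max_{z\in F(B)}\mu^\top z=\mu^\top x .
\]
Thus $x$ and $y$ are both maximizers of $z\mapsto\mu^\top z$ over $F(B)$. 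If $x\neq y$, the whole segment $[x,y]$ consists of maximizers, and in particular the midpoint $m=\tfrac12(x+y)$ lies in the interior of $F(B)$ by strict convexity while still satisfying $\mu^\top m=\mu^\top x$; moving from $m$ a small distance along $\mu$ stays in $F(B)$ and strictly increases $\mu^\top(\cdot)$, contradicting maximality. Hence $y=x$, and $\sum_{i\in N}X_i=\{x\}$.

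I do not expect a serious obstacle here; the only points needing mild care are the justification that the supporting normal $\mu$ can be taken nonzero (true since every point of $H$ is a boundary point of $F(B)$ under the standing assumption that strict gains from trade exist, and a full‑dimensional convex body has a nonzero normal at each boundary point) and the interior‑point argument ruling out two distinct maximizers of a linear functional on a strictly convex body. One could alternatively argue that strict convexity of $F(B)=\sum_i F_i(B)$ forces each $F_i(B)$ to have a unique $\mu$‑maximizer, so that each $X_i$ is already a singleton; I would mention this but carry out the midpoint argument above as the main line.
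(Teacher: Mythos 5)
Your proof is correct and rests on the same core mechanism as the paper's: use \Cref{thm:2} to place every mix-and-match sum on $H$, then invoke strict convexity via a midpoint argument to rule out two distinct such points. The only cosmetic difference is that you phrase the contradiction through a common supporting functional $\mu$ (two maximizers of $\mu^\top(\cdot)$ would put an interior point on the maximizing face), whereas the paper fixes one player with two distinct contributions and applies strict convexity directly to the segment between the resulting points of $H$; your added checks that $x\in\sum_i X_i$ and that $\mu\neq 0$ are welcome but not substantive departures.
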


\begin{proof}
Assume that for $i\in N$ there are two points $x\neq x'\in H_{i}$
and points $y,y'\in H_{-i}:=\sum_{j\in N\setminus\{i\}}H_{j}$ such
that $x+y=x'+y'=h\in H$. Let $\lambda=\frac{1}{2}$. It is $h'=\lambda x+(1-\lambda)x'+\lambda y+(1-\lambda)y'=\lambda(x+y)+(1-\lambda)(x'+y')=h\in H$.
Moreover, from \Cref{thm:2}, it follows that $\tilde{h}:=x'+y\in H$
and $\lambda h+(1-\lambda)\tilde{h}=\lambda x+(1-\lambda)x'+y\in H$.
Since $h\neq\tilde{h}\in\partial F(B)$ and $\lambda h+(1-\lambda)\tilde{h}\in\partial F(B)$,
$F(B)$ is not strictly convex, which is a contradiction. Hence, $x=x'$.
\end{proof}

Overall, I believe that the kind of non-uniqueness discussed here is unlikely to be a big problem. First, even if the decomposition is not unique in principle, there may still be unique points that are somehow more parsimonious and can thus serve as a Schelling points. E.g., in \Cref{example-non-uniqueness}, this could be the symmetric point \((0.5,0.5)\). Second, I think it is very unlikely that a situation in which
$\sum_{i\in N}X_{i}$ contains more than one point occurs in practice. I have not formalized this, but intuitively, the reason is that Pareto optimal points are points at which the normal vector to the individual Pareto frontiers for all players are colinear. It is unlikely that two players have Pareto frontiers that have a part that is affine and thus not strictly convex, for which their normal vectors are also exactly colinear. This is because there can only be countably many such affine parts.

\subsubsection{Possible gains from trade}
\label{possible-gains-from-trade-bargaining}
We can assess possible gains from trade by looking at the individual Pareto frontiers. Assume that
the whole surface of $F_{i}$ is a smooth manifold, for each player \(i\) (recall that \(F_i\) is the set of expected utility vectors that player \(i\) can choose from, assuming additive separability, i.e., that the total expected utility for each player is a sum of the individual contributions from each player). For instance, one could justify this with the fact that there exists a continuum of possible actions in the real world.

Then there exists a unique normal vector to this surface at each point on the Pareto frontier \(H_i\subseteq\partial F_i\). As mentioned in \Cref{exa:bargaining-case-1}, in the \(2\)-D case, the slope of the Pareto frontier at a point corresponds to the marginal rate of substitution between the two utility functions. Pareto optimal points are points at which those slopes are equal for both players, and the normal vectors colinear. Gains from trade
are possible whenever the marginal rates of substitution between the
different utility functions on the Pareto frontier are not equal for
all players. In particular,
if a player was
previously optimizing for their own goals, then giving utility
to other players costs them nothing on the margin (see \Cref{fig:4}). This idea was introduced as ``marginal charity'' by \textcite{hanson2012marginal}.
\begin{figure}
\includegraphics[width=1\textwidth]{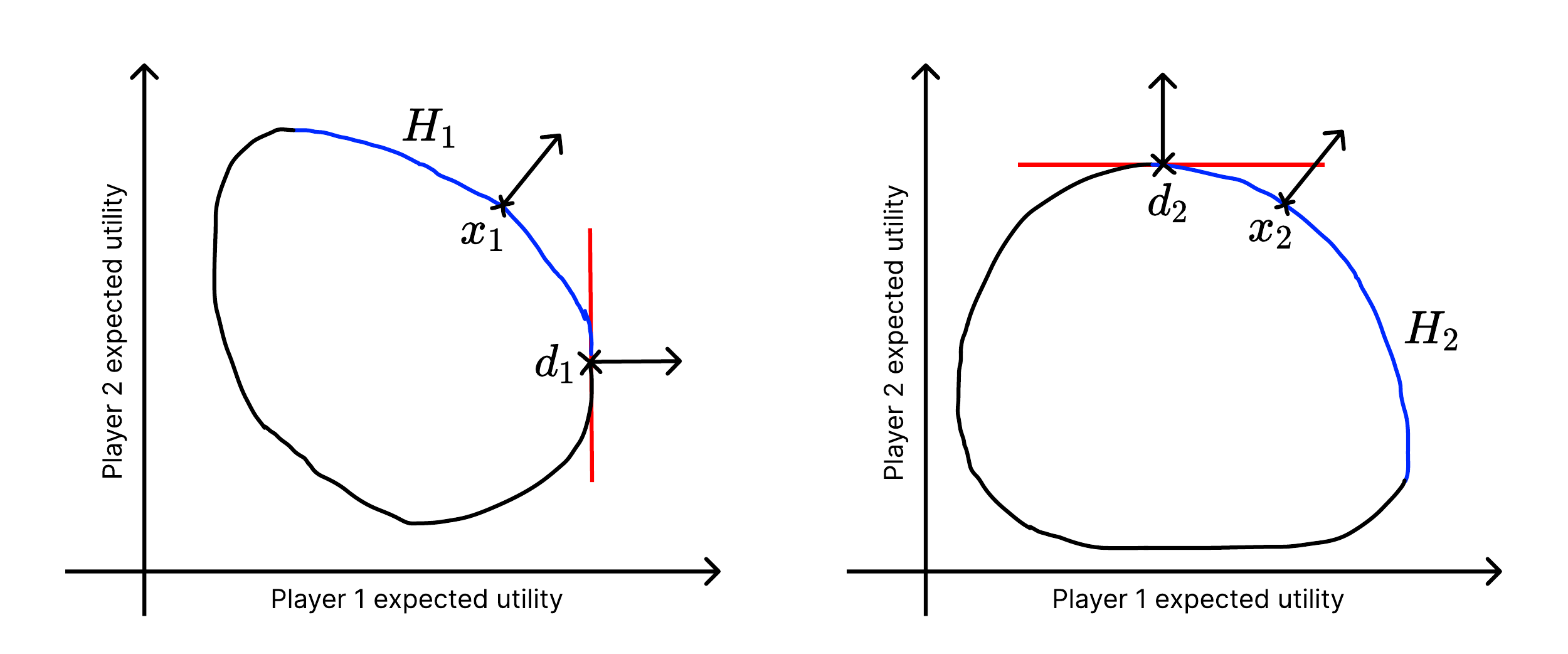}
\caption{Example additively separable bargaining game. Boundary of the feasible set for player 1 (left) and player 2 (right), with Pareto frontiers \(H_1,H_2\) (in blue) and disagreement points \(d_1,d_2\) with normal vectors and slopes (in red). At the disagreement point \(d=d_1+d_2\), players only optimize for their own values. A possible Pareto optimal compromise is \(x=x_1+x_2\).}
\label{fig:4}
\end{figure}

The amount of trade that can happen depends on the specific shape of the Pareto frontiers. If the Pareto frontiers are curved strongly at the disagreement point, such that Pareto optimal trades are very close to this point, then barely any trade is possible
(see \Cref{fig:14}). I will return to this analysis of possible gains from trade using different toy models for the Pareto frontiers in \Cref{observations-final}.
\begin{figure}
\includegraphics[width=1\textwidth]{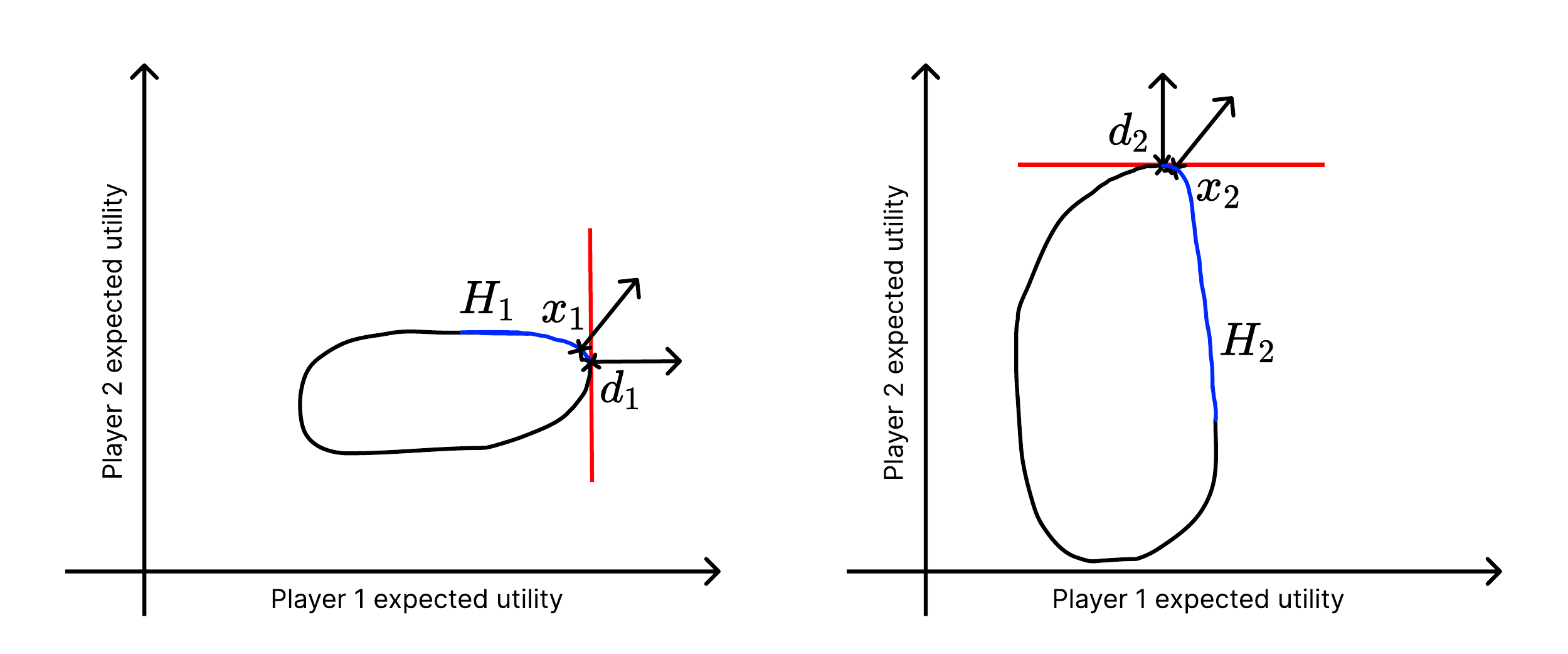}\caption{Bargaining game as in \Cref{fig:4}, with disagreement point \(d=d_1+d_2\) and potential Pareto optimal compromise \(x=x_1+x_2\). Due to the specific shape of the Pareto frontiers, there are almost no gains from trade.}
\label{fig:14}
\end{figure}

\subsubsection{Further observations}
\label{further-observations}
Another observation concerns trades between more than
two players, which can exhibit a more complicated graph structure. For instance, if there are three players $1,2,3$,
it is possible that $1$ benefits $2$, $2$ benefits $3$, and $3$
benefits $1$. Not everyone has to receive gains from trade from everyone
else. This property allows for higher gains from trades, but it also means that there can be players involved that don't benefit anyone else, which can be problematic (see \Cref{fairness-and-coalitional}).

Lastly, it is worth noting that the NBS is continuous in the
feasible set and the disagreement point \parencite{Lensberg1988-lf}.
This means that the NBS is in some sense robust to slight changes or uncertainties about the right specification of the bargaining game.\footnote{I believe this is also true of the KSBS, though I have not investigated this.}

\section{Bayesian game model}
\label{sec:ECL-as-a-Bayesian-Game}
In this section, I introduce a Bayesian game formalism to model uncertainty about the values and empirical situations of agents in the multiverse, using the type space formalism by \textcite{Harsanyi1967},
adapted to ECL. In a Bayesian game, players have \emph{incomplete information}, meaning they are uncertain about the utility function of other players. I will build on the formalism introduced in Sections~\ref{bayesian-game-formalism} and \ref{pure-and-mixed} in \Cref{sec:Bargaining-with-incomplete} to define incomplete information bargaining games. 

In \Cref{bayesian-game-formalism}, I introduce the basic formalism and notation, and in \Cref{pure-and-mixed}, I define pure and mixed strategies and their expected utilities. Next, I introduce joint distributions over strategies in \Cref{joint-strategy-distributions}, which I use in \Cref{subsec:Equilibrium-concepts} to define dependency equilibria, alongside standard Bayesian Nash equilibria. Dependency equilibria assume optimal conditional expectations of actions under joint distributions over strategies and can thus incorporate the evidential reasoning required by ECL.

I then show several equilibrium results, including a generalization of \textcite{Spohn2007-fp}'s folk theorem for dependency equilibria, which says that all strategy profiles that Pareto dominate a Nash equilibrium are dependency equilibria (\Cref{subsec:ObservationsEquil}). This result shows that dependency equilibria alone won't be useful in constraining beliefs over the players' strategies in ECL further. I also derive simple conditions for when a strategy profile leads to positive gains from trade and is thus a dependency equilibrium.

Finally, in \Cref{sec:Uncertainty-about-similarity}, I discuss a possible extension of the formalism to include uncertainty about decision procedures and similarity to other agents.

\subsection{Formal setup}
\label{bayesian-game-formalism}
\begin{defn}
An ECL Bayesian game is a tuple $G=(N,A,T,p,(u_{i})_{i\in N})$, where
\begin{itemize}
\item $N=\{1,\dots,n\}$ is the set of players;
\item $A$ is a generic, finite set of actions;
\item $T=\{1,\dots,m\}$ is a generic set of types, specifying the private information available to each player, i.e., the values and empirical situation in each universe;
\item $p\colon T^{n}\rightarrow[0,1]$ is a prior probability distribution
over the players' types such that all types have positive probability for all players, i.e., \({\sum_{t_{-i}\in T_{-i}}p(t_{-i},t_i)>0}\) for all players \(i\in N\) and types \(t_i\in T\);
\item $(u_{i})_{i\in N}$ is the tuple of utility functions for each player,
where $u_{i}\colon A^{n}\times T^{n}\rightarrow\mathbb{R}$.
\end{itemize}
\end{defn}

Each player's type gets randomly chosen according to the joint distribution \(p\). A type specifies a player's private information, i.e., whatever information about the player
is not common knowledge. In an ECL Bayesian game, I understand each player as a causally separate universe, inhabited by some intelligent civilization that is able to engage in ECL. The player's type then specifies this civilization's values, as well as their options in furthering any of the other types' values. Players know how many universes and thus causally disconnected civilizations there are, but they are uncertain about everyone's types.

I assume that this formal framework is common knowledge. In particular, everyone knows the common prior
over types. I believe this is a good starting point to analyze the situation, but I am unsure to what degree ECL breaks down as we relax the assumption. One possible generalization for future work would be to allow for individual
probability distributions over types that don't stem from a common
prior over types \parencite[see][]{Harsanyi1967}, or analyze relaxations to common knowledge such as common
\(p\)-belief \parencite{Monderer1989-pj}.

My formalization is different from a standard Bayesian game \parencite[e.g.][p.\ 347f.]{maschler2020game} since the set of types \(T\) is the same for each player, and there is only one set of actions, independent of the player and type. Both of these are simply notational simplifications without loss of generality. First, all the information about the actions is encoded in the utility functions, which can depend on players and types (if there are too many actions for some types, we can simply map several actions onto the same utilities). Second, we can still distinguish the different players' type distributions by choosing an appropriate prior distribution \(p\). The only restricting condition here is the assumption that each type for each player has strictly positive probability. However, this assumption could easily be relaxed without changing anything substantial; it merely serves to avoid cumbersome case distinctions based on whether a type has zero probability.

My simplification makes particular sense in ECL, where players are causally disconnected universes. Here, we can regard players simply as vessels that can be inhabited by any of the types, such that really only the types matter. I do not think it would be useful at this point to try to distinguish the different universes.
I formalize the idea that we cannot distinguish between the players as \emph{anonymity} below, alongside the additive separability assumption from the previous section. I will use this assumption a lot in the following.
\begin{defn}[Additively separable and anonymous \(u\)] Assume that there
are utility functions $u_{t,t'}$ for all $t,t'\in T$ such that for
$a=(a_{1},t_{1},\dots,a_{n},t_{n})\in A$, we have
\[
u_{i}(a)=\sum_{j=1}^{n}u_{t_j,t_i}(a_{j})
\]
for each player \(i\in N\). Then the utility functions are called additively separable  and anonymous.
\end{defn}
This definition says that we can express the utility function of any player as a sum of contributions from every player (additive separability), where the received utility only depends on their type, as well as the type of the other player (anonymity). The term \(u_{t,t'}(a)\) thus expresses the utility that any player of type \(t'\) gets from any player of type \(t\), when that player chooses action \(a\in A\).
\begin{defn}[Anonymous $p$]The prior distribution
$p$ is called anonymous if, for all permutations on players
$\pi$ and type vectors $t\in T^{n}$, we have $p(t_{1},\dots,t_{n})=p(t_{\pi(1)},\dots,t_{\pi(n)})$.
\end{defn}
This says that also the distribution over types is anonymous, i.e., symmetric in the players. Note that this does not mean that players' types have to be independent. One could still incorporate a belief under which, for instance, players always believe that other players are likely of the same type as they are.

Lastly, I define the same properties for Bayesian games.
\begin{defn}[Additively separable and anonymous \(G\)] I say that an ECL Bayesian game $G=(N,A,T,p,(u_{i})_{i\in N})$ is additively separable and anonymous if \(u\) is additively separable and anonymous and if \(p\) is anonymous.
\end{defn}

\subsection{Pure and mixed strategies}
\label{pure-and-mixed}
Now I turn to the strategies of players in an ECL Bayesian game, as well as associated expected utilities. I start with pure, i.e., deterministic strategies. I then turn to mixed strategies.

\begin{defn}[Pure strategy profile]
A pure strategy $\alpha_{i}\in A^{m}$ is a mapping from the possible
types of player $i$ to their actions. We denote a pure strategy profile
as $\alpha\in A^{n,m}$. 
\end{defn}

To introduce expected utilities, we need some additional notation for the distribution over types. In a slight abuse of notation, I denote the prior probability of player $i$
having type $t_{i}$ as 
\[
p(t_{i}):=\sum_{t_{-i}\in T_{-i}}p(t_{-i},t_{i}).
\]
Note that, if $p$ is anonymous,
$p(t_{i})$ does not depend on the player.
Player $i$ of type $t_{i}$ has a conditional belief over $t_{-i}\in T_{-i}$, which is given by 
\[
p(t_{-i}\mid t_{i}):=\frac{p(t_{-i},t_{i})}{p(t_{i})}.
\]
Now the expected utility of $\alpha$ for player $i$ of type $t_{i}$
is 
\[
EU_{i}(\alpha; t_{i}):=\sum_{t_{-i}\in T_{-i}}p(t_{-i}\mid t_{i})u(\alpha_{1,t_{1}},t_{1},\dots,\alpha_{n,t_{n}},t_{n}).
\]
This is an \emph{ex interim} expected utility, i.e., after updating on the player's own type, but before having seen anyone else's type. I will focus on ex interim expected utilities in this report since they allow for modeling players with different beliefs, which is an important aspect of ECL in my view.\footnote{For more discussion on the question of whether players should update on their own type in principle, see \textcite{benya2014sin,treutlein2018udt}.}

Given two players $i\neq j\in N$, the joint prior $p$ and types
$t_{i},t_{j}\in T$, we can define 
\[
p(t_{i}\mid t_{j}):=\sum_{t'_{-j}\in T_{-j}\text{ s.t. }t'_{i}=t_{i}}p(t_{-j}'\mid t_{j}).
\]
If $p$ is anonymous, then $p(t_{i}\mid t_{j})$
 depends only on the two types and not the players. We can thus write \(p(t'\mid t)\) for the probability that any player of type \(t\) assigns to any other player having type \(t'\).
Given additive separability and anonymity of $u$, one can use this to simplify
the expected utility of \(\alpha\) as
\[
EU_{i}(\alpha; t_{i})=u_{t_{i},t_{i}}(\alpha_{i,t_{i}})+\sum_{j\in N\setminus\{i\}}\sum_{t_{j}\in T}p(t_{j}\mid t_{i})u_{t_{j},t_{i}}(\alpha_{i,t_{j}}).
\]
If $\alpha_{1t}=\dots=\alpha_{nt}$ for all $t\in T$, I say that
$\alpha$ is anonymous and we can write $\alpha\in T^{m}$. If, in
addition to additive separability/anonymity of $u$, $\alpha$
and $p$ are anonymous, we can write
\begin{equation}
EU_{t}(\alpha)=u_{t,t}(\alpha_{t})+(n-1)\sum_{t'\in T}p(t'\mid t)u_{t',t}(\alpha_{t'})\label{eq:1}
\end{equation}
for the expected utility for any player of type \(t\), if the anonymous strategy \(\alpha\in T^m\) is played. Here, the first term stands for the utility that the player produces for themself, while the term with the factor \((n-1)\) stands for the expected utility provided by all other \(n-1\) players, where the expectation is over possible types for any of the other players (and this term is the same for every player due to anonymity).

Next, I consider mixed strategy profiles. In a Bayesian game, we have to specify a distribution over actions for each tuple \((i,t_i)\in N\times T\) of a player and associated type.
\begin{defn}[Mixed strategy profile]
A mixed strategy \(\sigma_i\in \Sigma_i:=\Delta(A)^T\) for player \(i\) specifies for each possible type \(t_i\), a probability distribution over actions, denoted via \(\sigma_i(\cdot \mid t_i)\). A mixed strategy profile is a vector \(\sigma\in \prod_{i\in N}\Sigma_i\) of mixed strategies for each player.
\end{defn}

As with actions, we can denote a mixed strategy profile specifying only distributions over actions for players \(N\setminus \{i\}\) as \(\sigma_{-i}\in \Sigma_{-i}\).
The expected utility for player \(i\) of action \(a_i\), given mixed strategy profile \(\sigma_{-i}\in \Sigma_{-i}\), is defined as
\[EU_i(\sigma_{-i},a_i;t_i):=\sum_{t_{-i}\in T_{-i}}p(t_{-i}\mid t_i)\sum_{a_{-i}}\left(\prod_{j\in N\setminus \{i\}}\sigma_j(a_j\mid t_j)\right)u_i(a_1,t_1,\dotsc,a_n,t_n).\]
Similarly, we can define
\[EU_i(\sigma;t_i)
:=\sum_{a_i\in A}\sigma_i(a_i)\left[EU_i(\sigma_{-i},a_i;t_i)\right].\]
Similarly to the above for pure strategy profiles, we could simplify this expression for additively separable and anonymous games. I will skip this here since it won't be needed in the following.

\subsection{Joint distributions over strategies}
\label{joint-strategy-distributions}
Mixed strategy profiles specify independent distributions over actions for each player. I will use them below to define Bayesian Nash equilibria \parencite[][p.\ 354]{maschler2020game}. However, in the case of ECL, it is important to consider dependencies between the actions of different players. In this section, I will thus define \emph{joint strategy distributions}, which allow for different players' actions to be dependent. I will use them to introduce a Bayesian game generalization of \emph{dependency equilibria} \parencites{Spohn2007-fp,Spohn2010Depen-13626,Spohn2003-gi,Spohn2005-hi}, which explicitly take such dependencies into account. 

\begin{defn}[Joint strategy distribution] Let $S=\{s\colon T^n\rightarrow \Delta(A^n),t\mapsto s(\cdot \mid t)\}$
be the set of conditional joint probability distributions over the
actions of all players given their types. Then $s\in S$ is called a joint strategy
distribution.
\end{defn}

Unlike the mixed strategy profiles in bargaining problems, I interpret the distributions over strategies here as subjective credences, rather than as options that could be implemented by the players, e.g., via a randomization device. If players were able to randomize, then this would naturally lead to independent distributions (absent a randomization device that is correlated across the multiverse). Instead, ECL is based on beliefs over actions that imply that agents' actions are dependent, due to the similarity of their decision procedures. I use joint strategy distributions to formalize such beliefs.\footnote{The idea that distributions over actions describe beliefs rather than randomization is also common in traditional game theory. E.g.,
\textcite{Aumann1987} writes:
``An important feature of our approach is that it does not require
explicit randomization on the part of the players. Each player always
chooses a definite pure strategy, with no attempt to randomize; the
probabilistic nature of the strategies reflects the uncertainties
of other players about his choice.''}

Joint strategy distributions can also be anonymous, i.e., symmetric in the player number.
\begin{defn}[Strategy anonymity] A joint strategy profile $s\in S$ is called anonymous if for any player permutation
$\pi\colon N\rightarrow N$, action vector \(a\in A^{n}\), and type vector \(t\in T^{n}\), we have
\[s(a\mid t)=s(a_{\pi(1)},\dots,a_{\pi(n)}\mid t_{\pi(1)},\dots,t_{\pi(n)}).\]
\end{defn}

Joint strategy distributions are equivalent to standard mixed strategy profiles in the special case in which the marginals over the different players' actions are independent. To define this formally, we denote the probability for player $i\in N$ of playing $a_{i}$ given
type vector $t\in T^{n}$ by
\[
s(a_{i}\mid t):=\sum_{a_{-i}\in A_{-i}}s(a_{-i},a_{i}\mid t).
\]
Moreover, the prior probability for player $i\in N$ of type $t_{i}$ of playing
$a_{i}$ is 
\[
s(a_{i}\mid t_{i}):=\frac{\sum_{t_{-i}\in T_{-i}}s(a_{i}\mid t_{-i},t_{i})p(t_{-i},t_{i})}{p(t_{i})}.
\]
If $s$ is anonymous, these probabilities don't depend on $i$. This
justifies defining $s(a\mid t)$ for any $a\in A$ and $t\in T$ in
this case. 
\begin{defn}[Uncorrelated joint strategy distribution]
A joint strategy distribution \(s\) is said to be \emph{uncorrelated}, if
\begin{enumerate}
    \item \(s(a_i\mid t_i)=s(a_i\mid t)\) for any player \(i\in N\), type \(t\in T^n\) and action \(a_i\in A\);
    \item
    \(s\) factorizes into a product of its marginals, i.e., if for any \(t\in T^n\) and \(a\in A^n\), we have
\[s(a\mid t)=\prod_{i\in N}s(a_i\mid t_i).\]
\end{enumerate}
\end{defn}
Note that the term ``uncorrelated'' is imprecise, since the definition actually requires independence. However, I am using the term for simplicity.

Now I turn to \emph{conditional expected utilities} of actions. For player \(i\in N\), the conditional probability of other players' actions $a_{-i}\in A_{-i}$
given player \(i\)'s action $a_{i}$, type vector $t\in T^{n}$, and joint strategy profile $s\in S$ is
\[
s(a_{-i}\mid a_{i},t):=\frac{s(a_{-i},a_{i}\mid t)}{s(a_{i}\mid t)}.
\]
If the players' action distributions under \(s\) are dependent, then this probability might differ between different actions \(a_i\). It takes dependencies into account, instead of simply marginalizing over all possible actions for player \(i\) to arrive at an unconditioned probability.

Next, for $i,j\in N$, the probability of $a\in A^{n}$ given $t_{i},t_{j}\in T$
is 
\[
s(a\mid t_{i},t_{j}):=\frac{\sum_{t'\in T^{n}\text{ s.t. }t'_{i}=t_{i},t'_{j}=t_{j}}s(a\mid t')p(t')}{\sum_{t'\in T^{n}\text{ s.t. }{t'}_{i}={t}_{i},{t'}_{j}={t}_{j}}p(t')}.
\]
In another slight abuse of notation, I regard $\alpha$
as an $A^{n,m}$-valued random variable and denote the probability
that player $i$ of type $t_{i}$ plays $a_{i}$ given that player
$j$ of type $t_{j}$ plays $a_{j}$ via
\[
s(\alpha_{i,t_{i}}=a_{i}\mid\alpha_{j,t_{j}}=a_{j},t_{i},t_{j}):=\frac{\sum_{a'\in A^{n}\text{ s.t. }a'_{i}=a_{i},a'_{j}=a_{j}}s(a'\mid t_{i},t_{j})}{\sum_{a'\in A^{n}\text{ s.t. }a'_{j}=a_{j}}s(a'\mid t_{i},t_{j})}.
\]
Given anonymous \(s\) and \(p\), if $i\neq j\in N$, this does not depend on
the players. Lastly, I define
\[
s(a_{i},t_{i}\mid a_{j},t_{j}):=s(\alpha_{i,t_{i}}=a_{i}\mid\alpha_{j,t_{j}}=a_{j},t_{i},t_{j})p(t_{i}\mid t_{j}).
\]
Apparently, given anonymity, $s(a_{i},t_{i}\mid a_{j},t_{j})$ only
depends on the types and actions, but not on either $i$ or $j$ (as
long as $i\neq j$). 

With these notations at hand, we can proceed and define conditional expected
utilities.

\begin{defn}[Conditional expected utility] The \emph{conditional expected
utility} of strategy \(s\in S\), given action $a_{i}\in A$ and type \(t_i\) for player $i$ is defined as
\[
EU_{i}(s; a_i,t_{i}):=\sum_{t_{-i}\in T_{-i}}p(t_{-i}\mid t_i)\sum_{a_{-i}\in A_{-i}}s(a_{-i}\mid a_{i},t_{-i},t_{i})u_{i}(a_{-i},a_{i},t_{-i},t_{i}).
\]
Moreover, assuming anonymity of \(s\) and \(p\) and additive separability and anonymity of \(u\), we define
\begin{equation}
EU_{t}(s; a):=u_{t,t}(a)+(n-1)\sum_{t'\in T}\sum_{a'\in A}s(a',t'\mid a,t)u_{t',t}(a')\label{eq:2-1}
\end{equation}
as the conditional expected utility of \(s\in S\) for any player of type $t$ given action
$a\in A$.
\end{defn}
Note that here, we condition the distribution over the other players' actions on player \(i\)'s action. The conditional expected utility of different actions hence differs not only due to the different causal effects of the actions, but also due to potential dependencies between different players' actions under the distribution \(s\). For instance, in a prisoner's dilemma, one could define a distribution \(s\) under which either all players cooperate or all players defect. Then the conditional expected utility of cooperating would be higher, since it would take into account the correlations between the players' actions.

The following lemma justifies above definition of \(EU_t(a\mid s)\) in the case of anonymity and additive separability.
\begin{lem}\label{lemma-anonymous}
Assume \(s\in S\) and \(p\) are anonymous and \(u\) is additively separable and anonymous. Then we have 
\[EU_i(s;a_i,t_i)=EU_{t_i}( s;a_i)\]
for any player \(i\in N\), action \(a_i\in A\), joint strategy profile \(s\in S\), and type \(t_i\in T\).
\end{lem}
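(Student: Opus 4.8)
The plan is to verify the identity by substituting the additively separable, anonymous form of $u$ into the definition of $EU_i(s;a_i,t_i)$ and matching the resulting pieces term-by-term against the definition $EU_{t_i}(s;a_i)=u_{t_i,t_i}(a_i)+(n-1)\sum_{t'\in T}\sum_{a'\in A}s(a',t'\mid a_i,t_i)\,u_{t',t_i}(a')$. Throughout I would keep the standing assumption that the conditioning events in question have positive probability, so that every conditional appearing is defined.

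Write $a=(a_{-i},a_i)$ and $t=(t_{-i},t_i)$, and use additive separability and anonymity to split $u_i(a,t)=u_{t_i,t_i}(a_i)+\sum_{j\in N\setminus\{i\}}u_{(t_{-i})_j,\,t_i}\bigl((a_{-i})_j\bigr)$. Plugged into $EU_i(s;a_i,t_i)$, the constant term $u_{t_i,t_i}(a_i)$ pulls out of both sums; since $\sum_{a_{-i}\in A_{-i}}s(a_{-i}\mid a_i,t)=1$ and $\sum_{t_{-i}\in T_{-i}}p(t_{-i}\mid t_i)=1$, it contributes exactly $u_{t_i,t_i}(a_i)$, matching the first term on the right. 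For the remaining part, I would show that for each fixed $j\in N\setminus\{i\}$ the summand $\sum_{t_{-i}}p(t_{-i}\mid t_i)\sum_{a_{-i}}s(a_{-i}\mid a_i,t)\,u_{(t_{-i})_j,\,t_i}\bigl((a_{-i})_j\bigr)$ has the same value, by relabeling the dummy variables $t_{-i}\mapsto\pi t_{-i}$ and $a_{-i}\mapsto\pi a_{-i}$ along a transposition $\pi$ of $N$ that swaps $j$ with another index in $N\setminus\{i\}$ and fixes $i$: since $\pi$ fixes $i$, anonymity of $p$ gives $p(\pi t_{-i}\mid t_i)=p(t_{-i}\mid t_i)$, and anonymity of $s$ gives $s(\pi a_{-i}\mid a_i,(\pi t_{-i},t_i))=s(a_{-i}\mid a_i,t)$ (using that the marginal $s(a_i\mid t)$ is itself invariant under permutations of the non-$i$ coordinates). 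Hence all $n-1$ of these summands coincide, and their total is $(n-1)$ times any one of them.

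It then remains to identify one such summand with $\sum_{t'\in T}\sum_{a'\in A}s(a',t'\mid a_i,t_i)\,u_{t',t_i}(a')$; this is the heart of the argument and the step I expect to be the main obstacle. The plan is to fix $j$, decompose $t_{-i}=(t_j,t_R)$ and $a_{-i}=(a_j,a_R)$ with $R=N\setminus\{i,j\}$, collapse the inner sum over $a_R$ and the remaining $i$-block using $s(a_{-i}\mid a_i,t)=s(a_{-i},a_i\mid t)/s(a_i\mid t)$, and then reorganize the sum over $t_R$ by unwinding, in order, the definitions of $s(a\mid t_i,t_j)$ (the prior-weighted average of $s(a\mid\cdot)$ over the remaining type coordinates), of $s(\alpha_{j,t_j}=a_j\mid\alpha_{i,t_i}=a_i,t_j,t_i)$, and finally of $s(a_j,t_j\mid a_i,t_i)=s(\alpha_{j,t_j}=a_j\mid\alpha_{i,t_i}=a_i,t_j,t_i)\,p(t_j\mid t_i)$. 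The bookkeeping here is intricate because each of these notations conceals a marginalization or a change of the conditioning set, so one must track carefully which variables are conditioned on at each stage; anonymity of $p$ and $s$ enters once more to ensure that the object $s(a_j,t_j\mid a_i,t_i)$ depends only on the two types and two actions and not on the specific players $i\neq j$, which is exactly what licenses writing it with the generic type arguments that appear in $EU_{t_i}(s;a_i)$. Combining the self-term, the factor $(n-1)$, and this identification gives $EU_i(s;a_i,t_i)=EU_{t_i}(s;a_i)$.
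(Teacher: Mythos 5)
Your proposal follows essentially the same route as the paper's proof: expand $u_i$ via additive separability, pull out the self-term using the normalization of the conditional distributions, and use anonymity of $s$ and $p$ to collapse the $n-1$ opponent contributions into $(n-1)$ copies of the type-level term $\sum_{t'\in T}\sum_{a'\in A}s(a',t'\mid a_i,t_i)\,u_{t',t_i}(a')$. The final identification you flag as the main obstacle is exactly the step the paper itself performs (asserted in a single line of its displayed computation by reindexing over $k$, $t''$, $a''$), so your plan matches the paper's argument in both structure and level of detail.
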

\begin{proof}We have
\begin{align}
EU_{i}(s;a_i,t_{i})
&=\sum_{a_{-i}\in A_{-i}}\sum_{t_{-i}\in T_{-i}}s(a_{-i}\mid a_{i},t_{-i},t_{i})p(t_{-i}\mid t_{i})u_{i}(a_{-i},a_{i},t_{-i},t_{i})\\
&=u_{t_{i},t_{i}}(a_{i})+\sum_{a'\in A_{-i}}\sum_{t'\in T_{-i}}s(a'\mid a_{i},t',t_{i})p(t'\mid t_{i})\sum_{k\in N\setminus\{i\}}u_{t'_{k},t_{i}}(a'_{k})\\
&=
u_{t_{i},t_{i}}(a_{i})+\sum_{k\in N\setminus\{i\}}\sum_{t''\in T}\sum_{a''\in A}\sum_{a'\in A_{-i}\text{ s.t. }a'_{k}=a''}\sum_{t'\in T_{-i}\text{ s.t. }t'_{k}=t''}s(a'\mid a_{i},t',t_{i})p(t'\mid t_{i})u_{t'_{k},t_{i}}(a'_{k})\\
&=u_{t_{i},t_{i}}(a_{i})+(n-1)\sum_{t'\in T}\sum_{a'\in A}s(a',t'\mid a_i,t_i)u_{t',t_i}(a')
\\
&=EU_{t_i}( s;a_i).
\end{align}
\end{proof}

Before turning to equilibrium concepts, we briefly consider the case in which strategies are uncorrelated. In this case, conditional expected utilities correspond to standard expected utilities given a mixed strategy profile and an action. 
\begin{prop}
\label{equivalence-uncorrelated}
    Assume \(s\in S\) is uncorrelated and define \(\sigma\) via \(\sigma_i(a_i\mid t_i):=s(a_i\mid t_i)\) for any player \(i\in N\), action \(a_i\in A\), and type \(t_i\in T\). Then we have
    \[EU_i(s; a_i,t_i)=EU_i(\sigma_{-i},a_i;t_i)\]
    for all players \(i\in N\), actions \(a_i\in A\), and types \(t_i\in T\).
\end{prop}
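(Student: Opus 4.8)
The plan is to show that the only structurally different ingredients in the two expressions---the conditional distribution $s(a_{-i}\mid a_i,t_{-i},t_i)$ appearing in the definition of $EU_i(s;a_i,t_i)$, and the product $\prod_{j\in N\setminus\{i\}}\sigma_j(a_j\mid t_j)$ appearing in the definition of $EU_i(\sigma_{-i},a_i;t_i)$---are in fact equal for every $t_{-i}\in T_{-i}$ and every $a_{-i}\in A_{-i}$. Since the type-weights $p(t_{-i}\mid t_i)$ and the utilities $u_i(a_{-i},a_i,t_{-i},t_i)$ are literally the same in the two sums, this identification finishes the proof.

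First I would expand the conditional probability by its definition, $s(a_{-i}\mid a_i,t)=s(a_{-i},a_i\mid t)/s(a_i\mid t)$ with $t=(t_{-i},t_i)$. By the factorization property (part~2 of the definition of uncorrelated), $s(a_{-i},a_i\mid t)=\prod_{j\in N}s(a_j\mid t_j)$; marginalizing this over $a_{-i}$ and using $\sum_{a_j\in A}s(a_j\mid t_j)=1$ gives $s(a_i\mid t)=s(a_i\mid t_i)$, which is also exactly what part~1 of the definition asserts. Dividing, $s(a_{-i}\mid a_i,t)=\prod_{j\in N\setminus\{i\}}s(a_j\mid t_j)$. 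Since $\sigma$ is defined so that $\sigma_j(a_j\mid t_j):=s(a_j\mid t_j)$, the right-hand side is precisely $\prod_{j\in N\setminus\{i\}}\sigma_j(a_j\mid t_j)$, and substituting back into the definition of $EU_i(s;a_i,t_i)$ reproduces the definition of $EU_i(\sigma_{-i},a_i;t_i)$ term by term.

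The argument is essentially bookkeeping, and the only point that needs genuine care is reconciling the two occurrences of the symbol $s(a_j\mid t_j)$---the prior-averaged marginal used to define $\sigma_j$, versus the marginal of the conditional distribution $s(\cdot\mid t)$---which is precisely what part~1 of the definition of an uncorrelated joint strategy distribution guarantees, so the two notations may be used interchangeably here. A secondary, purely technical point is that $s(a_{-i}\mid a_i,t)$ is only literally defined when $s(a_i\mid t)>0$; in the degenerate case $s(a_i\mid t_i)=0$ the claimed identity should be read under the usual convention for conditioning on null events, or one simply restricts attention to actions $a_i$ in the support of $s(\cdot\mid t_i)$, on which the statement is the substantive one.
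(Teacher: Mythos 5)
Your proposal is correct and follows essentially the same route as the paper's proof: expand the conditional via $s(a_{-i}\mid a_i,t)=s(a_{-i},a_i\mid t)/s(a_i\mid t)$, apply the factorization and the marginal condition $s(a_i\mid t)=s(a_i\mid t_i)$ to cancel the $i$-th factor, and identify the result with $\prod_{j\neq i}\sigma_j(a_j\mid t_j)$. Your closing remark about the null-event case $s(a_i\mid t_i)=0$ is a fair point of care that the paper's computation passes over silently.
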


\begin{proof}
    We have\begin{align}EU_i(s;a_i,t_i)&=\sum_{t_{-i}\in T_{-i}}p(t_{-i}\mid t_i)\sum_{a_{-i}\in A_{-i}}s(a_{-i}\mid a_{i},t_{-i},t_{i})u_{i}(a_{-i},a_{i},t_{-i},t_{i})
    \\
    &=
    \sum_{t_{-i}\in T_{-i}}p(t_{-i}\mid t_i)\sum_{a_{-i}\in A_{-i}}\frac{s(a_{-i},a_{i}\mid t_{-i},t_i)}{s(a_{i}\mid t_{-i},t_i)}u_{i}(a_{-i},a_{i},t_{-i},t_{i})
        \\
    &=
    \sum_{t_{-i}\in T_{-i}}p(t_{-i}\mid t_i)\sum_{a_{-i}\in A_{-i}}\frac{\prod_{j\in N}s(a_j\mid t_j)}{s(a_{i}\mid t_i)}u_{i}(a_{-i},a_{i},t_{-i},t_{i})
            \\
    &=
    \sum_{t_{-i}\in T_{-i}}p(t_{-i}\mid t_i)\sum_{a_{-i}\in A_{-i}}\prod_{j\in N\setminus\{i\}}s(a_j\mid t_j)u_{i}(a_{-i},a_{i},t_{-i},t_{i})
                \\
        &=
    \sum_{t_{-i}\in T_{-i}}p(t_{-i}\mid t_i)\sum_{a_{-i}\in A_{-i}}\prod_{j\in N\setminus\{i\}}\sigma_j(a_j\mid t_j)u_{i}(a_{-i},a_{i},t_{-i},t_{i})
    \\
    &=EU_i(\sigma_{-i},a_i;t_i).
    \end{align}
\end{proof}

\subsection{Equilibrium concepts}
\label{subsec:Equilibrium-concepts}
To analyze the equilibria of ECL Bayesian games, I first define a \emph{Bayesian Nash equilibrium}, which is a standard solution concept for Bayesian games and which assume mixed strategy profiles, or equivalently, uncorrelated joint strategy distributions. Afterwards, I will introduce \emph{dependency equilibria} \parencite{Spohn2007-fp,Spohn2010Depen-13626,Spohn2003-gi,Spohn2005-hi}, which are based on conditional expected utility of potentially dependent strategy distributions and thus more suitable for ECL. Both equilibrium concepts can be motivated descriptively, to analyze how agents in the multiverse might behave, as well as normatively, to ask how rational agents should behave. In addition to the assumption of common knowledge in rationality, both equilibrium concepts are based on the assumption that all players share the same belief over the actions of the other players, conditional on their types. This assumption is too restrictive, absent a mechanism that could force such a common belief, such as repeated interactions or mutual simulation. However, as with other modeling assumptions, we will use this as a starting point for our analysis. In the case of dependency equilibria, our assumptions don't constrain the space of equilibria much: there exists a result similar to the folk theorems for iterated games \parencite[see][]{fudenberg1986folk}, saying that any Pareto improvement over a Bayesian Nash equilibrium is a dependency equilibrium (see \Cref{subsec:ObservationsEquil}).

\begin{defn}[Bayesian Nash equilibrium]
A mixed strategy profile \(\sigma\) is a Bayesian Nash equilibrium
if for all players $i\in N$, types $t_{i}\in T$, actions \(a_i\in A\) such that $\sigma_i(a_{i}\mid t_{i})>0$, we have
\[
EU_{i}(\sigma_{-i},a_i;t_{i})\geq EU_{i}(\sigma_{-i},a'_{i};t_{i})\quad \forall a'_i\in A.
\]
An \emph{uncorrelated} joint strategy distribution \(\sigma\) is a Bayesian Nash equilibrium if
\[
EU_{i}(s;a_i,t_{i})\geq EU_{i}(s;a'_{i};t_{i})\quad \forall a'_i\in A.
\]
for all actions $a_{i}\in A$ such that $s(a_{i}\mid t_{i})>0$.
\end{defn}

\begin{remark}\label{remark-bne}
    Note that for a Bayesian Nash equilibrium \(\sigma\), we have 
    \[EU_i(\sigma;t_i)
    =\sum_{a_i\in A}\sigma_i(a_i)EU_i(\sigma_{-i},a_i;t_i)
    \geq \sum_{a_i\in A}\sigma_i(a_i)EU_i(\sigma_{-i},a'_i;t_i)
    =EU_i(\sigma_{-i},a'_i;t_i)\]
    for any player \(i\in N\), action \(a'_i\in A\), and type \(t_i\in T\). Similarly, one can show that if \(EU_i(\sigma;t_i)\geq EU_i(\sigma_{-i},a'_i;t_i)\) holds for all actions \(a'_i\), then \(\sigma\) is a Bayesian Nash equilibrium.
\end{remark}

A Bayesian Nash equilibrium is a generalization of a Nash equilibrium for Bayesian games, where the expected utility of a strategy is replaced with the ex interim expected utility. The condition for Nash equilibria is simply \(EU_i(\sigma_{-i},a_i)\geq EU_i(\sigma_{-i},a'_i)\) for all players \(i\in N\) and actions \(a_i,a_i'\in A\) where \(\sigma_i(a_i)>0\).

In a Bayesian Nash equilibrium, we assume that players respond optimally to the distributions over other players' actions. We assume that these distributions are independent, and taking an action does not provide any evidence about the actions of other players. Hence, the notion of best response here takes into account only causal effects of an action, by influencing \(u\) directly, rather than by influencing the distribution over actions. As a result, Bayesian Nash equilibria cannot capture the type of reasoning that is required for ECL.

There is another standard solution concept that does assume potential correlations between players' actions, the \emph{correlated equilibrium} \parencite[][ch.~8]{maschler2020game}. However, this equilibrium concept also fails to capture ECL-type reasoning. Even though players' actions can be correlated, the notion of best response still requires that a player cannot improve their payoff by unilaterally deviating from the joint distribution, without taking into account the evidence such deviations would provide about other players' actions. Hence, I will not delve further into correlated equilibria here.

Instead, I will turn to dependency equilibria, which incorporate evidential reasoning by considering potentially correlated joint distributions and evaluating only \emph{conditional} expected utilities of actions. There are several other concepts achieving a similar purpose that one could look at in future work \parencite{al2015evidential,daley2017magical,halpern2018game}, but I will focus on dependency equilibria in the following. The following definition of a dependency equilibrium is a Bayesian game generalization of the definition in \textcite{Spohn2007-fp}.\footnote{For more discussions on dependencies between agents in games, see \textcite{Spohn2007-fp,Spohn2010Depen-13626}. Spohn sees prior
causal interactions as a common cause between agents' actions, leading to a dependency \parencite[cf.][]{sep-physics-Rpcc}. ECL involves dependencies despite no prior causal interaction. Instead, the dependency is caused by the similarity of decision
algorithms and decision situations of agents in ECL. It could be considered a logical dependency, for which there does not need to exist a common cause. Alternatively, the decision situation and decision algorithm similarity could be considered as an abstract common cause \parencite[cf.][]{Yudkowsky2010-ur}.}

\begin{defn}[Dependency equilibrium]
\label{defn:dependency-equilibrium}
A joint strategy distribution $s\in S$ is a dependency equilibrium if
there exists a sequence of distributions $(s_{r})_{r\in\mathbb{N}}$
such that $\lim_{r\rightarrow\infty}s_{r}=s$, and ${s_{r}(a_{i}\mid t_{i})>0}$
for all players $i\in N,$ actions $a_{i}\in A$, types $t_{i}\in T$
and $r\in\mathbb{N}$, and if for all $i\in N,$ $t_{i}\in T$ and
$a_{i}\in A$ with $s(a_{i}\mid t_{i})>0$, it is 
\[
\lim_{r\rightarrow\infty}EU_{i}(s_{r};a_{i},t_{i})\geq\lim_{r\rightarrow\infty}EU_{i}(s_{r};a'_{i},t_{i})\quad\forall a'_{i}\in A.
\]
\end{defn}

The requirement of rationality here is that any action with nonzero probability (in the limit) has to have
greater or equal conditional expected utility for the player performing that
action than any other action. This is similar to a Bayesian Nash equilibrium, only that players' actions are potentially dependent, and we take such dependencies into account when calculating conditional expected utilities. 
The construction with limits is required since conditional credences \(s(a_{-i}\mid a_i,t)\) can only be computed for actions \(a_i\) that have positive probability. Hence, to be able to compute all possible conditional credences, we represent a dependency equilibrium \(s\) as a limit of distributions \(s_r\) for which this is the case.

\begin{example}
\label{exa:Bayesian_Prisoners_Dilemma}As an example, consider a Bayesian
version of a prisoners' dilemma with additively separable and anonymous
utilities. There are two players, $1,2$, and two types $1,2$. Assume that there is a simple ignorance prior $p$ which gives each
combination of types equal probability. In particular, $p$ is anonymous.
Table \ref{tbl:7} shows the utilities that players of the two types produce
with either of two actions $1,2$. 
\begin{table}
\begin{centering}
\subfloat[Utilities \(u_{1,t}(a_1)\) produced by type $1$'s actions for players of type \(t\).]{\begin{centering}
\begin{tabular}{c|c|c|}
 Action& $u_{1,1}$ & $u_{1,2}$\tabularnewline
\hline 
$1$ & $2$ & 2\tabularnewline
\hline 
$2$ & $3$ & 0\tabularnewline
\hline 
\end{tabular}
\par\end{centering}
}\quad
\subfloat[Utilities \(u_{2,t}(a_2)\) produced for type 2's actions for players of type \(t\).]{\begin{centering}
\begin{tabular}{c|c|c|}
 Action& $u_{2,1}$ & $u_{2,2}$\tabularnewline
\hline 
$1$ & $2$ & $2$\tabularnewline
\hline 
$2$ & $0$ & $3$\tabularnewline
\hline 
\end{tabular}
\par\end{centering}
}
\par\end{centering}
\caption{Additively separable payoffs in a Bayesian version of the prisoners' dilemma.}
\label{tbl:7}
\end{table}
For any of the two players, given an anonymous strategy profile $s$ and action \(a\), using \Cref{eq:2-1}, we get
\[
EU_{t}(s;a) =u_{t,t}(a)+ \sum_{t'\in \{1,2\}}\sum_{a'\in \{1,2\}}u_{t',t}(a)\cdot s(a',t'\mid a,t).
\]
Given an uncorrelated strategy profile we have \(s(a',t'\mid a,t)=s(a',t'\mid \hat{a},t)\) for any two actions \(a,\hat{a}\). Hence, the only term differing between different actions is the term \(u_{t,t}(a)\). It follows that the only possible optimal choice for either type is \(a=2\), leading to an expected utility of \(EU_t(s;a)=3 + \frac{1}{2}\cdot 3=4.5\), consisting of \(3\) utility produced by a player for themself and \(3\) utility provided by the other player in the \(50\%\) of cases in which the other player has the same type, and \(0\) utility provided otherwise. This is the only Bayesian
Nash equilibrium.

What about
dependency equilibria? For simplicity, I restrict myself to joint
strategies that have players of the same type always performing
the same action. This leaves us with $4$ probabilities to be determined
(Table \ref{tbl:10}) and the following payoffs for the two types:

\begin{align}
EU_{1}( s;1)= & 2+\frac{1}{2}\cdot 2+\frac{1}{2}\cdot 2\cdot \frac{a}{a+b} = 3 +  \frac{a}{a+b}\\
EU_{1}(s;2)= & 3+\frac{1}{2}\cdot 3 + \frac{1}{2}\cdot 2\cdot \frac{c}{c+d} = 4.5 + \frac{c}{c+d} \\
EU_{1}( s;1)= & 2+\frac{1}{2}\cdot 2+\frac{1}{2}\cdot 2\cdot \frac{a}{a+c}=3+ \frac{a}{a+c}\\
EU_{1}(s;2)= & 3+\frac{1}{2}\cdot 3 + \frac{1}{2}\cdot 2\cdot \frac{b}{b+d}=4.5 +\frac{b}{b+d} .\\
\end{align}
Here, the first term is the utility produced by a player for themself, the second term the utility produced by the other player given that they have the same type (which happens with probability \(\frac{1}{2}\)), and the third term is the utility produced by the other player, assuming they have the opposite type. The term \(\frac{a}{a+b}\), for instance, stands for the probability that a player of type \(2\) plays actions \(1\), assuming that the other player of type \(1\) also plays that action.

\begin{table}
\begin{centering}
\begin{tabular}{|c|c|c|}
\hline 
Type \(1\)\textbackslash Type \(2\)& $1$ & $2$\tabularnewline
\hline 
$1$ & $a$ & $b$\tabularnewline
\hline 
$2$ & $c$ & $d$\tabularnewline
\hline 
\end{tabular}
\par\end{centering}
\caption{Probability matrix for the different joint actions of the two players, given that one player has type \(1\) and the other player has type \(2\).}
\label{tbl:10}
\end{table}
In this case, there can be no dependency equilibrium in which action \(1\)
gets any probability, since $1$ is worse than $2$, regardless of the chosen probabilities. In the best case, we have \(a=d=\frac{1}{2}\), in which case \(EU_t(s;1)=4\) and \(EU_t(s;2)=4.5\).
This changes if there are many players. Suppose there
are $10$ players, and the other properties of the game stay
the same. Then we have the following payoffs:
\begin{align}
EU_{1}(s;1)= & 12+10\frac{a}{a+b}\\
EU_{1}(s;2)= & 18+10\frac{c}{c+d}\\
EU_{2}(s;1)= & 12+10\text{\ensuremath{\frac{a}{a+c}}}\\
EU_{2}(s;2)= & 18+10\text{\ensuremath{\frac{b}{b+d}}}.
\end{align}

Here, everyone playing action $1$
can be a dependency equilibrium. Given any distribution that puts
only weight on $a$ and $d$, action \(1\) is always better for either type. Hence, we can define \(s_r\) via \(a=\frac{r-1}{r}\) and \(d=\frac{1}{r}\). Then, for any \(r\in\mathbb{N}\), \(a\) is the optimal action under distribution \(s_r\), and \(s:=\lim_{r\rightarrow\infty}\) is the distribution in which all players play action \(1\). To find all the mixed joint strategy
dependency equilibria, we would have to solve for \(s\) such that $EU_{t}( s;1)=EU_{t}(s;2)$. I leave this as an exercise.

\end{example}

For further examples and to become more familiar with the concept,
see \textcite{Spohn2007-fp}. 

As in above example, both equilibrium concepts can again
be adopted to an anonymous and additively separable setting. For instance, for Bayesian Nash equilibria, given an anonymous and additively separable game \(G\) and anonymous and uncorrelated \(s\in S\), we get the requirement 
\[
EU_{t}(s;a)\geq EU_{t}(s;a')\quad\forall a'\in A
\]
for all types $t\in T$ and actions $a\in A$ such that $s(a\mid t)>0$. 

\subsection{Observations}
\label{subsec:ObservationsEquil}
In this section, I show basic results about equilibria in ECL Bayesian games. First, I show that in an additively separable and anonymous game, there is essentially only one unique Bayesian Nash equilibrium---the strategy profile in which each type simply optimizes for their own values in their own universe, disregarding what everyone else is doing. 

\begin{prop}
\label{BNE-unique-anonymous}
Let \(s\in S\) be a Bayesian Nash equilibrium of an additively separable and anonymous ECL Bayesian game \(G\). Then for any player \(i\in N\), action \(a_i\in A\) and type \(t_i\in T\), we have \(s(a_i\mid t_i)>0\) if and only if \(u_{t_i,t_i}(a_i)=\max_{a'\in A}u_{t_i,t_i}(a')\). In particular, if the maximizer of \(u_{t,t}\) is unique for any type \(t\), then \(s\) is anonymous and it corresponds to a unique anonymous pure strategy profile \(\alpha\in A^m\). Moreover, an anonymous pure strategy Bayesian Nash equilibrium always exists in an anonymous and additively separable game.
\end{prop}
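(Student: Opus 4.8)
The whole statement turns on one identity that I would establish first: under additive separability, and for an \emph{uncorrelated} strategy distribution, the (conditional) expected utility of an action for player \(i\) of type \(t_i\) decomposes as a term depending only on that action plus a term that is completely insensitive to it. Concretely, fix an uncorrelated \(s\in S\) and set \(\sigma_i(\cdot\mid t_i):=s(\cdot\mid t_i)\). By \Cref{equivalence-uncorrelated} we have \(EU_i(s;a_i,t_i)=EU_i(\sigma_{-i},a_i;t_i)\), and expanding the utility via \(u_i(a_1,t_1,\dots,a_n,t_n)=u_{t_i,t_i}(a_i)+\sum_{j\neq i}u_{t_j,t_i}(a_j)\) yields \(EU_i(s;a_i,t_i)=u_{t_i,t_i}(a_i)+C_i(t_i)\), where \(C_i(t_i)\) is the expectation of \(\sum_{j\neq i}u_{t_j,t_i}(a_j)\) over \(t_{-i}\sim p(\cdot\mid t_i)\) and \(a_{-i}\sim\prod_{j\neq i}\sigma_j(\cdot\mid t_j)\) — a quantity not mentioning \(a_i\). (In the anonymous case this is exactly \Cref{eq:1}, whose \((n-1)\)-term is independent of \(\alpha_t\).) Uncorrelatedness is used precisely here, to guarantee that conditioning on \(a_i\) does not reshape the distribution over \(a_{-i}\).

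Given this identity, the ``only if'' direction is immediate: if \(s\) is a Bayesian Nash equilibrium and \(s(a_i\mid t_i)>0\), then the equilibrium inequality \(EU_i(s;a_i,t_i)\ge EU_i(s;a'_i,t_i)\) for all \(a'_i\), after cancelling \(C_i(t_i)\), reduces to \(u_{t_i,t_i}(a_i)\ge u_{t_i,t_i}(a'_i)\) for all \(a'_i\in A\), i.e. \(a_i\in\argmax_{a\in A}u_{t_i,t_i}(a)\). For the ``in particular'' clause I would invoke the uniqueness hypothesis: writing \(\alpha_t\) for the unique maximizer of \(u_{t,t}\), the above forces \(s(a_i\mid t_i)=0\) for every \(a_i\neq\alpha_{t_i}\), and since \(\sum_{a_i\in A}s(a_i\mid t_i)=1\) at least one action has positive prior probability, so \(s(\alpha_{t_i}\mid t_i)=1\). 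Unwinding the definition \(s(a_i\mid t_i)=\bigl(\sum_{t_{-i}}s(a_i\mid t_{-i},t_i)\,p(t_{-i},t_i)\bigr)/p(t_i)\) then forces \(s(\alpha_{t_i}\mid t_{-i},t_i)=1\) for every \(t_{-i}\) with \(p(t_{-i},t_i)>0\); doing this in each coordinate shows \(s(\cdot\mid t)\) is the point mass at \((\alpha_{t_1},\dots,\alpha_{t_n})\) for every positive-probability type vector \(t\). Hence \(s\) is anonymous and coincides with the anonymous pure profile \(\alpha\in A^m\), \(\alpha_t:=\argmax_{a\in A}u_{t,t}(a)\).

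For existence, finiteness of \(A\) makes \(\argmax_{a\in A}u_{t,t}(a)\) nonempty for each \(t\); I would pick one maximizer \(\alpha_t\) per type, the same for every player of that type, and take \(s\) to be the induced pure strategy profile. It is a product of point masses, hence uncorrelated (so a legitimate object for the Bayesian Nash condition) and anonymous by construction, and by the decomposition of the first paragraph the prescribed action \(\alpha_{t_i}\) is a best response for every \((i,t_i)\); therefore \(s\) is an anonymous pure-strategy Bayesian Nash equilibrium.

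There is essentially no deep obstacle here: the only genuine content is the decomposition identity, which is \Cref{equivalence-uncorrelated} together with additive separability, and the rest is bookkeeping. The one point to watch is the quantified ``if and only if'' as literally stated — without the uniqueness hypothesis a Bayesian Nash equilibrium may well place zero probability on some own-utility maximizer (e.g. two tied best actions, mass on one of them), so I would present the general claim as the ``only if'' direction and the full equivalence only under the uniqueness assumption, exactly as the ``in particular'' clause suggests. A further small caveat: since the game definition only guarantees \(\sum_{t_{-i}}p(t_{-i},t_i)>0\) rather than positivity of each \(p(t_{-i},t_i)\), the identification of \(s\) with a unique pure profile should be read up to the (irrelevant) values of \(s\) on type vectors of probability zero.
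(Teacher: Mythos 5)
Your proof is correct and follows essentially the same route as the paper's: the decomposition \(EU_i(s;a_i,t_i)=u_{t_i,t_i}(a_i)+C_i(t_i)\) via \Cref{equivalence-uncorrelated} and additive separability, cancellation of the action-independent term to reduce the equilibrium condition to maximizing \(u_{t_i,t_i}\), and the same existence construction for the anonymous pure-strategy equilibrium. Your caveat about the literal ``if'' direction of the biconditional is well taken --- the paper's own proof likewise only establishes the ``only if'' direction, and with tied maximizers an equilibrium can put zero mass on one of them --- so the statement is best read as you suggest.
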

\begin{proof}
First, since \(s\) is a Bayesian Nash equilibrium, it is uncorrelated, so by \Cref{equivalence-uncorrelated}, there exists \(\sigma\in \Sigma\) such that \(EU_i(s;a_i,t_i)=EU_i(\sigma_{-i},a_i;t_i)\) for all players \(i\), actions \(a_i\), and types \(t_i\). Now let \(a_i\in A,t_i\in T\) arbitrary. Then we have
\begin{align}
EU_{i}(s; a_i,t_{i})&=EU_{i}(\sigma_{-i},a_i;t_i)\\
&=\sum_{t_{-i}\in T_{-i}}p(t_{-i}\mid t_i)\sum_{a_{-i}\in A_{-i}}\prod_{j\in N\setminus\{i\}}\sigma_j(a_j\mid t_j)u_{i}(a_{-i},a_{i},t_{-i},t_{i})
\\
&=\sum_{t_{-i}\in T_{-i}}p(t_{-i}\mid t_i)\sum_{a_{-i}\in A_{-i}}\prod_{j\in N\setminus\{i\}}\sigma_j(a_j\mid t_j)\left(u_{t_i,t_i}(a_i)+\sum_{j\in N\setminus\{i\}}u_{t_j,t_i}(a_j)\right)
\\
&=\sum_{t_{-i}\in T_{-i}}p(t_{-i}\mid t_i)\sum_{a_{-i}\in A_{-i}}\prod_{j\in N\setminus\{i\}}\sigma_j(a_j\mid t_j)u_{t_i,t_i}(a_i)\\
&\phantom{=}+\sum_{a_{-i}\in A_{-i}}\sum_{t_{-i}\in T_{-i}}p(t_{-i}\mid t_i)\prod_{j\in N\setminus\{i\}}\sigma_j(a_j\mid t_j)\sum_{j\in N\setminus\{i\}}u_{t_j,t_i}(a_j)
\\
&=u_{t_i,t_i}(a_i)+\sum_{t_{-i}\in T_{-i}}p(t_{-i}\mid t_i)\sum_{a_{-i}\in A_{-i}}\prod_{j\in N\setminus\{i\}}\sigma_j(a_j\mid t_j)\sum_{j\in N\setminus\{i\}}u_{t_j,t_i}(a_j).
\end{align}
Note that the second term does not depend on \(a_i\). Hence, by the definition of a Bayesian Nash equilibrium, for any action \(a_i\in A\) such that \(s(a_i\mid t_i)>0\), and any alternative action \(a'_i\in A\), we have
\[
0\leq EU_i(s;a_i,t_i)-EU_i(s;a'_i,t_i)
=u_{t_i,t_i}(a_i)-u_{t_i,t_i}(a'_i).
\]
This shows that 
\(u_{t_i,t_i}(a_i)\geq u_{t_i,t_i}(a'_i)\) for all \(a'_i\in A\), so
\(u_{t_i,t_i}(a_i)=\max_{a'\in A}u_{t_i,t_i}(a')\).

For the ``in particular'' part, note that if the maximizer is unique, it follows for any \(i\in N\) and \(t_i\in T\) that \(s(a_i\mid t_i)=1\) for \(a_i=\argmax_{a'\in A}u_{t_i,t_i}(a')\). Hence, \(s\) corresponds to the unique anonymous pure strategy profile \(\alpha\in A^m\), defined via \(\alpha_{t}:=\argmax_{a'\in A}u_{t,t}(a')\) for any \(t\in T\).
Since this does not depend on the player, we have
\[s(a\mid t)
=\mathbbm{1}_{a=(\alpha_{t_i})_{i\in N}}
=\mathbbm{1}_{(a_{\pi(i)})_{i\in N}=(\alpha_{t_{\pi(i)}})_{i\in N}}
=s(a_{\pi(1)},\dotsc,a_{\pi(n)}\mid t_{\pi(1)},\dotsc,t_{\pi(n)})\]
for any permutation \(\pi\colon N\rightarrow N\).

Lastly, it is clear from the above that we can always define an anonymous pure strategy Bayesian Nash equilibrium \(\alpha\in A^m\) given an additively separable and anonymous game by choosing \(\alpha_t\in \argmax_{a'\in A}u_{t,t}(a')\) arbitrarily for each \(t\).
\end{proof}

Next, I turn to dependency equilibria. First, as mentioned by \textcite{Spohn2007-fp}, every Bayesian Nash equilibrium is also a dependency equilibrium.
\begin{prop}
Every Bayesian Nash equilibrium is a dependency equilibrium.
\end{prop}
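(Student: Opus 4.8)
The plan is to exploit \Cref{equivalence-uncorrelated}, which identifies the conditional expected utilities of an uncorrelated joint strategy distribution with the ordinary ex interim expected utilities of the associated mixed strategy profile, together with the fact that the latter are continuous (indeed polynomial) in the strategy profile. Given a Bayesian Nash equilibrium $s$ --- which by definition is uncorrelated --- I would set $\sigma_i(a_i\mid t_i):=s(a_i\mid t_i)$, so that $s(a\mid t)=\prod_{i\in N}\sigma_i(a_i\mid t_i)$ and, by \Cref{equivalence-uncorrelated}, $EU_i(s;a_i,t_i)=EU_i(\sigma_{-i},a_i;t_i)$ for all $i,a_i,t_i$. (If the equilibrium is instead presented as a mixed strategy profile, one first passes to the corresponding uncorrelated $s$.)

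Next I would construct the approximating sequence required by \Cref{defn:dependency-equilibrium} by perturbing $\sigma$ towards the uniform distribution: set $\sigma^r_i(a_i\mid t_i):=(1-\tfrac1r)\sigma_i(a_i\mid t_i)+\tfrac1{r|A|}$ and let $s_r$ be the uncorrelated joint strategy distribution with $s_r(a\mid t)=\prod_{i\in N}\sigma^r_i(a_i\mid t_i)$. Then each $\sigma^r_i(\cdot\mid t_i)$ is a probability distribution, $s_r(a_i\mid t_i)=\sigma^r_i(a_i\mid t_i)>0$ for every $i,a_i,t_i,r$, and $s_r\to s$ pointwise on the finite set $A^n\times T^n$, since each $\sigma^r_i(a_i\mid t_i)\to\sigma_i(a_i\mid t_i)$ and $s_r(a\mid t)$ is a finite product of these.

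It then remains to check the conditional optimality condition in the limit. By \Cref{equivalence-uncorrelated}, $EU_i(s_r;a_i,t_i)=EU_i(\sigma^r_{-i},a_i;t_i)$, and since $EU_i(\cdot,a_i;t_i)$ is a polynomial in the entries of the strategy profile, $\lim_{r\to\infty}EU_i(s_r;a_i,t_i)=EU_i(\sigma_{-i},a_i;t_i)=EU_i(s;a_i,t_i)$ for every action $a_i\in A$. Now fix $i$, $t_i$ and $a_i$ with $s(a_i\mid t_i)>0$; the Bayesian Nash equilibrium condition gives $EU_i(s;a_i,t_i)\geq EU_i(s;a'_i,t_i)$ for all $a'_i\in A$. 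Combining the two gives $\lim_r EU_i(s_r;a_i,t_i)\geq\lim_r EU_i(s_r;a'_i,t_i)$ for all $a'_i$, which is exactly the requirement in \Cref{defn:dependency-equilibrium}. Hence $s$ is a dependency equilibrium.

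As for difficulty, there is no real obstacle: the only points needing care are (i) choosing the perturbation so that each $s_r$ is genuinely fully supported on all actions --- this is why one cannot simply take the constant sequence $s_r=s$, since $s$ itself may assign zero probability to some actions --- and (ii) justifying the interchange of limit and expected utility, which is immediate from continuity once \Cref{equivalence-uncorrelated} has reduced conditional expected utilities to ordinary ones.
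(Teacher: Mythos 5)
Your proof is correct and follows the same basic strategy as the paper's: exhibit an approximating sequence $(s_r)_r$ and use continuity of expected utility to transfer the Bayesian Nash inequality to the limit. The difference is in the choice of sequence. The paper simply takes the constant sequence $s_r=s$, which satisfies the full-support requirement $s_r(a_i\mid t_i)>0$ of \Cref{defn:dependency-equilibrium} only when $s$ itself puts positive probability on every action for every type; your perturbation $\sigma^r_i(\cdot\mid t_i)=(1-\tfrac1r)\sigma_i(\cdot\mid t_i)+\tfrac1{r|A|}$ repairs exactly this gap and covers the general case, at the modest cost of having to invoke \Cref{equivalence-uncorrelated} and continuity of $EU_i(\cdot,a_i;t_i)$ in $\sigma_{-i}$ (which is indeed polynomial, so the interchange of limit and expectation is immediate). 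In that sense your version is the more complete argument. One small point of hygiene: the identity $EU_i(\sigma_{-i},a_i;t_i)=EU_i(s;a_i,t_i)$ is only literally meaningful via \Cref{equivalence-uncorrelated} when $s(a_i\mid t_i)>0$, since the conditional $s(a_{-i}\mid a_i,t)$ is undefined otherwise; but nothing in your argument depends on that identification for zero-probability actions --- the chain $\lim_r EU_i(s_r;a'_i,t_i)=EU_i(\sigma_{-i},a'_i;t_i)$ together with the Bayesian Nash inequality, read (as the paper intends) in terms of $EU_i(\sigma_{-i},\cdot\,;t_i)$, is all you need.
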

\begin{proof}
Let \(s\in S\) be a Bayesian Nash equilibrium. Define \(s_r=s\) for any \(r\in\mathbb{N}\). Then, by the definition, we have
\[\lim_{r\rightarrow \infty}EU_i(s_r;a_i,t_i)
=EU_i(s;a_i,t_i)
\geq EU_i(s;a'_i,t_i)
=\lim_{r\rightarrow\infty}EU_i(s_r;a'_i,t_i)\]
for any player \(i\in N\), type \(t_i\in T\), and actions \(a_i,a'_i\in A\) such that \(s(a_i\mid t_i)>0\). This shows that \(s\) is also a dependency equilibrium.
\end{proof}

Second, any pure strategy profile that is at least as good as some mixed strategy profile for every player and given any action is a dependency equilibrium. It follows as a corollary that a profile is a dependency equilibrium if it is a (weak) Pareto improvement over a Bayesian Nash equilibrium. The latter was proven in \textcite[][Observation~5]{Spohn2007-fp} for two-player normal form games.
\begin{thm}
\label{thm:spohn5}Let $\sigma\in \Sigma$ be any mixed strategy profile. Let $\alpha\in A^{n,m}$ be a pure strategy profile such
that $EU_{i}(\alpha\mid t_i)\geq EU_i(\sigma_{-i},a_i;t_{i})$ for all players
$i\in N$, $t_{i}\in T$, and actions $a_{i}\in A$.
Define
$q\in S$ such that for all $t\in T^{n}$, $q(\alpha_{1,t_{1}},\dots,\alpha_{n,t_{n}}\mid t)=1$
and $q(a\mid t)=0$ for \(a\in A^n\) such that \(a\neq(\alpha_{1,t_{1}},\dots,\alpha_{n,t_{n}})\).
Then $q$ is a dependency equilibrium.
\end{thm}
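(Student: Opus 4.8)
The plan is to produce an explicit approximating sequence $(q_r)_{r\in\mathbb{N}}$ witnessing that $q$ is a dependency equilibrium in the sense of \Cref{defn:dependency-equilibrium}. Writing $\alpha_t:=(\alpha_{1,t_1},\dots,\alpha_{n,t_n})$ for the action profile induced by $\alpha$ on a type vector $t$, we have $q(\cdot\mid t)=\mathbbm{1}[\,\cdot=\alpha_t\,]$. The guiding idea is that $q_r$ should mostly put mass on everyone playing $\alpha$, but with vanishing probability have everyone instead play the \emph{independent} profile derived from $\sigma$; then conditioning on player $i$ playing the prescribed action $\alpha_{i,t_i}$ will, in the limit, pick out the cooperative payoff $EU_i(\alpha;t_i)$, while conditioning on \emph{any} deviation $a_i\neq\alpha_{i,t_i}$ will pick out $EU_i(\sigma_{-i},a_i;t_i)$, at which point the hypothesis $EU_i(\alpha;t_i)\ge EU_i(\sigma_{-i},a_i;t_i)$ closes the argument immediately.

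Concretely, first I would fix, for each $r$, the smoothed single-player strategies $\sigma^{(r)}_j(\cdot\mid t_j):=(1-\tfrac1r)\sigma_j(\cdot\mid t_j)+\tfrac1r\,\mathrm{Unif}(A)$ (the uniform smoothing is needed only so that $q_r$ has full support even when $\sigma$ is not fully mixed), and then set
\[ q_r(a\mid t):=\Bigl(1-\tfrac1r\Bigr)\mathbbm{1}[a=\alpha_t]+\tfrac1r\prod_{j\in N}\sigma^{(r)}_j(a_j\mid t_j). \]
It is immediate that $q_r\to q$ pointwise and, since $\sigma^{(r)}_j(a_j\mid t_j)\ge\tfrac{1}{r|A|}$ and $p(t_i)>0$ by the definition of an ECL Bayesian game, that $q_r(a_i\mid t_i)>0$ for all $i,t_i,a_i,r$, so the support requirement of \Cref{defn:dependency-equilibrium} holds.

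Next I would compute $\lim_{r\to\infty}EU_i(q_r;a_i,t_i)$ for each $i,t_i,a_i$. Fix a type vector $t$ with $i$-th entry $t_i$. If $a_i\neq\alpha_{i,t_i}$, the point-mass term vanishes in both $q_r(a_{-i},a_i\mid t)$ and $q_r(a_i\mid t)$, so the conditional distribution $q_r(a_{-i}\mid a_i,t)$ is \emph{exactly} $\prod_{j\neq i}\sigma^{(r)}_j(a_j\mid t_j)$, independently of the $\tfrac1r$ mixing weight; hence $EU_i(q_r;a_i,t_i)=EU_i(\sigma^{(r)}_{-i},a_i;t_i)\to EU_i(\sigma_{-i},a_i;t_i)$ by continuity of the expected utility in the strategy. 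If $a_i=\alpha_{i,t_i}$, then $q_r(a_i\mid t)\to1$ and $q_r(a_{-i},a_i\mid t)\to\mathbbm{1}[a_{-i}=\alpha_{-i,t_{-i}}]$, so $q_r(a_{-i}\mid a_i,t)\to\mathbbm{1}[a_{-i}=\alpha_{-i,t_{-i}}]$ and thus $EU_i(q_r;\alpha_{i,t_i},t_i)\to EU_i(\alpha;t_i)$. Since the only action with $q(a_i\mid t_i)>0$ is $a_i=\alpha_{i,t_i}$, the inequality required in \Cref{defn:dependency-equilibrium} is trivial for $a_i'=\alpha_{i,t_i}$ and reads $EU_i(\alpha;t_i)\ge EU_i(\sigma_{-i},a_i';t_i)$ for $a_i'\neq\alpha_{i,t_i}$, which is exactly the hypothesis; this finishes the proof.

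I expect the only real work to be bookkeeping: confirming that conditioning on a deviation $a_i\neq\alpha_{i,t_i}$ genuinely annihilates the cooperative branch (it does, since $\alpha_t$ has $i$-th coordinate $\alpha_{i,t_i}$) so that the deviation conditional is the clean product $\sigma^{(r)}_{-i}$, and checking that the uniform smoothing does not perturb any of the limits. Note that neither anonymity nor additive separability is used, so the statement holds for general ECL Bayesian games; the corollary that a weak Pareto improvement over a Bayesian Nash equilibrium is a dependency equilibrium then follows by taking $\sigma$ to be that equilibrium and invoking \Cref{remark-bne}.
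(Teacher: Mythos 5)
Your proof is correct, and the core construction is the same as the paper's: mix the point mass on $\alpha$ with a vanishing full-support perturbation built from $\sigma$, so that conditioning on the prescribed action recovers $EU_i(\alpha;t_i)$ in the limit while conditioning on any deviation recovers $EU_i(\sigma_{-i},a_i;t_i)$, after which the hypothesis closes the argument. Where you differ is in how full support is achieved. The paper first handles the case where the product distribution $s$ induced by $\sigma$ is strictly positive via $q_r=\tfrac{r-1}{r}q+\tfrac1r s$, and then, for general $\sigma$, constructs an auxiliary distribution $s'$ (assigning mass $c\,\sigma_{-i}(a_{-i}\mid t)$ to profiles extending each zero-probability action) and uses the three-term mixture $q_r=\tfrac{r-1}{r}q+\tfrac{r-1}{r^2}s+\tfrac{1}{r^2}s'$, which requires a separate limit computation for actions that have positive $s$-probability versus those that do not. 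Your uniform smoothing $\sigma^{(r)}_j=(1-\tfrac1r)\sigma_j+\tfrac1r\,\mathrm{Unif}(A)$ collapses this case distinction: the perturbation $\prod_j\sigma^{(r)}_j$ is itself a genuine full-support product measure, so for every deviation $a_i\neq\alpha_{i,t_i}$ the conditional over $a_{-i}$ is \emph{exactly} $\prod_{j\neq i}\sigma^{(r)}_j(\cdot\mid t_j)$ for each finite $r$, and a single continuity argument finishes the job. This is a cleaner and more uniform treatment of the same idea; the paper's version makes the dependence on the unperturbed $\sigma_{-i}$ exact at each finite $r$ for the already-positive actions, but pays for it with the bookkeeping around $s'$. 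Your closing observations (that only $a_i=\alpha_{i,t_i}$ receives positive probability under $q$, and that neither anonymity nor additive separability is needed) are also correct and consistent with how the paper uses the theorem.
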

\begin{proof}We construct distributions \(q_r\) that converge to \(q\) and such that, conditional on any player taking an action other than the one specified by \(\alpha\), the remaining players play \(\sigma\). It then follows from the assumption that the actions in \(\alpha\) have highest conditional expected utility.

To begin, note that for any player \(i\in N\) and type \(t_i\in T\), we have
\begin{multline}\label{eq:24}
EU_{i}(q;\alpha_{i,t_{i}},t_{i})=\sum_{t_{-i}\in T_{-i}}p(t_{-i}|t_i)\sum_{a_{-i}\in A_{-i}}s(a_{-i}\mid a_{i},t_{-i},t_i)u_{i}(a_{-i},a_{i},t_{-i},t_{i})\\
=\sum_{t_{-i}\in T_{-i}}p(t_{-i}\mid t_{i})u_i(\alpha_{1,t_{1}},t_{1},\dots,\alpha_{n,t_{n}},t_{n})=EU_{i}(\alpha;t_{i}),
\end{multline}
so conditional on taking actions specified by \(\alpha\), \(q\) is equivalent to \(\alpha\).

Now let \(s\) be the (uncorrelated) joint strategy distribution corresponding to \(\sigma\), i.e., such that \(EU_i(s;a_i,t_i)=EU_i(\sigma_{-i},a_i;t_i)\) for all \(i\in N\), \(t_i\in T\), and \(a_i\in A\) such that \(\sigma_i(a_i\mid t_i)=s(a_i\mid t_i)>0\). We distinguish two cases.

First, we assume $s$ is strictly positive, i.e., \(s(a_i\mid t_i)>0\) for any \(i\in N\), \(a_i\in A\) and \(t_i\in T\).
Define $q_{r}:=\frac{r-1}{r}q+\frac{1}{r}s$
(for $r>0$). Then for any $i\in N$, $t\in T^n$, \(a_{-i}\in A_{-i}\), and
$a_{i}\in A$ such that $a_{i}\neq\alpha_{i,t_{i}}$, 
we have
\begin{align}q_r(a_{-i}\mid a_i, t)
&=\frac{\frac{r-1}{r}q(a_{-i},a_i\mid t)+\frac{1}{r}s(a_{-i},a_i\mid t)}{\sum_{a'_{-i}\in A_{-i}}\frac{r-1}{r}q(a'_{-i},a_i\mid t)+\frac{1}{r}s(a'_{-i},a_i\mid t)}\\
&=\frac{\frac{1}{r}s(a_{-i},a_i\mid t)}{\sum_{a'_{-i}\in A_{-i}}\frac{1}{r}s(a'_{-i},a_i\mid t)}
\\
&=s(a_{-i}\mid a_i,t).\end{align}
That is, conditional on taking action \(a_i\), \(a_{-i}\) is distributed according to \(s\). Hence, using the assumption on \(\alpha\) and \(\sigma\), it follows
\[EU_i( q_{r}; a_{i},t_{i})=EU_i(s; a_{i},t_{i})=EU_i(\sigma_{-i},a_i;t_i)\leq EU_{i}(\alpha; t_{i})\]
for any $r\in\mathbb{N}_{>0}$. Since the expected utility is continuous in the joint strategy distribution, it follows that
\[\lim_{r\rightarrow\infty}EU_i(q_r;a_i,t_i)\leq EU_i(\alpha;t_i)\underset{\text{(\ref{eq:24})}}{=}
EU_i(q;\alpha_{i,t_i},t_i)=EU_i(\lim_{r\rightarrow\infty}q_r;\alpha_{i,t_i},t_i)
=\lim_{r\rightarrow\infty}EU_i(q_r;\alpha_{i,t_i},t_i).\]
Hence, since \(\alpha_{i,t_i}\) is the only action that player \(i\) of type \(t_i\) plays under \(q\), it follows that $q$ is a dependency equilibrium.

Now consider the case where $s$ is not strictly positive. Then we need to modify $q_{r}$ to put weight on all actions, to satisfy the definition of a dependency equilibrium. \textcite{Spohn2007-fp} does not specify exactly how this
would be done in their setup,
so I am providing a more detailed proof here.

To this end, I define
a joint distribution $s'$. Let $s'(a\mid t)=0$ for $a\in A^{n},t\in T^{n}$,
unless specified otherwise. Take any $i\in N,a_{i}\in A,t\in T^{m}$
such that $s(a_{i}\mid t)=0$. We want to define \(s'\) in a way such that \(s'(a_i\mid t)>0\). To do so, for some yet to be determined constant \(c>0\), we let $s'(a_{-i},a_{i}\mid t):=c\cdot \sigma_{-i}(a_{-i}\mid t)$
for all $a_{-i}\in A_{-i}$. Now $s'(a_{i}\mid t)>0$ and
\[s'(a_{-i}\mid a_{i},t)=\frac{c\sigma_{-i}(a_{-i}\mid t)}{\sum_{a'_{-i}\in A_{-i}}c\sigma_{-i}(a'_{-i}\mid t)}=\frac{c\sigma_{-i}(a_{-i}\mid t)}{c}=\sigma_{-i}(a_{-i}\mid t)\]
for all $a_{-i}\in A_{-i}$. Moreover, assume for some $j\neq i$, that there
is $s(a_{j}\mid t)=0$ for some $a_{j}\in A$. Then $s'(a_{j}\mid t)=0$
still after these definitions, and by defining $s'(a_{-j},a_{j}\mid t)$ for player \(j\), I don't change
the previously defined $s'(a_{-i},a_{i}\mid t)$ for player \(i\). Apply the same procedure
to all actions and players.

Now choose $c$ such that $\sum_{a\in A^{n}}s'(a\mid t)=1$.
Proceed in the same manner with all $t\in T^{n}$. Define $q_{r}(a\mid t):=\frac{r-1}{r}q(a\mid t)+\frac{r-1}{r^{2}}s(a\mid t)+\frac{1}{r^{2}}s'(a\mid t)$
for $a\in A^{n},t\in T^{n}$. Then for any player $i\in N$, action
$a_{i}$, $t\in T^{n}$ such that $q_{r}(a_{i}\mid t)$ was previously
zero, we now have $q_{r}(a_{-i}\mid a_{i},t)=\sigma_{-i}(a_{-i}\mid t)$ and hence
$\lim_{r\rightarrow\infty}EU_i(q_{r};a_i,t_i)=EU_i(\sigma_{-i},a_{i};t_i)$.
Moreover, for any of the actions $a_{i}$ that receive positive probability
by $s$ under some $t\in T^{n}$ (but not by $q$), we have 
\begin{multline*}
\lim_{r\rightarrow\infty}q_{r}(a_{-i}\mid a_{i},t)=\lim_{r\rightarrow\infty}\left(\frac{\frac{r-1}{r^{2}}s(a_{-i},a_{i}\mid t)}{\frac{r-1}{r^{2}}s(a_{i}\mid t)+\frac{1}{r^{2}}s'(a_{i}\mid t)}+\frac{\frac{1}{r^{2}}s'(a_{-i},a_{i}\mid t)}{\frac{r-1}{r^{2}}s(a_{i}\mid t)+\frac{1}{r^{2}}s'(a_{i}\mid t)}\right)\\
=\lim_{r\rightarrow\infty}\left(\frac{s(a_{-i},a_{i}\mid t)-\frac{1}{r}s(a_{-i},a_{i}\mid t)}{s(a_{i}\mid t)-\frac{1}{r}s(a_{i}\mid t)+\frac{1}{r}s'(a_{i}\mid t)}+\frac{\frac{1}{r}s'(a_{-i},a_{i}\mid t)}{s(a_{i}\mid t)-\frac{1}{r}s(a_{i}\mid t)+\frac{1}{r}s'(a_{i}\mid t)}\right)\\
=\frac{s(a_{-i},a_{i}\mid t)}{s(a_{i}\mid t)}=s(a_{-i}\mid a_{i},t)
\end{multline*}
and hence also $\lim_{r\rightarrow\infty}EU_i(q_{r};a_{i},t_i)=EU_i(s;a_{i},t_i)=EU_i(\sigma_{-i},a_i;t_i)$.

It follows that $\lim_{r\rightarrow\infty}EU_i(q_{r};a_{i},t_{i})=EU_i(\sigma_{-i},a_{i};t_{i})$
for any player $i\in N$, type $t_{i}$, and action $a_{i}\in A$. From here on we can proceed as above, thus concluding the proof.
\end{proof}

\begin{cor}\label{thm:spohn5cor}
    Let \(\sigma\) be a Bayesian Nash equilibrium and \(\alpha\) a pure strategy profile such that \(EU_i(\alpha\mid t_i)\geq EU_i(\sigma\mid t_i)\) for any player \(i\in N\) and type \(t_i\in T\). Then \(q\), defined as in \Cref{thm:spohn5}, is a dependency equilibrium.
\end{cor}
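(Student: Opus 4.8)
The plan is to reduce the corollary directly to \Cref{thm:spohn5} by verifying its hypothesis. \Cref{thm:spohn5} requires that the pure profile $\alpha$ satisfy $EU_i(\alpha;t_i)\geq EU_i(\sigma_{-i},a_i;t_i)$ for \emph{every} player $i$, type $t_i$, and \emph{every} action $a_i\in A$ --- not just for the actions $\alpha$ or $\sigma$ actually prescribe. So the task is to upgrade the hypothesis of the corollary, which only compares $\alpha$ to $\sigma$ "on the diagonal" (i.e.\ $EU_i(\alpha;t_i)\geq EU_i(\sigma;t_i)$), to a comparison of $\alpha$ against every possible deviation action for each type.

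First I would invoke \Cref{remark-bne}: since $\sigma$ is a Bayesian Nash equilibrium, we have $EU_i(\sigma;t_i)\geq EU_i(\sigma_{-i},a'_i;t_i)$ for every player $i\in N$, every type $t_i\in T$, and every action $a'_i\in A$ (this is exactly the reformulation of the equilibrium condition given in that remark, which holds because $EU_i(\sigma;t_i)$ is the $\sigma_i(\cdot\mid t_i)$-average of the $EU_i(\sigma_{-i},a_i;t_i)$ over actions in the support, each of which dominates any off-support deviation). Then I would chain this with the corollary's hypothesis $EU_i(\alpha;t_i)\geq EU_i(\sigma;t_i)$ to conclude
\[
EU_i(\alpha;t_i)\geq EU_i(\sigma;t_i)\geq EU_i(\sigma_{-i},a_i;t_i)\quad\text{for all }i\in N,\ t_i\in T,\ a_i\in A.
\]
This is precisely the hypothesis required by \Cref{thm:spohn5} for the profiles $\alpha$ and $\sigma$.

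Finally I would apply \Cref{thm:spohn5} with this $\sigma$ and $\alpha$: it yields that $q$, the joint strategy distribution placing all mass on $(\alpha_{1,t_1},\dots,\alpha_{n,t_n})$ given each type vector $t$, is a dependency equilibrium. Since $q$ in the corollary statement is defined exactly as in \Cref{thm:spohn5}, this completes the proof. There is essentially no obstacle here --- the only subtlety is remembering to pass through \Cref{remark-bne} rather than trying to compare $\alpha$ to arbitrary deviations directly; everything else is immediate from the cited results.
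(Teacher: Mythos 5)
Your proposal is correct and follows exactly the paper's own argument: chain the hypothesis $EU_i(\alpha;t_i)\geq EU_i(\sigma;t_i)$ with the inequality $EU_i(\sigma;t_i)\geq EU_i(\sigma_{-i},a_i;t_i)$ from \Cref{remark-bne}, then apply \Cref{thm:spohn5}. No gaps.
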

\begin{proof}
    Using the assumption and \Cref{remark-bne}, we have
    \[EU_i(\alpha\mid t_i)\geq EU_i(\sigma; t_i)\geq EU_i(\sigma_{-i},a_i; t_i)\]
    for any \(i\in N\), \(a_i\in A\), and \(t_i\in T\). Hence, the result follows from \Cref{thm:spohn5}.
\end{proof}

\Cref{thm:spohn5cor} provides a sufficient criterion to see whether an anonymous pure strategy profile $\beta\in A^{m}$ might be a dependency equilibrium.

\begin{cor}\label{cor-pure-de}
Let \(G\) be an additively separable and anonymous ECL Bayesian game. Let $\beta\in A^{m}$ be an anonymous pure strategy Bayesian Nash equilibrium of the game. Then \(\alpha\) is a dependency equilibrium if
\begin{equation}
u_{t,t}(\alpha_t)-u_{t,t}(\beta_t)+(n-1)\sum_{t'\in T}p(t'\mid t)(u_{t',t}(\alpha_{t'})-u_{t',t}(\beta_{t'}))\geq0\label{eq:3}
\end{equation}
for all \(t\in T\).
\end{cor}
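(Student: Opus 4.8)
The plan is to observe that inequality~(\ref{eq:3}) is, after expanding both sides with the anonymous additively separable expected-utility formula~(\ref{eq:1}), nothing other than the inequality $EU_t(\alpha)\ge EU_t(\beta)$ for every type $t$, and then to invoke \Cref{thm:spohn5cor} with the Bayesian Nash equilibrium $\sigma=\beta$. So the corollary is essentially a translation of \Cref{thm:spohn5cor} into the anonymous, additively separable vocabulary, with~(\ref{eq:3}) playing the role of the Pareto-improvement condition.

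Concretely, I would first note that $G$ is additively separable and anonymous, $p$ is anonymous, and both $\beta\in A^m$ and $\alpha\in A^m$ are anonymous pure strategy profiles, so formula~(\ref{eq:1}) applies to each of them. Subtracting the two instances gives
\[
EU_t(\alpha)-EU_t(\beta)=u_{t,t}(\alpha_t)-u_{t,t}(\beta_t)+(n-1)\sum_{t'\in T}p(t'\mid t)\bigl(u_{t',t}(\alpha_{t'})-u_{t',t}(\beta_{t'})\bigr),
\]
so hypothesis~(\ref{eq:3}) says exactly that $EU_t(\alpha)\ge EU_t(\beta)$ for all $t\in T$. Since $\alpha$ and $\beta$ are anonymous, the ex interim expected utility of either profile does not depend on the player, i.e.\ $EU_i(\alpha\mid t_i)=EU_{t_i}(\alpha)$ and $EU_i(\beta\mid t_i)=EU_{t_i}(\beta)$ for every player $i\in N$; hence $EU_i(\alpha\mid t_i)\ge EU_i(\beta\mid t_i)$ for all $i$ and $t_i$. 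Now I would view $\beta$ as the mixed strategy profile $\sigma$ with $\sigma_i(\,\cdot\mid t_i)$ equal to the point mass on $\beta_{t_i}$; this is a Bayesian Nash equilibrium by assumption, and $EU_i(\sigma\mid t_i)=EU_i(\beta\mid t_i)$. Then \Cref{thm:spohn5cor} applies directly and yields that the joint strategy distribution $q$ placing all mass on $(\alpha_{1,t_1},\dots,\alpha_{n,t_n})$ given each type vector $t$ is a dependency equilibrium; because $\alpha$ is anonymous, this $q$ is precisely the joint strategy distribution representing $\alpha$, which is what ``$\alpha$ is a dependency equilibrium'' means here.

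I do not expect any genuine obstacle: the only care needed is the bookkeeping in the first step, namely verifying that all the anonymity and separability hypotheses really are in force so that~(\ref{eq:1}) legitimately applies to both $\alpha$ and $\beta$, and that anonymity lets the per-type inequality $EU_t(\alpha)\ge EU_t(\beta)$ be upgraded to the per-player hypothesis $EU_i(\alpha\mid t_i)\ge EU_i(\sigma\mid t_i)$ demanded by \Cref{thm:spohn5cor}. Everything else is immediate from the cited corollary.
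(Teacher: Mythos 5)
Your proposal is correct and follows essentially the same route as the paper's proof: expand both sides of inequality~(\ref{eq:3}) via the anonymous additively separable expected-utility formula to see that it is exactly $EU_t(\alpha)\geq EU_t(\beta)$ for all $t\in T$, then invoke \Cref{thm:spohn5cor} (together with the anonymity lemma, \Cref{lemma-anonymous}, to pass between per-type and per-player expected utilities) with $\sigma$ taken to be the point-mass profile corresponding to $\beta$. The extra bookkeeping you flag about converting the anonymous per-type inequality into the per-player hypothesis of \Cref{thm:spohn5cor} is exactly the role the paper assigns to \Cref{lemma-anonymous}, so there is no gap.
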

\begin{proof}
Since \(G\) is  additively separable and anonymous, an anonymous pure strategy Bayesian Nash equilibrium  \(\beta\in A^m\) exists by \Cref{BNE-unique-anonymous}. Hence, by \Cref{thm:spohn5cor} and \Cref{lemma-anonymous}, \(\alpha\) is a dependency equilibrium if 
\[EU_t(\alpha)\geq EU_t(\beta)\] for all \(t\in T\). Using \Cref{eq:2-1}, it follows
\begin{align}0\leq EU_t(\alpha)-EU_t(\beta)&=u_{t,t}(\alpha_t)
+(n-1)\sum_{t'\in T}p(t'\mid t)u_{t',t}(\alpha_{t'})\\
&\phantom{=}-(u_{t,t}(\beta_t)
+(n-1)\sum_{t'\in T}p(t'\mid t)u_{t',t}(\beta_{t'}))\\
&=
u_{t,t}(\alpha_t)-u_{t,t}(\beta_t)+(n-1)\sum_{t'\in T}p(t'\mid t)(u_{t',t}(\alpha_{t'})-u_{t',t}(\beta_{t'})).
\end{align}
\end{proof}
Intuitively, \(\alpha\) is a dependency equilibrium if \(t\)'s gains from other players choosing \(\alpha\)
are larger than their losses from adopting \(\alpha\) themself. If $n$ is sufficiently large, only
the losses that other players of the same type occur are relevant. 

Lastly, if the distributions \(s_r\) are all uncorrelated, then the dependency equilibrium must be a Bayesian Nash equilibrium. This tells us that if players' actions are independent, then only Bayesian Nash equilibria are relevant even for superrational players.
\begin{prop}\label{uncorrelated-de-is-bne}
    Assume \(s=\lim_{r\rightarrow\infty}s_r\) is a dependency equilibrium, with \((s_r)_r\) as in \Cref{defn:dependency-equilibrium}. Then if for every \(r\in\mathbb{N}\), \(s_r\) is uncorrelated, \(s\) is a Bayesian Nash equilibrium.
\end{prop}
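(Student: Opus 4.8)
The plan is to first show that the limit $s$ is itself uncorrelated, then use \Cref{equivalence-uncorrelated} together with continuity of expected utility to identify $\lim_{r\to\infty}EU_i(s_r;a_i,t_i)$ with $EU_i(s;a_i,t_i)$, at which point the defining inequalities of the dependency equilibrium become verbatim the Bayesian Nash equilibrium inequalities for the uncorrelated distribution $s$.

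First I would verify that $s$ is uncorrelated. Since each $s_r$ is uncorrelated, we have $s_r(a_i\mid t_{-i},t_i)=s_r(a_i\mid t_i)$ and $s_r(a\mid t)=\prod_{j\in N}s_r(a_j\mid t_j)$ for all $i$, $a$, $t$. The marginalization map $s\mapsto s(a_i\mid t)$ and the conditional-probability map $s\mapsto s(a_i\mid t_i)$ are continuous in $s$ — the latter because $p(t_i)>0$ by the standing assumption on the prior, so no division by zero occurs — and hence both identities pass to the limit. In particular $s(a_i\mid t_i)=\lim_{r\to\infty}s_r(a_i\mid t_i)$, $s(a_i\mid t)=s(a_i\mid t_i)$, and $s(a\mid t)=\prod_{j\in N}s(a_j\mid t_j)$, so $s$ satisfies both clauses of the definition of uncorrelated.

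Next I would apply \Cref{equivalence-uncorrelated}. Let $\sigma$ be the mixed strategy profile with $\sigma_i(a_i\mid t_i):=s(a_i\mid t_i)$, so that $EU_i(s;a_i,t_i)=EU_i(\sigma_{-i},a_i;t_i)$ for all $i,a_i,t_i$; similarly let $\sigma^{(r)}$ be the profile associated to $s_r$, so $EU_i(s_r;a_i,t_i)=EU_i(\sigma^{(r)}_{-i},a_i;t_i)$. Because $s_r\to s$, the marginals converge, $\sigma^{(r)}_i(a_i\mid t_i)=s_r(a_i\mid t_i)\to s(a_i\mid t_i)=\sigma_i(a_i\mid t_i)$, and since $EU_i(\,\cdot_{-i},a_i;t_i)$ is a polynomial (hence continuous) function of the strategy profile, it follows that $\lim_{r\to\infty}EU_i(s_r;a_i,t_i)=EU_i(\sigma_{-i},a_i;t_i)=EU_i(s;a_i,t_i)$ for every $i,a_i,t_i$. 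Plugging this into the dependency-equilibrium inequalities from \Cref{defn:dependency-equilibrium}: for every $i\in N$, $t_i\in T$, $a_i\in A$ with $s(a_i\mid t_i)>0$, and every $a'_i\in A$, we obtain $EU_i(s;a_i,t_i)=\lim_{r\to\infty}EU_i(s_r;a_i,t_i)\ge\lim_{r\to\infty}EU_i(s_r;a'_i,t_i)=EU_i(s;a'_i,t_i)$, which is exactly the condition for the uncorrelated joint strategy distribution $s$ to be a Bayesian Nash equilibrium.

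The only mildly delicate step — and the one I would be most careful about — is the first: confirming that uncorrelatedness is preserved under the limit, i.e. that both clauses of the definition are closed conditions. This hinges precisely on $p(t_i)>0$ guaranteeing that the conditional-probability maps involved are continuous in $s$; granting that, the remainder is continuity of expected utility in the strategy profile plus bookkeeping, so no further obstacle is expected.
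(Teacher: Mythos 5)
Your proof is correct and follows essentially the same route as the paper's: identify each $s_r$ with a mixed strategy profile via Proposition~\ref{equivalence-uncorrelated}, use continuity of the expected utility in the opponents' profile to pass to the limit, and read the Bayesian Nash inequalities off the dependency-equilibrium inequalities. The one point where you are more careful than the paper is in explicitly verifying that the limit $s$ is itself uncorrelated (so that it qualifies as a Bayesian Nash equilibrium in the joint-distribution sense); the paper's proof takes this for granted and works directly with the limiting mixed strategy profile $\sigma$.
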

\begin{proof}
    Let \(\sigma\) be the mixed strategy profile corresponding to \(s\), and \(\sigma^r\) the one corresponding to \(s_r\). Let \(i\in N\) arbitrary. It is easy to see that since \(\lim_{r\rightarrow\infty }s_r=s\), also \(\lim_{r\rightarrow\infty}\sigma_{-i}^r=\sigma_{-i}\). Moreover, the expected utility of player \(i\) is a continuous function in \(\sigma_{-i}\). Now let \(t_i\in T\), \(a_i\) such that \(\sigma_i(a_i\mid t_i)>0\), and \(a'_i\in A\). Then also \(s(a_i\mid t_i)=\sigma_i(a_i\mid t_i)>0\) and thus (i) \(\lim_{r\rightarrow\infty}EU_i(s_r;a_i,t_i)\geq\lim_{r\rightarrow\infty}EU_i(s_r;a'_i,t_i)\) by the definition of a dependency equilibrium. It follows that
    \begin{multline}EU_i(\sigma_{-i},a_i;t_i)
    =EU_i(\lim_{r\rightarrow\infty}\sigma_{-i}^r,a_i;t_i)
    =\lim_{r\rightarrow\infty}EU_i(\sigma_{-i}^r,a_i;t_i)
    =\lim_{r\rightarrow\infty}EU_i(s_r;a_i,t_i)
    \\
    \underset{\text{(i)}}{\geq}\lim_{r\rightarrow\infty}EU_i(s_r;a'_i,t_i)
    =\lim_{r\rightarrow\infty}EU_i(\sigma^r_{-i},a'_i;t_i)
    =EU_i(\lim_{r\rightarrow\infty}\sigma^r_{-i},a'_i;t_i)
    =EU_i(\sigma_{-i},a'_i;t_i).\end{multline}
    This shows that \(\sigma\) is a Bayesian Nash equilibrium and thus concludes the proof.
\end{proof}

\subsection{Uncertainty about decision procedures and similarity}
\label{sec:Uncertainty-about-similarity}

The Bayesian game model introduced here does not explicitly incorporate players with different decision procedures, or with different degrees of similarity. While these aspects could still be modeled implicitly, by defining a suitable joint distribution over actions \(s\in S\), it might be valuable to introduce explicit controllable parameters. Moreover, dependency equilibria are based on conditional expected utilities and thus effectively assume that all players act optimally under evidentialist or superrational reasoning. In this section, I will relax this assumption by extending the model to incorporate different decision procedures. My analysis is a generalization of the discussion in \textcite[][Sec.~2.9.4]{Oesterheld2017-qg}. As I will show below, this does not substantially increase the generality of my model. For this reason, I will not use the concepts introduced here in the rest of the report, so this subsection can be skipped.

One approach would be splitting types into different subtypes. Assume that there is finite index set \(\Omega\) for the different subtypes (e.g., specifying the types' decision procedures). Then we can define new subsets of types $T_{\omega}=\{(1,\omega),\dots,(m,\omega)\}$ for each \(\omega\in \Omega\),
and let the new set of types be $T=\bigcup_{\omega\in \Omega}T_{\omega}$. We also need to specify a new prior \(p\) over this bigger set of types \(T\). I assume that utility functions do not depend on the subtype \(\omega\). Having defined these types, we can then restrict the space of possible joint distributions
in $S$ in some way based on types.

I consider a simple binary approach, with two indices: $C$
for cooperators and $D$ for anyone else. 
The cooperators can be thought of as implementing the same or an equivalent decision procedure. Moreover, I assume that these agents maximize conditional expected utility in an ECL Bayesian game, such that we can apply
dependency equilibria to joint distributions over their actions. I assume that the actions of the players with subtype \(D\) are independent from those of the \(C\) players.\footnote{This makes the situation easier to analyze, though I think it is not entirely realistic. Even though the players in $T_D$ are not thought of as superrational cooperators, the players in $T_{C}$ may still have some conditional beliefs
about their actions. This could include the possibility of these agents
being seen as irrational in some way. For instance, players in $T_{C}$
may believe that it is more likely for a player of type $T_{D}$ to
choose an uncooperative action given that they choose a cooperative
action.} In this framework, one could
model gradual beliefs about similarity by being uncertain about
whether another player belongs to $T_{C}$ or not.\footnote{Compare the comment discussion on \textcite{treutlein2018request}, in particular \url{https://forum.effectivealtruism.org/posts/92wCvqF73Gzg5Jnrr/request-for-input-on-multiverse-wide-superrationality-msr?commentId=iXXvEremjJtedccwh}}

For simplicity, I assume an additively separable and anonymous setting. I write \(p(t',\omega'\mid t,\omega)\) for the probability that any player \(j\neq i\) has type \((t',\omega')\), given that player \(i\) has type \((t, \omega)\). Moreover, I define a joint strategy distribution
$s_{C}\in \Delta(A^{T_{C}})$ for all the types in \(T_C\), and a similar distribution \(s_D\) for the types in \(T_D\). Note that here, I take a distribution over actions given types as fundamental, rather than deriving such a distribution from an anonymous joint strategy profile as in \Cref{joint-strategy-distributions}. Given this distribution, one can derive all the relevant probabilities, though I will not explicate this here. I denote with \(s_C(\alpha_{t'}=a'\mid a,t)\) the belief of a player of type \((t,C)\) that any other player of type \((t',C)\) would play action \(a'\), given that the first player plays action \(a\). We also need the marginal probability \(s_D(\alpha_{t'}=a')\) that a player of type \((t',D)\) plays action \(a'\) (since that type's action is independent from the actions of a player in \(T_C\), we do not condition it on anything).

Given joint distributions \(s_C,s_D\) and a type \((t,C)\), we can then use \Cref{eq:2-1} to define expected utilities:
\begin{align}
EU_{t,C}(s_C,s_D; a)
:=u_{t,t}(a)
&+(n-1)\sum_{t'\in T}p(t',C\mid t,C)\sum_{a'\in A}s_C(\alpha_{t'}=a'\mid a,t)u_{t',t}(a')
\\
&+(n-1)\sum_{t'\in T}p(t',D\mid t,C)\sum_{a'\in A}s_D(\alpha_{t'}=a')u_{t',t}(a').
\end{align}
Since the actions of players in \(T_C\) and \(T_D\) are independent, the term for the utility from the \(D\) types does not depend on the action of a player of type \(C\). Hence, we get
\[EU_{t,C}(s_C,s_D;a)-EU_{t,C}(s_C,s_D;\hat{a})
\geq 0\]
if and only if
\[
u_{t,t}(a)-u_{t,t}(\hat{a})+(n-1)\sum_{t'\in T}p(t',C\mid t,C)\sum_{a'\in A}\left(s_C(\alpha_{t'}=a'\mid a,t)-s_C(\alpha_{t'}=a'\mid \hat{a},t)\right)u_{t',t}(a').
\]

We can use this to determine when an anonymous pure strategy profile \(\beta\) is a dependency equilibrium, as in \Cref{cor-pure-de}. To that end, let \(\alpha\) be the unique anonymous pure strategy profile Bayesian Nash equilibrium (this does not depend on the subtypes, since utilities do not depend on subtypes). I will not work this out formally here, but analogously to \Cref{cor-pure-de}, we get the condition
\[
u_{t,t}(\alpha_{t})-u_{t,t}(\beta)+(n-1)\sum_{t'\in T}p(t',C\mid t,C)\left(u_{t',t}(\alpha_{t'})-u_{t',t}(\beta_{t'})\right).
\]
To see what this means, in another abuse of notation, I write \(p(C\mid t',t,C)\) to denote the belief of a player of type \((t,C)\) that another player has subtype \(C\), given that they are of type \(t'\). Then \(p(t',C\mid t,C)=p(C\mid t',t,C)p(t'\mid t,C)\). Hence, we get the condition
\begin{align}
&u_{t,t}(\alpha_t)-u_{t,t}(\beta_t)+(n-1)\sum_{t'\in T}p(t'\mid t,C)p(C\mid t', t,C)\left(u_{t,t'}(\alpha_{t'})-u_{t,t'}(\beta_{t'})\right)\geq 0\label{eq:5}
\\
\Leftrightarrow\quad
&u_{t,t}(\alpha_t)-u_{t,t}(\beta_t)+(n-1)\sum_{t'\in T}p(t'\mid t,C)(1-p(D\mid t', t,C))\left(u_{t',t}(\alpha_{t'})-u_{t',t}(\beta_{t'})\right)\geq 0
\end{align}

We can see that it differs from \Cref{eq:3} in that the weight of each type $t$ is reduced based
on how likely such a player is of subtype \(D\) instead of \(C\). Given sufficiently large $n$, this
is not a problem per se---if $n$ goes to infinity, it does not matter
how much weight there is on $T_{D}$ in total---but it may shift
the relative conditional credences about types of the players. For
instance, a type $(t,C)$ may deem other players with type $t$ more
likely to have subtype $C$, but may be sceptical whether players of types
$t'\neq t$ are of the $C$ subcategory.

Such shifts in the relative weight of the types, due to different coefficients \(p(C\mid t',t,C)\), can be equivalently modeled by an anonymous prior \(p'\) over types, without any subtypes. To sketch an argument for this, note first that we can regard such a prior \(p'\) as a symmetric joint distribution \(p'\in \Delta(T\times T)\). Here, \(p'(t,t')=p'(t',t)\) is the probability that any two distinct players will have types \(t\) and \(t'\). Now we can let
\[p'(t,t'):=\delta^{-1}p((t,C),(t',C)),\]
where \(p\) is the original prior over types and subtypes, and \(\delta:=\sum_{t,t'\in T}p((t,C),(t',C))\) is a normalization constant. Then this is apparently a symmetric probability distribution, and we have
\[p(t',C\mid t,C)
=\frac{p((t',C),(t,C))}{p(t,C)}
=\frac{\delta p'(t',t)}{p(t,C)}
= \frac{p'(t'\mid t) p'(t) \delta}{p(t,C)}
=p'(t'\mid t)\delta'\]
for some constant \(\delta':=\frac{p'(t) \delta}{p(t,C)}\) that only depends on \(t\), but not on \(t'\). This shows that the relative weights of the different types \(t'\), after conditioning on being of type \((t,C)\), are preserved under \(p'\). So, in particular, we have
\begin{align}&\phantom{=}u_{t,t}(\alpha_t)-u_{t,t}(\beta_t)+(n-1)\sum_{t'\in T}p(t',C\mid t,C)\left(u_{t',t}(\alpha_{t'})-u_{t',t}(\beta_{t'})\right)
\\
&=u_{t,t}(\alpha_t)-u_{t,t}(\beta_t)+\delta' (n-1)\sum_{t'\in T}p'(t'\mid t)\left(u_{t',t}(\alpha_{t'})-u_{t',t}(\beta_{t'})\right).
\end{align}
This shows that at least the simplified model of subtypes considered here is not more general than the already introduced type space model. Assuming large \(n\), believing that another player is of subtype \(D\) is equivalent to just giving them less relative weight in the conditional distribution \(p'(t'\mid t)\). For this reason, I will continue without the model introduced in this section in the following.

While I will not pursue this in this report, there may be other, more interesting ways to extend the model introduced here in future work. For instance, one could specify a specific bargaining solution or point on the Pareto frontier for each subtype. The same could be done for other contingent parameters such as disagreement points. One could then analyze how possible gains from trade change with different assumptions about these parameters.

\section{ECL as a Bayesian bargaining problem}
\label{sec:Bargaining-with-incomplete}
In this section, I combine the models from the two previous sections, by defining a bargaining game on top of a Bayesian game (Sections~\ref{formal-setup-bayesian-bargaining} and \ref{strategies-bayesian-bargaining}). To simplify the formal setup and analysis, I will assume from the start that the underlying ECL Bayesian game is additively separable and anonymous. In \Cref{bargaining-theory-bayesian}, I introduce a version of the Nash bargaining solution for incomplete information bargaining games by \textcite{Harsanyi1972}. In \Cref{joint-distributions-equilibria-bayesian-bargaining}, I adapt Bayesian Nash equilibria and dependency equilibria to the Bayesian bargaining setup. To be able to apply dependency equilibria to bargaining problems, I generalize dependency equilibria to joint beliefs over continuous strategy spaces.

Finally, I conclude with several takeaways from the model (\Cref{observations-final}). First, I adapt the results about dependency equilibria from \Cref{subsec:ObservationsEquil}, including \textcite{Spohn2007-fp}'s folk theorem. I then discuss gains from trade given different beliefs in general (\Cref{general-takeaways-gains-from-trade}) and analyze several toy examples (Sections~\ref{different-prior-probs}--\ref{paretotopia}).

\subsection{Formal setup}
\label{formal-setup-bayesian-bargaining}
\begin{defn}[ECL Bayesian bargaining game]
An \emph{ECL Bayesian bargaining game} is a tuple $G=(N,T,(A_{t})_{t\in T},p,d)$,
where
\begin{itemize}
\item $N=\{1,\dots,n\}$ is the set of players;
\item $T=\{1,\dots,m\}$ is a generic set of types;
\item $\mathcal{A}_{t}\subseteq\mathbb{R}^{m}$ is the convex and compact set of actions
for type $t$;
\item $p\colon \Delta(T^{n})$ is an anonymous prior probability
distribution over type vectors, such that each type has positive prior probability (i.e., for any \(i\in N\), \(p(t_i)>0\) for all types \(t_i\in T\)).
\item $d\in\mathbb{R}^{m}$ is the disagreement point.
\end{itemize}
\end{defn}

The set of players and types are the same as in an ECL Bayesian game, but the actions are now different. In the previous model, there
was a finite set of actions $A$, and in the additively separable
and anonymous case, there were utility functions $u_{t,t'}$ for each tuple of types
$t,t'\in T$, specifying the utility that type \(t\) produces for type \(t'\) with their actions. Since we assume additive separability and anonymity from the start, we now directly define sets of actions $\mathcal{A}_{t}$ for each type \(t\in T\), such that each vector \(x_t\in \mathcal{A}_t\) specifies the utilities \(x_{t,t'}\) that that player can produce for any other player of type \(t'\in T\). One could regard $\mathcal{A}_{t}$ as the convex hull of the
image of $A$ under the function $u_{t}:=[u_{t,1},\dots,u_{t,m}]^{\top}$.
That is, if $\Sigma_{t}:=\Delta(A)$
is the set of $t$'s mixed strategies, then 
\[
\mathcal{A}_{t}=\left\{ \sum_{a\in A}\sigma_{t}(a)u_{t}(a)\mid\sigma_{t}\in\Sigma_{t}\right\}.
\]
This corresponds to what was in \Cref{sec:ECL-as-a-bargaining-problem} the feasible set \(F_i\) for an individual
player (though note that we will define separate feasible sets for the setting here later). In the anonymous incomplete information setup, utilities depend only on
the types, so it suffices to have one such set for each type, with
as many dimensions as there are types. As in \Cref{ecl-bargaining-problem-setup}, this set could
also be something other than a simplex---it only needs to be convex
and compact.

\subsection{Strategies and feasible sets}
\label{strategies-bayesian-bargaining}
Turning to pure strategies and expected utilities, I directly introduce strategies that are anonymous and only depend on the types. First,  players of the same type have the same information, so it seems plausible that they would all choose the same action. Second, since they have exactly the same set of actions with the same utilities, they likely have to choose the same option to produce Pareto optimal outcomes (as discussed in \Cref{uniqueness}).

\begin{defn}[Pure strategy profile]
Let $G=(N,T,(\mathcal{A}_{t})_{t\in T},p,d)$ be an ECL Bayesian bargaining
game. Let $\alpha\in \mathcal{A}:=\prod_{t\in T}\mathcal{A}_{t}$. Then $\alpha$ is called a pure
strategy profile.
\end{defn}

Using \Cref{eq:1}, we can define the expected utility of \(\alpha\in \mathcal{A}\) for type \(t\), after updating on observing their own type, as
\begin{equation}\label{expected-utility-anonymous-bargaining}EU_{t}(\alpha):=\alpha_{t,t}+(n-1)\sum_{t'\in T}p(t'\mid t)\alpha_{t',t},\end{equation}
where the first term is the utility provided by the player to themself, and the second term is the utility provided by the \(n-1\) other players in expectation.

Next, a feasible set is the set of vectors of expected utilities for all types that can be produced by pure strategy profiles.
\begin{defn}[Feasible set]
Let $G=(N,T,(\mathcal{A}_{t})_{t\in T},p,d)$ be an ECL Bayesian bargaining
game. Then
\[
F(G):=\left\{ x\in\mathbb{R}^{m}\mid\exists\alpha\in \mathcal{A}\colon\forall t\in T\colon EU_{t}(\alpha\mid t)=x_{t}\right\} 
\]
is the feasible set of $G$.
\end{defn}
As in \Cref{sec:ECL-as-a-bargaining-problem}, I assume that \(d\in F(G)\) and that at least one payoff \(x\in F(G)\) exists that is a strict Pareto improvement, i.e., \(x_i>d_i\) for all \(i\in N\).

Next, we turn to the individual feasible sets. These are sets of vectors of expected utilities for all types \(t'\) that can be produced by type \(t\) with their pure strategies. Here, we have to carefully scale the utilities in \(\mathcal{A}_t\) to satisfy \Cref{expected-utility-anonymous-bargaining}.

\begin{defn}[Individual feasible set]
Let $t\in T$. Define $f^{(t)}\colon\mathbb{R}^{T}\rightarrow\mathbb{R}^{T}$
via its component functions $f_{t'}^{(t)}(y):=(n-1)p(t\mid t')y_{t'}$
for $t'\in T\setminus\{t\}$ and $f_{t}^{(t)}(y)=y_{t}+(n-1)p(t\mid t)y_{t}$.
Then $t$'s individual feasible set is
\[
F_{t}(G):=f^{(t)}(\mathcal{A}_{t}).
\]
\end{defn}

Given this definition, it follows that $F(G)=\sum_{t\in T}F_{t}(G)$. That is, 
similarly to complete information bargaining, the feasible set of \(G\) is the set of sums of vectors from the individual feasible sets. Since the sets $\mathcal{A}_{t}$ for
each $t\in T$ are convex and compact, $F_{t}(G)$ is also convex
and compact, since it is just the image of $\mathcal{A}_{t}$ under the linear
mapping $f^{(t)}$. Hence, the sum $F(G)$ is also convex and compact.

\begin{example}
\label{exa:bargaining-case-2}Assume there are two types,
$1,2$. First, we have to specify the sets of actions. Suppose that there are
diminishing returns for both types' utility functions, such that the sets of actions are $\mathcal{A}_{1}=\mathcal{A}_{2}=\{x\in\mathbb{R}^{2}_{\geq 0}\mid x_{1}^{2}+x_{2}^{2}\leq1\}$
(\Cref{fig:7}). This could be motivated, for instance, by assuming that resources invested are quadratic in the utilities, and both types can allocate at most one unit of resources to both utility functions.

\begin{figure}
\centering{}\includegraphics[width=0.5\textwidth]{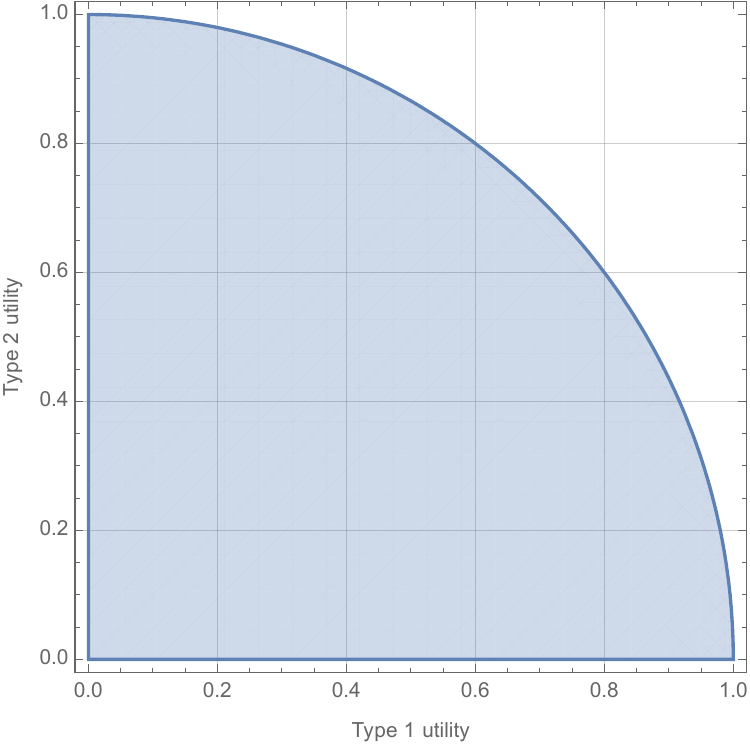}\caption{Set of actions $\mathcal{A}_{1}=\mathcal{A}_{2}$ of either type in \Cref{exa:bargaining-case-2}.}
\label{fig:7}
\end{figure}

Second, we compute the feasible sets $F_{t}(G)$ for both types.
 Say there are $3$ players in total with independent and uniform type distributions, such that $p(1\mid1)=p(1\mid2)=p(2\mid1)=p(2\mid2)=0.5$ (where \(p(t'\mid t)\) is the probability that any player of type \(t\) assigns to any other player having type \(t'\), as defined in \Cref{pure-and-mixed}).
In the feasible set for type $2$, the expected utilities
for type $1$ are lower than for type \(2\), because a player of type $1$ is
certain that they themselves have type $1$ and hence they believe
that there are in expectation two players of type $1$ and only one
type $2$ player. The same applies vice versa. The resulting feasible
sets are depicted in \Cref{fig:8} (a), where $F_{1}=f^{(1)}(\mathcal{A}_{1}),F_{2}=f^{(2)}(\mathcal{A}_{2})$ with \begin{align}
f_{1}^{(1)}(y)&=y_{1}+2\cdot\frac{1}{2}y_{1}=2y_{1}\\
f_{2}^{(2)}(y)&=y_{2}+2\cdot\frac{1}{2}y_{2}=2y_{2}\\
f_{2}^{(1)}&=2\cdot\frac{1}{2}y_{2}=y_{2}\\
f_{1}^{(2)}(y)&=2\cdot\frac{1}{2}y_{1}=y_{1}
.\end{align}
for $y\in \mathcal{A}_{1}$ or $y\in \mathcal{A}_{2}$. The feasible set $F(G)$ is then just the set with the points $x_{t}+x_{t'}$ for all
possible $x_{t}\in F_{t}(G),x_{t'}\in F_{t'}(G)$ (\Cref{fig:8} (b)).

\begin{figure}
\begin{centering}
\subfloat[Individual feasible set $F_{1}(G)$ of the first type (blue) and $F_{2}(G)$ of the second type (orange).]{\includegraphics[width=0.48\textwidth]{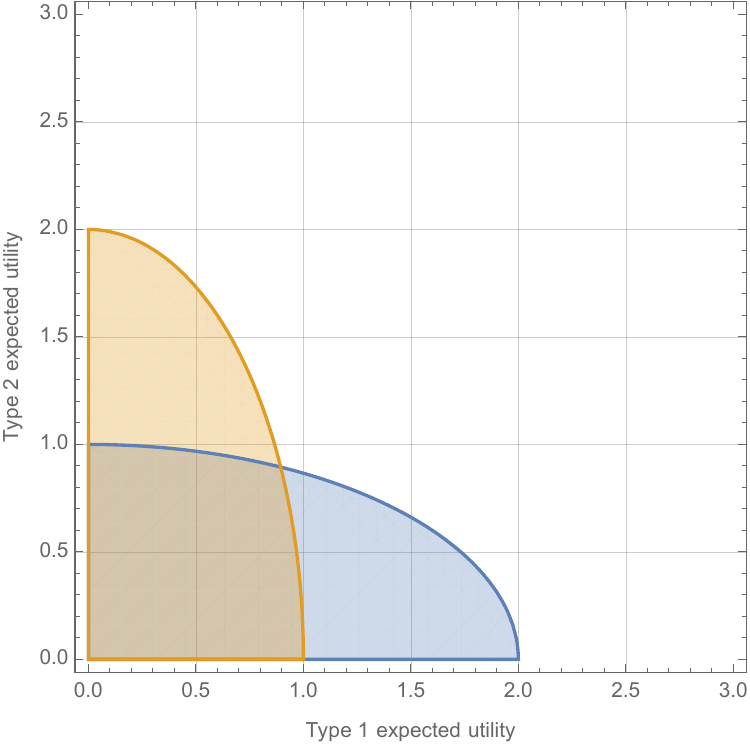}}
\hfill\subfloat[Feasible set $F(G)$.]{\includegraphics[width=0.48\textwidth]{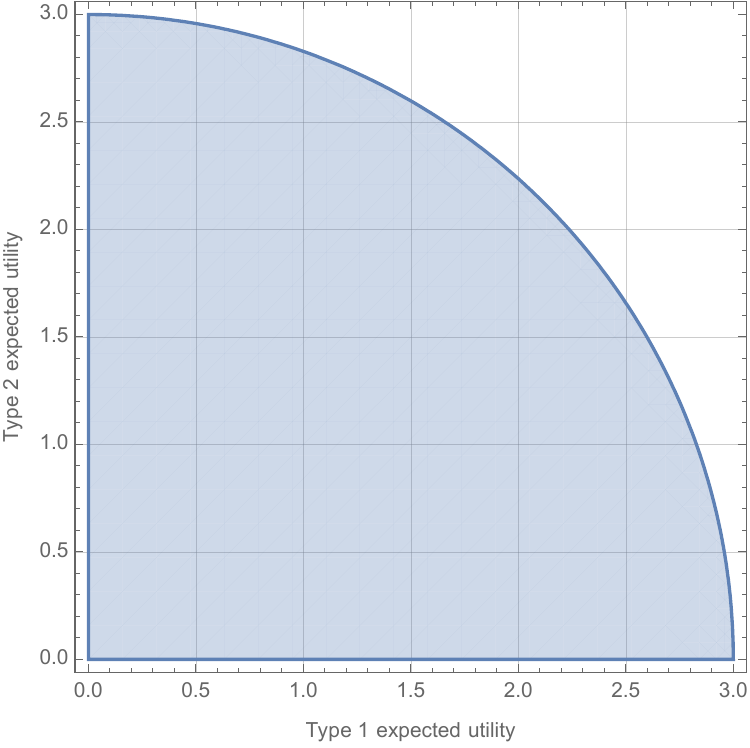}
}\caption{(a) Individual feasible sets and (b) combined feasible set from \Cref{exa:bargaining-case-2}.}\label{fig:8}
\par\end{centering}

\end{figure}

\end{example}

Above example has a type prior $p$ that assigns equal probability to
all combinations of types, but players end up with different feasible sets because they update on the type of their own player and there are only few players in the game. I will give examples with different priors, where \(n\) is very large, in the next section.

\subsection{The Nash bargaining solution for incomplete information games}
\label{bargaining-theory-bayesian}
Here, I define the NBS for an ECL Bayesian bargaining game $G$. I will use this bargaining solution below in my examples to compute cooperative outcomes. \textcite{Harsanyi1972} introduce an axiomatization of the NBS for a two-player incomplete information game, which I discuss in \Cref{appendix-harsanyi-nbs}. This axiomatization includes versions of all the axioms of the complete information NBS discussed in \Cref{nash-bargaining-solution}, and adds two additional axioms to deal with weights for the different types. Below definition is a generalization of \textcite{Harsanyi1972}'s definition for more than two players, adapted to my formal setup. 

\begin{defn}[NBS in an ECL Bayesian bargaining game] \label{defn-nbs-bayesian}Let \(G\) be an ECL Bayesian bargaining game, and let \((\nu_t)_{t\in T}\) be a set of weights for each type, such that $\nu_{t}\geq0$ for all \(t\in T\) and $\sum_{t\in T}\nu_{t}=1$. Then the Nash bargaining solution (NBS) for these weights is defined via the optimization problem
\begin{equation}
\argmax_{x\in F(G)^{\geq d}}\prod_{t\in T}(x_{t}-d_{t})^{\nu_{t}}.
\end{equation}
\end{defn}

\textcite{Harsanyi1972} derive the specific weighting \(\nu_t=p(t)\) for all \(t\in T\), suggesting that the utility of a type should be weighed by the prior probability of that type. This weighting ensures the desirable behavior that the bargaining solution does not change if a type is split into two types with identical actions and utilities, provided their combined probability equates to the probability of the original type. These weights also appear reasonable from an ex ante fairness perspective, given that a type with a higher prior likelihood would, in expectation, occupy a larger number of universes. However, when it comes to fairness, there are other criteria that could be important for determining weights (see \Cref{fairness-and-coalitional}).

\subsection{Joint distributions and equilibria}
\label{joint-distributions-equilibria-bayesian-bargaining}
Finally, I define joint distributions and equilibria in ECL Bayesian bargaining games. We do not need to introduce distributions to define Bayesian Nash equilibria. Action spaces are already convex, and Bayesian Nash equilibria are in any case trivial in the additively separable case.

\begin{defn}[Bayesian Nash equilibrium in an ECL Bayesian bargaining game]\label{defn-BNE-bargaining}
Let \(G\) be an ECL Bayesian bargaining game. Then a strategy profile \(\alpha\in \mathcal{A}\) is a Bayesian Nash equilibrium if for all types \(t\in T\) and actions \(\alpha'_t\in \mathcal{A}_t\), we have
\[EU_t(\alpha)\geq EU_t(\alpha_{-t},\alpha'_t).\]
\end{defn}

Note that the notion of best response here assumes that all players of the same type change their action simultaneously, rather than only a single player deviating. This assumes perfect correlations between players of the same type, which seems inappropriate for the uncorrelated notion of Bayesian Nash equilibria. However, I do not consider this an issue, since, as the next proposition shows, Bayesian Nash equilibria are trivial in the additively separable case.

\begin{prop}\label{bne-unique-bayesian-bargaining}
Let \(G\) be an ECL Bayesian bargaining game. Then the set of Bayesian Nash equilibria is given via \(A^*=\prod_{t\in T}A_t^*\), where
\[A^*_t:=\argmax_{\alpha_t\in \mathcal{A}_t}\alpha_{t,t}\]
for \(t\in T\).
\end{prop}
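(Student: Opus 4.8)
The plan is to exploit the linearity of the ex interim expected utility \Cref{expected-utility-anonymous-bargaining} together with the fact that, in the additively separable and anonymous setting, a deviation by type $t$ changes only the single contribution $\alpha_{t,t}$ that enters $EU_t$. Concretely, I would fix an arbitrary type $t\in T$, a profile $\alpha\in\mathcal{A}$, and an alternative action $\alpha'_t\in\mathcal{A}_t$, and compute $EU_t(\alpha_{-t},\alpha'_t)-EU_t(\alpha)$ directly from \Cref{expected-utility-anonymous-bargaining}. Expanding both expected utilities, every summand of the form $(n-1)p(t'\mid t)\alpha_{t',t}$ with $t'\neq t$ is untouched by the deviation, so the difference collapses to $\bigl(1+(n-1)p(t\mid t)\bigr)\bigl(\alpha'_{t,t}-\alpha_{t,t}\bigr)$, the factor $1+(n-1)p(t\mid t)$ arising because the $t'=t$ term of the sum also contributes $\alpha_{t,t}$ in addition to the leading term.

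The second step is to observe that $1+(n-1)p(t\mid t)>0$ (here $n\geq 1$ and $p(t\mid t)\geq 0$). Hence $EU_t(\alpha)\geq EU_t(\alpha_{-t},\alpha'_t)$ holds for every $\alpha'_t\in\mathcal{A}_t$ if and only if $\alpha_{t,t}\geq\alpha'_{t,t}$ for every $\alpha'_t\in\mathcal{A}_t$, i.e., if and only if $\alpha_t\in A_t^*=\argmax_{\alpha_t\in\mathcal{A}_t}\alpha_{t,t}$. Since this equivalence holds for each type $t$ separately, and the equilibrium condition of \Cref{defn-BNE-bargaining} is precisely the conjunction of these conditions over all $t\in T$, it follows that $\alpha$ is a Bayesian Nash equilibrium if and only if $\alpha_t\in A_t^*$ for every $t$, which is exactly $\alpha\in\prod_{t\in T}A_t^*$.

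I would then close with the short remark that $A_t^*$ is nonempty for each $t$, because $\mathcal{A}_t$ is compact and the projection $\alpha_t\mapsto\alpha_{t,t}$ is continuous, so a Bayesian Nash equilibrium always exists; this also makes precise the earlier claim that in the additively separable case each type simply maximizes the utility it produces for itself and ignores the others (cf.\ \Cref{BNE-unique-anonymous} in the discrete model). I do not expect a real obstacle here: the only point requiring slight care is the bookkeeping of the $t'=t$ summand in \Cref{expected-utility-anonymous-bargaining}, which is what turns the naive coefficient $1$ into $1+(n-1)p(t\mid t)$ — but since this quantity is strictly positive, the set of maximizers, and hence the characterization, is unaffected.
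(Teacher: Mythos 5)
Your proposal is correct and is precisely the ``simple exercise'' the paper leaves to the reader: expanding \Cref{expected-utility-anonymous-bargaining}, a deviation by type \(t\) changes only the \(\alpha_{t,t}\)-terms, the difference collapses to \(\bigl(1+(n-1)p(t\mid t)\bigr)\bigl(\alpha'_{t,t}-\alpha_{t,t}\bigr)\) with strictly positive coefficient, and the per-type conditions conjoin to give \(\prod_{t\in T}A_t^*\). The added existence remark via compactness of \(\mathcal{A}_t\) is a harmless bonus.
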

\begin{proof}
    This follows as a simple exercise from the definition of a Bayesian Nash equilibrium and \Cref{expected-utility-anonymous-bargaining}.
\end{proof}

Next, I introduce joint distributions to define dependency equilibria. In the present model, the sets \(\mathcal{A}_t\) of strategies are continuous, potentially containing all possible (independent) randomizations over a set of actions that a type could implement. This is necessary to enable bargaining. Joint strategy distributions are then separately defined as joint distributions over the space \(\mathcal{A}\). This allows us to express beliefs such as ``if I choose the NBS, other players do the same'', where the NBS is an arbitrary point in the continuous set \(\mathcal{A}_t\).

I define the set \(S\) of joint strategy distributions as the set of measures over the measure space \(\mathcal{A}\subseteq\mathbb{R}^{m,m}\), endowed with the default Borel \(\sigma\)-algebra. For a set \(A_t\subseteq \mathcal{A}_t\), I write \(s(A_t):={s(\{\alpha\in \mathcal{A}\mid \alpha_t\in A_t\})}\), and similarly I define conditionals \[s(A\mid A_t):=\frac{s(\{\alpha\in A\mid \alpha_t\in A_t\})}{s(A_t)}\] for \(A\subseteq\mathcal{A}\) and \(A_t\subseteq\mathcal{A}_t\) such that \(s(A_t)>0\).
For any set \(A_t\subseteq \mathcal{A}_t\) with \(s(A_t)>0\), the expected utility given \(A_t\) is defined as
\[EU_t(s;A_t):=\mathbb{E}_{\alpha\sim s}[EU_t(\alpha)\mid \alpha_t\in A_t]
=\int_{\alpha \in \mathcal{A}}EU_t(\alpha)ds(\cdot\mid A_t).\]
Now we can define dependency equilibria as a generalization of \Cref{defn:dependency-equilibrium}.

\begin{defn}[Dependency equilibrium in an ECL Bayesian bargaining game]\label{defn-DE-bargaining}
Let \(s\in S\) be a joint strategy distribution, and assume there exists a sequence of distributions \((s_r)_{r\in\mathbb{N}}\) that converges weakly to \(s\) such that for each \(r\in \mathbb{N}\), \(s_r\) has full support on \(\mathcal{A}_t\), i.e., such that \(s_r(A_t)>0\) for all nonempty open sets \(A_t\subseteq\mathcal{A}_t\). Then \(s\) is a dependency equilibrium if for all \(A_t\subseteq\mathcal{A}_t\) with \(s(A_t)>0\) and arbitrary nonempty open set \(A'_t\subseteq\mathcal{A}_t\), we have
\[\lim_{r\rightarrow\infty}EU_t(s_r;A_t)\geq \lim_{r\rightarrow\infty}EU_t(s_r;A'_t).\]
We say that \(\alpha\in \mathcal{A}\) is a dependency equilibrium if \(\delta_\alpha\), the Dirac measure with \(\delta_\alpha(A)=1\) if and only if \(\alpha\in A\), is a dependency equilibrium.
\end{defn}
Here, weak convergence means that for any continuous function \(f\colon\mathcal{A}\rightarrow\mathbb{R}\), we have \[\lim_{r\rightarrow\infty}\mathbb{E}_{\alpha\sim s_r}[f(\alpha)]=\mathbb{E}_{\alpha\sim s}[f(\alpha)].\]
I choose weak convergence as a generalization of the pointwise convergence we assumed in the case where \(s\) is a discrete distribution over a finite set of joint actions. Note that weak convergence does not require that the probability of each set converges; for instance, assume that \(s\) is the Dirac measure for some point \(\alpha\). Then we can define \(s_r\) via densities \(f_r\colon \alpha'\mapsto c_r\cdot \exp(-r\Vert \alpha'-\alpha\Vert)\), where \(c_r\) is some normalization constant. \(s_r\) becomes more and more concentrated on \(\alpha\) as \(r\rightarrow\infty\), and thus the integral with respect to \(s_r\) converges to the one with respect to \(\delta_\alpha\) for continuous functions. But \(s_r(\{\alpha\})=0\) for all \(r\in\mathbb{N}\) and \(s(\{\alpha\})=\delta_\alpha(\{\alpha\})=1\).

\subsection{Observations}
\label{observations-final}

In this section, I discuss several takeaways from the model introduced above. I begin by adapting results from \Cref{subsec:ObservationsEquil}. I prove a version of \textcite[][Observation 5]{Spohn2007-fp}, saying that any strategy profile that is a Pareto improvement over a Bayesian Nash equilibrium is a dependency equilibrium. In particular, the NBS with the Bayesian Nash equilibrium disagreement point is a dependency equilibrium.
In \Cref{general-takeaways-gains-from-trade}, I make some general remarks about gains from trade in this model. I show that, given large \(n\), only the beliefs over the types of other players matter. 

I then work through several toy examples with two types. I apply the NBS as a compromise solution and analyze how gains from trade are affected by different assumptions about beliefs and utility functions. I start with a model in which all types have the same posterior beliefs, but different prior weights (\Cref{different-prior-probs}). Next, I analyze the situation in which players' types are correlated, such that players have higher posterior weight for their own type. In this case, gains from trade diminish when players become more confident that other players have the same type. This happens roughly quadratically in a model where utilties are square roots of resource investments (\Cref{double-decrease}), reproducing the ``double decrease'' observed by \textcite{armstrong2017double}. However, given logarithmic returns, as in \textcite{drexler2019pareto}'s ``Paretotopia'' model, gains from trade go down more slowly (\Cref{paretotopia}).

\subsubsection{Dependency equilibria}

To begin, I show that if a strategy profile is at least as good for each type as some other strategy profile, for any possible action they could take, then it is a dependency equilibrium. The proof idea is the same as for \Cref{thm:spohn5}. As a corollary, it follows that Bayesian Nash equilibria and Pareto improvements over Bayesian Nash equilibria are dependency equilibria.

\begin{restatable}{thm}{spohnfivec}
\label{spohn5-continuous}
    Let \(\alpha,\beta\) be two strategy profiles such that for every \(t\in T\) and \(\beta'_t\in \mathcal{A}_t\), we have
    \[EU_t(\alpha)\geq EU_t(\beta_{-t},\beta'_t).\]
    Then \(\alpha\) is a dependency equilibrium.
\end{restatable}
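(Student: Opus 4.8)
The plan is to imitate the proof of \Cref{thm:spohn5}: I will exhibit a sequence $(s_r)_{r\in\mathbb{N}}$ of full-support joint strategy distributions converging weakly to the Dirac measure $\delta_\alpha$ such that, conditional on type $t$'s action landing anywhere away from $\alpha_t$, the remaining types are — in the limit — playing $\beta_{-t}$. Since $EU_t(\beta_{-t},\beta'_t)\le EU_t(\alpha)$ for every $\beta'_t\in\mathcal{A}_t$ by hypothesis and $EU_t$ is affine in the action profile, this will force every ``deviating'' region $A'_t$ to have limiting conditional expected utility at most $EU_t(\alpha)$, while every region containing $\alpha_t$ will have conditional expected utility converging to exactly $EU_t(\alpha)$; together these are precisely the inequality in \Cref{defn-DE-bargaining}, and $\delta_\alpha$ being a dependency equilibrium means $\alpha$ is one.

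Concretely, for each type $t'$ I would fix full-support Borel probability measures $\nu^{(r)}_{t'}$ on the compact convex set $\mathcal{A}_{t'}$ with $\nu^{(r)}_{t'}\to\delta_{\beta_{t'}}$ weakly (such measures exist, e.g.\ by mixing a measure supported on a ball of radius $1/r$ around $\beta_{t'}$ with a fixed full-support measure on $\mathcal{A}_{t'}$), set $s'_r:=\prod_{t'\in T}\nu^{(r)}_{t'}$ (an independent product over types), and take
\[s_r:=\Bigl(1-\tfrac1r\Bigr)\delta_\alpha+\tfrac1r\,s'_r.\]
Each $s_r$ then has full support on $\mathcal{A}$ and $s_r\to\delta_\alpha$ weakly, so $(s_r)$ is admissible in the sense of \Cref{defn-DE-bargaining}.

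The argument then splits into two computations. First, for any $A_t$ with $\alpha_t\in A_t$ (equivalently $\delta_\alpha(A_t)>0$), the $\delta_\alpha$-component carries conditional mass at least $1-\tfrac1r$, so $s_r(\cdot\mid\alpha_t\in A_t)\to\delta_\alpha$ weakly, and since $EU_t$ is continuous and bounded on the compact set $\mathcal{A}$ we get $\lim_r EU_t(s_r;A_t)=EU_t(\alpha)$. Second, for a nonempty open $A'_t$ with $\alpha_t\notin A'_t$ the $\delta_\alpha$-component contributes nothing to $\{\alpha_t\in A'_t\}$, so $s_r(\cdot\mid\alpha_t\in A'_t)=s'_r(\cdot\mid\alpha_t\in A'_t)$; because $s'_r$ is an independent product, conditioning on the $t$-coordinate leaves the law of the other coordinates unchanged, whence $\mathbb{E}_{s'_r}[\alpha_{t',t}\mid\alpha_t\in A'_t]=\mathbb{E}_{\nu^{(r)}_{t'}}[\alpha_{t',t}]\to\beta_{t',t}$ for $t'\ne t$, while $\mathbb{E}_{s'_r}[\alpha_{t,t}\mid\alpha_t\in A'_t]$ stays in the compact set $\{x_t:x\in\overline{A'_t}\}\subseteq\mathcal{A}_t$. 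Passing to a subsequence realizing the $\limsup$ along which this last mean converges to some $\bar x_t\in\mathcal{A}_t$, affinity of $EU_t$ together with \Cref{expected-utility-anonymous-bargaining} gives $EU_t(s'_r;A'_t)\to EU_t(\beta_{-t},\bar x_t)\le EU_t(\alpha)$, hence $\limsup_r EU_t(s_r;A'_t)\le EU_t(\alpha)$. Combining, $\lim_r EU_t(s_r;A_t)=EU_t(\alpha)\ge\limsup_r EU_t(s_r;A'_t)$ for every admissible $A_t$ and every open $A'_t$, which is the dependency-equilibrium condition.

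The hard part will be the second computation: unlike in the finite setting of \Cref{thm:spohn5}, conditioning on a continuous positive-probability event is delicate, and weak convergence alone does not control conditional expectations. What rescues the argument is the product structure of $s'_r$, which decouples the conditioning coordinate from the others, together with compactness of $\mathcal{A}_t$ to handle the single coordinate that \emph{is} conditioned on (where only a $\limsup$, not a genuine limit, is available in general — which is all \Cref{defn-DE-bargaining} needs). A minor technical point is verifying that full-support measures converging weakly to $\delta_{\beta_{t'}}$ exist on an arbitrary compact convex $\mathcal{A}_{t'}\subseteq\mathbb{R}^m$, which is routine.
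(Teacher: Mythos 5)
Your argument is correct and pursues the same overall strategy as the paper's proof---approximate $\delta_\alpha$ by full-support mixtures engineered so that, conditional on type $t$'s action landing in a deviation region, the other types are (in the limit) playing $\beta_{-t}$, and then invoke the hypothesis $EU_t(\beta_{-t},\beta'_t)\le EU_t(\alpha)$---but your construction of the approximating sequence is genuinely different, and the difference has technical consequences. The paper perturbs $\delta_\alpha$ by $\frac{r-1}{r^2}\delta_\beta+\frac{1}{r^2}s$, where $s$ is a \emph{singular} measure supported on the slices $\{(\alpha'_t,\beta_{-t})\mid \alpha'_t\in\mathcal{A}_t\}$; consequently, for a region $B_t$ with $\beta_t\notin B_t$ the conditional $q_r(\cdot\mid B_t)$ equals $s(\cdot\mid B_t)$ \emph{exactly for every} $r$, the other coordinates are $\beta_{-t}$ almost surely under that conditional, and the required limits exist trivially (at the cost of a case split on whether $\beta_t\in B_t$). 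You instead use an $r$-dependent product measure $\prod_{t'}\nu^{(r)}_{t'}$ converging weakly to $\delta_\beta$, and you need independence to decouple the conditioning coordinate, weak convergence to send the unconditioned marginal means to $\beta_{t',t}$, compactness and a subsequence for the conditioned coordinate, and linearity of $EU_t$ to pass conditional means inside. What your route buys is uniformity: the $\bar x_t$ argument handles $\beta_t\in A'_t$ and $\beta_t\notin A'_t$ in one stroke, since the hypothesis quantifies over all of $\mathcal{A}_t$. What it costs is that, as you note, you only obtain $\limsup_r EU_t(s_r;A'_t)\le EU_t(\alpha)$, whereas \Cref{defn-DE-bargaining} as literally stated requires the limits themselves to exist and be ordered. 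This is fixable inside your own construction: take $\nu^{(r)}_{t'}=(1-\tfrac1r)\delta_{\beta_{t'}}+\tfrac1r\lambda_{t'}$ with $\lambda_{t'}$ a fixed full-support measure on $\mathcal{A}_{t'}$; then the conditional law of the $t$-coordinate given $A'_t$ is constant in $r$ (namely $\lambda_t(\cdot\mid A'_t)$) when $\beta_t\notin A'_t$ and converges to $\delta_{\beta_t}$ when $\beta_t\in A'_t$, so $EU_t(s_r;A'_t)$ genuinely converges and the subsequence extraction becomes unnecessary. With that adjustment (and the harmless correction that the conditional mean lies in the closed convex hull of $A'_t$ rather than in $\overline{A'_t}$ itself), your proof is complete.
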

\begin{proof} In \Cref{appendix-proof-of-spohn5-continuous}.
\end{proof}

\begin{cor}\label{spohn5-continuous-cor}
    Let \(\beta\) be a Bayesian Nash equilibrium and \(\alpha\) such that
    \[EU_t(\alpha)\geq EU_t(\beta)\]
    for all \(t\in T\). Then \(\alpha\) is a dependency equilibrium. In particular, any Bayesian Nash equilibrium \(\beta\) is a dependency equilibrium.
\end{cor}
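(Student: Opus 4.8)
The plan is to obtain this immediately from \Cref{spohn5-continuous} by chaining the Pareto hypothesis with the best-response inequality that defines a Bayesian Nash equilibrium. First I would recall that, by \Cref{defn-BNE-bargaining}, $\beta$ being a Bayesian Nash equilibrium means precisely that $EU_t(\beta) \geq EU_t(\beta_{-t}, \beta'_t)$ for every type $t \in T$ and every alternative action $\beta'_t \in \mathcal{A}_t$. Combining this with the assumed inequality $EU_t(\alpha) \geq EU_t(\beta)$ gives
\[
EU_t(\alpha) \;\geq\; EU_t(\beta) \;\geq\; EU_t(\beta_{-t}, \beta'_t)
\]
for all $t \in T$ and all $\beta'_t \in \mathcal{A}_t$. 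This is exactly the hypothesis of \Cref{spohn5-continuous} with the profiles $\alpha$ and $\beta$ playing their respective roles, so that theorem yields directly that $\alpha$ is a dependency equilibrium.

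For the ``in particular'' clause, I would instantiate the statement just proved with $\alpha := \beta$. The inequality $EU_t(\beta) \geq EU_t(\beta)$ holds trivially for every $t \in T$, so $\beta$ qualifies as a (weak) Pareto improvement over the Bayesian Nash equilibrium $\beta$, and the first part of the corollary applies. Equivalently, one can apply \Cref{spohn5-continuous} directly with both strategy profiles taken to be $\beta$, since the defining inequality of a Bayesian Nash equilibrium is already in the required form once one notes $EU_t(\beta) = EU_t(\beta_{-t}, \beta_t)$.

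I do not anticipate any genuine obstacle here: all of the substantive work — constructing the approximating sequence $(s_r)_r$ of full-support distributions and checking the conditional-expected-utility inequalities in the limit — is carried out in the proof of \Cref{spohn5-continuous}. The only point requiring a moment's care is bookkeeping around the best-response inequality in \Cref{defn-BNE-bargaining}: one must use that $\beta$ being a Bayesian Nash equilibrium compares $EU_t(\beta)$ against all deviations $\beta'_t$, and that $EU_t(\beta)$ is itself the value of the profile $(\beta_{-t}, \beta_t)$, so the two-step chain of inequalities above is legitimate. Everything else is a direct citation.
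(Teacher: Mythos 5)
Your proposal is correct and matches the paper's proof exactly: chain $EU_t(\alpha)\geq EU_t(\beta)\geq EU_t(\beta_{-t},\beta'_t)$ using the Bayesian Nash equilibrium property and invoke \Cref{spohn5-continuous}, with the ``in particular'' clause following by taking $\alpha:=\beta$. No issues.
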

\begin{proof}
    Since \(\beta\) is a Bayesian Nash equilibrium, we have
    \[EU_t(\alpha)\geq EU_t(\beta)\geq EU_t(\beta_{-t},\beta'_t)\]
    for all \(\beta'_t\in \mathcal{A}_t\). Hence, the result follows by \Cref{spohn5-continuous}.
\end{proof}

Lastly, I conclude that the NBS with the Bayesian Nash equilibrium disagreement point is a dependency equilibrium.
\begin{prop}\label{nbs-is-de}
    Let \(\alpha\) be the strategy profile corresponding to the NBS with Bayesian Nash equilibrium disagreement point \(d\). Then \(\alpha\) is a dependency equilibrium.
\end{prop}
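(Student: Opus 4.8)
The plan is to reduce the statement directly to \Cref{spohn5-continuous-cor}. By hypothesis the disagreement point is the payoff vector of a Bayesian Nash equilibrium, i.e.\ $d=(EU_t(\beta))_{t\in T}$ for some $\beta\in\mathcal{A}$ that is a Bayesian Nash equilibrium of $G$. Such a $\beta$ exists by \Cref{bne-unique-bayesian-bargaining}: each set $A^*_t=\argmax_{\alpha_t\in\mathcal{A}_t}\alpha_{t,t}$ is the nonempty set of maximizers of a continuous linear functional on the compact set $\mathcal{A}_t$, so the product $A^*=\prod_{t\in T}A^*_t$ is nonempty, and any $\beta\in A^*$ is a Bayesian Nash equilibrium with $EU_t(\beta)=d_t$.

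Next I would recall that by \Cref{defn-nbs-bayesian} the NBS point $x^*$ is an element of $F(G)^{\geq d}$; in particular $x^*_t\geq d_t$ for every $t\in T$. Its existence follows from compactness of $F(G)^{\geq d}$ (a closed subset of the compact set $F(G)$) together with continuity of $x\mapsto\prod_{t\in T}(x_t-d_t)^{\nu_t}$ on that set; uniqueness is not needed for this proposition. By the definition of $F(G)$ there is a strategy profile $\alpha\in\mathcal{A}$ with $EU_t(\alpha)=x^*_t$ for all $t\in T$, and this is (any representative of) the ``strategy profile corresponding to the NBS'' referred to in the statement.

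Combining these observations, for every $t\in T$ we have $EU_t(\alpha)=x^*_t\geq d_t=EU_t(\beta)$. Since $\beta$ is a Bayesian Nash equilibrium and $EU_t(\alpha)\geq EU_t(\beta)$ holds for all $t\in T$, \Cref{spohn5-continuous-cor} applies verbatim and yields that $\alpha$ is a dependency equilibrium, which is exactly the claim.

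There is essentially no hard step here: everything reduces to (i) existence of a Bayesian Nash equilibrium realizing $d$, (ii) existence of the NBS maximizer, and (iii) the weak domination $x^*\geq d$, all of which are immediate. The only point worth making explicit is that \Cref{spohn5-continuous-cor} requires only the weak inequalities $EU_t(\alpha)\geq EU_t(\beta)$ against a Bayesian Nash equilibrium $\beta$ — the strengthening to $EU_t(\alpha)\geq EU_t(\beta_{-t},\beta'_t)$ for all $\beta'_t\in\mathcal{A}_t$ is supplied inside that corollary by the equilibrium property of $\beta$ — so no finer control over the chosen representative $\alpha$ is needed beyond the fact that it realizes a point of $F(G)^{\geq d}$.
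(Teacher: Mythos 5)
Your proposal is correct and follows essentially the same route as the paper: both reduce the claim to \Cref{spohn5-continuous-cor} via the observation that the NBS point lies in $F(G)^{\geq d}$ and hence weakly dominates the Bayesian Nash equilibrium payoffs $d_t=EU_t(\beta)$. Your additional remarks on existence of the equilibrium $\beta$ and of the NBS maximizer are fine but not needed beyond what the paper already assumes.
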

\begin{proof}
    The NBS as defined in \Cref{bargaining-theory-bayesian} always chooses a point \(x\) such that \(x_t>d_t\) for all \(t\in T\). Hence, if \(\beta\) is the profile corresponding to the disagreement point \(d\), then \(EU_t(\alpha)\geq d_t= EU_t(\beta)\) for all \(t\in T\). By \Cref{spohn5-continuous-cor}, it follows that \(\alpha\) is a dependency equilibrium.
\end{proof}

\subsubsection{General takeaways about gains from trade}
\label{general-takeaways-gains-from-trade}
Here, I give some general takeaways from the incomplete information bargaining model outlined above. First, assuming additively separable utilities and anonymous strategy profiles allows us to greatly simplify expected utilities received by each type. Recalling \Cref{expected-utility-anonymous-bargaining}, we have
\[EU_t(\alpha)=\alpha_{t,t}+(n-1)\sum_{t'\in T}p(t'\mid t)\alpha_{t',t}.\]
Assuming large \(n\), this becomes
\[EU_t(\alpha)\approx (n-1)\sum_{t'\in T}p(t'\mid t)\alpha_{t',t}.\]
That is, only the expected utility provided by the other players matters. In the following, I will assume large \(n\) such that this is the case (unlike in \Cref{exa:bargaining-case-2}, where I assumed \(n=3\)).

Second, the contributions \(\alpha_{t',t}\) by other players of different types are weighted by type \(t\)'s posterior weight for that type, \(p(t'\mid t)\). The higher the posterior weight \(p(t\mid  t)\), the lower the weight of all other types, reducing the gains from trade from other types cooperating. If players of type \(t\) believe that type \(t'\) does not exist, then that type \(t'\) cannot benefit players of type \(t\). As observed in \Cref{sec:Uncertainty-about-similarity}, uncertainty about the decision procedures of other players or similarities between players similarly factor into expected utilities. It does not matter, for instance, whether a type just cannot benefit other types much, or whether other types believe that the type has low posterior weight.

Positive correlations between players' types reduce gains from trade. In the extreme case in which all players are always of the same type, for instance, no trade is possible. However, if there are no or only small correlations, then trade is possible even given uncertainty about types of other players.

Note that this analysis still depends on the common prior assumption. Relaxing it could lead to further reductions in gains from trade, or it could completely break the analysis.
Moreover, I have not addressed uncertainty about actions, such as which bargaining solution other players will choose. Lastly, it is unclear what happens if the number of types \(m\) is as large as the number of players \(n\). In that case, we cannot simply assume that only the other players matter, as the product \((n-1)p(t'\mid t)\) may stay roughly constant for any given types \(t,t'\). If the different types all have different values, then this could imply a situation more similar to the one with only few players and types.

\subsubsection{Different prior probabilities but equal posterior beliefs}
\label{different-prior-probs}
Here, I analyze a toy model in which two types have different prior weight, but players have the same beliefs over types, since players' type distributions are independent. I continue as in \Cref{exa:bargaining-case-2} with
two different types of players.
\begin{example}
Recall Example \ref{exa:bargaining-case-2}. Assume large \(n\), such that approximately only the expected utilities from the other players matter, and assume
the utility functions are rescaled such that if $p(t\mid t)=0.5$
for $t=1,2$, we have $F_{1}(G)=F_{2}(G)=\{x\in\mathbb{R}_{+}^{2}\mid(2x_{1})^{2}+(2x_{2})^{2}\leq1\}$. I assume the disagreement point is the Bayesian Nash equilibrium, which is the point \((1/2,1/2)\).
Since the sets and the disagreement point are symmetric, the NBS would pick the symmetric point on the Pareto frontier, 
\[2\cdot \left(\frac{\sqrt{2}}{4},\frac{\sqrt{2}}{4}\right)
=\left(\frac{\sqrt{2}}{2},\frac{\sqrt{2}}{2}\right)\in F(G).\]

Now consider a situation in which everyone has the same conditional
beliefs about other players (i.e., types of different players are independent), but one type has lower prior probability, $p(1\mid2)=p(1\mid1)=\frac{3}{4}$
and $p(2\mid2)=p(2\mid1)=\frac{1}{4}$. In this case, the individual feasible
sets get rescaled and we get
\begin{align}
    F_{1}(G)&=\left\{x\in\mathbb{R}_{+}^{2}\,\middle\vert\,\left(\frac{4}{3}x_{1}\right)^{2}+\left(\frac{4}{3}x_{2}\right )^{2}\leq1\right\}\\
    F_{2}(G)&=\left\{x\in\mathbb{R}_{+}^{2}\,\middle\vert\,\left(4x_{1}\right)^{2}+\left(4x_{2}\right)^{2}\leq1\right\},
\end{align}
as displayed in \Cref{fig:9}.
\begin{figure}
\subfloat[Individual feasible set $F_{1}(G)$ of the first type (blue) and $F_{2}(G)$ of the second type (orange).]{\includegraphics[width=0.48\textwidth]{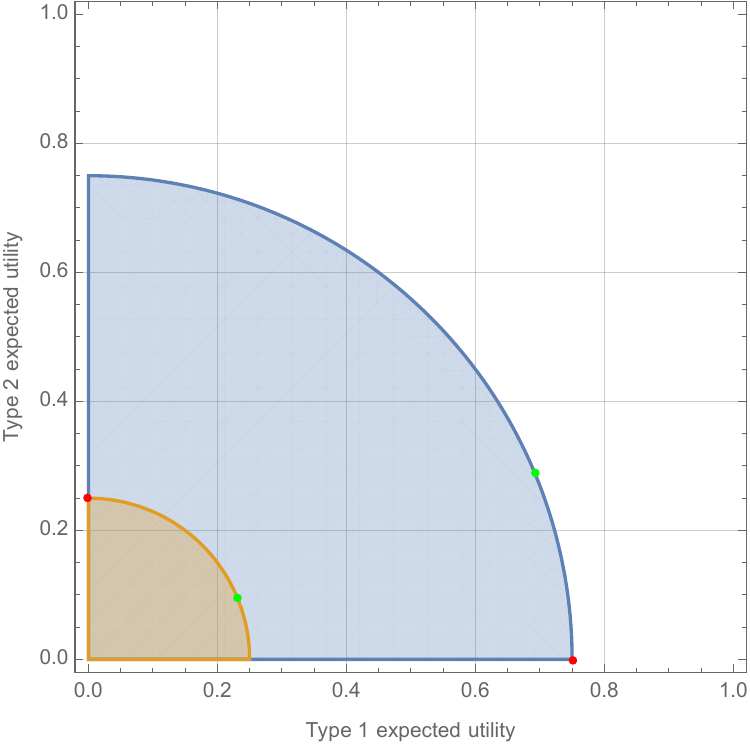}}
\hfill\subfloat[Feasible set $F(G)$.]{\includegraphics[width=0.48\textwidth]{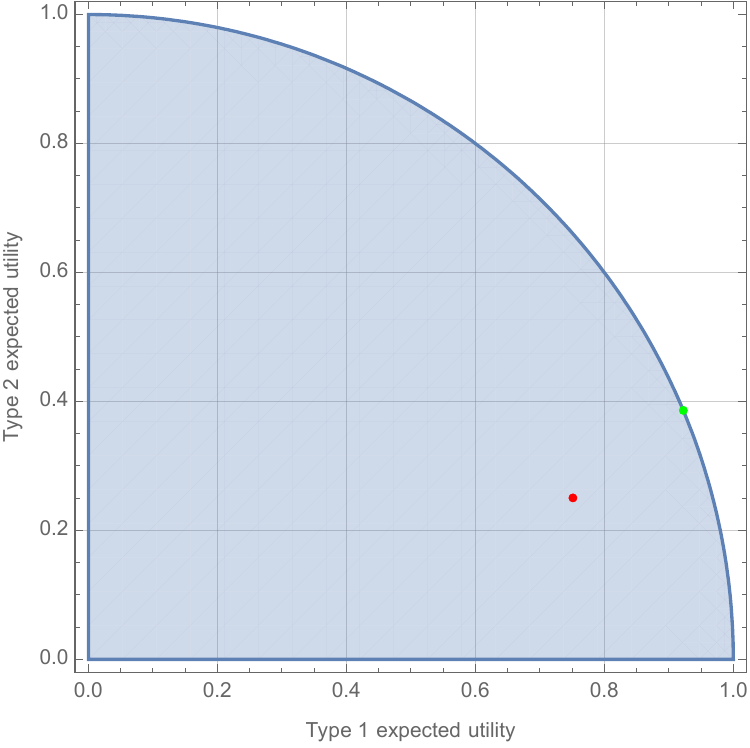}
}\caption{(a) Individual feasible sets of both types and (b) the combined feasible set from \Cref{exa:bargaining-case-3}, with Bayesian Nash equilibrium disagreement point (red dot) and NBS (green dot).}
\label{fig:9}
\end{figure}

 Fewer players have type \(2\) in expectation, so their actions produce less expected utility, both for themselves and for players of the other type. Since the shapes of both Pareto frontiers are the same, the NBS will still pick the same point on \(F_2(G)\) as \(F_1(G)\), only scaled down. However, the Bayesian Nash equilibrium is asymmetric, given by \(d_1=\frac{3}{4}\) and \(d_2=\frac{1}{4}\), and the prior weights of types are also asymmetric. Using the Bayesian Nash equilibrium as disagreement point and the prior probabilities as weights in the NBS, we hence get
 \[\argmax_{x\in F_1(
G)+F_2(G)}(x_1-d_1)^\frac{3}{4}(x_2-d_2)^{\frac{1}{4}}\approx (0.92,0.39).\]
The points in the individual feasible sets corresponding to the NBS and the disagreement point, as well as corresponding points in the overall feasible set, are plotted as green and red dots in \Cref{fig:9}.
\end{example}

\subsubsection{``Double decrease'' given different beliefs and square root utilities}
\label{double-decrease}
Next, I consider a case in which types have different conditional beliefs.
In this situation, if a type deems another type more likely, then
this increases the gains they can receive from trade. Conversely, if a type deems another
less likely, this decreases their potential gains. Since it is more plausible
that players consider their own type more likely than others, I only consider the latter case.

First, I investigate to what degree lower beliefs in the other type decrease gains from trade, given that utilities are square roots of invested resources. \textcite{armstrong2017double} has observed a ``double decrease'' in this case, which is the effect that gains from trade
quadratically decrease with the probability assigned to the other type.

\begin{example}
\label{exa:bargaining-case-3}Assume that types have equal prior weights, but beliefs $p(1\mid1)=p=p(2\mid2)$
and $p(1\mid2)=1-p=p(2\mid1)$. That is, conditional on observing their own type, players of either type believe other players have the same type with probability \(p\), and the other type with probability \(1-p\). For, $p=\frac{3}{4}$, we get the feasible
sets $F_{1}(G)=\{x\in\mathbb{R}_{+}^{2}\mid(\frac{4}{3}x_{1})^{2}+(4x_{2})^{2}\leq1\}$,
$F_{2}(G)=\{x\in\mathbb{R}_{+}^{2}\mid(4x_{1})^{2}+(\frac{4}{3}x_{2})^{2}\leq1\}$
(Figure \ref{fig:9-1}).

Due to the symmetry of the situation, it is easy to see that the NBS always picks points on the individual Pareto frontiers where the Pareto frontier has slope
$-1$ (i.e., the symmetric point on the overall Pareto frontier). Using this, we can compute the point 
\[
\left(\frac{p^{2}}{\sqrt{1-2(1-p)p}},\frac{(1-p)^{2}}{\sqrt{1-2(1-p)p}}\right)\in F_{1}(G)
\]
 for the first type, and the same point with swapped coordinates for the
second type.

Now we compute the share of expected utility received by the other type, as well as the percent gains from trade, at the NBS outcome. The expected utility received by the other player is \(\frac{(1-p)^{2}}{\sqrt{1-2(1-p)p}}\), while the sum of expected utilities received by both types is
\[\frac{p^{2}}{\sqrt{1-2(1-p)p}}+\frac{(1-p)^{2}}{\sqrt{1-2(1-p)p}}
=\sqrt{1 - 2 (1-p) p}.\]
Overall, we get a share of \(\frac{(1-p)^2}{1 - 2 (1-p) p)}\). This is approximately quadratic as \(1-p\rightarrow 0\). Next, turning to gains from trade, the total expected utility from compromise for either player is
\[\frac{p^{2}}{\sqrt{1-2(1-p)p}}+\frac{(1-p)^{2}}{\sqrt{1-2(1-p)p}}=\sqrt{1-2(1-p)p}.\]
The individually achievable expected utility is $p$. The gain from trade
in percentage of disagreement expected utility is hence $\frac{\sqrt{1-2(1-p)p}}{p}-1$.

We plot share of expected utilities received by the other type as well as percent gains from trade for \(\frac{1}{2}\leq p\leq 1\) in \Cref{fig:10} (where \(p=1-p(t'\mid t)\) for \(t'\neq t\)). Interestingly, gains from
trade as a percentage of individually attainable utility decline even faster than share of expected utility received by the other type. Overall, this confirms \textcite{armstrong2017double}'s observation of a ``double decrease'' in the square root utility model. 

\begin{figure}
\centering{}
\subfloat[Individual feasible set $F_{1}(G)$ of the first type (blue) and $F_{2}(G)$ of the second type (orange).]{\includegraphics[width=0.48\textwidth]{plots_2023/ex_46_individual.pdf}}
\hfill\subfloat[Feasible set $F(G)$.]{\includegraphics[width=0.48\textwidth]{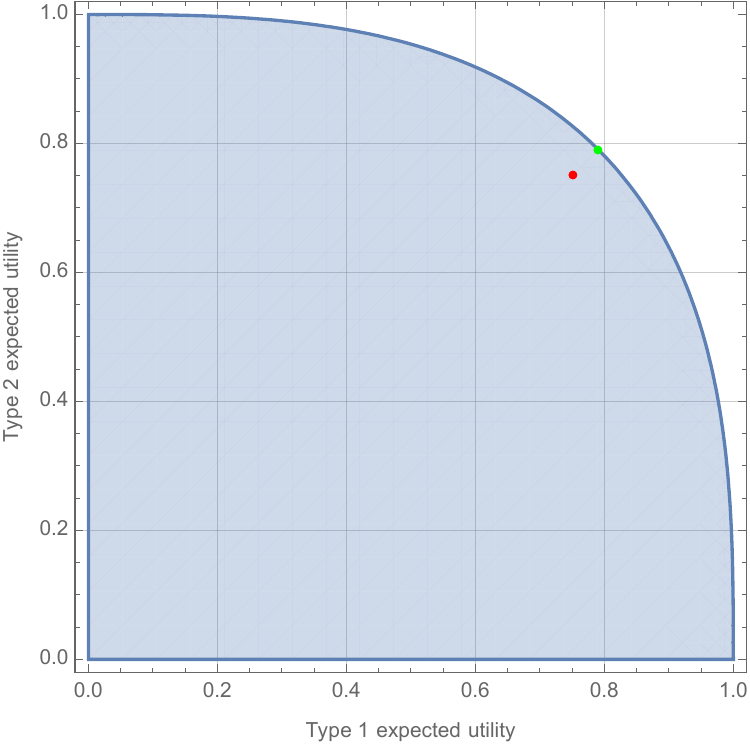}}
\caption{Feasible sets in the square root utility case (\Cref{exa:bargaining-case-3}), where $p(1\mid1)=p(2\mid 2)=\frac{3}{4}$
and $p(2\mid1)=p(1\mid 2)=\frac{1}{4}$. With Bayesian Nash equilibrium disagreement point (red dot) and NBS (green dot).}
\label{fig:9-1}
\end{figure}
\end{example}

\begin{figure}
\centering{}\includegraphics[width=0.96\textwidth]{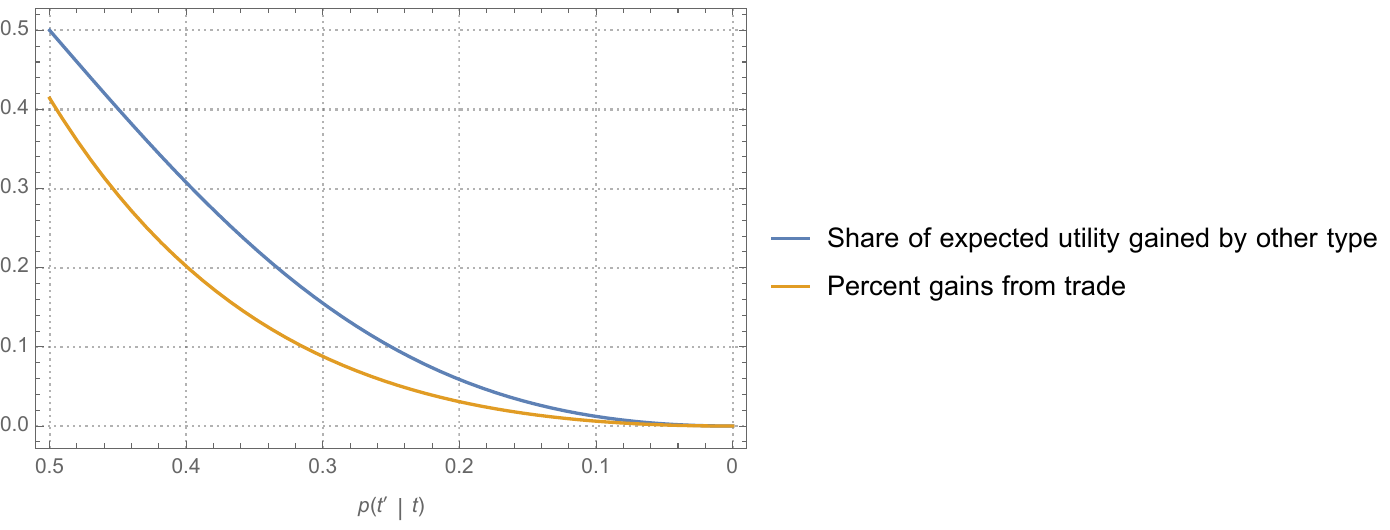}
\caption{Share of expected utility received by the other type (blue) and percent gains from trade (orange), at the NBS, for \Cref{exa:bargaining-case-3}. \(p(t'\mid t)\), for \(t'\neq t\), is the credence of a player of type \(t\) that any other player has type \(t'\). When \(p=\frac{1}{2}\), all players have the same uniform posterior distribution over types, leading to maximal gains from trade. When \(p(t'\mid t)=0\), players think that all other players are of the same type, so no trade is possible. As \(p(t'\mid t)\) goes to zero, gains from trade decrease approximately quadratically, exhibiting a ``double decrease'' \parencite{armstrong2017double}.}
\label{fig:10}
\end{figure}

\subsubsection{``Paretotopia'' given logarithmic utilities}
\label{paretotopia}
While the previous example demonstrates a ``double decrease'', this relies on the particular shape of Pareto frontiers in that example. In this section, I show a different result in the case of logarithmic utilities. Given logarithmic utilities, gains from trade can be very large, and Pareto frontiers are shaped in a way that makes compromise expected utilities and gains from trade change less as the belief in the other type goes down. This relates to \textcite{drexler2019pareto}'s idea of a ``Paretotopia'' in the case of logarithmic returns to resources, where reaping gains from trade at all is vastly more important to players than increasing their share of the compromise outcome.

\begin{example}
\label{bargaining-case-4}
As before, let $p(t\mid t)=p$ and $p(t'\mid t)=1-p$ for $t\neq t'\in\{1,2\}$. Assume feasible sets 
are given by
\begin{align}F_{1}(G)&=\left \{x\in\mathbb{R}_{+}^{2}\,\middle\vert\,\exp\left(\frac{1}{p}x_{1}\right)+\exp\left(\frac{1}{1-p}x_{2}\right)\leq r\right\}\\
F_{2}(G)&=\left\{x\in\mathbb{R}_{+}^{2}\,\middle\vert\,\exp\left(\frac{1}{1-p}x_{1}\right)+\exp\left(\frac{1}{p}x_{2}\right)\leq r\right\}
\end{align}
as illustrated in \Cref{fig:12}. We could interpret this as a case in which
resources produce logarithmic utility for either value system and where $r$ is the amount
of available resources. For symmetry reasons, the NBS is again
the point on the Pareto frontiers where the frontier has slope $-1$, as long as that point is in the feasible set. This is the point $(p\log(pr),(1-p)\log((1-p)r))$
for type $1$, with swapped coordinates for type $2$, for \(p\leq \frac{1}{r}\). For \(p>\frac{1}{r}\), no trade is possible, and players just optimize for their own values. $p\log(r)$ is the amount of utility either type can produce for themself.

Performing the same calculations as in Example \ref{exa:bargaining-case-3},
we get
\[\frac{(1-p)\log(\max\{(1-p)r,1\})}{p\log(pr)+(1-p)\log(\max\{(1-p)r,1\})}\]
as the share of expected utility received by the other type, and
\[\frac{p\log(pr)+(1-p)\log(\max\{(1-p)r,1\}}{p\log(r)}-1\]
percent gains from trade.
I plot both functions for $p\in [\frac{1}{2},1]$, for the cases $r=100$ and $r=10^{9}$, in \Cref{fig:11}.

\begin{figure}
\centering{}
\subfloat[Feasible set \(F_1(G)\) for type \(1\), for \(p(t'\mid t)=\frac{1}{2}\).]{\includegraphics[width=0.48\textwidth]{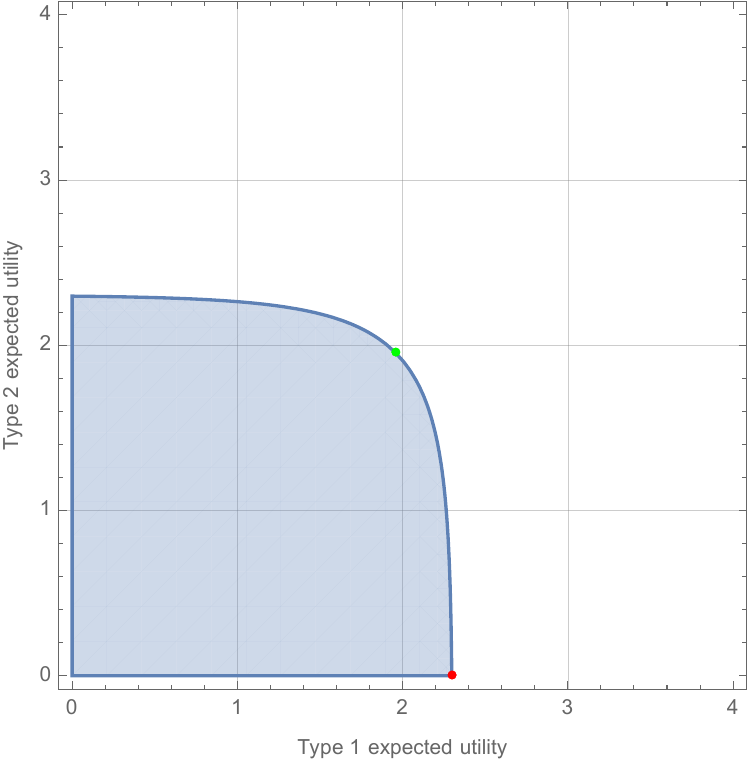}}
\hfill\subfloat[Feasible sets \(F_1(G)\) (blue) and \(F_2(G)\) (orange) for type \(1\) and \(2\), for \(p(t'\mid t)=\frac{1}{2}\).]{\includegraphics[width=0.48\textwidth]{plots_2023/ex_47_p34.pdf}}
\caption{Individual feasible sets in the logarithmic utility case (\Cref{bargaining-case-4}), with Bayesian Nash equilibrium disagreement point (red dot) and NBS (green dot), for \(p(t'\mid t)=\frac{1}{2}\) and \(p(t'\mid t)=\frac{1}{4}\).}
\label{fig:12}
\end{figure}

\begin{figure}
\subfloat[\(r=100\).]{\includegraphics[width=0.48\textwidth]{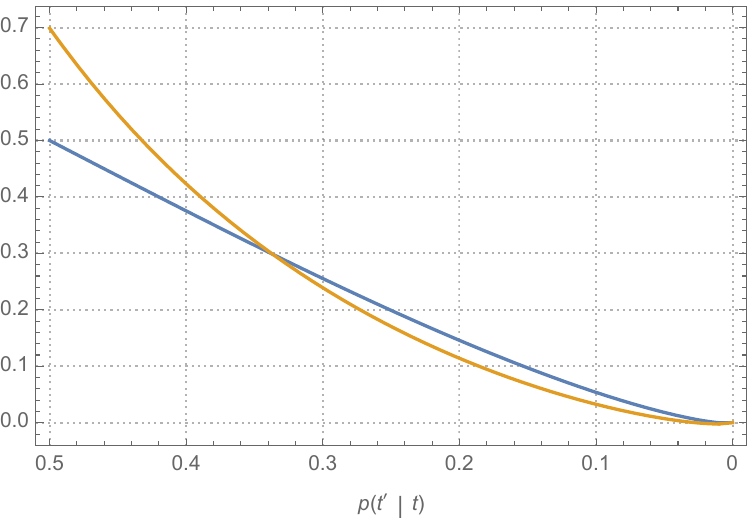}

}\subfloat[\(r=10^9\).]{\includegraphics[width=0.48\textwidth]{plots_2023/double_decrease_log_r10-9.pdf}

}\caption{Share of expected utility received by the other type (blue) and percent gains from trade (orange), given the feasible sets from \Cref{bargaining-case-4}, if beliefs in the respective other type \(p(t'\mid t)\) for \(t'\neq t\) range between \(\frac{1}{2}\) and \(0\). \(r\) is the total available amount of resources. Since utilities are logarithmic in resources, gains from trade go down more slowly than in \Cref{exa:bargaining-case-3}.}
\label{fig:11}
\end{figure}

\end{example}

Both share of expected utility received by the other player and percent gains from trade decrease much more slowly than in \Cref{exa:bargaining-case-3}, particularly for the case in which the amount of resources is large and thus gains from trade are vast. Given \(r=10^9\), both percent gains from trade and share of expected utility received by the other type appear to go down approximately \emph{linearly} as \(p\rightarrow 1\).
This shows that the shape of the Pareto frontier determines how gains from trade are affected by differing posterior beliefs. In future work, it would be interesting to extend this analysis, for instance, by investigating situations in which Pareto frontiers are asymmetric. %All of these factors will likely influence gains from trade.

\section{Discussion}
\label{sec:Discussion-of-this}
In this section, I discuss two issues that arise in my model.
First, I discuss the problem of choosing a disagreement point (\Cref{subsec:The-question-of}). I define the \emph{threat game} disagreement point, which is an equilibrium of a game in which players choose disagreement actions to improve their bargaining position. I discuss an axiomatization that supports the threat point, and show that the NBS with this disagreement point is a dependency equilibrium, even though it can be worse for some players than the Nash equilibrium. I also discuss reasons against its relevance to ECL.

Second, I discuss coalitional stability (\Cref{fairness-and-coalitional}). A compromise outcome is coalitionally stable and thus in the \emph{core} of a game if no subset of players can unilaterally guarantee its members higher payoffs. I argue that stability is a desirable property in the ECL context. Unfortunately, the NBS with the Nash equilibrium or threat game disagreement point is sometimes not stable. I investigate the existence of stable allocations and show that in an additively separable game, the core is always nonempty. However, I also show that sometimes all core allocations make some players worse off than a Nash equilibrium, providing a strong argument against the Nash equilibrium disagreement point. I conclude by suggesting an alternative disagreement point that guarantees stability.

\subsection{Disagreement points}
\label{subsec:The-question-of}

The bargaining model introduced in \Cref{sec:ECL-as-a-bargaining-problem} requires a disagreement
point, an outcome that is obtained if players do not reach an agreement.
For ECL, a plausible choice for a disagreement point is a Bayesian Nash equilibrium, which is unique in an anonymous and additively separable game, up to each type's choice of an action that optimizes their own values (Propositions~\ref{BNE-unique-anonymous} and \ref{bne-unique-bayesian-bargaining}). This is the outcome that players would plausibly choose absent any dependencies
between players. In particular, in the Bayesian game model from \Cref{sec:ECL-as-a-Bayesian-Game}, I showed that this is the only dependency equilibrium in this case (\Cref{uncorrelated-de-is-bne}). I also showed in the model from \Cref{sec:Bargaining-with-incomplete} that the NBS with the Bayesian Nash equilibrium disagreement point is a dependency equilibrium (\Cref{nbs-is-de}).

Unfortunately, I will show in \Cref{example-empty-ne-core} in \Cref{fairness-and-coalitional} that sometimes no point that is a weak Pareto improvement over the Bayesian Nash equilibrium is coalitionally stable (even if a stable point exists in principle, i.e., if the core of the game is nonempty). This strongly suggests that the Nash equilibrium may not be the right disagreement point.

Similar to bargaining solutions, one can also find a disagreement point by positing axioms that constrain the possible choices for disagreement points, or by setting up a noncooperative game and analyzing its equilibria.  As argued in \Cref{sec:ECL-as-a-bargaining-problem}, both approaches can provide relevant insights for ECL, even if ECL does not involve any actual bargaining.
\textcite{Nash1953} provides both an axiomatization and a noncooperative game that implies the ``threat game'' disagreement point. This point represents the equilibrium of a game where players choose disagreement actions and receive as payoffs the Nash bargaining solution computed with these disagreement actions. I define this point here for the setup from \Cref{sec:Bargaining-with-incomplete}.

For the following definition, I assume that \(\mu^\nu(F(G),d)\) is defined as the NBS with weights \(\nu\), computed for all the types \(t\) for which there exists \(x\in F(G)^{\geq d}\) such that \(x_t>d_t\). Note that since \(F(G)^{\geq d}\) is convex, if such a point \(x\) exists for all types \(t\in P\subseteq T\), then there also exists a point \(x'\in F(G)^{\geq d}\) such that \(x'_t>d_t\) for all \(t\in P\) simultaneously. Hence, we can define
\[\mu^\nu(F(G),d):=\argmax_{x\in F(G)^{\geq d}}\prod_{t\in P}(x_t-d_t)^{\nu_t}.\]
 \begin{defn}[Threat game disagreement point]
     Let \(G\) be an ECL Bayesian bargaining game. Then the \emph{threat game} disagreement point or \emph{threat point} is a point \(d\in F(G)\) such that there exists a strategy profile \(\alpha\in \mathcal{A}\) with \(d_t=EU_t(\alpha)\) for all types \(t\in T\), and for any \(t\in T\) and \(\alpha'_t\in\mathcal{A}_t\), letting \(d':=(EU_{t'}(\alpha_{-t},\alpha'_t))_{t'\in T}\), we have
     \[\mu_t^\nu(F(G),d)\geq \mu^\nu_t(F(G),d').\]
 \end{defn}
This definition says that the threat point is a point \(d\), corresponding to a strategy profile \(\alpha\in\mathcal{A}\), such that no type can improve their bargaining outcome by changing their action in \(\alpha\). \textcite{Nash1953} shows that the threat point exists and is unique in his two-player bargaining game. I believe Nash's proof translates to my setup at least with respect to existence, though uniqueness could be violated if there are more than two players.

In \textcite{Nash1953}'s axiomatization of the NBS with the threat point, there exists a feasible set
$F$ together with two sets $S_{1},S_{2}$ that contain the possible
disagreement strategies for players $1,2$. In addition to versions
of Axioms \ref{axm:1} and \ref{axm:Independence-of-irrelevant},
\textcite[][p.~137]{Nash1953} requires the following axioms:
\begin{ax}A restriction of the set of strategies available to a player cannot
increase the value to him of the game. That is, if $S_{1}'\subseteq S{}_{1}$,
then $\mu_{1}(S_{1}',S_{2},F)\leq\mu_{1}(S_{1},S_{2},F)$. The same
applies for the second player.
\end{ax}

\begin{ax}There is some way of restricting both players to single strategies
without increasing the value to player one of the game. That is,
there exist $s_{1}\in S_{1},s_{2}\in S_{2}$ such that $\mu_{1}(\{s_{1}\},\{s_{2}\},F)\leq\mu_{1}(S_{1},S_{2},F)$.
The same applies for the second player.
\end{ax}

It follows from those axioms that the bargaining solution $\mu$ will be the NBS with the threat game disagreement point. Note that the
axioms and Nash's proof require a separate set for disagreement strategies,
so this does not directly translate to my setting. However, it seems plausible that one may be able to extend the result.

Note that, even in the two-player case, the NBS with the threat point can be worse for a player than a Nash equilibrium.
\begin{example}\label{nbs-worse-than-ne}
Take the game with two players $1,2$ and actions \(a_1,a_2,a_3\) and \(b_1,b_2\), respectively, given by \Cref{tbl:6} (a). 
\begin{table}
\begin{centering}
\subfloat[]{\begin{centering}
\begin{tabular}{|c|c|c|c|}
\hline 
 & $a_{1}$  & $a_{2}$  & $a_{3}$\tabularnewline
\hline 
\hline 
$b_{1}$  & $4,4$  & $2,5$  & $-4,2$\tabularnewline
\hline 
$b_{2}$  & $5,2$  & $3,3$  & $-3,2$\tabularnewline
\hline 
\end{tabular}
\par\end{centering}
}\subfloat[]{\begin{centering}
\begin{tabular}{|c|c|c|c|}
\hline 
 & $a_{1}$  & $a_{2}$  & $a_{3}$\tabularnewline
\hline 
\hline 
$b_{1}$  & $7,2$  & $5,3$  & $-1,0$\tabularnewline
\hline 
$b_{2}$  & $8,0$  & $6,1$  & $0,0$\tabularnewline
\hline 
\end{tabular}
\par\end{centering}
}
\par\end{centering}
\caption{Two-player bargaining game with (a) the payoffs for both players given all pure strategy profiles, and (b) the payoffs normalized by the threat point.}
\label{tbl:6} 
\end{table}
Here, the threat game disagreement point would be $(-3,2)$, since given actions \((a_3,b_2)\) as disagreement point, none of the players can change their action to improve their bargaining outcome. Normalizing by this point leads to the
payoffs in \Cref{tbl:6} (b). The feasible set, alongside the relevant points, is illustrated in \Cref{fig:2}.
\begin{figure}
\begin{centering}
\includegraphics[width=0.5\textwidth]{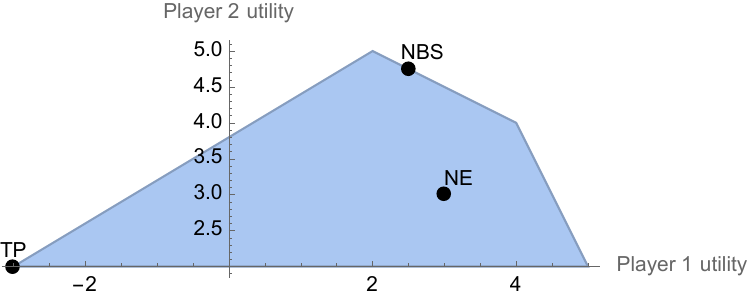}
\par\end{centering}
\caption{Feasible set for \Cref{nbs-worse-than-ne}, with threat game disagreement point (TP), Nash equilibrium (NE), and the Nash bargaining solution (NBS) based on the threat point.}
\label{fig:2} 
\end{figure}
One can calculate the NBS as the point $(5.5,2.75),$ which is worse for player
$1$ than the Nash equilibrium $(3,3)$. 
\end{example}

An important question when it comes to ECL is whether there is a dependency equilibrium supporting the NBS with the threat point. This gives at least some basic plausibility to joint beliefs that imply this compromise outcome. Despite it potentially being worse than a Nash equilibrium, this is the case.
\begin{prop}
    Let \(\alpha\in\mathcal{A}\) be a strategy profile corresponding to the NBS with the threat game disagreement point. Then \(\alpha\) is a dependency equilibrium.
\end{prop}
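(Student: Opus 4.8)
The plan is to apply \Cref{spohn5-continuous}, taking the NBS profile $\alpha$ from the statement in the role of its $\alpha$, and a strategy profile realizing the threat point in the role of its $\beta$. Let $d\in F(G)$ be the threat game disagreement point and let $\gamma\in\mathcal{A}$ be a strategy profile realizing it, i.e., $EU_{t}(\gamma)=d_{t}$ for all $t\in T$ (existence of both $d$ and $\gamma$ is assumed, as the paper notes). By \Cref{defn-nbs-bayesian} and the convention fixed above for $\mu^{\nu}$, the point $x^{*}:=\mu^{\nu}(F(G),d)$ lies in $F(G)$, so the profile $\alpha\in\mathcal{A}$ in the statement satisfies $EU_{t}(\alpha)=x^{*}_{t}$ for all $t\in T$. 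By \Cref{spohn5-continuous}, it then suffices to establish
\[EU_{t}(\alpha)\geq EU_{t}(\gamma_{-t},\beta'_{t})\qquad\text{for all }t\in T,\ \beta'_{t}\in\mathcal{A}_{t}.\]

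To do so, fix $t\in T$ and $\beta'_{t}\in\mathcal{A}_{t}$, and set $d':=(EU_{t'}(\gamma_{-t},\beta'_{t}))_{t'\in T}$, which lies in $F(G)$ since $(\gamma_{-t},\beta'_{t})\in\mathcal{A}$. The defining property of the threat point, applied to the deviation $\beta'_{t}$ of type $t$ from $\gamma$, yields $\mu^{\nu}_{t}(F(G),d)\geq\mu^{\nu}_{t}(F(G),d')$: no type can raise its NBS payoff by unilaterally changing its disagreement action. On the other hand, $\mu^{\nu}(F(G),d')$ is by construction an element of $F(G)^{\geq d'}$ — the argmax defining $\mu^{\nu}$ ranges over exactly this set — so $\mu^{\nu}_{t}(F(G),d')\geq d'_{t}$, which is just weak individual rationality of the NBS with respect to $d'$. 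Chaining,
\[EU_{t}(\alpha)=x^{*}_{t}=\mu^{\nu}_{t}(F(G),d)\geq\mu^{\nu}_{t}(F(G),d')\geq d'_{t}=EU_{t}(\gamma_{-t},\beta'_{t}),\]
which is precisely the hypothesis of \Cref{spohn5-continuous}. Hence $\alpha$ is a dependency equilibrium.

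There is no deep obstacle: the whole argument is the observation that the threat point is defined exactly so that unilateral deviations in the disagreement profile never increase the NBS payoff, and that the NBS is weakly individually rational against whatever disagreement point it is computed from, so together these reconstruct the premise of the generalized folk theorem. The only points requiring a little care are bookkeeping: checking $d'\in F(G)$ so that $\mu^{\nu}(F(G),d')$ is well-defined; recording that $x^{*}=\mu^{\nu}(F(G),d)$ really is the expected-utility vector of the chosen profile $\alpha$; and verifying that the convention fixed before the threat-point definition — computing the product in the NBS only over the coordinate set $P$ of types admitting gains from trade, while still requiring membership in $F(G)^{\geq d'}$ — is what guarantees $\mu^{\nu}_{t}(F(G),d')\geq d'_{t}$ also for types $t\notin P$.
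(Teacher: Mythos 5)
Your proof is correct and follows essentially the same route as the paper's: both reduce the claim to the hypothesis of Theorem~\ref{spohn5-continuous} and verify it using exactly the two facts that the threat-point definition forbids any type from improving $\mu^{\nu}_{t}$ by a unilateral deviation in the disagreement profile, and that the NBS is weakly individually rational with respect to whatever disagreement point it is computed from. The only difference is presentational — you chain the inequalities directly where the paper argues by contradiction — and your extra remark about why individual rationality holds even for types outside the gains-from-trade set $P$ is a small bookkeeping point the paper leaves implicit.
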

\begin{proof}
    Let \(\beta\) be the strategy corresponding to the threat point \(d\). We show that \(EU_t(\alpha)\geq EU_t(\beta_{-t},\beta_t')\) for all \(t\in T\) and \(\beta_t'\in\mathcal{A}_t\). Then the result follows using \Cref{spohn5-continuous}.

    Towards a contradiction, assume that there exists \(t\in T\) and \(\beta'_t\) with \(EU_t(\alpha)<EU_t(\beta_{-t},\beta'_t)\). Then, defining \(d':=(EU_{t'}(\beta_{-t},\beta'_t))_{t'\in T}\), we have \(\mu^\nu_t(F(G),d')\geq d'_t>EU_t(\alpha)=\mu^\nu_t(F(G),d)\). But this is a contradiction to the definition of a threat point \(d\).
\end{proof}

One problem with the threat point in conventional bargaining is that it supposes an ability to commit to a non-equilibrium action in case no agreement is reached. Insofar as humans cannot credibly commit to certain actions, this suggests that it may not be an appropriate solution concept for bargaining problems between humans. Another concern with the threat point is that it potentially leads to an agreement reached through coercion. It seems reasonable to assume that rational agents should refuse to give in to such coercion. Therefore, if the opponent commits to pursuing a threat in case no agreement is reached, one should not take this as a disagreement point for evaluating gains from trade.\footnote{Note that, in general, the distinction between extortion and a fair trade depends on some assignment
of a default outcome \parencite[][]{armstrong2016extortion}. In an additively separable game, the Nash equilibrium is a plausible non-threat default outcome, but it leads to coalitional instability. I will turn to defining an alternative non-threat default outcome in the next section.} It is unclear how these considerations apply to ECL, though it seems plausible that threats should be even less relevant to ECL than to conventional bargaining.

Overall, disagreement points are an important area of further study for ECL. Some recent work on threat-resistant bargaining may be particularly relevant \parencite{diffractor2022rose}. Moreover, it would be interesting to investigate acausal bargaining models to gain insights into the question \parencite[e.g.][]{diffractor2018cooperative,kosoy2015superrationality}.

\subsection{Coalitional stability}
\label{fairness-and-coalitional}

Another issue with ECL is coalitional
stability. In a \emph{coalitional game} \parencite[][Pt.~4]{osborne1994course}, players can choose to cooperate with a smaller coalition (subset of players), ignoring the remaining players. This is different from a bargaining game, where all players have to agree to a compromise. A bargaining solution is coalitionally stable if no coalition can unilaterally ensure higher payoffs for their members. In the ECL case, it seems possible for superrationalists
to choose to cooperate with a subset of players rather than with all players (the ``grand coalition''). Hence, coalitional stability is an important desideratum for a bargaining solution in the ECL case.\footnote{Issues with coalitional stability in ECL were also informally discussed by \textcite{gloor2018commenting2}.}

In the following, I will focus on a complete information bargaining model for simplicity. To formalize coalitional stability, let $P\subseteq N$ be an arbitrary coalition. Then we define $\nu(P)\subseteq\mathbb{R}^{n}$ as the set of payoffs $x\in\mathbb{R}^{n}$ for all players
such that the players in $P$ can achieve at least as much for themselves via a collective action. Depending on assumptions about the responses by the remaining players, this can be formalized in different ways, leading to different functions \(\nu\). In any case, we have \[\nu(N)=\{x\in\mathbb{R}^n\mid \exists y\in F(B)\colon \forall i\in N\colon x_i\leq y_i\}.\] 
Given a function \(\nu\), the \emph{core} \(C^\nu(B)\) is defined as the set of all payoffs \(x\in\nu(N)\) such that no coalition can guarantee their members strictly higher payoffs.
\begin{defn}[Core]
 The  \emph{core} of the bargaining game \(B\) with respect to \(\nu\) is the set
 \[C^\nu(B):=\{x\in \nu(N)\mid \forall P\subseteq N\colon \forall y\in \nu(P)\colon\exists i\in P\colon x_i\geq y_i\}.\]
 \end{defn}

A standard definition for \(\nu\) is the set of \(\alpha\)-effective vectors, which assumes worst-case responses by the remaining players. Formally, $x\in\nu^\alpha(P)$
if and only if there is $\sigma_{P}\in\prod_{i\in P}\Sigma_{i}$ such
that for all $\sigma_{-P}\in\prod_{j\in N\setminus P}\Sigma_{j}$,
 $u_{i}(\sigma_{P},\sigma_{-P})\geq x_{i}$ for all $i\in P$. The corresponding core \(C^\alpha(B)\) is called the \(\alpha\)-core.
This definition allows for threats to discourage formation of a coalition. In general, it is unclear whether unfriendly actions like
these should play a role (see \Cref{subsec:The-question-of}).

I also consider another way to define \(\nu\) that does not involve outright threats. For simplicity, I assume additively separable utility functions. Assume that there is some worst case payoff matrix \(A\in\mathbb{R}^{n,n}\), specifying for each player \(i\in N\) payoffs \(A_{i,j}\in \{x_{i,j}\mid x_i\in F_{i}(B)\}\) they may produce for a player \(j\in N\), if they are left out of the coalition. Then I define \(\nu^A\) as the set of \(A\)-effective vectors, via \(x\in \nu^A(P)\) if there exists \(y_j\in F_j(B)\) for all \(j\in P\) such that for each coalition member \(i\in P\), we have
\[x_i\leq \sum_{j\in P}y_{j,i}+\sum_{j'\in N\setminus P}A_{j',i}.\]
That is, we assume coalition members contribute payoffs \(y_{j,i}\) and the remaining players payoffs \(A_{j',i}\) to player \(i\). For instance, \(A\) may represent Nash equilibrium payoffs, i.e., \(A_i\in\argmax_{x_i\in F_i(B)}x_{i,i}\) for each \(i\in N\). The \emph{\(A\)-core} \(C^A(B)\) is defined analogously to the \(\alpha\)-core, but using the \(A\)-effective vectors \(\nu^A\). In the case where \(A\) represents the Nash equilibrium, I also write \(C^{NE}(B)\) for the Nash equilibrium core.

Note that by definition, the \(\alpha\)-core is the largest possible core.
\begin{prop}\label{d-core-alpha-core}
    Let \(B\) be a bargaining problem with additively separable utilities. Then for any \(A\in\mathbb{R}^{n,n}\) with \(A_{i,j}\in \{x_{i,j}\mid x_i\in F_{i}(B)\}\) for \(i,j\in N\), we have \(C^A(B)\subseteq C^\alpha(B)\).
\end{prop}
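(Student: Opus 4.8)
The plan is to prove the inclusion by comparing the two coalition worth functions $\nu^\alpha$ and $\nu^A$ directly and then using the (order-reversing) monotonicity of the core in these functions. First I would record that $\nu^\alpha(N)=\nu^A(N)=\nu(N)$: for $P=N$ there are no outside players, so in either definition a vector $x$ is effective exactly when it is dominated coordinatewise by some $z\in F(B)=\sum_{i\in N}F_i(B)$ (using additive separability). The core of $B$ with respect to any $\nu$ consists of the $x\in\nu(N)$ such that for every $P$ and every $y\in\nu(P)$ there is $i\in P$ with $x_i\geq y_i$. Hence, once I show the key lemma that $\nu^\alpha(P)\subseteq\nu^A(P)$ for every $P\subseteq N$, the conclusion follows immediately: if $x\in C^A(B)$ then $x\in\nu^A(N)=\nu(N)$, and for every $P$ and every $y\in\nu^\alpha(P)\subseteq\nu^A(P)$ there is $i\in P$ with $x_i\geq y_i$, so $x\in C^\alpha(B)$.

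To prove $\nu^\alpha(P)\subseteq\nu^A(P)$, take $x\in\nu^\alpha(P)$ with witnessing profile $\sigma_P=(\sigma_i)_{i\in P}$, so $u_i(\sigma_P,\sigma_{-P})\geq x_i$ for every $i\in P$ and \emph{every} $\sigma_{-P}\in\prod_{j'\in N\setminus P}\Sigma_{j'}$. The matrix $A$ specifies, for each outside player $j'\in N\setminus P$, a row $A_{j'}\in F_{j'}(B)$, so by definition of $F_{j'}(B)$ there is a mixed strategy $\sigma^A_{j'}\in\Sigma_{j'}$ with $u_{i,j'}(\sigma^A_{j'})=A_{j',i}$ for all $i\in N$. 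Apply $\alpha$-effectiveness of $x$ against this particular response $\sigma^A_{-P}:=(\sigma^A_{j'})_{j'\notin P}$; by additive separability $u_i(\sigma_P,\sigma^A_{-P})=\sum_{j\in P}u_{i,j}(\sigma_j)+\sum_{j'\in N\setminus P}A_{j',i}$, hence
\[
x_i\ \le\ \sum_{j\in P}u_{i,j}(\sigma_j)+\sum_{j'\in N\setminus P}A_{j',i}\qquad(i\in P).
\]
Setting $y_j:=(u_{k,j}(\sigma_j))_{k\in N}\in F_j(B)$ for $j\in P$ gives $y_{j,i}=u_{i,j}(\sigma_j)$, so the displayed inequality is precisely the defining condition for $x\in\nu^A(P)$, completing the lemma.

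The step that needs the most care — the main obstacle — is the reduction of the matrix $A$ to an honest strategy profile of the outside players: I am reading the hypothesis on $A$ as saying each whole row $A_{j'}$ is a point of $F_{j'}(B)$ (the reading consistent with, e.g., $A$ being a Nash-equilibrium response $A_{j'}\in\argmax_{x_{j'}\in F_{j'}(B)}x_{j',j'}$), so that a \emph{single} mixed strategy $\sigma^A_{j'}$ realizes all of $A_{j',\cdot}$ at once. This is exactly what legitimizes treating $\sigma^A_{-P}$ as one of the responses quantified over in the definition of $\nu^\alpha$, and it is the formal content of the intuition stated just before the proposition: $A$-effectiveness only has to beat the one fixed outside response $A$, whereas $\alpha$-effectiveness must beat every outside response, so the latter is the more demanding requirement and defines the smaller set. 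The remaining work is pure bookkeeping: keeping the index conventions for $u_{i,j}$ versus $A_{i,j}$ straight, and noting that additive separability is used both to split $u_i(\sigma_P,\sigma^A_{-P})$ and to obtain $F(B)=\sum_{i\in N}F_i(B)$, which underlies $\nu^\alpha(N)=\nu^A(N)=\nu(N)$.
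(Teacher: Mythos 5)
Your proposal is correct and follows essentially the same route as the paper: both arguments show $\nu^\alpha(P)\subseteq\nu^A(P)$ by instantiating the universal quantifier over outside responses in the definition of $\alpha$-effectiveness with the particular response realizing the rows of $A$, and then conclude the core inclusion from the order-reversal. Your explicit remark that each row $A_{j'}$ must be realizable by a single mixed strategy is the same (implicit) reading the paper uses when it says ``let $\sigma_j$ be the strategy corresponding to $A_j$.''
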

\begin{proof}
    Let \(x\in C^A(B)\) arbitrary. Let \(P\subseteq N\) and \(y\in \nu^\alpha(P)\). To show \(x\in C^\alpha(B)\), we have to show that there exists at least one \(i\in P\) such that \(x_i\geq y_i\). We know by definition of \(\nu^\alpha\) that there exists \(\sigma_{P}\in \Sigma_P\) such that \(y_i\leq u_i(\sigma_P,\sigma_{-P})\) for all \(i\in N\) and \(\sigma_{-P}\in \Sigma_{-P}\). In particular, let \(\sigma_{j}\) be the strategy corresponding to \(A_{j}\) for \(j\notin P\). Then we know
    that
    \(y_i\leq u_i(\sigma_P,\sigma_{-P})\) for all \(i\in N\). Letting \(\hat{x}_j\in F_j(B)\) corresponding to \(\sigma_j\) for \(j\in P\), it follows
   \[y_i\leq u_i(\sigma)=\sum_{j\in P}\hat{x}_{j,i} + \sum_{j'\in N\setminus P}A_{j',i}\]
   for all \(i\in N\). Hence, we have \(y\in \nu^A(P)\).
   By the definition of \(C^A(B)\), there thus exists \(i\in P\) such that \(x_i\geq y_i\). This concludes the proof.
\end{proof}

Now we turn to analyzing the existence of core allocations. First, we show that the NBS is not necessarily in the \(\alpha\)-core, even if the core is nonempty.

\begin{example}\label{example-nbs-notin-core}
Consider the game of three players $1,2,3$, where each player has
three options $1,2,3$. The (additively separable)
utilities generated by each player taking any of the options are specified
in \Cref{tbl:9}.

\begin{table}
\begin{centering}
\begin{tabular}{|c|c|c|c|}
\hline 
Player\textbackslash Options & $1$ & $2$ & $3$\tabularnewline
\hline 
\hline 
$1$ & $(3,0,0)$ & $(2.5,2.5,0)$ & $(2,2,2)$\tabularnewline
\hline 
$2$ & $(0,3,0)$ & $(2.5,2.5,0)$ & $(2,2,2)$\tabularnewline
\hline 
$3$ & $(0,0,3)$ & $(0.5,0.5,0)$ & $(0.5,0.5,0)$\tabularnewline
\hline 
\end{tabular}
\par\end{centering}
\caption{Utility vectors \(u_{i}\) generated by three options \(a_i=1,2,3\) for each player \(i=1,2,3\), corresponding to the vertices spanning the individual feasible sets \(F_i(B)\) in \Cref{example-nbs-notin-core}.}
\label{tbl:9}
\end{table}
The unique Nash equilibrium disagreement point is \(d=(3,3,3)\). This is also the threat game disagreement point---no matter the actions of the other players, a player can always increase their bargaining position by moving to action \(1\) and thus raising their own disagreement payoff and reducing that of the other players.

Now, if $1$ and $2$ form a coalition, they can both guarantee each other a payoff of \(5\) each, so
\[\nu^\alpha(\{1,2\})=\{x\in\mathbb{R}^3\mid x_1\leq 5, x_2\leq 5\}.\]
However, the NBS payoffs can be computed as \(x=(4.33,4.33,5.67)\). While both $1$ and $2$ can benefit $3$
well, $3$
cannot benefit $1$ or $2$ well, so they are left worse off by joining the grand coalition. Hence, \(x\notin \nu^\alpha(\{1,2\})\), and the NBS is unstable.

The same goes for the KSBS, since $3$'s ideal point is better than $1$ and $2$'s, so
the KSBS would grant $3$ the highest surplus of the players. In particular, the KSBS does not seem fairer than the NBS in this example, giving the highest surplus to a player that is not contributing much.

However, the \(\alpha\)-core (and thus also the \(A\)-core for any payoff matrix \(A\)) is nonempty. For instance, consider the payoff vector \((5,5,3)\). This is feasible via the strategy profile \((2,2,1)\), and one can show that no coalition could guarantee strictly higher payoffs for all of their members.
\end{example}

To address the instability issue, one could try to find a bargaining solution that always picks elements from the core. For instance, if one chooses the disagreement payoff in the core, then the NBS will always be in the core as well. While the \(\alpha\)-core can be empty in general games \parencite[][ch.~13.2]{osborne1994course}, if we assume additive separability, the \(\alpha\)-core is always nonempty. This follows as a corollary from a theorem by \textcite{scarf1967core}. Similar results have been shown in the literature \parencite{scarf1967core}, but I have not found this exact result upon a cursory search, so I am providing it here.

Since the \(\alpha\)-core includes possible threats, which I regard as undesirable, I show the result for a somewhat more strict notion of core. For a coalition \(P\subseteq N\), define \(\Sigma^H_P\subseteq \prod_{i\in P}\Sigma_i\) as the set of Pareto optimal strategies for the players in \(P\). That is, \(\sigma_P\in \Sigma_P^H\) if and only if for
\(x:=\sum_{i\in P}u_i(\sigma_i)\) and \(y:=\sum_{i\in P}u_i(\sigma'_i)\) for any \(\sigma'_P\in \Sigma_P\), if \(y_j\geq x_j\) for all \(j\in P\), then \(y_j=x_j\) for all \(j\in P\).
We then define \(A\) as the worst-case Pareto optimal payoffs in any coalition. That is,
\begin{equation}\label{a-payoffs}A_{i,j}:=\min_{P\subseteq N\text{ s.t. }i\in P}\min_{\sigma_P\in\Sigma_P^H}u_{i,j}(\sigma_i)\end{equation}
for \(i,j\in N\). If we assume that players outside of a coalition are allowed to form their own arbitrary coalitions and Pareto optimal compromises, but we do not allow any threats, this is the relevant notion of core.

The \(A\)-core as defined above, and thus by \Cref{d-core-alpha-core} also the \(\alpha\)-core, is nonempty in additively separable games.

\begin{restatable}{thm}{corenonempty}\label{core-nonempty}
Let \(B\) be a bargaining game with additively separable utility functions, as defined in \Cref{ecl-bargaining-problem-setup}. Let \(A\in\mathbb{R}^{n,n}\) be defined as in \Cref{a-payoffs}. Then the \(A\)-core \(C^A(B)\) is nonempty.
\end{restatable}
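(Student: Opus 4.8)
The plan is to present the cooperative game with characteristic function $\nu^A$ as a \emph{balanced} NTU game in the sense of \textcite{scarf1967core} and invoke Scarf's theorem, which guarantees a nonempty core for such games. Concretely, I would take the cylinder sets $V(P):=\nu^A(P)\subseteq\mathbb{R}^n$ as the coalitional feasible sets. The regularity hypotheses of Scarf's theorem — each $V(P)$ nonempty, closed, comprehensive in the coordinates indexed by $P$, with bounded Pareto frontier — are immediate here: each $F_j(B)$ is the image of the compact simplex $\Sigma_j$ under a linear map, hence compact and convex, so the achievable payoffs $\bigl\{(\sum_{j\in P}y_{j,i})_{i\in P}\mid y_j\in F_j(B)\bigr\}$ form a compact set, and $\nu^A(P)$ is its comprehensive (downward) closure after translating by the fixed vector $\bigl(\sum_{j'\notin P}A_{j',i}\bigr)_{i\in P}$. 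Moreover $V(N)=\nu^A(N)=\nu(N)$ since $N\setminus N=\emptyset$ and $F(B)=\sum_{j\in N}F_j(B)$, so any core element is automatically feasible. The entire content is thus the verification of balancedness.

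For balancedness, let $\mathcal{B}$ be a balanced collection with weights $(\delta_P)_{P\in\mathcal{B}}$, $\delta_P\geq0$, $\sum_{P\in\mathcal{B}:\,i\in P}\delta_P=1$ for every $i\in N$, and suppose $x\in\nu^A(P)$ for all $P\in\mathcal{B}$; I must show $x\in\nu(N)$. For each $P$, fix a witness $(y^P_j)_{j\in P}$ with $y^P_j\in F_j(B)$ and $x_i\leq\sum_{j\in P}y^P_{j,i}+\sum_{j'\notin P}A_{j',i}$ for $i\in P$. First I would argue that without loss of generality the witness can be chosen so that $\sum_{j\in P}y^P_j$ is \emph{strictly Pareto optimal} for coalition $P$ on the coordinates in $P$: replace it by a maximizer of $w\mapsto\sum_{i\in P}w_i$ over $\bigl\{w\in\sum_{j\in P}F_j(B)\mid w_i\geq\sum_{j\in P}y^P_{j,i}\text{ for all }i\in P\bigr\}$, which exists by compactness, is strictly Pareto optimal on $P$, and still dominates the original on $P$. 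Such a witness is induced by some $\sigma_P\in\Sigma^H_P$, so by the definition of $A$ in \Cref{a-payoffs} we get $y^P_{j,i}\geq A_{j,i}$ for every $j\in P$ and every $i\in N$.

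Now define $z_j:=\sum_{P\in\mathcal{B}:\,j\in P}\delta_P\,y^P_j$ for each $j\in N$; since $\sum_{P\ni j}\delta_P=1$ and $F_j(B)$ is convex, $z_j\in F_j(B)$, hence $z:=\sum_{j\in N}z_j\in F(B)$. It remains to show $z_i\geq x_i$ for each fixed $i$. Writing $z_i=\sum_{P\in\mathcal{B}}\delta_P\sum_{j\in P}y^P_{j,i}$ and splitting according to whether $i\in P$: the part over $P\ni i$ is at least $\sum_{P\ni i}\delta_P\bigl(x_i-\sum_{j'\notin P}A_{j',i}\bigr)=x_i-\sum_{j'\ne i}A_{j',i}\sum_{P:\,i\in P,\,j'\notin P}\delta_P$, using the blocking inequalities and $\sum_{P\ni i}\delta_P=1$; the part over $P\not\ni i$ equals $\sum_{j\ne i}\sum_{P:\,j\in P,\,i\notin P}\delta_P\,y^P_{j,i}\geq\sum_{j\ne i}A_{j,i}\sum_{P:\,j\in P,\,i\notin P}\delta_P$, using $y^P_{j,i}\geq A_{j,i}$ and $\delta_P\geq0$. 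The two correction sums cancel because for $i\ne j$ the balancedness condition applied at $i$ and at $j$ gives $\sum_{P:\,j\in P,\,i\notin P}\delta_P=1-\sum_{P\supseteq\{i,j\}}\delta_P=\sum_{P:\,i\in P,\,j\notin P}\delta_P$. Hence $z_i\geq x_i$ for all $i$, i.e.\ $x\in\nu(N)$, so the game is balanced, and Scarf's theorem yields $C^A(B)\neq\emptyset$.

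The step I expect to be the crux is controlling the outsider contributions $\sum_{j'\notin P}A_{j',i}$: a priori these spoil the clean "average the witnesses" computation that works for games built purely from summing convex sets, and the two ingredients that rescue it are (a) the reduction to Pareto-optimal coalition witnesses, which is exactly what forces $y^P_{j,i}\geq A_{j,i}$ and lets the $P\not\ni i$ terms absorb the correction, and (b) the weight identity $\sum_{P:\,j\in P,\,i\notin P}\delta_P=\sum_{P:\,i\in P,\,j\notin P}\delta_P$ coming from balancedness. Verifying that Scarf's hypotheses hold verbatim (closedness, comprehensiveness, boundedness of the Pareto frontier) is routine given compactness and convexity of the $F_j(B)$ and should be dispatched quickly.
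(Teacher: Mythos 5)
Your proposal is correct and follows essentially the same route as the paper: verify that $\nu^A$ satisfies the hypotheses of Scarf's theorem, then prove balancedness by replacing each coalition's witness with a Pareto-optimal one (so that $y^P_{j,i}\geq A_{j,i}$ by the definition of $A$), averaging the witnesses with the balancing weights inside each convex $F_j(B)$, and cancelling the outsider terms via the balancedness identities. The only differences are presentational — you make the ``w.l.o.g.\ Pareto-optimal witness'' step explicit and organize the final inequality through the symmetric weight identity $\sum_{P:\,j\in P,\,i\notin P}\delta_P=\sum_{P:\,i\in P,\,j\notin P}\delta_P$, where the paper does the equivalent bookkeeping with indicator functions.
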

\begin{proof}
In \Cref{appendix-proof-of-theorem-core-nonempty}.
\end{proof}

One may ask whether the same would hold for the Nash equilibrium core. Unfortunately, the next example shows that the Nash equilibrium core can be empty, even given additive separability. The intuition behind this is that sometimes, if two players cooperate, this can lead to negative externalities for a third party. However, the Nash equilibrium point does not take this possibility of cooperation between two players into account. Hence, the two players are better off ignoring any agreement that gives everyone at least their Nash equilibrium payoffs.

\begin{example}\label{example-empty-ne-core}
Consider a bargaining game with three players, \(N=\{1,2,3\}\), with payoffs as in \Cref{tab:10}. There is a unique Nash equilibrium in which all players play action \(1\) and receive utilities \((5,5,15)\). However, players \(1\) and \(2\) can also coordinate on action \(2\), which serves as a compromise between the two and produces \(8\) utility for both. Intuitively, we can imagine that \(1\) and \(2\) share some common goal that they can choose maximize instead of their own idiosyncratic goals. However, player \(3\) benefits from the players optimizing their idiosyncratic goals, and if \(1\) and \(2\) cooperate, player \(3\) loses out.

I added a third strategy for the first two players to make sure an option \(x\in F(B)\) exists that strictly dominates the Nash equilibrium disagreement point, but this is inessential to the example. (Similarly, the fact that player \(3\) only has one option that is not Pareto dominated is inessential and can easily be relaxed.)

Now let \(A\) correspond to the Nash equilibrium strategies. Then
\[A=\begin{bmatrix}
    5&0&5\\
    0&5&5\\
    0&0&5
\end{bmatrix}.\]
The coalition \(P=\{1,2\}\) can guarantee its members a payoff of \(8\) each, so
\[\nu^{NE}(\{1,2\})={\{x\in\mathbb{R}^3\mid x_1,x_2\leq 8\}}.\] Moreover, we have
\[\nu^{NE}(\{3\})=\{x\in\mathbb{R}^3\mid x_3\leq 15\},\]
since \(A_{1,3}+A_{2,3}+A_{3,3}=15\), and \(3\) cannot improve upon this payoff by changing their action.

It follows from the above that any payoff vector \(x\in C^{NE}(B)\) has to satisfy \(x_3\geq 15\) and \(x_1,x_2\geq 8\). However, such a payoff vector \((8,8,15)\) is not in the feasible set and thus impossible to obtain. 
The only way to produce \(8\) utility for both \(1\) and \(2\) is for both players to play \(2\). But then player \(3\) can have at most \(5<15\) utility. Hence, \(C^{NE}(B)=\emptyset\).

\begin{table}
\begin{centering}
\begin{tabular}{|c|c|c|c|}
\hline 
Player\textbackslash Options & $1$ & $2$ & $3$ \tabularnewline
\hline 
\hline 
$1$ & $(5,0,5)$ & $(4,4,0)$ & $(3,3,6)$\tabularnewline
\hline 
$2$ & $(0,5,5)$ & $(4,4,0)$& $(3,3,6)$ \tabularnewline
\hline 
$3$ & $(0,0,5)$ & $(0,0,0)$& $(0,0,0)$ \tabularnewline
\hline 
\end{tabular}
\par\end{centering}
\caption{Utility vectors \(u_{i}\) generated by options \(a_i=1,2,3\) for each player \(i=1,2,3\), corresponding to the vertices spanning the individual feasible sets \(F_i(B)\) in \Cref{example-empty-ne-core}.}
\label{tab:10}
\end{table}
\end{example}

\begin{prop}
    There exists a bargaining game \(B\) with additively separable utilities such that the Nash equilibrium core \(C^{NE}(B)\) is empty.
\end{prop}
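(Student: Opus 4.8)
The plan is to exhibit a witness and cite the work already done: take $B$ to be the three-player game of \Cref{example-empty-ne-core}, with payoffs as in \Cref{tab:10}. Additive separability is immediate from the presentation: each option $a_j$ of player $j$ comes with a utility vector $u_j\in\mathbb{R}^3$, so setting $u_{i,j}(a_j)$ equal to the $i$-th coordinate of that vector gives $u_i(a)=\sum_{j\in N}u_{i,j}(a_j)$. Thus the proposition reduces to the assertion, established in \Cref{example-empty-ne-core}, that $C^{NE}(B)=\emptyset$, and I would reproduce that verification.

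First I would record the data of the game. It has a unique Nash equilibrium in which every player chooses option $1$, giving payoff vector $(5,5,15)$; the associated Nash-contribution matrix $A$ has rows $(5,0,5),(0,5,5),(0,0,5)$. Using the definition of $\nu^{NE}=\nu^{A}$ one computes $\nu^{NE}(\{i\})=\{x:x_i\le 5\}$ for $i=1,2$, $\nu^{NE}(\{3\})=\{x:x_3\le 15\}$ (player $3$ can secure at most $5$ for itself and receives $A_{1,3}+A_{2,3}=10$ from the others), and $\nu^{NE}(\{1,2\})$, which is the downward closure of the Minkowski sum $F_1(B)+F_2(B)$ of the individual feasible sets with player $3$'s contribution frozen at its Nash value. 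In coordinates $(x_1,x_2)$ this sum has Pareto frontier the polyline through $(9,4)$, $(8,8)$, $(4,9)$, with $(8,8)$ attained by players $1,2$ both choosing their common option $2$.

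Then I would derive the contradiction. Any $x\in C^{NE}(B)$ must have $x_1,x_2\ge 5$ (from singletons $\{1\},\{2\}$), $x_3\ge 15$ (from singleton $\{3\}$), and must not be dominated on coordinates $1,2$ by a point of $\nu^{NE}(\{1,2\})$, which forces $(x_1,x_2)$ to lie weakly above the polyline above; a short computation shows the minimum of $x_1+x_2$ over that region equals $13.75$. On the other hand, writing $c_i$ for player $i$'s contribution to player $3$, inspection of the three options of each of players $1,2$ shows that such a player contributes at most $8-\tfrac{c_i}{3}$ to the pair $\{1,2\}$ combined (the efficient trade-off between helping player $3$ and helping the coalition), so $x_1+x_2\le 16-\tfrac{c_1+c_2}{3}$; since $x_3\ge 15$ and player $3$ contributes at most $5$ to itself, $c_1+c_2\ge 10$, hence $x_1+x_2\le 16-\tfrac{10}{3}<13.75$, contradicting the previous bound. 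Therefore $C^{NE}(B)=\emptyset$.

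The main obstacle is purely the bookkeeping in the middle step: describing $\nu^{NE}(\{1,2\})$ correctly as an $A$-effective (hence downward-closed Minkowski-sum) set, and extracting the linear trade-off $x_1+x_2\le 16-\tfrac{c_1+c_2}{3}$ from the vertex data of $F_1(B),F_2(B)$. There is no conceptual difficulty, since $F(B)$ is a polytope spanned by finitely many option vectors and the whole argument is a finite comparison of attainable payoff profiles; moreover \Cref{example-empty-ne-core} has already carried out exactly this check, so the proof of the proposition amounts to little more than pointing to it while noting that its game is additively separable.
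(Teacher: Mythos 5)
Your proposal is correct and uses the same witness game and the same blocking coalitions $\{1\},\{2\},\{3\},\{1,2\}$ as the paper's \Cref{example-empty-ne-core}; the difference lies in how the blocking constraints are combined, and there your version is actually the more careful one. The paper asserts that any core element must satisfy $x_1\ge 8$ \emph{and} $x_2\ge 8$ and then observes that $(8,8,15)$ is infeasible; but the stated definition of the core, applied to $y=(8,8,\cdot)\in\nu^{NE}(\{1,2\})$, only yields the disjunction $x_1\ge 8$ or $x_2\ge 8$, and, as you note, $\nu^{NE}(\{1,2\})$ is really the downward closure (in coordinates $1,2$) of the polyline through $(9,4),(8,8),(4,9)$ rather than $\{x\mid x_1,x_2\le 8\}$. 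Your route closes this gap: the singleton constraints $x_1,x_2\ge 5$ together with non-domination by the pair coalition force $x_1+x_2\ge 13.75$, while $x_3\ge 15$ combined with the trade-off $s_i\le 8-\tfrac{c_i}{3}$ (which checks out against the vertex data $(c,s)\in\{(5,5),(0,8),(6,6)\}$ for players $1,2$) forces $x_1+x_2\le 16-\tfrac{10}{3}<13.75$. In fact even the weaker disjunction already suffices: if, say, $x_1\ge 8$, then feasibility with $x_3\ge 15$ gives $x_2\le\tfrac{38}{3}-8<5$, contradicting the singleton constraint for player $2$, so your $13.75$ computation could be shortened; but as written the argument is complete and correct.
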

\begin{proof}
See \Cref{example-empty-ne-core}.
\end{proof}

Based on the above results, one possible way to define a disagreement point that leads to a stable bargaining solution and that does not involve threats would be via
\[d_j =\max\{x_j\mid x\in \nu^A(\{j\})\}=\max_{\sigma_j\in \Sigma_j}u_{j,j}(\sigma_j) + \sum_{i\in N\setminus \{j\}}A_{i,j}\]
for any \(j\in N\) and where \(A\) is defined as in \Cref{a-payoffs}. That is, we let \(d_j\) correspond to the best possible payoff that \(j\) can attain given worst-case Pareto optimal responses by the other players.

It would be valuable to investigate coalitional stability and stable solution concepts in future work, including a more thorough review of the relevant literature on nontransferable utility coalitional games \parencite[e.g.][]{shapley1967utility,Maschler1989,Maschler1992,hart1996,Harsanyi1963}. As in the case of disagreement points, the work by \textcite{diffractor2022rose} may also be relevant.

\section{Conclusion and future work}
\label{conclusion}
In this report, I developed a game-theoretic model of ECL, making it possible to formalize many important aspects and issues with ECL. This includes agents' uncertainty about other agents in the multiverse, the problem of selecting a multiverse-wide compromise outcome, and the question of which joint beliefs to adopt over the actions of agents. There are many interesting open problems and avenues for future work:
\begin{itemize}
    \item How to model agent's default options without ECL? The choice of a disagreement point (\Cref{subsec:The-question-of}) is a fundamental issue in ECL. In particular, there is the question whether threats should play a role in selecting a multiverse-wide compromise. It  may be valuable to review the ROSE value \parencite{diffractor2022rose} or consider acausal bargaining models \parencite[e.g.][]{diffractor2018cooperative,kosoy2015superrationality} in future work.
    \item Another fundamental issue is that of coalitional stability (\Cref{fairness-and-coalitional}), which is related to the problem that compromise between some parties can make other parties worse off, potentially preventing the formation a grand multiverse-wide coalition. While there always exist stable payoff allocations given additively separability, it is unclear what happens if some value systems violate this assumption. Additionally, it is an open question how to choose a stable bargaining solution. Here, it may be useful to review the literature on nontransferable utility coalitional games \parencite[e.g.][]{shapley1967utility,Maschler1989,Maschler1992,hart1996,Harsanyi1963}, as well as \textcite{diffractor2022rose}.
    \item How to assess possible dependencies between different agents, especially in the human case where no source code is available? What is the nature of these dependencies? What is the relevant reference class of agents for superrationality in ECL? Can one rigorously justify inferences such as ``if I choose the NBS, other players are likely to do so, too''? 
    \item How can acausal bargaining models inform ECL? Can we model the process of arriving at conditional beliefs about other agents' actions as some kind of bargaining procedure? If so, what is a plausible model, and how can it inform the problems discussed above?
    \item I make standard common knowledge and common prior assumptions (see \Cref{bayesian-game-formalism}), which are unrealistic in the ECL context, at least when it comes to ECL among humans. How to relax these assumptions? Assigning posterior beliefs to other players is important to assess their gains from trade. How to do this without a common prior? See \textcite{Harsanyi1967,Monderer1989-pj}.
    \item How do gains from trade diminish when agents have different models or choose different bargaining solutions? This would lead to wasted gains from trade, but it is unclear how much would get lost, and how much different value systems would be affected.\footnote{Thanks to Lukas Gloor for a comment on an earlier draft.} How robust are bargaining solutions in practice to different empirical assumptions and model parameters?
    \item An alternate approach to the one employed in this report would be to take joint distributions over actions as given, analyzing and classifying them based on the dependencies they imply. For example, a specific joint distribution could imply positive or negative correlations between more or less cooperative actions of players. One could then investigate which joint distributions enable ECL.\footnote{This approach was suggested to me by Philip Trammell.}
    \item How to deal with the infinities involved in ECL in an infinite universe, as well as the potential continuum of players and values, rather than the discrete set assumed in this report? Is there a relatively small number of discrete clusters of similar value systems, or are there as many different types as players?
    \item What is the distribution over values of superrational cooperators, and what are their beliefs? Can humans usefully make progress on this question, and if not, would superintelligent AI systems be able to do so?
\end{itemize}

The main purpose of this report is to contribute towards the development of a theory of ECL and to outline open technical and philosophical problems, rather than to introduce an applicable model. However, the Bayesian bargaining model from \Cref{sec:Bargaining-with-incomplete} could still be useful for preliminary simulations to investigate possible gains from trade. This might help estimate the potential value of ECL and inform prioritization decisions.

%In light of the potential development of superintelligent AI within the next decades, it seems plausible that the most important intervention for ECL is ensuring that such AI systems are aligned with humans' idealized philosophical views about decision theory and ECL. However, interventions to promote ECL could also backfire by exacerbating other risks from advanced AI \parencite[see][]{xu2021open}. For this reason, I am unsure which concrete interventions to recommend at this point.

\section*{Acknowledgements}
Part of the work on this report was carried out during a Summer Research Fellowship at the Centre for Effective Altruism (CEA). Special thanks go to Max Dalton, who was my supervisor at the CEA. I am grateful for support by the Center on Long-Term Risk, the Center on Long-Term Risk Fund, an Open Phil AI Fellowship, and an FLI PhD Fellowship. Moreover, I am indebted to Lennart Stern, Philip Trammell, Owen Cotton-Barratt, Caspar Oesterheld, Max Daniel, Sam Clarke, Daniel Kokotajlo, Lukas Gloor, Leon Lang, Abram Demski, and Stuart Armstrong for their invaluable discussions and feedback, as well as for their help with the mathematics and game theory in this report. Finally, I want to express my gratitude to commenters on an earlier post, where I requested input on this report \parencite{treutlein2018request}.
\printbibliography

@booklet{Oesterheld2018,
	Author = {Oesterheld, Caspar},
	Date-Added = {2019-07-18 12:47:40 +0200},
	Date-Modified = {2019-07-18 13:41:19 +0200},
	Howpublished = {Talk at Utility, Progress, and Technology. 15th Conference of the International Society for Utilitarian Studies, Karlsruhe Institute of Technology},
	Title = {Newcomb's Problem, the Prisoner's Dilemma and Large Universes -- a Consideration for Consequentialists},
	Url = {https://www.isus2018.de/menu/programme/talks-abstracts/#O},
	Year = {2018}}

@article{Fishburn1984-FISOHU,
	Author = {Peter C. Fishburn},
	Date-Added = {2019-04-20 16:25:15 +0200},
	Date-Modified = {2019-04-20 16:25:15 +0200},
	Journal = {Theory and Decision},
	Number = {1},
	Pages = {21--28},
	Publisher = {Springer},
	Title = {On Harsanyi's Utilitarian Cardinal Welfare Theorem},
	Volume = {17},
	Year = {1984}}

@article{Resnik1983,
	Author = {Resnik, Michael D.},
	Journal = {Theory and Decision},
	Number = {4},
	Pages = {309--320},
	Title = {A restriction on a theorem of {H}arsanyi},
	Volume = {15},
	Year = {1983},
}

@article{al2015evidential,
	Author = {al-Nowaihi, Ali and Dhami, Sanjit},
	Date-Added = {2018-12-31 15:54:48 +0100},
	Date-Modified = {2018-12-31 15:54:48 +0100},
	Journal = {Games},
	Number = {4},
	Pages = {637--676},
	Publisher = {Multidisciplinary Digital Publishing Institute},
	Title = {Evidential equilibria: Heuristics and biases in static games of complete information},
	Volume = {6},
	Year = {2015}}

@book{shapley1967utility,
author="Shapley, Lloyd S.",
title="Utility Comparison and the Theory of Games",
year="1967",
publisher="RAND Corporation"
}

@article{Binmore1986,
	Abstract = {This article establishes the relationship between the static axiomatic theory of bargaining and the sequential strategic approach to bargaining. We consider two strategic models of alternating offers. The models differ in the source of the incentive of the bargaining parties to reach agreement: the bargainers' time preference and the risk of breakdown of negotiation. Each of the models has a unique perfect equilibrium. When the motivation to reach agreement is made negligible, in each model the unique perfect equilibrium outcome approaches the Nash bargaining solution with utilities that reflect the incentive to settle and with the proper disagreement point chosen. The results provide a guide for the application of the Nash bargaining solution in economic modelling.},
	Author = {Ken Binmore and Ariel Rubinstein and Asher Wolinsky},
	Journal = {The RAND Journal of Economics},
	Number = {2},
	Pages = {176--188},
	Publisher = {[RAND Corporation, Wiley]},
	Title = {The Nash Bargaining Solution in Economic Modelling},
	Volume = {17},
	Year = {1986},}

@article{Nash1953,
	Author = {Nash, John F},
	Journal = {Econometrica},
	Number = {1},
	Pages = {128--140},
	Title = {Two-person cooperative games},
	Volume = {21},
	Year = {1953},
}

@misc{Armstrong2013,
	Author = {Stuart Armstrong},
	Date-Added = {2018-09-25 11:55:59 +0200},
	Date-Modified = {2018-09-25 12:04:32 +0200},
	Howpublished = {LessWrong},
	Title = {Gains from trade: Slug versus Galaxy - how much would I give up to control you?},
	Url = {https://www.lesswrong.com/posts/7kvBxG9ZmYb5rDRiq/gains-from-trade-slug-versus-galaxy-how-much-would-i-give-up},
	Year = {2013},
	Bdsk-Url-1 = {https://www.lesswrong.com/posts/7kvBxG9ZmYb5rDRiq/gains-from-trade-slug-versus-galaxy-how-much-would-i-give-up}}

@article{Roth1979b,
	Abstract = {In this note we show that a solution proposed byRaiffa for two-person bargaining games, which has recently been axiomatized byKalai/Smorodinsky, does not generalize in a straightforward manner to generaln-person bargaining games. Specifically, the solution is not Pareto optimal on the class of alln-person bargaining games, and no solution which is can possess the other properties which characterizeRaiffa's solution in the two-person case.},
	Author = {Roth, A. E.},
	Journal = {International Journal of Game Theory},
	Number = {3},
	Pages = {129--132},
	Title = {An impossibility result concerning \(n\)-person bargaining games},
	Volume = {8},
	Year = {1979},
 }

@article{Maschler1989,
	Abstract = {A new value is defined for n-person hyperplane games, i.e., non-sidepayment cooperative games, such that for each coalition, the Pareto optimal set is linear. This is a generalization of the Shapley value for side-payment games.},
	Author = {Maschler, Michael and Owen, Guillermo},
	Journal = {International Journal of Game Theory},
	Number = {4},
	Pages = {389--407},
	Title = {The consistent {S}hapley value for hyperplane games},
	Volume = {18},
	Year = {1989},
 }

@inbook{Maschler1992,
	Abstract = {In [Maschler and Owen, 1989], a new generalization of the Shapley value for a class of NTU games was introduced. The motivation was a desire to preserve as much as possible the consistency property of the Shapley value for TU games, in the sense of [Hart and Mas-Colell, 1989]. It turned out that the new value resulted from an intuitive dynamic process which was interesting also for the class of TU games. Unfortunately, the class of NTU games was quite narrow; namely, the class of hyperplane games. The purpose of this paper is to extend the definition to the general class of NTU games, whose coalition functions satisfy (essentially) the usual requirements.},
	Author = {Maschler, Michael and Owen, Guillermo},
	Booktitle = {Rational Interaction: Essays in Honor of John C. Harsanyi},
	Pages = {5--12},
	Publisher = {Springer Berlin Heidelberg},
	Title = {The consistent {S}hapley value for games without side payments},
	Year = {1992},
}

@article{Harsanyi1963,
	Author = {John C. Harsanyi},
	Journal = {International Economic Review},
	Number = {2},
	Pages = {194--220},
	Title = {A Simplified Bargaining Model for the \(n\)-Person Cooperative Game},
	Volume = {4},
	Year = {1963},
}

@article{hart1996,
	Abstract = {We present and analyze a model of noncooperative bargaining among n participants, applied to situations describable as games in coalitional form. This leads to a unified solution theory for such games that has as special cases the Shapley value in the transferable utility (TU) case, the Nash bargaining solution in the pure bargaining case, and the recently introduced Maschler-Owen consistent value in the general nontransferable utility (NTU) case. Moreover, we show that any variation (in a certain class) of our bargaining procedure which generates the Shapley value in the TU setup must yield the consistent value in the general NTU setup.},
	Author = {Sergiu Hart and Andreu Mas-Colell},
	Journal = {Econometrica},
	Number = {2},
	Pages = {357--380},
	Publisher = {[Wiley, Econometric Society]},
	Title = {Bargaining and Value},
	Volume = {64},
	Year = {1996},
}

@article{tegmark2003parallel,
	Author = {Tegmark, Max},
	Date-Added = {2018-09-19 18:02:58 +0100},
	Date-Modified = {2018-09-19 18:02:58 +0100},
	Journal = {Scientific American},
	Number = {5},
	Pages = {40--51},
	Publisher = {JSTOR},
	Title = {Parallel universes},
	Volume = {288},
	Year = {2003}}

@misc{Gloor2017,
	Author = {Lukas Gloor},
	Title = {Commenting on MSR, Part 1: Multiverse-wide cooperation in a nutshell.},
        url = {https://forum.effectivealtruism.org/posts/7MdLurJGhGmqRv25c/multiverse-wide-cooperation-in-a-nutshell},
	Year = {2017},
    Howpublished = {Effective Altruism Forum},
}

@article{BRANGEWITZ2013224,
	Abstract = {We establish a link between cooperative and competitive behavior. For every possible vector of weights of an asymmetric Nash bargaining solution there exists a market that has this asymmetric Nash bargaining solution as its unique competitive payoff vector.},
	Author = {Sonja Brangewitz and Jan-Philip Gamp},
	Journal = {Economics Letters},
	Number = {2},
	Pages = {224 - 227},
	Title = {Asymmetric Nash bargaining solutions and competitive payoffs},
	Volume = {121},
	Year = {2013},
}

@book{Roth1979a,
	Author = {Roth, A. E.},
	Publisher = {Springer Verlag},
	Series = {Lecture Notes in Economics and Mathematical Systems},
	Title = {Axiomatic Models of Bargaining},
	Volume = {170},
	Year = {1979},}

@inproceedings{Spohn2010Depen-13626,
	Author = {Spohn, Wolfgang},
	Booktitle = {Logic and the foundations of game and decision theory - {LOFT 8}: 8th International Conference},
	Number = {6006},
	Pages = {135--150},
	Publisher = {Springer},
	Series = {Lecture Notes in Computer Science: Lecture Notes in Artificial Intelligence},
	Title = {From {N}ash to Dependency Equilibria},
	Year = {2010},
}

@article{Kalai1977,
	Abstract = {A 2-person fixed threat bargaining problem is considered. A full characterization of the solutions which satisfy all of Nash's axioms except for the axiom of symmetry is given. It is also shown that these nonsymmetric Nash solutions are precisely the solutions that arise from symmetric Nash solutions through replications.},
	Author = {Kalai, Ehud},
	Journal = {International Journal of Game Theory},
	Number = {3},
	Pages = {129--133},
	Title = {Nonsymmetric Nash solutions and replications of 2-person bargaining},
	Volume = {6},
	Year = {1977},
}

@article{Aumann1987,
	Abstract = {If it is common knowledge that the players in a game are Bayesian utility maximizers who treat uncertainty about other players' actions like any other uncertainty, then the outcome is necessarily a correlated equilibrium. Random strategies appear as an expression of each player's uncertainty about what the others will do, not as the result of willful randomization. Use is made of the common prior assumption, according to which differences in probability assessments by different individuals are due to the different information that they have (where "information" may be interpreted broadly, to include experience, upbringing, and genetic makeup). Copyright 1987 by The Econometric Society.},
	Author = {Aumann, Robert},
	Journal = {Econometrica},
	Number = {1},
	Pages = {1-18},
	Title = {Correlated Equilibrium as an Expression of Bayesian Rationality},
	Volume = {55},
	Year = {1987},
 }

@article{Harsanyi1967,
	Abstract = { (This article originally appeared in Management Science, November 1967, Volume 14, Number 3, pp. 159--182, published by The Institute of Management Sciences.) The paper develops a new theory for the analysis of games with incomplete information where the players are uncertain about some important parameters of the game situation, such as the payoff functions, the strategies available to various players, the information other players have about the game, etc. However, each player has a subjective probability distribution over the alternative possibilities. In most of the paper it is assumed that these probability distributions entertained by the different players are mutually ``consistent,'' in the sense that they can be regarded as conditional probability distributions derived from a certain ``basic probability distribution'' over the parameters unknown to the various players. But later the theory is extended also to cases where the different players' subjective probability distributions fail to satisfy this consistency assumption. In cases where the consistency assumption holds, the original game can be replaced by a game where nature first conducts a lottery in accordance with the basic probability distribution, and the outcome of this lottery will decide which particular subgame will be played, i.e., what the actual values of the relevant parameters will be in the game. Yet, each player will receive only partial information about the outcome of the lottery, and about the values of these parameters. However, every player will know the ``basic probability distribution'' governing the lottery. Thus, technically, the resulting game will be a game with complete information. It is called the Bayes-equivalent of the original game. Part I of the paper describes the basic model and discusses various intuitive interpretations for the latter. Part II shows that the Nash equilibrium points of the Bayes-equivalent game yield ``Bayesian equilibrium points'' for the original game. Finally, Part III considers the main properties of the ``basic probability distribution.'' },
	Author = {Harsanyi, John C.},
	Journal = {Management Science},
	Number = {3},
	Pages = {159--182},
	Title = {Games with Incomplete Information Played by ``Bayesian'' Players, I--III: Part I. The Basic Model},
	Volume = {14},
	Year = {1967},
}

@article{Karos2018generalization,
	Abstract = {We characterize the class of weakly efficient n-person bargaining solutions that solely depend on the ratios of the players' ideal payoffs. In the case of at least three players the ratio between the solution payoffs of any two players is a power of the ratio between their ideal payoffs. As special cases this class contains the Egalitarian and the Kalai--Smorodinsky bargaining solutions, which can be pinned down by imposing additional axioms.},
	Author = {Karos, Dominik and Muto, Nozomu and Rachmilevitch, Shiran},
 	Title = {A generalization of the Egalitarian and the Kalai--Smorodinsky bargaining solutions},
	Journal = {International Journal of Game Theory},
    Pages = {1169--1182},
    Number = {4},
    Volume = {47},
	Year = {2018},
}

@article{Anbarci1997,
	Abstract = {In bargaining, players may adopt different prominence structures when making demands: (i) each player might use (1/N)th of his maximum monetary payoff as the prominence level or (ii) players might use a common prominence level. This paper considers a scheme in which players alternate making demands. It turns out that if the prominence levels described by (i) are used and if players have utilities linear in money, the outcome of this scheme converges to that of the Kalai-Smorodinsky solution as players' prominence levels get smaller. If the common prominence level of (ii) is used and if players have identical constant marginal utilities of money, the outcome of this scheme converges to that of the equal sacrifice solution as that prominence level gets smaller.},
	Author = {Anbarci, Nejat},
	Journal = {Atlantic Economic Journal},
	Number = {4},
	Pages = {403--411},
	Title = {Noncooperative foundations for the kalai-smorodinsky and equal sacrifice solutions via prominence structures},
	Volume = {25},
	Year = {1997},
 }

@article{Harsanyi1972,
	Abstract = {The paper extends Nash's theory of two-person bargaining games with fixed threats to bargaining situations with incomplete information. After defining such bargaining situations, a formal bargaining model (bargaining game) will be proposed for them. This bargaining game, regarded as a noncooperative game, will be analyzed in terms of a certain class of equilibrium points with special stability properties, to be called "strict" equilibrium points. Finally an axiomatic theory will be developed in order to select a unique solution from the set X of payoff vectors corresponding to such strict equilibrium points (as well as to probability mixtures of the latter). It will be shown that the solution satisfying the axioms proposed in this paper is the point where a certain generalized Nash product is maximized over this set X.},
	Author = {John C. Harsanyi and Reinhard Selten},
	Journal = {Management Science},
	Number = {5},
	Pages = {80--106},
	Publisher = {INFORMS},
	Title = {A Generalized Nash Solution for Two-Person Bargaining Games with Incomplete Information},
	Volume = {18},
	Year = {1972},
}

@article{Monderer1989-pj,
	Abstract = {Strict common knowledge seems almost impossible; we can never be
              sure what others know. It is shown that common knowledge can be
              approximated by the weaker and more easily obtained condition of
              common belief. This approximation justifies the standard
              assumption in game theory that the description of the game is
              common knowledge. Aumann's result on the impossibility of
              agreeing to disagree can also be approximated when common
              knowledge is replaced by common belief.},
	Author = {Monderer, Dov and Samet, Dov},
	Journal = {Games Econ. Behav.},
	Month = jun,
	Number = 2,
	Pages = {170--190},
	Title = {Approximating common knowledge with common beliefs},
	Volume = 1,
	Year = 1989}

@article{Lensberg1988-lf,
	Abstract = {In an axiomatic bargaining model with a variable number of
              agents, a stability axiom due to Harsanyi is used to give a
              characterization of the Nash bargaining solution without
              independence of irrelevant alternatives. Our main result is that
              the Nash solution is the only one to satisfy Pareto optimality,
              anonymity, scale invariance, and stability. We also show that the
              stability axiom allows one to weaken Pareto optimality to what we
              call individual optimality and still permit a characterization of
              the Nash solution.},
	Author = {Lensberg, Terje},
	Journal = {J. Econ. Theory},
	Month = aug,
	Number = 2,
	Pages = {330--341},
	Title = {Stability and the Nash solution},
	Volume = 45,
	Year = 1988}

@article{Anbarci2013-yd,
	Abstract = {This article proposes a simple Nash program. Both our axiomatic
              characterization and our noncooperative procedure consider each
              distinct asymmetric and symmetric Nash solution. Our
              noncooperative procedure is a generalization of the simplest
              known sequential Nash demand game analyzed by Rubinstein et al.
              (1992). We then provide the simplest known axiomatic
              characterization of the class of asymmetric Nash solutions, in
              which we use only Nash's crucial Independence of Irrelevant
              Alternatives axiom and an asymmetric modification of the
              well-known Midpoint Domination axiom.},
	Author = {Anbarci, Nejat and Sun, Ching-Jen},
	Journal = {Econ. Lett.},
	Keywords = {Asymmetric Nash bargaining solutions; Nash program; Axiomatic characterization; Noncooperative foundations; Economics of search},
	Month = aug,
	Number = 2,
	Pages = {211--214},
	Title = {Asymmetric Nash bargaining solutions: A simple Nash program},
	Volume = 120,
	Year = 2013}

@article{Kalai1975-yv,
	Abstract = {A two-person bargaining problem is considered. It is shown that
               under four axioms that describe the behavior of players there is
               a unique solution to such a problem. The axioms and the solution
               presented are different from those suggested by Nash. Also,
               families of solutions which satisfy a more limited set of axioms
               and which are continuous are discussed.},
	Author = {Kalai, Ehud and Smorodinsky, Meir},
	Journal = {Econometrica},
	Number = 3,
	Pages = {513--518},
	Publisher = {[Wiley, Econometric Society]},
	Title = {Other Solutions to Nash's Bargaining Problem},
	Volume = 43,
	Year = 1975}

@article{Nash1950-vg,
	Abstract = {A new treatment is presented of a classical economic problem,
               one which occurs in many forms, as bargaining, bilateral
               monopoly, etc. It may also be regarded as a nonzero-sum
               two-person game. In this treatment a few general assumptions are
               made concerning the behavior of a single individual and of a
               group of two individuals in certain economic environments. From
               these, the solution (in the sense of this paper) of the
               classical problem may be obtained. In the terms of game theory,
               values are found for the game.},
	Author = {Nash, John F},
	Journal = {Econometrica},
	Number = 2,
	Pages = {155--162},
	Publisher = {[Wiley, Econometric Society]},
	Title = {The bargaining problem},
	Volume = 18,
	Year = 1950}

@article{Okada2010-ql,
	Abstract = {We present a noncooperative foundation for the Nash bargaining
              solution for an n-person cooperative game in strategic form. The
              Nash bargaining solution should be immune to any coalitional
              deviations. Our noncooperative approach yields a new core
              concept, called the Nash core, for a cooperative game based on a
              consistency principle. We prove that the Nash bargaining solution
              can be supported (in every subgame) by a stationary subgame
              perfect equilibrium of the bargaining game if and only if the
              Nash bargaining solution belongs to the Nash core.},
	Author = {Okada, Akira},
	Journal = {J. Econ. Theory},
	Keywords = {Nash bargaining solution; Noncooperative bargaining; Coalition; Externality; Core; -Person cooperative games},
	Month = nov,
	Number = 6,
	Pages = {2356--2379},
	Title = {The {N}ash bargaining solution in general \(n\)-person cooperative games},
	Volume = 145,
	Year = 2010}

@article{Rubinstein1982-vw,
	Abstract = {Two players have to reach an agreement on the partition of a pie
               of size 1. Each has to make in turn, a proposal as to how it
               should be divided. After one player has made an offer, the other
               must decide either to accept it, or to reject it and continue
               the bargaining. Several properties which the players'
               preferences possess are assumed. The Perfect Equilibrium
               Partitions (P.E.P.) are characterized in all the models
               satisfying these assumptions. Specially, it is proved that when
               every player bears a fixed bargaining cost for each period (c1
               and c2), then: (i) if \$c\_\{1\} the only P.E.P. gives all the
               pie to 1; (ii) if \$c\_\{1\}>c\_\{2\}\$ the only P.E.P. gives to
               1 only c2. In the case where each player has a fixed discounting
               factor (\delta 1 and \delta 2) the only P.E.P. is
               \$(1-\textbackslashdelta \_\{2\})/(1-\textbackslashdelta
               \_\{1\}\textbackslashdelta \_\{2\})\$.},
	Author = {Rubinstein, Ariel},
	Journal = {Econometrica},
	Month = jan,
	Number = 1,
	Pages = {97},
	Publisher = {[Wiley, Econometric Society]},
	Title = {Perfect Equilibrium in a Bargaining Model},
	Volume = 50,
	Year = 1982}

@article{Yudkowsky2017-vb,
	Abstract = {This paper describes and motivates a new decision theory
                   known as functional decision theory (FDT), as distinct from
                   causal decision theory and evidential decision theory.
                   Functional decision theorists hold that the normative
                   principle for action is to treat one's decision as the
                   output of a fixed mathematical function that answers the
                   question, ``Which output of this very function would yield
                   the best outcome?'' Adhering to this principle delivers a
                   number of benefits, including the ability to maximize wealth
                   in an array of traditional decision-theoretic and
                   game-theoretic problems where CDT and EDT perform poorly.
                   Using one simple and coherent decision rule, functional
                   decision theorists (for example) achieve more utility than
                   CDT on Newcomb's problem, more utility than EDT on the
                   smoking lesion problem, and more utility than both in
                   Parfit's hitchhiker problem. In this paper, we define FDT,
                   explore its prescriptions in a number of different decision
                   problems, compare it to CDT and EDT, and give philosophical
                   justifications for FDT as a normative theory of
                   decision-making.},
	Archiveprefix = {arXiv},
	Author = {Yudkowsky, Eliezer and Soares, Nate},
	Eprint = {1710.05060},
	Month = oct,
	Primaryclass = {cs.AI},
	Title = {Functional Decision Theory: A New Theory of Instrumental Rationality},
	Year = 2017}

@incollection{Spohn2005-hi,
	Abstract = {`` Five Questions on Formal Philosophy'': Like the other authors
               in the volume, I was asked for my reflections on the character
               of (formal) philosophy by answering the following five
               questions: 1. Why were you initially drawn to formal methods? 2.
               What example (s) from your},
	Address = {Copenhagen},
	Author = {Spohn, Wolfgang},
	Booktitle = {Formal philosophy},
	Publisher = {Automatic Press},
	Title = {The 5 Questions},
	Year = 2005}

@misc{Oesterheld2017-qg,
	Abstract = {Some decision theorists argue that when playing a prisoner's
                  dilemma-type game against a sufficiently similar opponent, we
                  should cooperate to make it more likely that our opponent
                  also cooperates. This idea, which Hofstadter calls
                  superrationality, has strong implications when combined with
                  the insight from modern physics that we live in a large
                  universe or multiverse of some sort.},
	Author = {Oesterheld, Caspar},
        url = {https://longtermrisk.org/files/Multiverse-wide-Cooperation-via-Correlated-Decision-Making.pdf},
	Howpublished = {Center on Long-Term Risk},
	Title = {Multiverse-wide Cooperation via Correlated Decision Making},
	Year = 2017}

@article{Spohn2007-fp,
	Abstract = {This paper introduces a new equilibrium concept for normal form
               games called dependency equilibrium; it is defined, exemplified,
               and compared with Nash and correlated equilibria in Sections
               2--4. Its philosophical motive is to rationalize cooperation in
               the one shot prisoners' dilemma. A brief discussion of its
               meaningfulness in Section 5 concludes the paper.},
	Author = {Spohn, Wolfgang},
	Journal = {Philos. Sci.},
	Number = 5,
	Pages = {775--789},
	Title = {Dependency Equilibria},
	Volume = 74,
	Year = 2007}

@article{Spohn2003-gi,
	Author = {Spohn, Wolfgang},
	Journal = {Homo Oeconomicus},
	Pages = {195--255},
	Title = {Dependency Equilibria and the Causal Structure of Decision and Game Situation},
	Volume = 20,
	Year = 2003}

@misc{Yudkowsky2010-ur,
	Author = {Yudkowsky, Eliezer},
	howpublished = {The Singularity Institute, San Francisco, CA.},
	Title = {Timeless Decision Theory},
	Year = 2010,
        url = {https://intelligence.org/files/TDT.pdf},}

@misc{Poellinger2013-we,
	Author = {Poellinger, Roland},
	Howpublished = {PhilSci-Archive},
        url = {http://philsci-archive.pitt.edu/9876/1/newcomb_in_ckps.pdf},
	Title = {Unboxing the Concepts in Newcomb's Paradox: Causation, Prediction, Decision in Causal Knowledge Patterns},
	Year = 2013}

@article{Spohn2012-fo,
	Abstract = {The paper will show how one may rationalize one-boxing in
               Newcomb's problem and drinking the toxin in the Toxin puzzle
               within the confines of causal decision theory by ascending to
               so-called reflexive decision models which reflect how actions
               are caused by decision situations (beliefs, desires, and
               intentions) represented by ordinary unreflexive decision models.},
	Author = {Spohn, Wolfgang},
	Journal = {Synthese},
	Number = 1,
	Pages = {95--122},
	Publisher = {Springer Netherlands},
	Title = {Reversing 30 years of discussion: why causal decision theorists should one-box},
	Volume = 187,
	Year = 2012}

@misc{treutlein2018three,
    Author = {Treutlein, Johannes},
    Title = {Three wagers for multiverse-wide superrationality},
    Year = 2018,
    Howpublished = {casparoesterheld.com},
    url = {https://casparoesterheld.com/2018/03/31/three-wagers-for-multiverse-wide-superrationality/}    
}

@misc{armstrong2017acausal,
    Author = {Armstrong, Stuart},
    Title = {Acausal trade: Introduction},
    Howpublished = {AI Alignment Forum},
    url = {https://www.alignmentforum.org/posts/5bd75cc58225bf0670375416/acausal-trade-introduction},
    Year = {2017},
}

@misc{armstrong2017double,
    Author = {Armstrong, Stuart},
    Title = {Acausal trade: double decrease},
    Howpublished = {AI Alignment Forum},
    url = {https://www.alignmentforum.org/posts/5bd75cc58225bf0670375414/acausal-trade-double-decrease},
    Year = {2017},
}

@misc{diffractor2022rose,
    Author = {Diffractor},
    Title = {Threat-Resistant Bargaining Megapost: Introducing the ROSE Value},
    Howpublished = {AI Alignment Forum},
    url = {https://www.alignmentforum.org/posts/vJ7ggyjuP4u2yHNcP/threat-resistant-bargaining-megapost-introducing-the-rose},
    Year = {2022},
}

@misc{hanson2012marginal,
    Author = {Hanson, Robin},
    Title = {Marginal Charity},
    Howpublished = {Overcoming Bias},
    url = {http://www.overcomingbias.com/2012/11/marginal-charity.html},
    Year = {2012},
}

@article{macaskill2021evidentialist,
  title={The Evidentialist's Wager},
  author={MacAskill, William and Vallinder, Aron and Oesterheld, Caspar and Shulman, Carl and Treutlein, Johannes},
  journal={The Journal of Philosophy},
  volume={118},
  number={6},
  pages={320--342},
  year={2021}
}

@misc{kokotajlo2019commitment,
Howpublished={AI Alignment Forum},
title={The Commitment Races problem},
author={Daniel Kokotajlo},
year={2019},
url={https://www.alignmentforum.org/posts/brXr7PJ2W4Na2EW2q/the-commitment-races-problem},
}

@misc{mennen2018wishful,
Howpublished={AI Alignment Forum},
title={When wishful thinking works},
author={Alex Mennen},
year={2018},
url={https://www.alignmentforum.org/posts/KbCHcb8yyjAMFAAPJ/when-wishful-thinking-works}
}

@Inbook{hammond1992harsanyi,
author="Hammond, Peter J.",
title="Harsanyi's Utilitarian Theorem: A Simpler Proof and Some Ethical Connotations",
bookTitle="Rational Interaction: Essays in Honor of John C. Harsanyi",
year="1992",
publisher="Springer Berlin Heidelberg",
pages="305--319",
abstract="Harsanyi's utilitarian theorem states that the social welfare function is the weighted sum of individuals' utility functions if: (i) society maximizes expected social welfare; (ii) individuals maximize expected utility; (iii) society is indifferent between two probability distributions over social states whenever all individuals are. After giving a simpler proof, an alternative axiomatic foundation for Vickrey-Harsanyi utilitarianism is provided. By making using an extended version of Harsanyi's concept of a player's ``type'' in the theory of games with incomplete information, the problem of forming social objectives when there is incomplete information can also be resolved, at least in principle.",
}

@article{MacAskill2020-MACSNM-2,
	number = {2},
	volume = {117},
	year = {2020},
	title = {Statistical Normalization Methods in Interpersonal and Intertheoretic Comparisons},
	author = {William MacAskill and Owen Cotton{-}Barratt and Toby Ord},
	journal = {Journal of Philosophy},
	pages = {61--95}
}

@article{daley2017magical,
  title={Magical thinking: A representation result},
  author={Daley, Brendan and Sadowski, Philipp},
  journal={Theoretical Economics},
  volume={12},
  number={2},
  pages={909--956},
  year={2017},
  publisher={Wiley Online Library}
}

@article{oesterheld2019robust,
  title={Robust program equilibrium},
  author={Oesterheld, Caspar},
  journal={Theory and Decision},
  volume={86},
  number={1},
  pages={143--159},
  year={2019},
  publisher={Springer}
}

@book{maschler2020game,
  title={Game theory},
  author={Maschler, Michael and Zamir, Shmuel and Solan, Eilon},
  year={2020},
  publisher={Cambridge University Press}
}

@article{fudenberg1986folk,
  title={The Folk Theorem in Repeated Games with Discounting or with Incomplete Information},
  author={Fudenberg, Drew and Maskin, Eric},
  journal={Econometrica},
  volume={54},
  number={3},
  pages={533--554},
  year={1986}
}

@article{hofstadter1983dilemmas,
  title={Dilemmas for superrational thinkers, leading up to a luring lottery},
  author={Hofstadter, Douglas R},
  journal={Scientific American},
  volume={248},
  number={6},
  pages={739--755},
  year={1983},
  publisher={Basic Books New York}
}

@book{tegmark2015our,
  title={Our mathematical universe: My quest for the ultimate nature of reality},
  author={Tegmark, Max},
  year={2015},
  publisher={Vintage}
}

@misc{drexler2019pareto,
Howpublished={Effective Altruism Forum},
title={Paretotopian goal alignment},
author={Eric Drexler},
year={2019},
url={https://forum.effectivealtruism.org/posts/fg6RrvtSJ2kxe9Ens/eric-drexler-paretotopian-goal-alignment},
}

@misc{diffractor2018cooperative,
Howpublished={AI Alignment Forum},
title={Cooperative Oracles},
author={Diffractor},
year={2018},
url={https://www.alignmentforum.org/posts/SgkaXQn3xqJkGQ2D8/cooperative-oracles},
}

@misc{kosoy2015superrationality,
Howpublished={AI Alignment Forum},
title={Superrationality in arbitrary games},
author={Vanessa Kosoy},
year={2015},
url={https://www.alignmentforum.org/posts/5bd75cc58225bf0670375058/superrationality-in-arbitrary-games},
}

@misc{gloor2018commenting2,
Howpublished={CLR blog},
title={Commenting on MSR, Part 2: Cooperation heuristics},
author={Lukas Gloor},
year={2018},
url={https://longtermrisk.org/commenting-msr-part-2-cooperation-heuristics/},
}

@article{scarf1967core,
  title={The core of an \(N\) person game},
  author={Scarf, Herbert E},
  journal={Econometrica: Journal of the Econometric Society},
  pages={50--69},
  year={1967},
  publisher={JSTOR}
}

@book{osborne1994course,
  title={A course in game theory},
  author={Osborne, Martin J and Rubinstein, Ariel},
  year={1994},
  publisher={MIT press}
}

@misc{armstrong2016extortion,
Howpublished={LessWrong},
title={Extortion and trade negotiations},
author={Stuart Armstrong},
year={2016},
url={https://www.lesswrong.com/posts/RjbTi6ETSo66ygfEY/extortion-and-trade-negotiations},
}

@article{kannai1992core,
  title={The core and balancedness},
  author={Kannai, Yakar},
  journal={Handbook of game theory with economic applications},
  volume={1},
  pages={355--395},
  year={1992},
  publisher={Elsevier}
}

@incollection{shapley1973balanced,
  title={On balanced games without side payments},
  author={Shapley, Lloyd S},
  booktitle={Mathematical programming},
  pages={261--290},
  year={1973},
  publisher={Elsevier}
}

@article{halpern2018game,
  title={Game theory with translucent players},
  author={Halpern, Joseph Y and Pass, Rafael},
  journal={International Journal of Game Theory},
  volume={47},
  number={3},
  pages={949--976},
  year={2018},
  publisher={Springer}
}

@InCollection{sep-physics-Rpcc,
	author       =	{Hitchcock, Christopher and R\'edei, Mikl\'os},
	title        =	{{Reichenbach's Common Cause Principle}},
	booktitle    =	{The {Stanford} Encyclopedia of Philosophy},
	editor       =	{Edward N. Zalta},
	howpublished =	{\url{https://plato.stanford.edu/archives/sum2021/entries/physics-Rpcc/}},
	year         =	{2021},
	edition      =	{{S}ummer 2021},
	publisher    =	{Metaphysics Research Lab, Stanford University}
}

@misc{treutlein2018request,
author = {Treutlein, Johannes},
title = {Request for input on multiverse-wide superrationality (MSR)},
Howpublished={Effective Altruism Forum},
url={https://forum.effectivealtruism.org/posts/92wCvqF73Gzg5Jnrr/request-for-input-on-multiverse-wide-superrationality-msr},
year={2018}}

@misc{dai2009updateless,
    author = {Dai, Wei},
    year = {2009},
    Howpublished = {LessWrong},
    title = {Towards a New Decision Theory},
    url = {https://www.lesswrong.com/posts/de3xjFaACCAk6imzv/towards-a-new-decision-theory},
}

@misc{treutlein2018udt,
    author = {Treutlein, Johannes},
    year = {2018},
    Howpublished = {casparoesterheld.com},
    url = {https://casparoesterheld.com/2018/03/28/udt-is-updateless-about-its-utility-function/},
    title = {{UDT} is ``updateless'' about its utility function},
}

@misc{benya2014sin,
    author = {Fallenstein, Benya},
    year = {2014},
    Howpublished = {LessWrong},
    url = {https://www.lesswrong.com/posts/TCjiFNfJ7z8W4cEwr/the-sin-of-updating-when-you-can-change-whether-you-exist},
    title = {The sin of updating when you can change whether you exist},
}

@misc{xu2021open,
author = {Xu, Mark and Hubinger, Evan},
year = {2021},
Howpublished = {AI Alignment Forum},
title = {Open Problems with Myopia},
url = {https://www.alignmentforum.org/posts/LCLBnmwdxkkz5fNvH/open-problems-with-myopia}}

\begin{appendix}

\section{\textcite{Armstrong2013}'s bargaining solution}
\label{appendix-armstrong-solution}
\textcite{Armstrong2013} has published a series of blog
posts on bargaining in which he develops a bargaining solution. In this appendix, I will discuss the solution and argue against using it to model ECL.

In Armstrong's solution, utility functions are normalized such that their
zero point is the disagreement point and \(1\) is their
ideal point, just as with the KSBS. But instead of then taking the
point on the Pareto frontier where everyone has the same utility given
this normalization (as the KSBS would), Armstrong suggests maximizing the
sum of the thus normalized utility functions.

Armstrong discusses two ideas to support his proposed solution. The
first one is the normalization according to the KSBS, which is supposed
to give credit to the fact that if a player can benefit another player
a lot, the other's ideal point will also be higher, and their utility
function will thus be scaled down in the normalization in comparison
to the utility function of the player. The second idea is that of maximizing a sum
instead of maximizing a product or just taking some point with a fixed
ratio of utilities, which is to give agents higher ex ante
expectations of utility.

I think Armstrong's solution is unsuitable for my
setting. First, his solution does not solve the issue with fairness
in a multilateral setting that I discuss in \Cref{fairness-and-coalitional}. Second, as argued in \Cref{subsec:Normalizing-utility-functions},
solutions should guarantee positive gains from trade for all participants.
Maximizing a sum of normalized utility functions does not generally
guarantee that, as I have shown in the case of variance normalization.
As has been pointed out in the comments to \textcite{Armstrong2013}, normalizing according
to disagreement and ideal point may also not guarantee positive gains
from trade.

Lastly, the fact that a bargaining solution maximizes the sum of utilities is not a reason to choose it over other Pareto optimal solutions. Even the KSBS or NBS will maximize \emph{some}
weighted sum of utility functions, since every point on the Pareto frontier corresponds to the maximizer of some weighted sum of 
coordinates. I currently don't see a reason why
choosing the weighting based on knowledge of the entire Pareto frontier is
at an (a priori) disadvantage over weightings which are chosen based
on other information.

Lastly, note that the NBS maximizing a product does not mean that an agent's
uncertainty cannot be taken into account well by the NBS. As outlined in \Cref{sec:Bargaining-with-incomplete}, the expectations
of agents over different possible games can be incorporated
into feasible sets and Pareto frontiers, so the NBS need not only be applied to games
with certainty. Hence, when it comes to expectations over different
games, the NBS chooses a point that is Pareto optimal as judged by agents' beliefs---as
opposed to, for instance, choosing a point which leads to certain gains from
trade but to a lower expectation
across games.

\section{\textcite{Harsanyi1972}'s axiomatization of the Nash bargaining solution in incomplete information games}
\label{appendix-harsanyi-nbs}
In this section, I outline \textcite{Harsanyi1972}'s axiomatization of the NBS in two-player incomplete information games. It is not directly applicable to my setup in \Cref{sec:Bargaining-with-incomplete}, and I did not find a more relevant result in the literature. I believe one should be able to translate the analysis to my setup, but I will not investigate this here.

\textcite{Harsanyi1972}'s axiomatization
includes versions of the axioms from \Cref{bargaining-theory}, namely Individual rationality, Pareto optimality,
Invariance to affine transformations, a version of Anonymity for both
players and all types, and the Independence of irrelevant alternatives
axiom. In addition, there are two new axioms which specifically address the types.

To define these new axioms as in \textcite{Harsanyi1972}, we first have to specify a slightly different version of a Bayesian
bargaining game.
\begin{defn}
A two-player Bayesian bargaining game is a tuple $G=(T_{1},T_{2},F,p)$
where
\begin{itemize}
\item $T_{1}=\{1,\dots,m\}$ and $T_2=\{m+1,\dots,l\}$ are the two sets of types
for either player;
\item $F\subseteq\mathbb{R}^{l}$ is the feasible set, which specifies the
ex interim expected utilities for each type;
\item $p$ is a joint distribution over types for both players.
\end{itemize}
\end{defn}

In this game, there are only two players, $1$ and $2$, and each
player has their own set of types. The feasible set $F$ is just what
would have been the set $F(G)$ in my case, only that the payoffs
depend on both types and players instead of just depending on types. If $x\in F$, then there
exists a mixed strategy profile such that $x_{i}$ specifies the utility
that type $i$ would expect given this mixed strategy profile and
their beliefs about which types the other player could have.

The set $F$ is assumed to be chosen such that the minimal element
in $F$ is the disagreement point. That is, there exists $d\in F$
such that $d_{i}\leq x_{i}$ for all $x\in F,i\in T_{1}\cup T_{2}$.
Moreover, it is assumed that there are positive gains from trade to
be had for everyone---i.e., there is an element $x\in F$ such that
$x_{i}>d_{i}$ for all $i\in T_{1}\cup T_{2}$.

To define one of the new axioms, we need to define the operation of ``splitting a type''.
\begin{defn}[Splitting a type]
We can define splitting a type for feasible payoff vectors as well as for games:
\begin{enumerate}\item
Let $j\in\{1,\dots,m\}$. $j$ is the type of player $1$ we want
to split (the definition is analogous for player $2$). We have two
new sets of types $T'_{1}=\{1,\dots,m+1\}$ and $T'_{2}=\{m+2,\dots,l+1\}$.
Define $F'$ such that it contains all $x'$ such that there is $x\in F$
such that $x'_{i}=x_{i}$ for $i\in\{1,\dots,j\}$, $x'_{j+1}=x_{j}$,
and $x'_{i}=x_{i-1}$ for $i\in\{j+2,\dots,l+1\}$. This is called
deriving $x'$ from $x$ by splitting type $j$ of player $1$ into
two types. 
\item Let $0<\nu<1$. Let $t\in T'_{2}$. We then define $p'$ such that
$p'(k,t)=p(k,t-1)$ for all $k=1,\dots,j-2$, $p'(j,t)=\nu p(j,t-1)$,
$p'(j+1,t)=(1-\nu)p(j,t-1)$, and $p'(k,t)=p(k-1,t-1)$ for $k\in\{j+1,\dots,m+1\}$.
The new game $G'=(T'_{1},T'_{2},F',p')$ with $F'$ as feasible set,
types $T'_{1},T'_{2}$, and $p'$ as distribution over types is derived
from splitting type $j$ of player $1$ into two types with probabilities
$\nu$ and $1-\nu$.
\end{enumerate}
\end{defn}

With these definitions, the two new axioms are as follows:
\begin{ax}
Splitting types. If $G'=(T'_{1},T'_{2},F',p')$ is derived from $G$
by splitting type $j$ of player $1$ into two types with probabilites
$\nu$ and $\nu-1$, then $x'=\mu(G')$ is derived from $x=\mu(G)$
by splitting type $j$ of player $1$ into two types.
\end{ax}

\begin{ax}
Mixing basic probability matrices. If $G=(T_{1},T_{2},F,p)$ and $G'=(T_{1},T_{2},F,p')$
have the same solution vector, then for every $G''=(T_{1},T_{2},F,p'')$
with $p''=\nu p+(1-\nu)p'$ where $\nu\in[0,1]$, it is $\mu(G'')=\mu(G')=\mu(G)$. 
\end{ax}

Given these two additional axioms, \textcite{Harsanyi1972} show that the solution
function must be
\begin{equation}
\mu(G)=\argmax_{x\in F}\prod_{t\in T_{1}\cup T_{2}}(x_{t}-w_{t})^{p(t)}.
\end{equation}
That is, an asymmetric version of the NBS where the weights are the
prior probabilities of the types.

\section{Proof of Theorem~\ref{spohn5-continuous}}
\label{appendix-proof-of-spohn5-continuous}

\spohnfivec*
\begin{proof}
Let \(q:=\delta_\alpha\) be the Dirac measure, defined via \(\delta_\alpha(A)=1\) if and only if \(\alpha\in A\). As in \Cref{thm:spohn5}, we now want to define a joint distribution \(q_r\) for each \(r\in\mathbb{N}\) that converges weakly to \(q\). To that end, define \(s\) as follows. For every \(t\in T\), let \(\mu_t\) be some probability measure on \(\mathcal{A}_t\) with full support such that \(\mu_t(\{\alpha
_t'\})=0\) for any \(\alpha'_t\in \mathcal{A}_t\) (assuming \(\mathcal{A}_t\) contains more than one point, and thus by convexity a continuum of points). For any set \(A\subseteq\mathcal{A}\), define
\[s(A):=m^{-1}\sum_{t\in T}\mu_t(\{\alpha'_t\mid (\alpha'_t,\beta_{-t})\in A\}).\]
To show that this is a probability measure,  note that \(s(\emptyset)=0\), \(s\) is always non-negative, and
\[s(\mathcal{A})
=m^{-1}\sum_{t\in T}\mu_t(\{\alpha_t\mid (\alpha'_t,\beta_{-t})\in \mathcal{A}\})
=m^{-1}\sum_{t\in T}\mu_t(\mathcal{A}_t)
=1.
\]
Moreover, for any countable collection of pairwise disjoint sets \(A^1,A^2,\dotsc\), we have
\begin{multline}s\left(\bigcup_{l\in\mathbb{N}}A^l\right)
=m^{-1}\sum_{t\in T}\mu_t\left(\alpha'_t\mid (\beta_{-t},\alpha'_t)\in \bigcup_{l\in\mathbb{N}}A^l\right)
=
m^{-1}\sum_{t\in T}\sum_{l\in\mathbb{N}}\mu_t(\{\alpha'_t\mid (\beta_{-t},\alpha'_t)\in A^l\})
\\=\sum_{l\in\mathbb{N}}m^{-1}\sum_{t\in T}\mu_t(\{\alpha'_t\mid (\beta_{-t},\alpha'_t)\in A^l\})
=\sum_{l\in\mathbb{N}}s(A^l).
\end{multline}
This shows that \(s\) is a probability measure.

Moreover, for any open, nonempty \(A_t\subseteq\mathcal{A}_t\), we have 
\[s(A_t)
=
m^{-1}\sum_{t'\in T}\mu_t(\{\alpha'_{t'}\mid (\alpha'_{t'},\beta_{-t'})\in \mathcal{A}_{-t}\times A_t\})
\geq m^{-1}\mu_t(A_t)>0,\]
so this measure satisfies the full support condition that is required to define \(q_r\).

Now we define \(q_r:=\frac{r-1}{r}q+\frac{r-1}{r^2}\delta_\beta+\frac{1}{r^2}s\). Since this is a convex combination of probability measures, it is still a probability measure.  It remains to show that this measure satisfies our requirements. First, clearly, this weakly converges to \(q\) as \(r\rightarrow\infty\). Second, since \(\frac{1}{r^2}>0\) for all \(r\in\mathbb{N}\), it is \(q_r(A_t)\geq \frac{1}{r^2}s(A_t)>0\) for any \(t\in T\) and nonempty open set \(A_t\subseteq\mathcal{A}_t\).

Now we turn to the condition on expected utilities. Let \(t\in T\) and \(A_t\subseteq\mathcal{A}_t\) with \(q(A_t)>0\) arbitrary but fixed in the following. Then it follows that \(\alpha_t\in A_t\), and thus
\(
q(\{\alpha\}\mid A_t)=\frac{q(\{\alpha\})}{q(A_t)}=1
\). Hence, for measurable \(A\subseteq \mathcal{A}\), it follows
\[\lim_{r\rightarrow\infty}q_r(A\mid A_t)
=\lim_{r\rightarrow\infty}\frac{\frac{r-1}{r}\delta_\alpha(A)+\frac{r-1}{r^2}\delta_\beta(A)+\frac{1}{r^2}s(\{\alpha'\in A\mid \alpha'_t \in A_t\})}{\frac{r-1}{r}+\frac{r-1}{r^2}\delta_\beta(A)+\frac{1}{r^2}s(A_t)}
=\delta_\alpha(A)=q(A\mid A_t)\]
and thus
\[
\lim_{r\rightarrow\infty}EU_t(q_r;A_t)=EU_t(q;A_t)=EU_t(\alpha).
\]

Next, let \(B_t\subseteq\mathcal{A}_t\) an arbitrary nonempty open set, representing any other set of actions type \(t\) could condition on. We have to show that \(\lim_{r\rightarrow\infty}EU_t(q_r;A_t)\geq \lim_{r\rightarrow\infty}EU_t(q_r;B_t)\). To that end, note that if \(q(B_t)>0\), it follows from the above that
\[\lim_{r\rightarrow\infty}EU_t(q_r;B_t)=EU_t(\alpha)=\lim_{r\rightarrow\infty}EU_t(q_r;A_t),\]
and we are done.

Now consider the case \(q(B_t)=0\). First, assume \(\beta_t\in B_t\). In this case, for measurable \(A\subseteq \mathcal{A}\), we have
\begin{align}
\lim_{r\rightarrow\infty}q_r(A\mid B_t)
&=\lim_{r\rightarrow\infty}
\frac{\frac{r-1}{r}\delta_{\alpha}(A\cap(\mathcal{A}_{-t}\times B_t)) +\frac{r-1}{r^2}\delta_\beta(A\cap(\mathcal{A}_{-t}\times B_t))+\frac{1}{r^2}s(A\cap(\mathcal{A}_{-t}\times B_t))}{\frac{r-1}{r}q(B_t)+\frac{r-1}{r^2}\delta_{\beta_t}(B_t)+\frac{1}{r^2}s(B_t)}
\\
&=\lim_{r\rightarrow\infty}
\frac{\frac{r-1}{r^2}\delta_\beta(A\cap(\mathcal{A}_{-t}\times B_t))+\frac{1}{r^2}s(A\cap(\mathcal{A}_{-t}\times B_t))}{\frac{r-1}{r^2}\delta_{\beta_t}(B_t)+\frac{1}{r^2}s(B_t)}
\\
&=\lim_{r\rightarrow\infty}
\frac{\frac{r-1}{r}\delta_\beta(A\cap(\mathcal{A}_{-t}\times B_t))+\frac{1}{r}s(A\cap(\mathcal{A}_{-t}\times B_t))}{\frac{r-1}{r}\delta_{\beta_t}(B_t)+\frac{1}{r}s(B_t)}
\\
&=
\frac{\delta_\beta(A\cap(\mathcal{A}_{-t}\times B_t))}{\delta_{\beta_t}(B_t)}
=\delta_\beta(A)
.
\end{align}
Hence, it follows that
\[\lim_{t\rightarrow\infty}
EU_t(q_r;B_t)=
\lim_{t\rightarrow\infty}
\mathbb{E}_{\alpha'\sim q_r}[EU_t(\alpha')\mid \alpha'_t\in B_t]
=
\lim_{t\rightarrow\infty}
\mathbb{E}_{\alpha'\sim \delta_\beta}[EU_t(\alpha')]
=EU_t(\beta).\]
Using the assumption on \(\alpha\) and \(\beta\), we can conclude that
\[\lim_{t\rightarrow\infty}
EU_t(q_r;B_t)=EU_t(\beta)\leq EU_t(\alpha)=\lim_{r\rightarrow\infty}EU_t(q_r;A_t),\]
and we are done.

Second, consider the case \(\beta_t\notin B_t\). Then for any \(r\in\mathbb{N}\), we have
\begin{multline}q_r(A\mid B_t)\label{qr-equal-s}
=
\frac{\frac{r-1}{r}\delta_{\alpha}(A\cap(\mathcal{A}_{-t}\times B_t)) +\frac{r-1}{r^2}\delta_\beta(A\cap(\mathcal{A}_{-t}\times B_t))+\frac{1}{r^2}s(A\cap(\mathcal{A}_{-t}\times B_t))}{\frac{r-1}{r}q(B_t)+\frac{r-1}{r^2}\delta_{\beta_t}(B_t)+\frac{1}{r^2}s(B_t)}\\
=
\frac{\frac{1}{r^2}s(A\cap(\mathcal{A}_{-t}\times B_t))}{\frac{1}{r^2}s(B_t)}
=
\frac{s(A\cap(\mathcal{A}_{-t}\times B_t))}{s(B_t)}
=s(A\mid B_t).
\end{multline}
It follows that \(EU_t(q_r;B_t)=EU_t(s;B_t)\).

Now define \(A_t^{\beta_{-t}}:=\{\alpha_t'\mid \alpha'\in A\colon \alpha'_{-t}=\beta_{-t}\}\). Then
\begin{multline}s(A\cap (\mathcal{A}_{-t}\times B_t))=
m^{-1}\sum_{t'\in T}\mu_{t'}(\{\alpha'_{t'}\mid (\alpha'_{t'},\beta_{-{t'}})\in A\cap (\mathcal{A}_{-t}\times B_t)\})
\\
=
m^{-1}\mu_{t}(\{\alpha'_{t}\mid (\alpha'_{t},\beta_{-{t}})\in A, \alpha'_{t}\in B_t\})
=m^{-1}\mu_t(B_t\cap A_t^{\beta_{-t}}).\end{multline}
It follows that \(s(A\mid B_t)=0\) if \(\beta_{-t}\notin A_{-t}\), so
for a random variable \(\alpha'\sim s\), we have \[s(\alpha'_{-t}=\beta_{-t}\mid B_t)=1.\]
It follows for any \(r\in\mathbb{N}\) that
\begin{multline}EU_t(q_r;B_t)\underset{\text{(\ref{qr-equal-s})}}{=}EU_t(s;B_t)=\mathbb{E}_{\alpha'\sim s}[EU_t(\alpha')\mid \alpha'_t\in B_t]=\mathbb{E}_{\alpha'\sim s}[EU_t(\beta_{-t},\alpha'_t)\mid \alpha_t'\in B_t]
\\
\underset{\text{(i)}}{\leq}
\mathbb{E}_{\alpha'\sim s}[EU_t(\alpha)\mid \alpha_t'\in B_t]=EU_t(\alpha),\end{multline}
where we have used the assumption on \(\alpha,\beta\) in (i).
Hence, also
\[\lim_{r\rightarrow\infty}EU_t(q_r;B_t)\leq EU_t(\alpha)=\lim_{r\rightarrow\infty}EU_t(q_r;A_t),\]
which concludes the proof.
\end{proof}

\section{Proof of Theorem~\ref{core-nonempty}}
\label{appendix-proof-of-theorem-core-nonempty}
I begin by introducing some additional notation, in order to be able to state the result used to prove \Cref{core-nonempty}. 
The following definitions and conditions are adapted from \textcite{kannai1992core}. I assume a set \(N\) of players is given.

\begin{defn}[Characteristic function]A function \(\nu\colon \mathcal{P}(N)\rightarrow\mathcal{P}(\mathbb{R}^N)\) is called \emph{characteristic function} if it satisfies the following criteria:
\begin{enumerate}
    \item[(i)] \(\nu(\emptyset)=\emptyset\);
    \item[(ii)] for all \(S\subseteq N\), \(S\neq \emptyset\), \(\nu(S)\) is a nonempty closed subset of \(\mathbb{R}^N\);
    \item[(iii)] if \(x\in \nu(S)\) and \(y_i\leq x_i\) for all \(i\in S\), then \(y\in \nu(S)\);
    \item[(iv)] there exists a closed set \(F\subseteq\mathbb{R}^N\) such that
    \[\nu(N)=\{x\in\mathbb{R}^N\mid \exists y\in F\colon \forall i\in N\colon x_i\leq y_i\};\]
    \item[(v)]The set
    \(F\cap\{x\in\mathbb{R}^N\mid \forall i\in N\colon x_i\geq \max\{y_i\mid y\in \nu(\{i\})\}\}\)
    is nonempty and compact.
\end{enumerate}
\end{defn}

Now let \(B\) be a bargaining game and \(A\in \mathbb{R}^{n,n}\) such that \(A_{i,j}\in \{x_{i,j}\mid x_i\in F_{i}(B)\}\) for all \(i,j\in N\). 
Then the function \(\nu^A\) of \(A\)-dominated vectors as introduced in \Cref{fairness-and-coalitional}, defined via
\[\nu^A(P):=\{x\in \mathbb{R}^n\mid \exists y\in \prod_{i\in P}F_i\colon \forall i\in P\colon x_i \leq \sum_{j\in P}y_{j,i}+\sum_{j'\in N\setminus P}A_{j',i}\},\]
satisfies these criteria. 

\begin{lem}\label{lem-nu-characteristic}
    \(\nu^A\) is a characteristic function.
\end{lem}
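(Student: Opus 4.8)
The plan is to check the five defining conditions (i)--(v) of a characteristic function for \(\nu^A\) in turn; conditions (i)--(iv) are essentially bookkeeping, so I expect all the difficulty to sit in (v).

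First I would record the relevant facts: each individual feasible set \(F_i(B)\) is the image of the simplex \(\Sigma_i\) under a linear map and hence nonempty, compact and convex, and \(F(B)=\sum_{i\in N}F_i(B)\) is accordingly compact. For a nonempty coalition \(S\), I would introduce the affine map \(\Phi_S\colon\prod_{i\in S}F_i\to\mathbb{R}^S\), \(y\mapsto\bigl(\sum_{j\in S}y_{j,i}+\sum_{j'\in N\setminus S}A_{j',i}\bigr)_{i\in S}\), and set \(K_S:=\Phi_S\bigl(\prod_{i\in S}F_i\bigr)\), which is compact as the continuous image of a compact set; by definition \(\nu^A(S)=\{x\in\mathbb{R}^n\mid\exists\,b\in K_S\colon x_i\le b_i\text{ for all }i\in S\}\). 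Then: (i) holds by the standard convention \(\nu^A(\emptyset):=\emptyset\). For (ii), \(\nu^A(S)\) is nonempty since \(K_S\neq\emptyset\) and any \(x\) dominated on the \(S\)-coordinates by some \(b\in K_S\) qualifies (take all coordinates very negative); it is closed by the usual sequential argument — take \(x^{(k)}\to x\) with witnesses \(b^{(k)}\in K_S\), extract a convergent subsequence \(b^{(k)}\to b\in K_S\), and conclude \(x_i\le b_i\) for \(i\in S\). Condition (iii) is immediate, since a witness \(b\) for \(x\) also witnesses any \(z\) with \(z_i\le x_i\) on \(S\). For (iv) I would take the closed set to be \(F:=F(B)\); since \(\sum_{j'\in N\setminus N}A_{j',i}\) is an empty sum, \(\nu^A(N)=\{x\mid\exists\,z\in F(B)\colon x_i\le z_i\}\), which also recovers the formula for \(\nu(N)\) stated in \Cref{fairness-and-coalitional}, and \(F(B)\) is closed (indeed compact).

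For (v), the set \(F(B)\cap\{x\mid x_i\ge\max\{y_i\mid y\in\nu^A(\{i\})\}\ \forall i\in N\}\) is a closed subset of the compact set \(F(B)\), hence compact, so only nonemptiness requires work. The crucial observation is that for a singleton coalition the Pareto optimal strategy set \(\Sigma^H_{\{j\}}\) is precisely the set of maximizers of \(u_{j,j}\), which can be read off the definition of \(\Sigma^H_P\) with \(P=\{j\}\). Hence for any \(i\), choosing a maximizer \(\sigma^*_j\) of \(u_{j,j}\),
\[A_{j,i}=\min_{P\ni j}\ \min_{\sigma_P\in\Sigma^H_P}u_{j,i}(\sigma_j)\ \le\ \min_{\sigma_j\in\Sigma^H_{\{j\}}}u_{j,i}(\sigma_j)\ \le\ u_{j,i}(\sigma^*_j).\]
Putting \(y^*_j:=u_j(\sigma^*_j)\in F_j\) and \(x^*:=\sum_{j\in N}y^*_j\in F(B)\), and using \(y^*_{i,i}=\max_{y_i\in F_i}y_{i,i}\), I would conclude for every \(i\) that
\[x^*_i=y^*_{i,i}+\sum_{j\neq i}y^*_{j,i}\ \ge\ \max_{y_i\in F_i}y_{i,i}+\sum_{j\neq i}A_{j,i}\ =\ \max\{x_i\mid x\in\nu^A(\{i\})\},\]
so that \(x^*\) lies in the set of (v). That completes the verification of (i)--(v).

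The hard part is (v), and within it the single inequality \(A_{j,i}\le u_{j,i}(\sigma^*_j)\): one must correctly unwind the nested definition of \(A\) as a worst case over all coalitions containing \(j\) of contributions under Pareto optimal coalition strategies, and observe that the singleton term in that minimum — which equals the worst \(u_{j,i}\)-contribution over own-utility maximizers — already dominates \(A_{j,i}\) from below and is itself at most the contribution of any particular own-utility maximizer. Everything else (compactness, closedness, monotonicity) is routine point-set topology.
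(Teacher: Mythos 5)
Your proposal is correct and follows exactly the route the paper sketches (the paper leaves this as an exercise, hinting only at compactness/convexity of the \(F_i(B)\) and the choice \(F=F(B)\) for (iv)--(v)); your verification of (i)--(iv) and your witness \(x^*=\sum_j u_j(\sigma^*_j)\) for (v), using that \(\Sigma^H_{\{j\}}\) is the set of maximizers of \(u_{j,j}\) so that \(A_{j,i}\le u_{j,i}(\sigma^*_j)\), is a sound filling-in of that sketch. The only point worth flagging is the one you already note: condition (i) requires adopting the convention \(\nu^A(\emptyset):=\emptyset\), since the literal definition would make \(\nu^A(\emptyset)=\mathbb{R}^n\).
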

\begin{proof}
Left as an exercise. It follows from the assumption that the \(F_i(B)\) are compact, convex sets, together with the definition of \(\nu^A\). For (iv) and (v), we can take \(F=F(B)\).
\end{proof}

Next, we need two technical definitions to be able to state the result.

\begin{defn}[Balanced collection]
Let \(T\subseteq\mathcal{P}(N)\) be a collection of coalitions. Then \(T\) is said to be a \emph{balanced collection} if there exist nonnegative weights \((\delta_S)_{S\in T}\) such that
\[\sum_{S\in T\text{ s.t. }i\in S}\delta_S=1.\]
\end{defn}
This means that there exist weights for each set in \(T\) such that, for each player \(i\in N\), the weights of all the sets containing that player add up to \(1\).

\begin{defn}[Balanced characteristic function]A characteristic function \(\nu\) is called \emph{balanced} if for every balanced collection \(T\), we have
    \[\bigcap_{S\in T}\nu(S)\subseteq \nu(N).\]
\end{defn}

This means that if a payoff vector \(x\) can be guaranteed for their members by every single coalition in a balanced set of coalitions, then it must also be achievable by the grand coalition. This is in general not true, but we will show that it is true in the case of an additively separable bargaining game.

Now we can state the main result used to prove \Cref{core-nonempty}. Recall the definition of the core as the set of vectors \(x\in \nu(N)\) such that for all coalitions \(P\subseteq N\) and \(y\in \nu(P)\), there exists at least one player \(i\in P\) such that \(x_i\geq y_i\). Note that every characteristic function \(\nu\) defines a core \(C^\nu\).

\begin{thm}[\cite{scarf1967core,shapley1973balanced,kannai1992core}]
\label{balanced-core-nonempty}
    Every balanced characteristic function has a nonempty core.
\end{thm}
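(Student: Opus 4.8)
The plan is to obtain the result from Scarf's combinatorial lemma on pairs of matrices, which is the standard route to core non-emptiness for balanced NTU games (and which the cited works establish by a complementary-pivoting / path-following argument in the spirit of Lemke--Howson; one can alternatively run a Kakutani-type fixed-point argument). I would quote Scarf's lemma as a black box and concentrate on the dictionary between the game-theoretic data and the combinatorics, and on a limiting argument. First I would normalize: using property (v) in the definition of a characteristic function, after a coordinate translation we may assume $\max\{y_i : y\in\nu(\{i\})\}=0$ for all $i$, and that $K:=F\cap\mathbb{R}^N_{\geq 0}$ is nonempty and compact, where $F$ is the closed set from property (iv); every candidate core element lies in $K$, and comprehensiveness (property (iii)) lets me encode each $\nu(S)$ by its upper boundary in the coordinates of $S$.

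Next I would build a finite combinatorial model. Fix $\varepsilon>0$ and choose finitely many payoff vectors $u^1,\dots,u^n,u^{n+1},\dots,u^m\in\mathbb{R}^N$ where, for $i\leq n$, $u^i$ represents the singleton $\{i\}$ acting alone, and $\{u^{n+1},\dots,u^m\}$ is an $\varepsilon$-net of the (compact) boundaries $\partial\nu(S)$ over all $\emptyset\neq S\subsetneq N$, each $u^\ell$ labelled by a coalition $S(\ell)$ with $u^\ell\in\nu(S(\ell))$. Form an $n\times m$ incidence matrix $A$ with $A_{i\ell}=1$ iff $i\in S(\ell)$ (so that a nonnegative solution of $Ax=\mathbf{1}$ supported on a column set encodes exactly a \emph{balanced collection} of the corresponding coalitions together with its balancing weights), and an $n\times m$ utility matrix $C$ with $C_{i\ell}=u^\ell_i$ on rows $i\in S(\ell)$ and a dummy large value off $S(\ell)$. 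The normalization places the first $n$ columns in the ``distinguished'' position required by Scarf's lemma, and a routine generic perturbation removes ties to meet its non-degeneracy hypothesis.

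Then I would apply Scarf's lemma to $(A,C)$, obtaining a column set $B^{\ast}$ that is simultaneously a feasible basis for $Ax=\mathbf{1},\,x\geq 0$ and \emph{dominant} for $C$. Feasibility means $T:=\{S(\ell):\ell\in B^{\ast}\}$ is a balanced collection; setting $x^{(\varepsilon)}_i:=\min_{\ell\in B^{\ast}}C_{i\ell}$, comprehensiveness gives $x^{(\varepsilon)}\in\nu(S)$ for every $S\in T$, so $x^{(\varepsilon)}\in\bigcap_{S\in T}\nu(S)$, and since $\nu$ is \emph{balanced}, property (iv) yields $x^{(\varepsilon)}\in\nu(N)$, i.e.\ feasibility for the grand coalition. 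Dominance of $B^{\ast}$ says $x^{(\varepsilon)}$ is not strictly improved upon by any chosen generator; since these $\varepsilon$-approximate $\partial\nu(S)$ for every $S$, no coalition can strictly improve on $x^{(\varepsilon)}$ by more than $O(\varepsilon)$. Finally $x^{(\varepsilon)}\in K$ for all $\varepsilon$, so along a subsequence $x^{(\varepsilon)}\to x^{\ast}$; closedness of $\nu(N)$ gives $x^{\ast}\in\nu(N)$, and closedness of the $\nu(S)$ together with the vanishing approximation error gives that no coalition $S$ admits $y\in\nu(S)$ with $y_i>x^{\ast}_i$ for all $i\in S$. Hence $x^{\ast}\in C^{\nu}$, so the core is nonempty. (This is precisely the ingredient used in \Cref{core-nonempty}, combined with \Cref{lem-nu-characteristic} and the balancedness of $\nu^A$ in the additively separable case.)

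The main obstacle is Scarf's lemma itself, whose proof is a nontrivial combinatorial path-following argument, and, just below that, making the discretization precise enough that a ``dominant feasible basis'' of the finite model genuinely converges to a core element of the original game: one must check that the balancing weights, comprehensiveness, and the $\varepsilon$-nets interact correctly in the limit, and that the distinguished-column and non-degeneracy hypotheses of the lemma can always be arranged. In a full write-up I would state Scarf's lemma verbatim, cite \cite{scarf1967core} (or \cite{shapley1973balanced,kannai1992core}) for its proof, and devote the work to the game-to-combinatorics dictionary and the compactness/closedness limit.
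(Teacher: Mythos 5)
The paper does not actually prove this statement: its ``proof'' consists of the single line ``See \textcite{kannai1992core}'', so the theorem is imported as a black box. Your sketch reconstructs the argument that the cited sources themselves give---Scarf's lemma on pairs of matrices applied to a finite $\varepsilon$-discretization of the game, followed by a compactness/closedness limit---so it is the standard route made explicit rather than deferred, and the dictionary you set up is correct: identity columns for singletons, nonnegative solutions of $Ax=\mathbf{1}$ as balanced collections with their weights, $x^{(\varepsilon)}_i=\min_{\ell\in B^{\ast}}C_{i\ell}$ landing in $\bigcap_{S\in T}\nu(S)$ by comprehensiveness and hence in $\nu(N)$ by balancedness, and the large off-coalition entries of $C$ forcing the non-domination witness row into $S(k)$. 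Two details deserve explicit care in a full write-up. First, the paper's definition of a characteristic function bounds only the individually rational slice of $\nu(N)$ (property (v)); it does not explicitly require the individually rational portions of $\partial\nu(S)$ for proper coalitions $S$ to be bounded in the coordinates of $S$, which you need for your $\varepsilon$-nets to be finite. This holds for the $\nu^A$ used in \Cref{core-nonempty} because the sets $F_i(B)$ are compact, but it should be added as a hypothesis (or derived) if the theorem is to be proved at the stated level of generality. Second, in the limit step you must show that any strict improvement $y\in\nu(S)$ on $x^{\ast}$ can be replaced, via comprehensiveness and the normalization, by one lying in the compact individually rational part of $\partial\nu(S)$, so that it is eventually $\varepsilon$-approximated by a column violating dominance. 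Neither point is a flaw in the approach---they are exactly the compactness and non-degeneracy items you flag yourself---and black-boxing Scarf's lemma is consistent with how the paper treats the entire theorem.
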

\begin{proof}
    See \textcite{kannai1992core}.
\end{proof}

Now we can prove \Cref{core-nonempty}.

\corenonempty*

\begin{proof}
Consider a bargaining game \(B\) with additively separable utility functions. Recall
\[A_{i,j}:=\min_{P\subseteq N \text{ s.t. }i\in P}\min_{\sigma_P\in\Sigma_P^H}u_{i,j}(\sigma_i)\]
for \(i,j\in N\), where \(\Sigma^H_P\) is the set of Pareto optimal strategies for the players in \(P\). By \Cref{lem-nu-characteristic}, \(\nu:=\nu^A\) is a characteristic function. It remains to show that \(\nu\) is balanced. Then it follows from \Cref{balanced-core-nonempty} that \(C^A(B)=C^\nu\) is nonempty.

To that end, assume \(T\) is a balanced collection with weights \((\delta_S)_{S\in T}\), and assume \(x\in \nu(S)\) for all \(S\in T\).
Then by definition, there exists \(x_i^{S}\in F_i(B)\) for each \(i\in N\) that corresponds to the utilities produced by player \(i\) in coalition \(S\), such that
\[x_j\leq \sum_{i\in S} x_{i,j}^{S} + \sum_{i \in N\setminus S}A_{i,j}\]
for all \(j\in S\). Note that w.l.o.g., we can assume that for some \(\sigma_S\in \Sigma_S^H\), we have \(x_i^S=u_i(\sigma_i)\) for all \(i\in S\). That is, we can choose vectors \(x_i^S\) that result in Pareto optimal payoffs for the members of \(S\).
Then, by definition of \(A\), we have
\begin{equation}\label{worst-case-condition-A}x_{i,j}^{S}\geq A_{i,j}\end{equation}
for any player \(i\in S\) and \(j\in N\).

Now we want to find a matrix of vectors \(\hat{x}\in\mathbb{R}^{n,n}\) such that
\(\hat{x}_i\in F_i(B)\) for each \(i\in N\), and such that
\[\sum_{i\in N}\hat{x}_{i,j}\geq x_j\] for all \(j\in N\). If we we can find such a matrix, then it follows that \(x\in F(B)\) and thus \(x\in \nu(N)\), and we are done.

To define this matrix, let \(i\in N\) arbitrary and set
\[\hat{x}_i:=\sum_{S\in T\text{ s.t. }i\in S}\delta_Sx_i^{S}.\]
Note that this is a convex combination of vectors \(x_i^S\in F_i(B)\), and thus also \(\hat{x}_i\in F_i(B)\) since the feasible sets are convex. It follows that
\begin{multline}
\sum_i\hat{x}_{i,j}=\sum_{i\in N}\sum_{S\in T\text{ s.t. }i\in S}\delta_Sx_{i,j}^{S}
=\sum_{S\in T}\sum_{i\in S}\delta_Sx_{i,j}^{S}
=\sum_{S\in T}(\mathbbm{1}_S(j)
\sum_{i\in S}\delta_Sx_{i,j}^{S}
+(1-\mathbbm{1}_S(j))\sum_{i\in S}\delta_Sx_{i,j}^{S})
\\
\underset{\text{\ref{worst-case-condition-A}}}{\geq}\sum_{S\in T}(\mathbbm{1}_S(j)\delta_S(x_j- \sum_{i\in N\setminus S}A_{i,j})
+(1-\mathbbm{1}_S(j))\sum_{i\in S}\delta_SA_{i,j})
\\
=
x_j- \sum_{S\in T}\delta_S(\mathbbm{1}_S(j)\sum_{i\in N}A_{i,j}
-\sum_{i\in S}A_{i,j})
=
x_j- \sum_{i\in N}A_{i,j}+\sum_{S\in T}\delta_S\sum_{i\in S}A_{i,j}
\\=
x_j-\sum_{i\in N}A_{i,j}+\sum_{i\in N}\sum_{S\in T\text{ s.t. }i\in S}\delta_SA_{i,j}
=
x_j-\sum_{i\in N}A_{i,j}+\sum_{i\in N}A_{i,j}
=x_j.
\end{multline}
This shows that \(x\in \nu(N)\) and thus concludes the proof.
\end{proof}

\end{appendix}

\end{document}